\def\Re{\mathrm {Re}\,}
\def\Im{\mathrm {Im}\,}
\def\C{\mathbb{C}}
\def\R{\mathbb{R}}
\def\wt{\widetilde}
\def\wh{\widehat}
\def\G{\Gamma}
\def\1{\mathbf{1}}
\def\d{\mathrm d}
\def\e{\mathrm{e}}
\def\i{\mathrm{i}}
\def\pa{\partial}
\newcommand{\tr}[1][]{\mathrm{tr}_{#1}\,}
\def\Ai{\mathrm{Ai}}
\def\s{\sigma}
\def\res{\mathop{\mathrm {res}}\limits_}
\def\ll{\lambda}
\DeclareMathOperator*{\esssup}{ess\,sup}
\def\be{\begin{equation}}
\def\ee{\end{equation}}
\newtheorem{theorem}{Theorem}[section]
\newtheorem{lemma}[theorem]{Lemma}
\newtheorem{proposition}[theorem]{Proposition} 
\newtheorem{corollary}[theorem]{Corollary}
\newtheorem{remark}[theorem]{Remark}
\newtheorem{example}[theorem]{Example}
\newtheorem{assumption}{Assumption}
\newtheorem{theoremintro}{Theorem}
\tikzset{->-/.style={decoration={
				markings,
				mark=at position #1 with {\arrow{latex}}},postaction={decorate}}}
	\tikzset{-<-/.style={decoration={
				markings,
				mark=at position #1 with {\arrowreversed{latex}}},postaction={decorate}}}
\tikzset{cross/.style={cross out, draw, 
         minimum size=2*(#1-\pgflinewidth), 
         inner sep=0pt, outer sep=0pt}}
\begin{document}

\numberwithin{equation}{section}

\title{J\'anossy densities and Darboux transformations \\
for the Stark and cylindrical KdV equations}
\author{Tom Claeys$^1$, Gabriel Glesner$^1$, Giulio Ruzza$^2$, Sofia Tarricone$^3$}

\date{}
\maketitle

\begin{center}
{$^1$\footnotesize
{\textit{Institut de Recherche en Math\'ematique et Physique, Universit\'e catholique de Louvain,
\\
Chemin du Cyclotron 2, 1348 Louvain-la-Neuve, Belgium}
\\
\texttt{tom.claeys@uclouvain.be,\,gabriel.glesner@uclouvain.be}}
\\
\medskip
$^2$\footnotesize
{\textit{Departamento de Matem\'atica, Faculdade de Ci\^encias da Universidade de Lisboa,
\\
Campo Grande Edif\'{i}cio C6, 1749-016, Lisboa, Portugal}
\\
\texttt{gruzza@fc.ul.pt}}
\\
\medskip
$^3$\footnotesize
{\textit{Institut de Physique Th\'eorique, Universit\'e Paris-Saclay, CEA, CNRS, F-91191 Gif-sur-Yvette, France}
\\
\texttt{sofia.tarricone@ipht.fr}}
}
\end{center}

\medskip

\begin{abstract}
We study J\'anossy densities of a randomly thinned Airy kernel determinantal point process.
We prove that they can be expressed in terms of solutions to the Stark and cylindrical Korteweg--de\,Vries equations; these solutions are Darboux tranformations of the simpler ones related to the gap probability of the same thinned Airy { point} process.
Moreover, we prove that the associated wave functions satisfy a variation of Amir--Corwin--Quastel's integro-differential Painlev\'e~II equation. 
Finally, we derive tail asymptotics for the relevant solutions to the cylindrical Korteweg--de\,Vries equation and show that they decompose asymptotically into a superposition of simpler solutions.
\end{abstract}

\medskip
\medskip

\noindent
{\small{\sc AMS Subject Classification (2020)}: 35Q53, 60G55, 41A60}

\noindent
{\small{\sc Keywords}: cKdV, Darboux transformations, J\'anossy densities, Riemann--Hilbert problems}

\medskip

\section{Introduction}\label{sec:intro}

The cylindrical Korteweg--de\,Vries equation admits a family of solutions which are expressed in terms of Fredholm determinants involving the Airy kernel operator \cite{Poppe, PoppeSattinger, CafassoClaeysRuzza}.
These solutions have an interesting probabilistic interpretation, as they are gap probabilities for random thinnings of the Airy point process \cite{CafassoClaeysRuzza}, and they are therefore connected to important problems in integrable probability, such as the extreme value statistics of finite temperature free fermions \cite{Johansson, LiechtyWang, MNS, LDMRS}, the distribution of the narrow wedge solution of the Kardar--Parisi--Zhang equation \cite{KPZKPDoussal, QuastelRemenik, BPS}, the edge eigenvalue statistics in the complex elliptic Ginibre Ensemble at weak non-Hermiticity \cite{BothnerLittle}, and multiplicative statistics of Hermitian random matrices~\cite{GS}.

\medskip

In this work, we show that Darboux transformations of such solutions also enjoy a probabilistic interpretation: they are J\'anossy densities of random thinnings of the Airy point process. We investigate the integrable structure of these solutions, and show how they are connected to the Stark equation and to an integro-differential Painlev\'e II equation.
In this way, we reveal a remarkable connection between the Airy point process and scattering theory for a large class of solutions of the cylindrical Korteweg--de\,Vries equation, which describes an isospectral deformation of the Stark equation.
Moreover, we show that their tail asymptotics can be described as a superposition of simpler solutions. 
This soliton-like behavior finds its origin in the fact that certain conditional ensembles related to the Airy point process decorrelate asymptotically.

Whereas it is understood since several decades that specific solutions of integrable PDEs admit probabilistic interpretations in terms of gap probabilities of determinantal point processes, recent results indicate that large classes of solutions to integrable PDEs admit probabilistic interpretations as well. From a broad perspective, our results confirm this in the case of the cylindrical Korteweg--de\, Vries equation, and further extend the class of solutions with probabilistic interpretations, by considering J\'anossy densities instead of gap probabilities.
In this sense, we improve and extend known results relating the theory of integrable probability  and classical integrable systems in the case of the thinned Airy point process. To give a clear picture of these connections, we will start with a brief introduction to the Airy point process to define our quantities of interest, the gap probabilities and the J\'anossy densities, then we will give an overview of the main results already known for the gap probabilities, and finally we will state our new results concerning the J\'anossy densities. 

\subsubsection*{Gap probabilities and J\'anossy densities of the thinned Airy point process}
The Airy point process is the determinantal point process on $\R$, defined by its correlation functions
\be
\label{eq:corr}
\rho^\Ai_n(\lambda_1,\dots,\lambda_n):=\det\begin{pmatrix}
	K^\Ai(\lambda_i,\lambda_j)\end{pmatrix}_{1\leq i,j\leq n},
\ee
where $K^\Ai$ is  the {\it Airy kernel}
\be
\label{eq:standardairykernel}
K^\Ai(\ll,\mu):=\int_0^{+\infty}\Ai(\lambda+\eta)\Ai(\mu+\eta)\,\d\eta=\frac{\Ai(\lambda)\Ai'(\mu)-\Ai'(\lambda)\Ai(\mu)}{\lambda-\mu},
\ee
and $\Ai$ and $\Ai'$ are the Airy function and its derivative, respectively. 
Equivalently \cite{Soshnikov}, it is the probability distribution on the space of locally finite configurations of points on the real line characterized by its $n$-point correlation functions $\rho_{n}^\Ai$ defined in~\eqref{eq:corr}.
More explicitly, this means that for all disjoint Borel sets $B_1,\dots,B_\ell\subseteq\R$ and for all integers $k_1,\dots,k_\ell\geq 1$ summing up to $\sum_{j=1}^\ell k_j=n$, the expected number of $n$-tuples of points in a random configuration of which $k_1$ lie in $B_1$, $k_2$ in $B_2$, $\dots$, is
\be
\frac 1{k_1!\cdots k_\ell!}\int_{B_1^{k_1}\times\cdots\times B_l^{k_\ell}}\rho^\Ai_{n}(\lambda_1,\dots,\lambda_m)\,\d\lambda_1\cdots\d\lambda_m.
\ee
Random configurations of points in this process are known to contain almost surely an infinite number of points, and to have almost surely a largest one,  see Remark~\ref{remark:numberofparticles}.

The same characterization holds when introducing two parameters $X\in\R$, $T>0$ and considering the shifted and dilated Airy kernel 
\be
\label{eq:shifteddilatedAirykernel}
K^\Ai_{X,T}(\ll,\mu):=T^{-\frac 13}K^\Ai(T^{-\frac 13}(\ll+X),T^{-\frac 13}(\mu+X)\bigr).
\ee
and the corresponding shifted and dilated Airy point process with correlation functions
\be
\label{eq:Tcorr}
\rho_{X,T;n}^\Ai(\lambda_1,\dots,\lambda_n):=\det\begin{pmatrix}
	K^{\rm Ai}_{X,T}(\lambda_i,\lambda_j)\end{pmatrix}_{1\leq i,j\leq n}.
\ee

With this ground process depending on two parameters $X,T$ in mind, we can now define its $\s$-thinning, for a suitable class of functions $\s$.
\begin{assumption}
	\label{assumption:weak}
	The function $\s:\R\to[0,1]$ is smooth and there exists $\kappa>0$ such that $\s(\ll)=O\bigl(|\ll|^{-\frac 32-\kappa}\bigr)$ as~$\ll\to-\infty$.
\end{assumption}
The $\s$-thinned shifted and dilated Airy point process is obtained from the shifted and dilated Airy point process by removing each particle $\lambda$ in a configuration independently with (position-dependent) probability $1-\s(\lambda)$, thus retaining it with probability $\s(\lambda)$.
This point process is also determinantal, with a correlation kernel given by $\sqrt{\sigma(\lambda)}K^{\rm Ai}_{X,T}(\lambda,\mu)\sqrt{\sigma(\mu)}$, cf.~\cite[eq~(2.5)]{LavancierMollerRubak}.
Then a fundamental quantity in the study of the shifted and dilated Airy point process and its thinning is the Fredholm determinant
\be
\label{eq:TJ0}
J_\s(X,T):=1+\sum_{n=1}^\infty\frac{(-1)^n}{n!}\int_{\mathbb R^n}\rho_{X,T;n}^\Ai(\lambda_1,\dots,\lambda_n)\prod_{i=1}^n\sigma(\lambda_i)\d \lambda_i.
\ee
Indeed, $J_{\s}(X,T)$  enjoys the following two different probabilistic interpretations:
first, we have
\be
\label{eq:firstprobid}
J_\s(X,T)=\mathbb E_{X,T}\biggl[\prod_{j\geq 1}\bigl(1-\sigma(\lambda_j)\bigr)\biggr],
\ee
where the expectation on the right involves the particles $\lambda_1>\lambda_2>\cdots$ of the shifted and dilated Airy point process.
In other words, $J_\s(X,T)$ is an average multiplicative statistic of the ground process. Secondly,
\be
\label{eq:secondprobid}
J_\s(X,T)=\mathbb{P}_{X,T}^\s\left(\mathfrak X=\emptyset\right),
\ee
where the probability on the right involves the random configuration~$\mathfrak X$ of the $\sigma$-thinned ground process.
In other words, $J_\s(X,T)$ is the full gap probability of the $\sigma$-thinned ground process.
The first identity~\eqref{eq:firstprobid} follows from the general theory of determinantal point processes, cf.~\cite[eq.~(11.2.4)]{Borodin},
while the second~\eqref{eq:secondprobid} is an immediate consequence of the definition of the $\sigma$-thinning.
There is actually even a third probabilistic interpretation of $J_\s(X,T)$ in terms of the largest particle distribution of a {\it finite-temperature deformation} of the Airy process, cf.~\cite{AmirCorwinQuastel,CafassoClaeysRuzza}.
However this interpretation is less relevant for our purposes.

Before proceeding with the definition of the J\'anossy densities of the $\s$-thinned shifted and dilated Airy point process, we introduce a special notation for a particular case of these deformations. The parametric dependence of $J_\s(X,T)$ can be studied by keeping $T$ constant first. Then, because of the relation $J_\s(X,T)=J_{\tilde{\s}}(XT^{-\frac{1}{3}},1),$ where $ \tilde{\s}(\lambda):=\s(T^\frac{1}{3}\lambda)$,
$T$ can be set to $1$ without loss of generality. In this case, for the Fredholm determinant in \eqref{eq:TJ0} we will use the shorthand notation
\be 
j_\s(s):= J_{\s}(s,1)=1+\sum_{n=1}^\infty\frac{(-1)^n}{n!}\int_{\mathbb R^n}\rho_n^\Ai(\lambda_1+s,\dots,\lambda_n+s)\prod_{i=1}^n\sigma(\lambda_i)\d \lambda_i,\qquad s\in\R,
\ee 
corresponding to the full gap probability of the $\s$-thinned shifted (no dilation) Airy point process.
If we take for $\sigma$ the indicator function of $(0,+\infty)$, $j_\s(s)$ reduces to the classical Fredholm determinant representation of the Tracy--Widom distribution~\cite{TracyWidom}, see Example~\ref{examples:zeroHeaviside} below; note however that this choice of $\sigma$ is not admitted in view of Assumption \ref{assumption:weak}.

\begin{remark}
	For $j_\s(s)$ to be well defined, we need the trace condition $\int_\R\s(\lambda) K^\Ai(\lambda,\lambda)\d\lambda<\infty$. The decay in Assumption~\ref{assumption:weak} is slightly stronger than this requirement.
	This will allow us to control the $s\to +\infty$ behavior of certain objects, see in particular Section~\ref{sec:asymptoticssplusinfty}.
\end{remark}

We are now ready to define the main object in the present work: the J\'anossy densities of the $\s$-thinned shifted and dilated Airy point process.
We first notice that when $\s:\R\to[0,1]$ decays sufficiently fast at~$-\infty$ (for instance, if it satisfies Assumption~\ref{assumption:weak}), the $\s$-thinned shifted and dilated Airy point process has almost surely a finite number of particles (see Remark~\ref{remark:numberofparticles}).
It is indeed in such a case that, for a given set of \textit{distinct} real values $\nu_i,i=1\dots,m$, we can define the global {\it J\'anossy density} of order $m\geq 0$, denoted $J_\s(X,T|\nu_1,\dots,\nu_m)$. 
It represents the likelihood that a random configuration in the $\s$-thinned ground process contains no other points than the prescribed points at positions $\nu_1,\ldots, \nu_m$. In this way, it encodes information about the ground process, given the position of $m$ of its particles. The J\'anossy density of order $m$ is defined by the property that for all disjoint Borel subsets $B_1,\dots,B_\ell\subseteq\R$ such that $\sqcup_{j=1}^\ell B_j=\R$ and all integers $k_1,\dots,k_\ell\geq 1$ summing up to $\sum_{j=1}^\ell k_j=m$, the probability that a configuration in the $\s$-thinned shifted and dilated Airy point process contains \textit{exactly} $m$ particles, of which $k_1$ lie in $B_1$, $k_2$ in $B_2$, $\dots$, is
\be
\frac 1{k_1!\cdots k_\ell!}\int_{B_1^{k_1}\times\cdots\times B_l^{k_\ell}}J_\s(X,T|\nu_1,\dots,\nu_m)\prod_{j=1}^m\s(\nu_j)\d \nu_j.
\ee
It is known~\cite[eq.~(1.38)]{Soshnikov} that J\'anossy densities can be expressed as
\be
\label{def:Jsigma}
J_\s(X,T|\nu_1,\dots,\nu_m)=\sum_{n=0}^{\infty}\frac{(-1)^n}{n!}\int_{\mathbb R^n}\rho^\Ai_{X,T;n+m}(\ll_1,\dots,\ll_n,\nu_1,\dots,\nu_m)\prod_{i=1}^n\sigma(\ll_i)\d\ll_i.
\ee
For brevity, we will collect the {\it distinct} real numbers $\nu_i$ into a vector $\underline \nu:=(\nu_1,\dots,\nu_m)$ and denote $J_\sigma(X,T|\underline \nu):=J_\s(X,T|\nu_1,\dots,\nu_m)$.
In agreement with the previously introduced notation, we have $J_\s(X,T)=J_\s(X,T|\emptyset)$.

\medskip

As in the case $m=0$, it is interesting to study the situation when $T$ is kept constant first.
In such case, $T$ can be set to $1$ without loss of generality because again
\be
\label{eq:Janossyreduction}
J_\s(X,T|\underline \nu)=T^{-\frac m3}\,J_{\wt \s}(XT^{-\frac 13},1| T^{-\frac 13}\underline \nu),\qquad\wt\s(\ll):=\s(T^{\frac 13}\ll).
\ee
Accordingly, we will formulate our results in which $T$ is constant more concisely in terms of
\be
\label{eq:reducedJanossy}
j_\s(s|\underline \nu):=J_\s(s,1|\underline \nu),\qquad s\in\R,
\ee
the $m$-th J\'anossy density of the $\s$-thinned shifted (no dilation) Airy point process.
\begin{remark}
	J\'anossy densities in general carry important information about a point process, but global J\'anossy densities of a determinantal point process are defined only if the associated kernel defines a trace class operator. This is not the case for the Airy kernel operator. To remedy this, one commonly considers J\'anossy densities of determinantal point processes restricted to bounded sets $B$ \cite{BorodinSoshnikov}.
	It is less customary to consider J\'anossy densities of thinned determinantal point processes, as we do, but this has the advantage that, for a suitable class of thinning functions $\sigma$, global J\'anossy densities exist. 
{More important for us, global J\'anossy densities turn out to provide a natural framework in view of underlying differential equations.}	
	In the degenerate case $\sigma=1_B$, we recover the J\'anossy density of the determinantal point process restricted to the set $B$.
\end{remark}

\subsubsection*{Known results concerning the gap probabilities}

We now give an overview of recent results revealing connections between the gap probabilities $J_\s(X,T)$ (and $j_\s(s)$) and the theory of integrable systems. 

First, it was proved in a series of recent papers \cite{AmirCorwinQuastel, Bothner, BothnerCafassoTarricone,  CafassoClaeysRuzza} in slightly different forms, that with
\be
\label{eq:potential}
v_\s(s):=\pa_s^{2}\log j_\s(s),
\ee
one has
\be
\label{eq:trace}
v_\s(s)=-\int_\R\varphi_\s(\ll;s)^2\s'(\ll)\d\ll,\qquad\s'(\ll):=\frac{\d\s(\ll)}{\d\ll},
\ee
where $\varphi_\s(\ll;s)$ solves the {\it Stark equation}\footnote{The Stark equation is nothing else than the Schr\"odinger equation $\left(\partial_s^2+2u(s)\right)\varphi(\ll;s)=\ll \varphi(\ll;s)$, for a potential $u(s)=v_\sigma(s)-s/2$ with linear background $-s/2$.} $\left(\partial_s^2+2v_\sigma(s)-s\right)\varphi(\ll;s)=\ll\, \varphi(\ll;s)$.
More precisely, $\varphi_\s(\ll;s)$ is the unique solution to the {\it Stark boundary value problem}
\be
\label{eq:Stark}
\bigl(\pa_s^2+2v_\s(s)-s\bigr)\varphi_\s(\ll;s)=\ll\,\varphi_\s(\ll;s),\qquad\varphi_\s(\ll;s)\sim\Ai(\ll+s),\ \ s\to+\infty.
\ee
See Proposition~\ref{prop:asympwavefunction} for the proof of the boundary value under our current assumptions on $\s$ and Corollary~\ref{cor:uniqueness} for the uniqueness.
In particular, it follows by combining~\eqref{eq:trace} and \eqref{eq:Stark} that $\varphi_\s$ solves the {\it integro-differential Painlev\'e~II equation} of Amir, Corwin, and Quastel~\cite{AmirCorwinQuastel}
\be
\label{IDPII}
\pa_s^2\varphi_\s(\ll;s)=\biggl(\ll+s+2\int_\R\varphi_\s(\mu;s)^2\s'(\mu)\d\mu\biggr)\varphi_\s(\ll;s).
\ee
It is worth observing that the relation~\eqref{eq:trace} is the analogue, for potentials with linear background, of the classical {\it Trace Formula} obtained by Deift and Trubowitz~\cite[Equation $(1)_R$, page 183]{DeiftTrubowitz} in scattering theory for the Schr\"odinger equation for potentials with zero background. 

Concerning the $X,T $ dependence, we can exploit the relation $J_\s(X,T)=j_{\wt\s}(XT^{-\frac 13})$, where $\wt\s(\ll):=\s(T^{\frac 13}\ll)$, which implies that
\be
\label{eq:Tpotential}
V_\s(X,T):=\pa_X^2\log J_\s(X,T) = T^{-\frac 23}v_{\wt\s}(XT^{-\frac 13})
\ee
satisfies
\be
\label{eq:Ttrace}
V_\s(X,T)=-T^{-\frac 12}\int_\R\wh\varphi_\s(\ll;X,T)^2\s'(\ll)\d\ll
\ee
in terms of the function $\wh\varphi_\s(\ll;X,T):=T^{-\frac 1{12}}\varphi_{\wt\s}(\ll T^{-\frac 13};XT^{-\frac 13})$.
The latter is also equivalently characterized as the unique solution to the boundary value problem
\be
\label{eq:TStark}
\begin{cases}
	\mathscr L \,\wh\varphi_\s(\ll;X,T) = \ll \, \wh\varphi_\s(\ll;X,T)\,, & 
	\\
	\wh\varphi_\s(\ll;X,T)\sim T^{-\frac 1{12}}\,\Ai\bigl(T^{-\frac 13}(\ll+X)\bigr)\,,& X\to+\infty,
\end{cases}
\ee
where $\mathscr L:=T\,\pa_X^2+2\,T\,V_\s(X,T)-X$.

Moreover, it was proved in~\cite[Theorem~1.3]{CafassoClaeysRuzza} that $V=V_\s(X,T)$ solves the {\it cylindrical Korteweg--de\,Vries} (cKdV) equation
\be
\label{cKdV}
\pa_TV+\frac 1{12}\,\pa_X^3V+V\,\pa_XV +\frac 1{2T}\,V\,=\,0.
\ee
More precisely, the variables~$x\in\R,t>0$ of \cite{CafassoClaeysRuzza} are related to~$X\in\R,T>0$ as~$x=-XT^{-\frac 12}$, $t=T^{-\frac 12}$, such that the KdV equation for~$u=u(x,t)$
\be
\label{KdV}
\pa_tu+\frac 16\, \pa_x^3u+2\,u\,\pa_xu\,=\,0
\ee
\cite[equation~(1.7)]{CafassoClaeysRuzza} is equivalent to the cKdV equation~\eqref{cKdV} for the function
\be
\label{solutionKdVcKdV}
V(X,T):=T^{-1}\,u\bigl(x=-XT^{-\frac 12},t=T^{-\frac 12}\bigr)+\frac12XT^{-1}.
\ee
In the language of integrable PDEs, this implies that the Fredholm determinant $J_\s(X,T)$ is a {\it tau function} of the cKdV equation.
In particular, $J=J_\s(X,T)$ solves the {\it bilinear form} of the cKdV equation
\be
\label{eq:bilinearcKdV}
\pa_XJ\,\pa_TJ-J\,\pa_X\pa_TJ-
\frac 14\,\bigl(\pa_X^2J\bigr)^2+\frac 13\, \pa_XJ\,\pa_X^3J-\frac 1{12}\,J\,\pa_X^4J-\frac 1{2T}\,J\,\pa_XJ \,=\,0.
\ee
The direct and inverse scattering transform for the cKdV equation has been established in~\cite{Santini1, Santini2, ItsSukhanov0, ItsSukhanov} for smooth and sufficiently fast decaying initial data.

\begin{example}\label{examples:zero}
	The simplest situation occurs when~$\sigma=0$.
	Then,
	\be
	\quad J_0(X,T)=1,\qquad V_0(X,T)=0,\qquad \varphi_0(\lambda;X,T)={\rm Ai}\bigl(T^{-\frac 13}(\lambda+X)\bigr).
	\ee
\end{example}
\begin{example}\label{examples:zeroHeaviside}
	The function $\s=1_{(0,+\infty)}$ does not satisfy Assumption~\ref{assumption:weak}, but it is nevertheless an instructive degenerate situation.
	(The present setting could be extended to include such case, cf. Remark~\ref{rem:jumpsing} and Section \ref{section:discont}.)
	The integro-differential Painlev\'e~II equation reduces to the Painlev\'e II equation, and~\eqref{eq:trace} is the celebrated Tracy--Widom formula \cite{TracyWidom}.
	The cKdV tau function and solution are given by
	\be
	J_{1_{(0,+\infty)}}(X,T)=F_{\rm TW}(XT^{-\frac 13}),\qquad
	V_{1_{(0,+\infty)}}(X,T)=-T^{-\frac 23}y_{\rm HM}(XT^{-\frac 13})^2,
	\ee
	where~$F_{\rm TW}$ is the Tracy--Widom distribution \cite{TracyWidom} and $y_{\rm HM}$ is the Hastings--McLeod solution to Painlev\'e~II \cite{HastingsMcLeod}.
	The asymptotics of the Hastings--McLeod solution imply that (see Figure~\ref{fig:HMsol})
	\be
	V_{1_{(0,+\infty)}}(X,T)\sim\begin{cases}\frac 12XT^{-1},&XT^{-\frac 13}\to -\infty,\\
		-T^{-\frac 23}\Ai\bigl(XT^{-\frac 13}\bigr),&XT^{-\frac 13}\to +\infty.
	\end{cases}
	\ee
\end{example}

\begin{remark}
	The KdV and cKdV equations, \eqref{KdV} and \eqref{cKdV}, respectively, are completely equivalent from an algebraic point of view, since the transformation~\eqref{solutionKdVcKdV} defines a one-to-one correspondence of solutions.
	On the other hand, this correspondence drastically changes the analytic properties of solutions; e.g., if $V$ is bounded then $u$ is not, and vice versa. In view of the analytic properties of the solutions under consideration, we find it more natural to work with the cKdV equation; moreover, the relevant Riemann--Hilbert problem in our analysis formally matches with the one for the inverse scattering theory of the cKdV of Its and Sukhanov~\cite{ItsSukhanov0,ItsSukhanov}, cf. Section~\ref{sec:IS}.
\end{remark}

\begin{figure}[t]
	\centering
	\includegraphics[scale=.8]{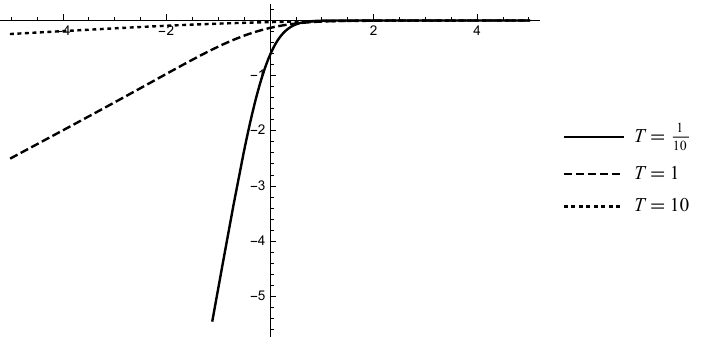}
	\caption{The solution $V_{1_{(0,+\infty)}}(X,T)$ as a function of $X$ for some values of $T$.}
	\label{fig:HMsol}
\end{figure}

The asymptotic behavior of the quantities $J_\sigma(X,T)$ and $V_{\sigma}(X,T)$ in different regions of the $(X,T)$-plane has been established recently in  \cite{CafassoClaeysRuzza, CharlierClaeysRuzza}. In particular,  left-tail ($X\to -\infty$) asymptotics of these quantities were obtained via Riemann--Hilbert techniques.
Some of these asymptotic results will play a role in the description of the asymptotic behavior of the $\underline \nu$-depending quantities $J_\sigma(X,T|\underline \nu)$ and $V_{\sigma}(X,T|\underline \nu)$ studied in the present paper, and we summarize the relevant results from \cite{CharlierClaeysRuzza}, which hold under more severe assumptions on $\sigma$, next. 

\begin{assumption}
	\label{assumption:strong}
	The function $F:=\tfrac1{1-\sigma}$ extends to an entire function.
	Moreover:
	\begin{itemize}
		\item $F'\geq 0$ and $(\log F)''\geq 0$ on the real line;
		\item $F(\lambda)=1+c'_-\e^{-c_-|\lambda|}(1+o(1))$ as $\lambda\to-\infty$, $F(\lambda)=c'_+\e^{c_+\lambda}(1+O(\e^{-\epsilon\lambda}))$ as $\lambda\to+\infty$, for some $c_\pm,c_\pm',\epsilon>0$;
		\item $F(\lambda)=O(\e^{c_+\Re\lambda})$ as $\Re\lambda\to+\infty$.
	\end{itemize}
	In particular, the second assumption implies the strong decay $\s(\ll)=c_-'\e^{-c_-|\ll|}\bigl(1+o(1)\bigr)$ as~$\ll\to-\infty$ and $\s(\ll)=1-\tfrac 1{c_+'}\e^{-c_+\ll}\bigl(1+O(\e^{-\epsilon\ll})\bigr)$ as~$\ll\to+\infty$ (for the same constants $c_\pm,c_\pm',\epsilon>0$).
\end{assumption}
The reader may want to keep in mind the prototype example of an admissible function $\sigma$ given by $\sigma(\lambda)=\frac{1}{1+\e^{-\lambda}}$, such that $F(\lambda)=1+\e^\lambda$.
The following result is part of \cite[Theorem 1.3]{CharlierClaeysRuzza}.
\begin{theorem}\label{thm:CCR}
	Let $\s$ satisfy Assumption~\ref{assumption:strong}.
	For any $T_0>0$ there exists $K>0$ such that
	\begin{align}
		\log J_\s(X,T|\emptyset)&=\rho^3T^2\left(-\frac{4}{15}\left(1-\xi\right)^{\frac 52}+\frac 4{15}-\frac 23\xi+\frac 12\xi^2\right)+O\left(|X|^{\frac 32}T^{-\frac 12}\right),\\
		\label{eq:heavytail}
		V_\s(X,T|\emptyset)&=\rho\left(1-\sqrt{1-\xi}\,\right)+O\left(|X|^{-\frac 12}T^{-\frac 12}\right),
	\end{align}
	where $\rho:=c_+^2/\pi^2$ and $\xi:=X/(\rho T)$, uniformly for $X\leq -KT^{\frac 12}$ and $T\geq T_0$.
\end{theorem}

\subsubsection*{New results for J\'anossy densities} 
 Our main novel contributions consist in extending the connection between the Airy point process, the cKdV equation, the integro-differential Painlevé II equation, and the Stark equation, to J\'anossy densities $ j_\s(s|\underline{\nu}), J_\s(X,T|\underline{\nu})$ for any given $\underline{\nu} $, and in establishing the asymptotic behavior of these quantities.

Our first result on the integrable structure of the J\'anossy densities is their expression in terms of the eigenfunctions of the Stark operator.

\begin{theoremintro}
	\label{thm:Janossyphi}
	Let~$\s$ satisfy Assumption~\ref{assumption:weak} and let $\varphi_\s(\lambda;s)$ be the unique solution to the Stark boundary value problem~\eqref{eq:Stark}.
	For all~$s\in\R$ and all $\underline \nu=(\nu_1,\dots,\nu_m)$ with $\nu_i\not=\nu_j$ for all $i\not=j$, we have
	\be
	\label{thm1eq1}
	j_\s(s|\underline \nu)=\det\left(L_s^\s(\nu_i,\nu_j)\right)_{i,j=1}^m j_\s(s|\emptyset),
	\ee
	where
	\be
	\label{thm1eq2}
	L_s^\s(\lambda,\mu)=\int_s^{+\infty}\varphi_\s(\ll;r)\varphi_\s(\mu;r)\,\d r=\frac{\varphi_\s(\lambda;s)\partial_s\varphi_\s(\mu;s)-\partial_s\varphi_\s(\lambda;s)\varphi_\s(\mu;s)}{\lambda-\mu},
	\ee
	and
	\be
	\label{thm1eq3}
	j_\sigma(s|\emptyset)=\exp\biggl(-\int_s^{+\infty}(r-s)\biggl(\int_\R\varphi_\s(\lambda;r)^2\s'(\ll)\d\ll\biggr)\d r\biggr).
	\ee
\end{theoremintro}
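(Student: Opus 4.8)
\emph{Overall strategy and Step 1 (reduction).} The plan is to reduce~\eqref{thm1eq1} to an identity for the resolvent of the shifted Airy operator and then to exploit its integrable structure. Fix $T=1$, let $K_s$ be the operator on $L^2(\R)$ with kernel $K_s(\lambda,\mu):=K^\Ai(\lambda+s,\mu+s)$, and let $\sigma$ also denote multiplication by $\sigma(\lambda)$. By Lemma~\ref{lemma1}, $j_\s(s)=\det(1-K_s\sigma)>0$, so $1-K_s\sigma$ is invertible; set $\mathsf R:=K_s(1-\sigma K_s)^{-1}=(1-K_s\sigma)^{-1}K_s$. Under Assumption~\ref{assumption:weak} the $\sigma$-thinned shifted Airy process has almost surely finitely many particles, with correlation kernel $\sqrt{\sigma}\,K_s\,\sqrt{\sigma}$; resumming the series~\eqref{def:Jsigma} (with $X=s$, $T=1$) by the classical J\'anossy-density identity for determinantal point processes, and factoring out the $\sqrt{\sigma}$'s, gives
\[
j_\s(s|\underline\nu)=\det(1-K_s\sigma)\,\det\bigl(\mathsf R(\nu_i,\nu_j)\bigr)_{i,j=1}^{m}.
\]
Since $\det(1-K_s\sigma)=j_\s(s|\emptyset)$ by definition, it remains to show (i) $\mathsf R(\lambda,\mu)=L_s^\s(\lambda,\mu)$ for $\lambda\neq\mu$, and (ii) the closed form~\eqref{thm1eq3}.

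\emph{Step 2 (integrable structure and the Stark eigenfunction).} The shifted Airy kernel is integrable, $K_s(\lambda,\mu)=\dfrac{f(\lambda)g(\mu)-g(\lambda)f(\mu)}{\lambda-\mu}$ with $f(\lambda):=\Ai(\lambda+s)$, $g(\lambda):=\Ai'(\lambda+s)$; hence by the Its--Izergin--Korepin--Slavnov formalism $\mathsf R$ is again integrable,
\[
\mathsf R(\lambda,\mu)=\frac{F(\lambda)G(\mu)-G(\lambda)F(\mu)}{\lambda-\mu},\qquad F:=(1-K_s\sigma)^{-1}f,\quad G:=(1-K_s\sigma)^{-1}g.
\]
I would then show $F(\,\cdot\,;s)=\varphi_\s(\,\cdot\,;s)$. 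The structural fact behind this is that $K_s$ depends on $s$ only through a rank-one term: from $K_s(\lambda,\mu)=\int_s^{+\infty}\Ai(\lambda+r)\Ai(\mu+r)\,\d r$ one reads off $\pa_sK_s=-f\otimes f$. Combining this with $\pa_sf=g$ and $\pa_sg=(\lambda+s)f$ (the Airy equation $\Ai''(z)=z\Ai(z)$), and with $\pa_s\log\det(1-K_s\sigma)=\int_\R\sigma(\lambda)f(\lambda)F(\lambda)\,\d\lambda=:c(s)$ and $\pa_s^2\log\det(1-K_s\sigma)=v_\s(s)$ (the latter by~\eqref{eq:potential}), a direct computation gives first $\pa_sF=G-c(s)F$ and then $\bigl(\pa_s^2+2v_\s(s)-s\bigr)F=\lambda F$. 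Since $K_s\sigma\to0$ in trace norm as $s\to+\infty$ (Assumption~\ref{assumption:weak}), $F(\,\cdot\,;s)\sim\Ai(\,\cdot\,+s)$, so $F=\varphi_\s$ by uniqueness for the Stark boundary value problem~\eqref{eq:Stark}. Substituting $G=\pa_s\varphi_\s+c(s)\varphi_\s$ into the integrable formula, the $c(s)$-terms cancel in the antisymmetric numerator and
\[
\mathsf R(\lambda,\mu)=\frac{\varphi_\s(\lambda;s)\,\pa_s\varphi_\s(\mu;s)-\pa_s\varphi_\s(\lambda;s)\,\varphi_\s(\mu;s)}{\lambda-\mu},
\]
i.e.\ the second expression in~\eqref{thm1eq2}. (Alternatively, one may quote the identity $\varphi_\s(\lambda;s)=[(1-K_s\sigma)^{-1}\Ai(\,\cdot\,+s)](\lambda)$ from~\cite{CafassoClaeysRuzza} and skip the ODE verification.)

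\emph{Step 3 (integral form and the $m=0$ formula).} For the first expression in~\eqref{thm1eq2}, set $W(\lambda,\mu;r):=\varphi_\s(\lambda;r)\pa_r\varphi_\s(\mu;r)-\pa_r\varphi_\s(\lambda;r)\varphi_\s(\mu;r)$. The Stark equation yields the Christoffel--Darboux identity $\pa_rW=(\mu-\lambda)\varphi_\s(\lambda;r)\varphi_\s(\mu;r)$, so integrating over $r\in(s,+\infty)$ and using $\varphi_\s(\,\cdot\,;r)\sim\Ai(\,\cdot\,+r)\to0$ together with its $r$-derivative as $r\to+\infty$ (whence the boundary term at $+\infty$ vanishes) gives $\int_s^{+\infty}\varphi_\s(\lambda;r)\varphi_\s(\mu;r)\,\d r=W(\lambda,\mu;s)/(\lambda-\mu)$. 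Finally, $\log j_\s(s|\emptyset)=\log\det(1-K_s\sigma)$ has second $s$-derivative $v_\s(s)$ by~\eqref{eq:potential}; integrating twice and using $\log j_\s(s)\to0$ and $\pa_s\log j_\s(s)=c(s)\to0$ as $s\to+\infty$ (both following from $\mathrm{tr}(K_s\sigma)\to0$, guaranteed by Assumption~\ref{assumption:weak}) yields $\log j_\s(s|\emptyset)=-\int_s^{+\infty}(r-s)v_\s(r)\,\d r$; substituting the trace formula~\eqref{eq:trace} gives~\eqref{thm1eq3}.

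\emph{Main obstacle.} The delicate part is the analysis as $s\to+\infty$: establishing $\varphi_\s(\,\cdot\,;s)\sim\Ai(\,\cdot\,+s)$ and the decay of $j_\s(s)-1$ and of $c(s)$, and rigorously justifying the operator manipulations (trace-class bounds, $s$-differentiability of $(1-K_s\sigma)^{-1}$, interchanging $\pa_s$ with the Fredholm series and with the $r$-integral). Assumption~\ref{assumption:weak}, whose decay exponent is tuned precisely for this, is exactly what makes these estimates work; this is essentially the content of Proposition~\ref{prop:asympwavefunction}. The remainder is a fairly mechanical interplay of the J\'anossy-density combinatorics, the IIKS formalism, and the rank-one $s$-dependence of $K_s$. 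The whole argument can equivalently be carried out at the level of the Riemann--Hilbert problem for $K_s\sigma$, which is the setting matching the inverse scattering theory of~\cite{ItsSukhanov0,ItsSukhanov}.
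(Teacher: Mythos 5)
Your proposal is correct and takes a genuinely different route from the paper's, most visibly in how \eqref{thm1eq2} is obtained. For \eqref{thm1eq1}, both you and the paper ultimately rest on the same Fredholm block-determinant identity; you invoke it as the classical J\'anossy-density formula applied to the trace-class kernel $\mathcal M_{\sqrt\sigma}\mathcal K_s^\Ai\mathcal M_{\sqrt\sigma}$, whereas Proposition~\ref{prop:janossyfactorization} re-derives it by operator manipulations. One caveat: ``factoring out the $\sqrt\sigma$'s'' silently divides by $\prod_j\sigma(\nu_j)$, and the case where some $\sigma(\nu_j)=0$ needs a separate word (either pass through the purely algebraic computation of Proposition~\ref{prop:janossyfactorization}, which never performs this division, or argue by real-analytic continuation in the $\nu_j$). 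The substantive difference is in \eqref{thm1eq2}: the paper derives the Wronskian form and the Stark equation from the Riemann--Hilbert problem for $\Theta_\sigma$ --- the Lax relation $\partial_s\Theta_\sigma=A\,\Theta_\sigma$ of Proposition~\ref{prop:equationins}, the structure \eqref{eq:ThetaStructure}, the derivative formula \eqref{eq:dLs}, and then Corollary~\ref{corollary:LkernelIntegral} --- while you stay at the resolvent level, using $\partial_sK_s=-f\otimes f$ and $[M_\lambda,K_s]=f\otimes g-g\otimes f$ to obtain $\partial_sF=G-c\,F$ and the Stark equation for $F=(1-K_s\sigma)^{-1}f$; I checked that the extra $\partial_sF$-terms in $\partial_s^2F$ cancel once one also uses $c'=\langle g,\sigma F\rangle+\langle f,\sigma G\rangle-c^2$, which is a consequence of the same rank-one identity. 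Your Christoffel--Darboux integration then replaces Corollary~\ref{corollary:LkernelIntegral}, and Step~3's treatment of \eqref{thm1eq3} is the same as the paper's. The trade-off is instructive: your operator-level argument is shorter and more self-contained for Theorem~\ref{thm:Janossyphi} in isolation, but the paper's Riemann--Hilbert set-up is what powers the Darboux transformation in Theorem~\ref{thm:Janossyphi2} and the tail asymptotics of Theorem~\ref{thm:asy}, so the authors develop it once for all three. You correctly identify the $s\to+\infty$ estimates (Proposition~\ref{prop:asympwavefunction}) as the analytic crux that both approaches must confront.
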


The proof is given in Section~\ref{sec:proofs12}.

\begin{remark}
	The kernel~$L_s^\s(\cdot,\cdot)$ induces a determinantal point process (depending parametrically on $s\in\R$) which is obtained via a conditioning of the shifted Airy point process, in the following sense: assign independently to each point~$\lambda$ in a random configuration mark $1$ with probability~$\sigma(\ll)$ and mark $0$ otherwise, then condition the resulting marked shifted Airy point process on the event that no points have mark $1$.
	The conditional point process obtained in this manner is determinantal, and has correlation kernel $L_s^\s(\cdot,\cdot)$, see Section~\ref{sec:conditional} and~\cite{ClaeysGlesner} and \cite{Bufetov1, Bufetov2, BufetovQiuShamov, GS} for details.
	The factorization~\eqref{thm1eq1} receives another interesting probabilistic interpretation: it is the product of an $m$-point correlation function in this conditional determinantal point process and of the gap probability of the $\s$-thinned shifted Airy point process.
\end{remark}

\begin{example}\label{example:zero2}
	It is instructive to consider again the case~$\s=0$, in which case $\varphi_0(\lambda;s)=\Ai(\lambda+s)$, and so~\eqref{thm1eq2} reduces to the shifted Airy kernel $L_s^0(\lambda,\mu;s)=K^{\Ai}(\ll+s,\mu+s)$, cf.~\eqref{eq:standardairykernel}.
	In this sense we can regard $\varphi_\s$ as a generalization of the Airy function, and $L_s^\sigma$ as a generalization or $\sigma$-deformation of the shifted Airy kernel; it is interesting to check that several properties of the Airy function and the shifted Airy kernel are preserved  under this deformation. See for instance Proposition~\ref{prop:asympwavefunction} and Corollary \ref{corollary:LkernelIntegral}.
\end{example}

\begin{example}\label{example:Heaviside2}
	The choice $\sigma=1_{(0,+\infty)}$ is again not admissible in view of Assumptions \ref{assumption:weak}, but with $v_\sigma(s)=\varphi(0;s)^2=-y_{\rm HM}(s)^2$, the degenerate case of~\eqref{eq:trace}, the Stark boundary value problem~\eqref{eq:Stark} still makes sense, and the kernel $L_s^{1_{(0,+\infty)}}(\lambda,\mu)$ is defined.
	Moreover, this kernel has appeared in the soft-to-hard edge transition in random matrix theory \cite[Theorem 1.3]{ClaeysKuijlaars}. In terms of the notations used in \cite[Theorem 1.3]{ClaeysKuijlaars}, we have
	\begin{align}
		\nonumber
		\varphi_{1_{(0,+\infty)}}(\lambda;s)&=-\frac{1}{\sqrt{2\pi}}f_0(-\lambda;s),\\
		\nonumber
		\partial_s\varphi_{1_{(0,+\infty)}}(\lambda;s)&=-\frac{1}{\sqrt{2\pi}}\bigl(g_0(-\lambda;s)+p_0(s)f_0(-\lambda;s)\bigr),
		\\ \intertext{and}
		L_s^{1_{(0,+\infty)}}(\lambda;\mu)&=\mathbb K_0^{\rm soft/hard}(-\lambda,-\mu;s).
	\end{align}
	Then we can identify the Stark boundary problem~\eqref{eq:Stark} with~\cite[Equations (1.12)--(1.15)]{ClaeysKuijlaars}.
	Furthermore, $L_s^{1_{(0,+\infty)}}$ is the kernel of the determinantal point process obtained by conditioning the shifted Airy point process on absence of particles on $(0,\infty)$.
	Recall also that $j_{1_{(0,+\infty)}}(s|\emptyset)=j_{1_{(0,+\infty)}}(s)=F_{\rm TW}(s)$ is the Tracy--Widom distribution in this case.
\end{example}

Next we give a second expression for the J\'anossy densities which is more directly parallel to equations~\eqref{eq:trace}, \eqref{eq:Stark}, and~\eqref{IDPII} for the case $m=0$.
This expression involves eigenfunctions of the Stark operator with a modified potential.

\begin{theoremintro}
	\label{thm:Janossyphi2}
	Let $\s$ satisfy Assumption~\ref{assumption:weak}.
	For all $s\in\R$ and all $\underline \nu=(\nu_1,\dots,\nu_m)$ with $\nu_i\not=\nu_j$ for all $i\not=j$, we have
	\be
	\label{eq:secondlogderJV}
	\pa_s^2\log j_\s(s|\underline \nu)=\int_\R\varphi_\s(\lambda;s|\underline \nu)^2\biggl(-\s'(\ll)+\sum_{i=1}^m\frac{2\bigl(1-\s(\ll)\bigr)}{\ll-\nu_i}\biggr)\d\ll
	\ee
	where $\varphi_\s(\lambda;s|\underline \nu)$ solves the Stark equation
	\begin{align}
		\label{StarkModified}
		\bigl(\pa_s^2+2v_\s(s|\underline \nu)-s\bigr)\varphi_\s(\lambda;s|\underline \nu)&=\lambda \varphi_\s(\ll;s|\underline \nu)
	\end{align}
	with potential
	\be
	\label{potentialModified}
	v_\s(s|\underline \nu):=\pa_s^2\log j_\s(s|\underline \nu).
	\ee
	Moreover, $\varphi_\s(\lambda;s|\underline\nu)$ can be expressed in terms of $\varphi_\s(\lambda;s|\emptyset)$ and $\pa_s\varphi_\s(\lambda;s|\emptyset)$ as
	\be
	\label{ModifiedVarphiDet}
	\varphi_\s(\lambda;s|\underline\nu)=\frac{\det\left(\begin{array}{cccc}
			\varphi_\s(\lambda;s|\emptyset) & L_s^\s(\lambda,\nu_1) & \cdots & L_s^\s(\lambda,\nu_m) \\ 
			\varphi_\s(\nu_1,s|\emptyset) & L_s^\s(\nu_1,\nu_1) & \cdots & L_s^\s(\nu_1,\nu_m) \\
			\vdots & \vdots & \ddots & \vdots \\
			\varphi_\s(\nu_m,s|\emptyset) & L_s^\s(\nu_m,\nu_1) & \cdots & L_s^\s(\nu_m,\nu_m) 
		\end{array}\right)}{\det\left(\begin{array}{ccc}
			L_s^\s(\nu_1,\nu_1) & \cdots & L_s^\s(\nu_1,\nu_m) \\
			\vdots & \ddots & \vdots \\
			L_s^\s(\nu_m,\nu_1) & \cdots & L_s^\s(\nu_m,\nu_m) 
		\end{array}\right)}\,,
	\ee
	so that, in particular, $\varphi_\s(\ll,s|\underline\nu)$ has zeros at $\lambda=\nu_1,\dots,\nu_m$.
\end{theoremintro}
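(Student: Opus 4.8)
The plan is to bootstrap everything from Theorem~\ref{thm:Janossyphi}. Write $L=L(s):=\bigl(L_s^\s(\nu_i,\nu_j)\bigr)_{i,j=1}^m$, $\phi=\phi(s):=\bigl(\varphi_\s(\nu_i;s)\bigr)_{i=1}^m$ and, for $\lambda\in\C$, $\ell(\lambda)=\ell(\lambda;s):=\bigl(L_s^\s(\lambda,\nu_i)\bigr)_{i=1}^m$; since $\varphi_\s(\,\cdot\,;s)$ is entire (a solution of the Stark equation with entire dependence on the spectral parameter) and the numerator in the Christoffel--Darboux form of~\eqref{thm1eq2} vanishes on the diagonal, the kernel $L_s^\s(\lambda,\mu)$ is entire in each variable, so $\ell$ is entire. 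Taking $\pa_s^2\log$ of~\eqref{thm1eq1} and using $\pa_s^2\log j_\s(s|\emptyset)=v_\s(s)$ (recall $j_\s(s|\emptyset)=j_\s(s)$) gives $v_\s(s|\underline\nu)=v_\s(s)+\pa_s^2\log\det L$. Next, a Schur-complement (cofactor) expansion of the bordered determinant in~\eqref{ModifiedVarphiDet}, together with $\varphi_\s(\nu_i;s|\emptyset)=\varphi_\s(\nu_i;s)$, identifies the function defined by the right-hand side of~\eqref{ModifiedVarphiDet} with
\[
\varphi_\s(\lambda;s|\underline\nu)=\varphi_\s(\lambda;s)-\ell(\lambda;s)^{\!\top}L^{-1}\phi ,
\]
which is therefore entire in $\lambda$; and since $\ell(\nu_i)$ is the $i$-th column of the symmetric matrix $L$, one has $\ell(\nu_i)^{\!\top}L^{-1}\phi=\phi_i=\varphi_\s(\nu_i;s)$, so that $\varphi_\s(\nu_i;s|\underline\nu)=0$ for each $i$. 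This already gives the final assertion of the theorem.

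The computational engine is the rank-one identity $\pa_s L_s^\s(\lambda,\mu)=-\varphi_\s(\lambda;s)\varphi_\s(\mu;s)$, immediate from the integral representation in~\eqref{thm1eq2}; equivalently $\pa_s L=-\phi\phi^{\!\top}$ and $\pa_s\ell(\lambda)=-\varphi_\s(\lambda;s)\,\phi$. Abbreviating $w:=L^{-1}\phi$ and $c:=\phi^{\!\top}L^{-1}\phi$, these yield $\pa_s L^{-1}=ww^{\!\top}$, $\pa_s\log\det L=-c$, $\pa_s w=c\,w+L^{-1}\pa_s\phi$ and $\pa_s c=c^2+2\,w^{\!\top}\pa_s\phi$, whence $v_\s(s|\underline\nu)=v_\s(s)-\pa_s c$. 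I would then verify the Stark equation~\eqref{StarkModified} for $\varphi_\s(\lambda;s|\underline\nu)$ by differentiating $\varphi_\s(\lambda;s|\underline\nu)=\varphi_\s(\lambda;s)-\ell(\lambda;s)^{\!\top}w$ twice in~$s$, inserting the Stark equations obeyed by $\varphi_\s(\lambda;\,\cdot\,)$ and by each $\varphi_\s(\nu_i;\,\cdot\,)$ together with the identities just listed, and collecting terms; this is the iterated binary Darboux (Crum-type) computation and, once the rank-one structure is in hand, it reduces to bookkeeping. An alternative is to treat $m=1$ in this way --- where $\varphi_\s(\,\cdot\,;s|\nu_1)=\varphi_\s(\,\cdot\,;s)-L_s^\s(\,\cdot\,,\nu_1)\varphi_\s(\nu_1;s)/L_s^\s(\nu_1,\nu_1)$ is a classical binary Darboux transform with seed $\int_s^{+\infty}\varphi_\s(\nu_1;r)^2\,\d r$ --- and to induct on $m$, repeating the step at level $k$ with $\varphi_\s(\,\cdot\,;s)$ replaced by $\varphi_\s(\,\cdot\,;s|\nu_1,\dots,\nu_{k-1})$ and $L_s^\s$ by the corresponding conditioned kernel $\int_s^{+\infty}\varphi_\s(\,\cdot\,;r|\nu_1,\dots,\nu_{k-1})\,\varphi_\s(\,\cdot\,;r|\nu_1,\dots,\nu_{k-1})\,\d r$.

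It remains to prove the trace-type identity~\eqref{eq:secondlogderJV}. Using $v_\s(s)=-\int_\R\varphi_\s(\lambda;s)^2\s'(\lambda)\,\d\lambda$ from~\eqref{eq:trace} and $v_\s(s|\underline\nu)=v_\s(s)-\pa_s c$, this is equivalent to
\[
-\pa_s c=-\int_\R\bigl(\varphi_\s(\lambda;s|\underline\nu)^2-\varphi_\s(\lambda;s)^2\bigr)\s'(\lambda)\,\d\lambda+\sum_{i=1}^m 2\int_\R\frac{1-\s(\lambda)}{\lambda-\nu_i}\,\varphi_\s(\lambda;s|\underline\nu)^2\,\d\lambda ,
\]
all integrals being convergent (the last ones because $\varphi_\s(\,\cdot\,;s|\underline\nu)$ vanishes at each $\nu_i$, so the apparent pole at $\lambda=\nu_i$ is cancelled). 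Expanding $\varphi_\s(\lambda;s|\underline\nu)^2-\varphi_\s(\lambda;s)^2=-2\varphi_\s(\lambda;s)\,\ell(\lambda)^{\!\top}w+\bigl(\ell(\lambda)^{\!\top}w\bigr)^2$ turns the right-hand side into a bilinear form in $w$ whose coefficients are the moments $\int_\R\varphi_\s(\lambda;s)L_s^\s(\lambda,\nu_i)\s'(\lambda)\,\d\lambda$, $\int_\R L_s^\s(\lambda,\nu_i)L_s^\s(\lambda,\nu_j)\s'(\lambda)\,\d\lambda$ and the three analogous moments against $\tfrac{1-\s(\lambda)}{\lambda-\nu_i}\,\d\lambda$. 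These I would evaluate from $L_s^\s(\lambda,\mu)=\int_s^{+\infty}\varphi_\s(\lambda;r)\varphi_\s(\mu;r)\,\d r$, trading the spectral weight for $s$-derivatives via the Stark equation and integration by parts in~$r$ --- equivalently, via the integral identities for $L_s^\s$, which is (up to factors $\sqrt{\s}$) the resolvent kernel of the thinned Airy operator, cf.\ Corollary~\ref{corollary:LkernelIntegral} --- and then reconcile the outcome with $\pa_s c=c^2+2\,w^{\!\top}\pa_s\phi$ by means of $Lw=\phi$ and $\pa_s L=-\phi\phi^{\!\top}$. This last reconciliation is where the precise form of the modified spectral weight $-\s'(\lambda)+\sum_i\tfrac{2(1-\s(\lambda))}{\lambda-\nu_i}$ gets pinned down, and I expect it to be the main obstacle; by contrast, the first two steps are essentially formal consequences of Theorem~\ref{thm:Janossyphi} and of the rank-one structure of $\pa_s L_s^\s$.
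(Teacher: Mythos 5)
Your proposal takes a genuinely different route from the paper's for two of the three conclusions, and I'll verify those; but the third conclusion, which is the substantive one, is left as a plan rather than a proof, and I don't believe the proposed plan closes.

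On the positive side: the paper defines $\varphi_\s(\lambda;s|\underline\nu)$ through the Riemann--Hilbert problem for $\Psi_\s$ (as the $(2,1)$ entry of $\Theta_\s$), derives the $s$-ODE by showing $\pa_s\Theta_\s\,\Theta_\s^{-1}$ is a degree-one matrix polynomial in $\lambda$ (Proposition~\ref{prop:equationins}), and then obtains~\eqref{ModifiedVarphiDet} from the Schlesinger transformation~\eqref{eq:PsiVformula}--\eqref{eq:M}. You instead \emph{define} $\varphi_\s(\cdot;s|\underline\nu)$ by~\eqref{ModifiedVarphiDet}, read off the zeros at $\nu_i$ directly from the Schur-complement form $\varphi_\s(\lambda;s|\underline\nu)=\varphi_\s(\lambda;s)-\ell(\lambda)^\top L^{-1}\phi$, and verify the Stark equation by a Crum-type computation driven entirely by the rank-one identity $\pa_s L=-\phi\phi^\top$ (which is the paper's~\eqref{eq:dLs}). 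This works: the algebraic identities $\pa_s L^{-1}=ww^\top$, $\pa_s\log\det L=-c$, $\pa_s w=cw+L^{-1}\pa_s\phi$, $\pa_s c=c^2+2w^\top\pa_s\phi$ are correct, and the iterated binary Darboux verification of~\eqref{StarkModified} closes (I checked $m=1$ explicitly, and the general case is Crum). This is a cleaner, more elementary route to those two parts than the paper's, and it avoids invoking the Schlesinger machinery.

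The gap is in the trace formula~\eqref{eq:secondlogderJV}, and you flag it yourself. What you would need is a closed form for moments such as
\[
\int_\R\varphi_\s(\lambda;s)\,L_s^\s(\lambda,\nu_i)\,\s'(\lambda)\,\d\lambda,\qquad
\int_\R\frac{1-\s(\lambda)}{\lambda-\nu_i}\,\varphi_\s(\lambda;s|\underline\nu)^2\,\d\lambda,
\]
expressed purely in terms of $c$, $w$, $\phi$, $\pa_s\phi$. The first family already resists the proposed ``integration by parts in $r$'': writing $L_s^\s(\lambda,\nu_i)=\int_s^\infty\varphi_\s(\lambda;r)\varphi_\s(\nu_i;r)\,\d r$ converts the moment into a double integral whose inner $\lambda$-integral $\int_\R\varphi_\s(\lambda;s)\varphi_\s(\lambda;r)\s'(\lambda)\,\d\lambda$ mixes two different values of $s$ and has no evident reduction via the Stark equation. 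The second family is worse: the weight $\tfrac{1-\s(\lambda)}{\lambda-\nu_i}\,\d\lambda$ has no $\s'$ factor to trade against the integral representation of $L_s^\s$, so there is no obvious integration by parts at all; it is essentially a (regularized) Cauchy transform, and the only natural mechanism that evaluates it is residue calculus on the analytic continuation of $\varphi_\s(\lambda;s|\underline\nu)^2$ off the real line. That is exactly what the paper does: it forms $\Xi:=\Theta_\s\,\xi^{-\s_3}$ with $\xi(\lambda)=\prod_i(\lambda-\nu_i)$, computes the jump of $C:=(\pa_\lambda\Xi)\Xi^{-1}$ across $\R$ (which produces precisely the combination $\xi^2\bigl(-\s'+2(1-\s)\pa_\lambda\log\xi\bigr)$ in the $(1,2)$ slot of the jump, i.e.\ the weight $-\s'+\sum_i\tfrac{2(1-\s)}{\lambda-\nu_i}$), and then pushes the contour to infinity; the residue of the $(2,1)$ entry at $\infty$ is read off from~\eqref{eq:Thetaasympinfty} and equals $\pa_s p_\s(s|\underline\nu)-\tfrac s2 = -\pa_s^2\log j_\s(s|\underline\nu)$ via Proposition~\ref{prop:dlogJV}. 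Note also that even the $m=0$ trace formula~\eqref{eq:trace}, which you take as input, is itself proved by exactly this kind of contour deformation; there is no reason to expect the additional $\underline\nu$-dependent terms to be reachable by elementary manipulation starting from the $m=0$ case alone. So while your reduction $v_\s(s|\underline\nu)=v_\s(s)-\pa_s c$ correctly identifies \emph{what} must be shown, the ``reconciliation'' step you defer is not mere bookkeeping --- it is where the analytic structure of the RH problem is indispensable, and your outline does not supply a substitute for it.
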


The proof is given in Section~\ref{sec:proofs12}.
In the case $m=0$, the eigenfunctions $\varphi_\s(\ll;s|\emptyset)$ of this theorem reduce to what we denoted just $\varphi_\s(\ll;s)$ up to this point; for the sake of clarity, we will from now on use the notation $\varphi_\s(\ll;s|\emptyset)$.
Similarly said for the notation $v_\s(s)=v_\s(s|\emptyset).$

In other words, Theorem~\ref{thm:Janossyphi2} states that the Stark equation with potential $\pa_s^2\log j_\s(s|\underline\nu)$ is obtained by {\it Darboux transformations} (in the original spirit of Darboux~\cite{Darboux}) of the Stark equation with potential $\pa_s^2\log j_\s(s|\emptyset)$.

Moreover, by the asymptotics $\varphi_\s(\lambda;s|\emptyset)\sim\Ai(\lambda+s)=\varphi_0(\lambda;s|\emptyset)$ as $s\to+\infty$, one obtains that the appropriate boundary condition for the solution to the Stark equation~\eqref{StarkModified} is $\varphi_\s(\lambda;s|\underline\nu)\sim\varphi_0(\lambda;s|\underline\nu)$ as $s\to+\infty$.
The function $\varphi_0(\lambda;s|\underline\nu)$ is explicit in terms of $\Ai$ and $\Ai'$, by~\eqref{ModifiedVarphiDet} and $L_s^0=K^\Ai_s$.

\begin{remark}
The description of J\'anossy densities of determinantal point processes on $\R$, restricted to bounded intervals $B$ (instead of thinned point processes), in terms of solutions to certain differential equations has been recently developed in \cite{Nishigaki}.
The author there proved that for determinantal point processes defined through kernels satisfying the Tracy--Widom criteria \cite{TW94}, the Tracy--Widom method allows to express not only the gap probability (as proved in \cite{TW94}) but also the J\'anossy densities of the process restricted to a bounded interval $B$,  in terms of solutions to a system of differential equations in the endpoints of $B$.
For kernels also enjoying the integrable structure of Its--Izergin--Korepin--Slavnov~\cite{IIKS} (e.g., Airy, Bessel, sine kernels), the gap probability can be characterized by a Riemann--Hilbert (RH) problem.
This provides an alternative approach to study underlying integrable differential equations, and a powerful tool to tackle their asymptotics.
In this work, we extend the RH approach to study J\'anossy densities of the thinned shifted Airy point process. We are confident that our method can also be applied to other determinantal point processes with integrable structure, like the ones associated to Bessel and sine kernels.
\end{remark}

It follows by combining~\eqref{eq:secondlogderJV}, \eqref{StarkModified}, and~\eqref{potentialModified}, that $\varphi_\s(\lambda;s|\underline \nu)$ satisfies a deformation of the integro-differential Painlev\'e~II equation:
\be
\pa_s^2\varphi_\s(\ll;s|\underline \nu)=\biggl(\ll+s+2\int_\R\varphi_\s(\mu;s|\underline \nu)^2\biggl(\s'(\mu)-\sum_{i=1}^m\frac{2\bigl(1-\s(\mu)\bigr)}{\mu-\nu_i}\biggr)\d\mu\biggr)\varphi_\s(\ll;s|\underline \nu).
\ee
A general framework for studying integro-differential equations related to a class of Fredholm determinants was developed in \cite{Krajenbrink}.

\begin{remark}
	\label{rem:jumpsing}
	Theorems~\ref{thm:Janossyphi} and~\ref{thm:Janossyphi2} hold true more generally for all $\sigma$ satisfying the decay condition of Assumption~\ref{assumption:weak} even if they are only piecewise smooth with a finite number of jump singularities.
	In this case, one has to add to $\s'(\lambda)\,\d\ll$ a discrete measure supported at the singularities of $\sigma$.
	The extension to this case can be done following~\cite{CafassoClaeysRuzza},  as we will discuss in Section \ref{section:discont}.
\end{remark}

To construct a family of solutions to the cKdV equation in terms of the J\'anossy densities, we restore the full dependence on~$X,T$ and we introduce
\be
V_\s(X,T|\underline \nu):=\pa_X^2\log J_\sigma(X,T|\underline \nu),\qquad X\in\R,\ T>0.
\ee

\begin{theoremintro}\label{theorem:cKdV}
	For all $\s$ satisfying Assumption~\ref{assumption:weak} and all $\underline \nu=(\nu_1,\dots,\nu_m)$ with $\nu_i\not=\nu_j$ for all $i\not=j$, the function $V=V_\s(X,T|\underline \nu)$ solves the cKdV equation,
	\be
	\pa_TV+\frac 1{12}\,\pa_X^3V+V\,\pa_XV +\frac{1}{2T}\,V=0\,,\ \ \mbox{for all $X\in\R,T>0$.}
	\ee
\end{theoremintro}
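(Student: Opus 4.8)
The plan is to realise $V_\s(X,T|\underline\nu)$ as an $m$-fold Darboux transform of the gap-probability solution $V_\s(X,T|\emptyset)$, which already solves cKdV by \cite[Theorem~1.3]{CafassoClaeysRuzza}, and then to check that this transformation is compatible with the cKdV time flow. Equivalently, and more robustly, one can run the argument at the level of the Riemann--Hilbert (RH) problem associated with $J_\s(X,T|\underline\nu)$, which is the RH problem of the $m=0$ case supplemented by $m$ polar conditions at the J\'anossy points $\nu_1,\dots,\nu_m$ that do not depend on $X$ or $T$.

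First I would restore the full $X,T$-dependence in Theorem~\ref{thm:Janossyphi}: by the scaling identity \eqref{eq:Janossyreduction} it reads $J_\s(X,T|\underline\nu)=\det\bigl(\mathcal L_{X,T}(\nu_i,\nu_j)\bigr)_{i,j=1}^m\,J_\s(X,T|\emptyset)$, where $\mathcal L_{X,T}(\ll,\mu)=T^{-1/2}\int_X^{+\infty}\wh\varphi_\s(\ll;X',T)\,\wh\varphi_\s(\mu;X',T)\,\d X'$ is the conditional kernel built from the Stark wave function $\wh\varphi_\s$ of \eqref{eq:TStark}. The determinant $\det\bigl(\mathcal L_{X,T}(\nu_i,\nu_j)\bigr)$ is precisely the determinant that inserts $m$ bound states at energies $\nu_1,\dots,\nu_m$ into the Stark potential associated with $V_\s(X,T|\emptyset)$; the corresponding dressed wave function is the Crum determinant \eqref{ModifiedVarphiDet}, which by Theorem~\ref{thm:Janossyphi2} solves the Stark equation with potential $V_\s(X,T|\underline\nu)$. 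This is the spatial half of the cKdV Lax pair, namely $\bigl(T\pa_X^2+2TV_\s(X,T|\underline\nu)-X\bigr)\wh\varphi_\s(\ll;X,T|\underline\nu)=\ll\,\wh\varphi_\s(\ll;X,T|\underline\nu)$.

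The second step is to supply the temporal half and to check it survives the transformation. The wave function $\wh\varphi_\s(\ll;X,T|\emptyset)$ satisfies, besides \eqref{eq:TStark}, a third-order linear evolution equation in $T$ of the form $\pa_T\wh\varphi=\bigl(\tfrac1{12}\pa_X^3+(\text{lower order in }V_\s(X,T|\emptyset))\bigr)\wh\varphi$ --- the temporal Lax operator of cKdV --- whose existence is most cleanly extracted from the RH representation of the $m=0$ case: after conjugating out the explicit (Airy-type) $X,T$-dependent factor, the jump matrix becomes independent of $X$ and $T$, so the logarithmic $X$- and $T$-derivatives of the RH solution are rational in the spectral variable with a pole only at infinity, hence polynomial, and matching against the large-$\ll$ expansion identifies them with the cKdV Lax pair. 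Since the seed eigenfunctions $\wh\varphi_\s(\nu_i;X,T|\emptyset)$ evolve along this same temporal flow, the Darboux/Crum transformation \eqref{ModifiedVarphiDet} intertwines not only the spatial operators but also the temporal ones, so that $\wh\varphi_\s(\ll;X,T|\underline\nu)$ satisfies both the Stark equation above and the correspondingly dressed third-order evolution in $T$. The compatibility (zero-curvature) condition of this transformed Lax pair is precisely the cKdV equation for $V_\s(X,T|\underline\nu)$; alternatively one may verify the bilinear form \eqref{eq:bilinearcKdV} for $J_\s(X,T|\underline\nu)$ directly from $J_\s(X,T|\underline\nu)=\det(\mathcal L_{X,T}(\nu_i,\nu_j))\,J_\s(X,T|\emptyset)$ together with the Hirota-type relations satisfied by $\mathcal L_{X,T}$ and by $J_\s(X,T|\emptyset)$.

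The main obstacle is the bookkeeping around the temporal flow: one must (i) exhibit the $T$-evolution equation for $\wh\varphi_\s(\cdot;X,T|\emptyset)$ with the exact coefficients prescribed by cKdV, and (ii) verify that this equation is preserved by the $m$-fold Darboux transformation, i.e.\ that $\det(\mathcal L_{X,T}(\nu_i,\nu_j))$, viewed as a function of $X$ and $T$ through $\wh\varphi_\s$, dresses the temporal Lax operator correctly --- equivalently, that the $\nu_i$ behave as ``frozen'' discrete spectral data in the RH picture. A secondary technical point is the identification of the Lax-pair potential with $\pa_X^2\log J_\s(X,T|\underline\nu)$, which follows from the large-$\ll$ expansion of the RH solution and the trace-type identities of Theorems~\ref{thm:Janossyphi}--\ref{thm:Janossyphi2}, together with the bookkeeping of the extra rational factor $\det(\mathcal L_{X,T}(\nu_i,\nu_j))$ under $\pa_X$ and $\pa_T$. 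Once these are in place, the cKdV equation for $V_\s(X,T|\underline\nu)$ drops out of the zero-curvature relation exactly as in the gap-probability case.
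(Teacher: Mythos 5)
Your proposal coincides, in its ``more robust'' RH formulation, with the paper's actual proof: the paper also defines the RH problem for $\wh\Psi_\s$ directly with the $m$ pole conditions at $\nu_1,\dots,\nu_m$ built in, observes that both the jump on $\R$ and the pole conditions are independent of $X$ and $T$, deduces via Liouville that $A=(\pa_X\wh\Theta_\s)\wh\Theta_\s^{-1}$ and $B=(\pa_T\wh\Theta_\s)\wh\Theta_\s^{-1}$ are polynomial in $\ll$, identifies them with the cKdV Lax pair by matching Laurent coefficients at infinity (using Proposition~\ref{prop:dlogJV} to tie $\wh p_\s$ to $\pa_X\log J_\s$), and concludes with the zero-curvature compatibility in Corollary~\ref{corollarythmckdv}. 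The Darboux-transformation framing and the suggestion to verify the bilinear identity \eqref{eq:bilinearcKdV} directly are alternative narrations you mention but do not rely on; the load-bearing argument is the same as the paper's.
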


The proof is in Section~\ref{sec:proofcKdV}, see Corollary~\ref{corollarythmckdv}.

\begin{example}\label{example:soliton}
	When $\s=0$ we have $j_{0}(s|\emptyset)=1$ and $L_s^{0}(\ll,\mu)=K^{\rm Ai}(\ll+s,\mu+s)$, such that, according to~\eqref{thm1eq1}, 
	\be 
	j_0(s|\underline \nu)=\rho^\Ai_m(\nu_1+s,\ldots,\nu_m+s)
	\ee
	is the $m$-point correlation function in the shifted Airy ensemble~\eqref{eq:corr}. 
	The corresponding cKdV tau function is
	\be
	\label{eq:J0nu}
	J_0(X,T|\underline \nu)=\det\bigl(K^{\rm Ai}_{X,T}(\nu_i,\nu_j)\bigr)_{i,j=1}^m.
	\ee
	The associated cKdV solution $V_0(X,T|\underline \nu)=\pa_X^2\log\det\bigl(K^{\rm Ai}_{X,T}(\nu_i,\nu_j)\bigr)_{i,j=1}^m$
	is a special case of soliton-type solution~\cite{Nakamura}, cf. Figure~\ref{fig:sol}, exhibiting right tail decay, and left tail rapid oscillations with decaying amplitude.
We will show below, cf.~\eqref{eq:asympcorrAiry}, that
\be
\label{eq:puresolition-asDet}
\det\bigl(K^\Ai_{X,T}(\nu_i,\nu_j)\bigr)_{i,j=1}^m \sim 
2^{-m(2m+1)}\pi^{-m} X^{-m^2}\prod_{1\leq i\leq m}\e^{-\frac 43T^{-\frac 12}(\nu_i+X)^{\frac 32}}\prod_{1\leq i<j\leq k}(\nu_i-\nu_j)^2
\ee
as $XT^{-\frac 13}\to+\infty$, uniformly for $T\geq T_0$.
	Since these asymptotics are moreover valid uniformly for complex $X$ with $|\arg X|<\delta$ and $\delta>0$ small, we can differentiate the logarithm of this expression twice which yields
	\be
	\label{eq:soliton-as}
	V_0(X,T|\underline \nu)\sim -\frac{m}{\sqrt{XT}},\qquad \mbox{as $XT^{-\frac 13}\to +\infty$, uniformly for $T\geq T_0$.}
	\ee
	Similarly, after straightforward computations involving the asymptotic behavior of the Airy function and its derivative, we obtain for the left tail if $m=1$,  as $XT^{-\frac 13}\to -\infty$, $T\geq T_0$,
	\begin{align}
		J_0(X,T|\nu)
		&=\frac{1}{\pi}\sqrt{\frac{|X|-\nu}{T}}\left(1-\frac{\sqrt{T}}{4}(|X|-\nu)^{{-}\frac 32}\cos\biggl(\frac{4}{3\sqrt T}(|X|-\nu)^{\frac 32}\biggr)+O(X^{-3}T)\right),
		\\
		\label{eq:soliton-as1-}V_0(X,T|\nu)
		&=\frac{1}{\sqrt{T|X|}}\cos\biggl(\frac{4}{3\sqrt T}(|X|-\nu)^{\frac 32}\biggr)+O(|X|^{-\frac 32}T^{-\frac 12}).
	\end{align}
\end{example}

\begin{figure}[t]
	\centering
	\begin{subfigure}{.19\textwidth}
		\includegraphics[scale=.23]{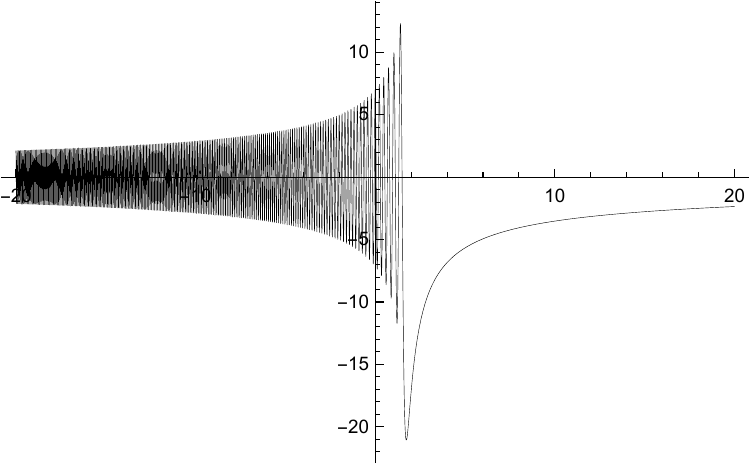}
		\caption*{$T=10^{-2}$}
	\end{subfigure}
	\begin{subfigure}{.19\textwidth}
		\includegraphics[scale=.23]{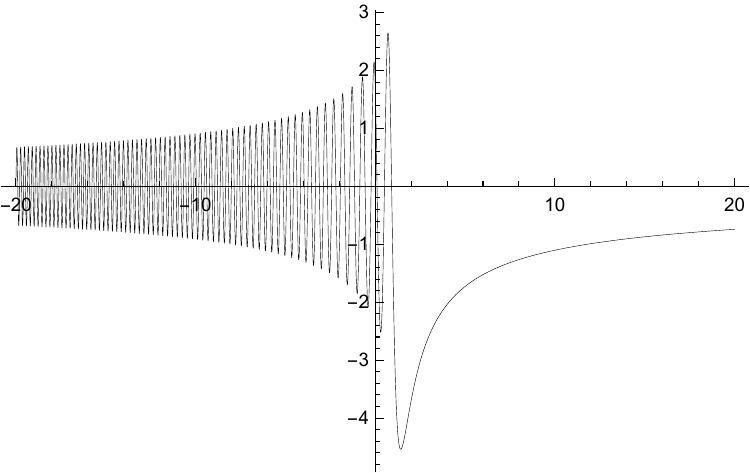}
		\caption*{$T=10^{-1}$}
	\end{subfigure}
	\begin{subfigure}{.19\textwidth}
		\includegraphics[scale=.23]{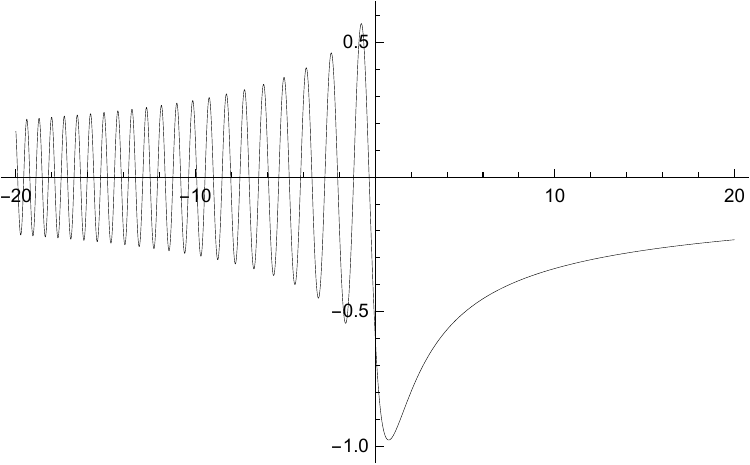}
		\caption*{$T=1$}
	\end{subfigure}
	\begin{subfigure}{.19\textwidth}
		\includegraphics[scale=.23]{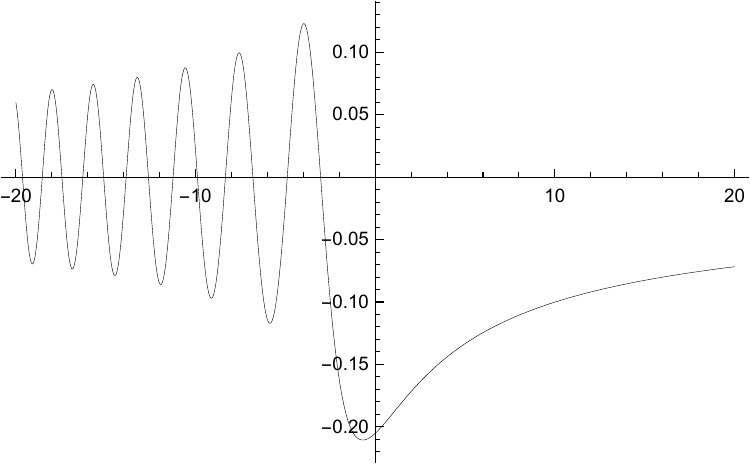}
		\caption*{$T=10$}
	\end{subfigure}
	\begin{subfigure}{.19\textwidth}
		\includegraphics[scale=.23]{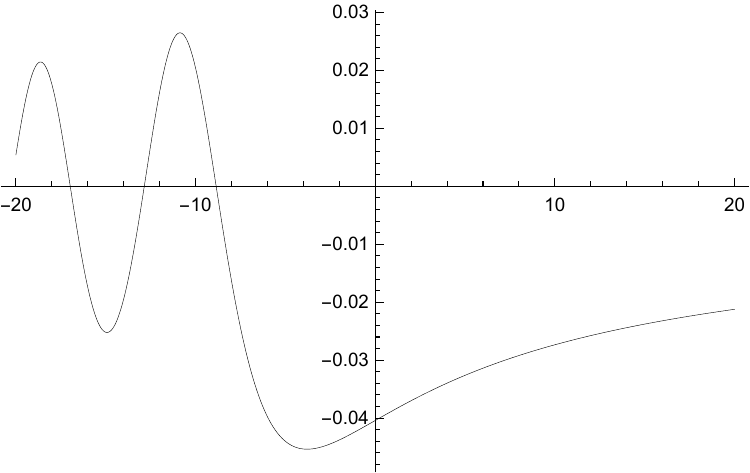}
		\caption*{$T=10^2$}
	\end{subfigure}
	\begin{subfigure}{.19\textwidth}
		\includegraphics[scale=.23]{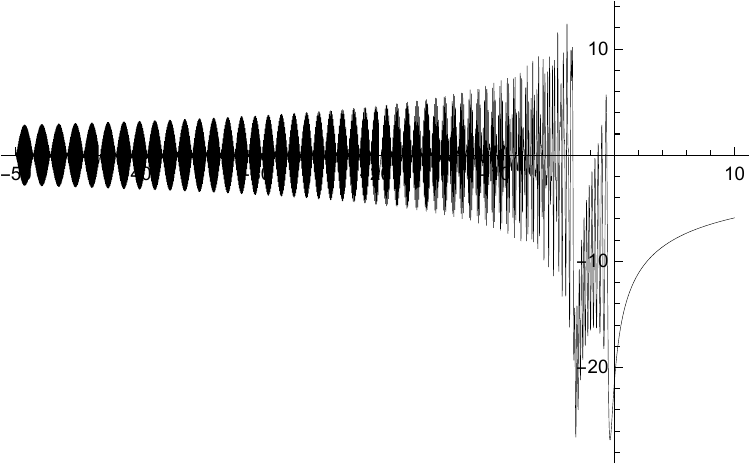}
		\caption*{$T=10^{-2}$}
	\end{subfigure}
	\begin{subfigure}{.19\textwidth}
		\includegraphics[scale=.23]{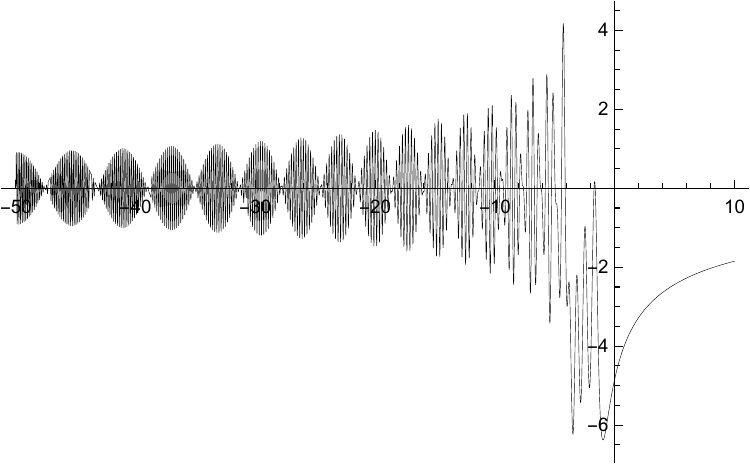}
		\caption*{$T=10^{-1}$}
	\end{subfigure}
	\begin{subfigure}{.19\textwidth}
		\includegraphics[scale=.23]{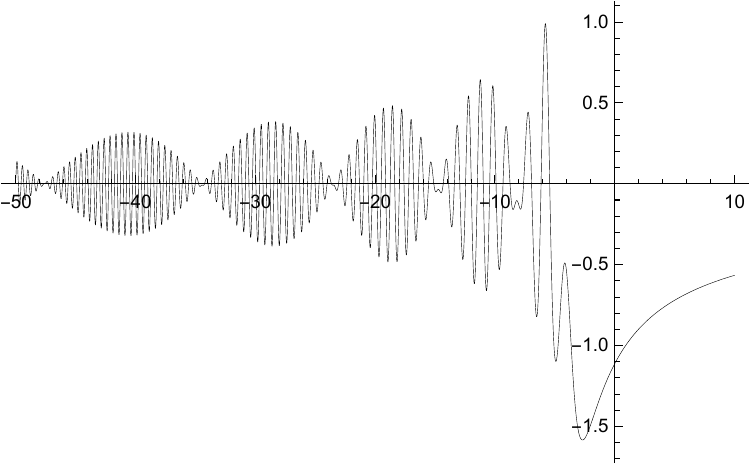}
		\caption*{$T=1$}
	\end{subfigure}
	\begin{subfigure}{.19\textwidth}
		\includegraphics[scale=.23]{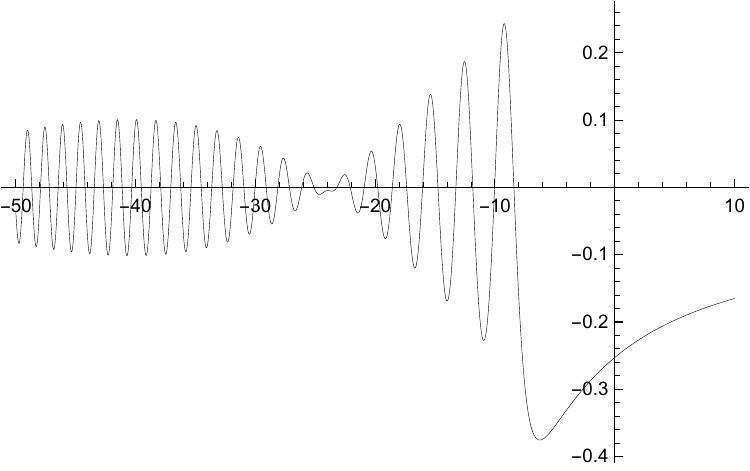}
		\caption*{$T=10$}
	\end{subfigure}
	\begin{subfigure}{.19\textwidth}
		\includegraphics[scale=.23]{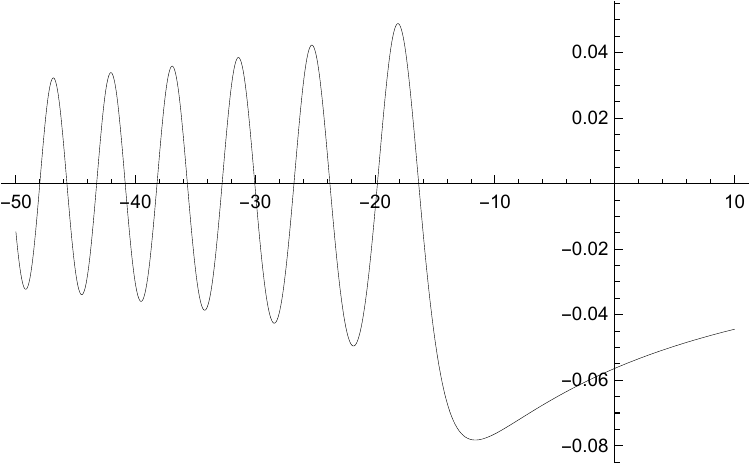}
		\caption*{$T=10^2$}
	\end{subfigure}
	\caption{First line: 1-soliton cKdV solution $V_{0}(X,T|\underline \nu)$ with $\underline \nu=(-2)$ as a function of $X$ for various values of $T$.
		Second line: 2-soliton cKdV solution $V_{0}(X,T|\underline \nu)$ with $\underline \nu=(0,3)$ as a function of $X$ for various values of $T$.}
	\label{fig:sol}
\end{figure}

\begin{remark}
The function $\sigma$ and the parameters $\underline\nu$ can be understood as scattering data for the cKdV solution under consideration. In analogy with \cite{GGJM, GGJMM}, we could interpret $\sigma$ as a function describing a gas of solitons.
\end{remark}

Our final result concerns the asymptotic behavior of the J\'anossy densities and associated cKdV solutions when $(X,T)$ goes to infinity in different regions of the $(X,T)$-plane, as illustrated in Figure \ref{figure: phase diagram}.
To this end, we need again to require stronger conditions on the functions $\s $, namely Assumption \ref{assumption:strong} (cf.~\cite{CafassoClaeysRuzza,CharlierClaeysRuzza}).

\begin{theoremintro}
\label{thm:asy}
Let $\underline\nu=(\nu_1,\dots,\nu_m)$ with $\nu_i\not=\nu_j$ for all $i\not=j$.

\noindent
{\bf (i)}~Let $\s$ satisfy Assumption~\ref{assumption:weak}.
For any $T_0>0$, there exists $c>0$ such that
\begin{align}
\label{eq:asy1LogJ}
J_\s(X,T|\underline\nu)&=\det\bigl(K^\Ai_{X,T}(\nu_i,\nu_j)\bigr)_{i,j=1}^m\bigl(1+O(\e^{-cXT^{-\frac 13}})\bigr),
\\
\label{eq:asy1V}
V_\s(X,T|\underline\nu)&=V_0(X,T|\underline\nu)+O\bigl(\e^{-cXT^{-\frac 13}}\bigr),
\end{align}
uniformly in $T\geq T_0$ as $XT^{-\frac 13}\to +\infty$. We recall that the asymptotics for $\det\bigl(K^\Ai_{X,T}(\nu_i,\nu_j)\bigr)_{i,j=1}^m$ and $V_0(X,T|\underline\nu)$ are given in~\eqref{eq:puresolition-asDet} and~\eqref{eq:soliton-as}.
	
\noindent {\bf (ii)}~Let $\s$ satisfy Assumption~\ref{assumption:strong}.
For any $T_0>0$, we have
\begin{align}
\label{eq:asy3LogJ}
J_\s(X,T|\underline\nu)&\sim \frac{|X|^{\frac m2}}{\pi^mT^{\frac m2}}J_\s(X,T|\emptyset)\prod_{j=1}^m 
\frac{1}{1-\sigma(\nu_j)},
\\
\nonumber
V_\s(X,T|\underline\nu)&=V_\s(X,T|\emptyset) \\
\label{eq:asy3V}
&\quad
+\frac{1}{\sqrt{|X|T}}\sum_{j=1}^m\cos\biggl(\frac{4|X|^{\frac 32}}{3T^{\frac 12}}(1+A_{X,T})-\frac{2|X|^{\frac 12}}{T^{\frac 12}}\nu_j(1+B_{X,T}(\nu_j))\biggr)+O(|X|^{-1}),
\end{align}
uniformly for $T\geq T_0$ as $\frac{X}{T\log^2|X|}\to -\infty$, where $A_{X,T}, B_{X,T}(\nu)$ converge to $0$ as $\frac{X}{T\log^2|X|}\to -\infty$.
We recall that the asymptotics for $\log J_\sigma(X,T|\emptyset)$ and $V_\sigma(X,T|\emptyset)$ are given in Theorem \ref{thm:CCR}.
\end{theoremintro}

Observe the specific structure of the $\nu_j$-dependence of the cKdV solution $V_\sigma(X,T|\underline\nu)$ and the J\'anossy density $J_\sigma(X,T|\underline\nu)$.
For~$XT^{-\frac 13}\to +\infty$, the leading order behavior of $J_\sigma(X,T|\underline\nu)$ depends on the positions $\nu_1,\ldots, \nu_m$ through their Vandermonde squared while the one of $V_\sigma(X,T|\underline\nu)$ depends on the number of points~$m$ but not explicitly on the positions $\nu_1,\ldots, \nu_m$.
On the other hand, for $\frac{X}{T\log^2 |X|}\to -\infty$, the effect of $\underline\nu$ is more prominent.
For~$J_\sigma(X,T|\underline\nu)$, it results in a product of factors depending on $\nu_j$, while for $V_\sigma(X,T|\underline\nu)$, the
presence of $\nu_1,\ldots, \nu_m$ results in a superposition of rapidly oscillating terms depending on $\nu_j$.

It is remarkable that both in the left and right tail asymptotics of~$V_\sigma(X,T|\underline\nu)$, we recognize (up to a sub-leading phase shift in the oscillatory terms) a superposition of $m$ 1-soliton solutions whose tail asymptotics are described in Example~\ref{example:soliton}, in addition to the leading order and $\underline\nu$-independent contribution coming from~$V_\s(X,T|\emptyset)$.

Recall that direct and inverse scattering theory for the cKdV equation has been established in~\cite{ItsSukhanov} for solutions $V(X,T)$ which are Schwartz class on the real line in $X$ when $T=1$. We emphasize however that the $X\to \pm\infty$ tails of $V_{\s}(X,T|\underline\nu)$ for fixed $T>0$ are too heavy to be included in this class.
Indeed, when $\underline\nu=\emptyset$, the asymptotic relation~\eqref{eq:heavytail} as $X\to+\infty$ shows that $V_{\s}(X,T|\emptyset)$ is not Schwartz class, despite exponential decay as $X\to-\infty$ (see part~(i) of Theorem~1.8 in~\cite{CafassoClaeysRuzza}).
When $\underline\nu\not=\emptyset$, even the fast decay as $X\to-\infty$ is destroyed, see~\eqref{eq:asy1V}.

\begin{remark}
Parts~(i) and (ii) of Theorem~\ref{thm:asy} will be proved by using respectively two different factorizations of the Janossy densities $J_\s(X,T|\underline \nu)$ given in Proposition \ref{prop:janossyfactorization}.
In particular, to achieve part~(ii), we need to understand the behavior of the kernel $\wh L_{X,T}^\s(\cdot,\cdot)$, which is the $T$-dependent counterpart of $L_s^\s(\cdot,\cdot)$, see~\eqref{eq:defLXT}. The control over the error term of the kernel diagonal turns out to be possible only by requiring $\frac{X}{T\log^2|X|}\to -\infty$, which is why this term appears in the final result. 
	A possible extension to  the full region $X/T\to -\infty$ is discussed in Remark~\ref{remark:extendas}.
\end{remark}

\begin{figure}[t]
	\centering
	\hspace{1cm}
	\begin{tikzpicture}[scale=1]
		\draw[->] (-9.5,0) -- (5.5,0) node[right] {$X$};
		\draw[-] (0,0) -- (0,3.3);
		\draw[->](0,3.9) -- (0,4.5) node[above] {$T$};
		\draw[dashed] (1.5,1.3) -- (5.5,1.3);
		\draw[dashed] (-2.5,1.3) -- (-9.5,1.3);
		\draw[domain=1:1.5, smooth, variable=\x, dashed] plot ({1.5*\x}, {1.3*\x*\x*\x}) node[above] {${\scriptstyle XT^{-\frac 13}=M}$};
		\draw[dashed] (-4.8,3.1) -- (2,3.1);
		\node at (-1.3,3.6) {intermediate regimes};
		\node at (3.7,2.7) {vanishing:};
		\node at (3.7,2.2) {\small $V_\s(X,T|\underline\nu)\sim-\frac m{\sqrt{XT}}$};
		\node at (-6.7,2.7) {superposition of};
		\node at (-6.7,2.2) {$V_\sigma(X,T|\emptyset)$ and oscillatory terms};
		\draw[domain=2.5:7, smooth, variable=\x, dashed] plot ({-\x},{1.3+23*(\x-2.5)/(ln(50*\x)*ln(50*\x))}) node[above] {${\scriptstyle K=\frac{|X|}{T\log^2|X|}}$};
	\end{tikzpicture}
	\caption{Phase diagram showing the different tail asymptotics for~$V_\sigma(X,T|\underline \nu)$, uniform in the indicated regions for fixed $M,K>0$.}
	\label{figure: phase diagram}
\end{figure}
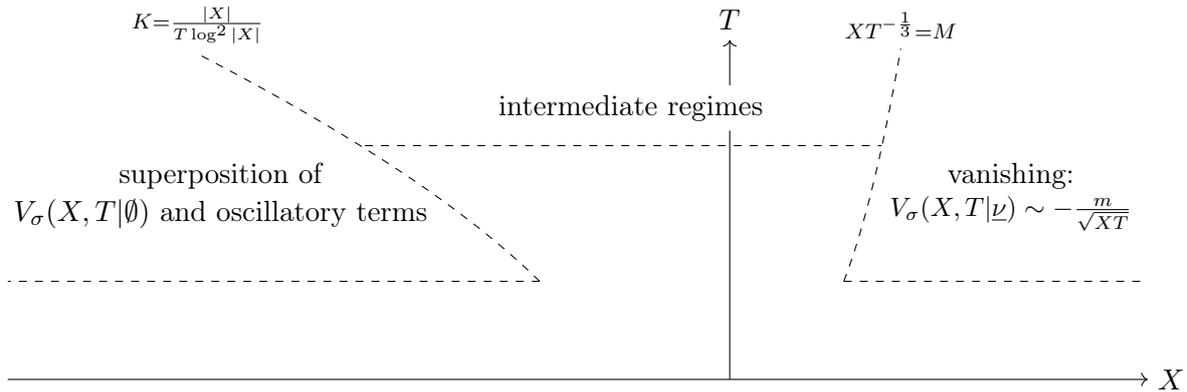

\subsubsection*{Methodology and outline}
In Section~\ref{sec:2}, we will gather several properties and identities for the J\'anossy densities on which we will rely later, and we will give a probabilistic interpretation to the kernel~$L_s^\sigma$. 
Of particular importance will be Proposition \ref{prop:janossyfactorization}, where we give two different factorizations of the J\'anossy densities. Both of them are relevant for two reasons. First, they have interesting probabilistic interpretations, as explained in Remark \ref{rem:prob interpr Janossy fact}. Secondly, they will be used both in Section \ref{sec:asymptotics} for the proof of Theorem \ref{thm:asy}, where part (i) and (ii) will be proved by using respectively the first and second factorization.

In Section \ref{sec:Rhcharact}, we will characterize the J\'anossy densities and other relevant quantities in terms of a {$2\times 2$ matrix-valued} Riemann--Hilbert (RH) problem, by relying on the Its--Izergin--Korepin--Slavnov method~\cite{IIKS}.
The main difference with the classical RH problems in the context of integrable operators will be the presence of a further condition, requiring specific singular behavior at the points $\nu_j\in \mathbb{R}$ ($j=1,\dots,m$) appearing in the definition of the J\'anossy densities~\eqref{def:Jsigma}.
	In this sense, this RH characterization shows strong similarities with the one from~\cite{BertolaCafasso}, where the authors studied general Schlesinger transformations on RH problems and the consequences of these transformations on associated quantities, like the Malgrange--Bertola differential and the tau function (or the Fredholm determinant in this case).
	We give a detailed comparison with the general methods of \cite{BertolaCafasso} in Section~\ref{sec:BC}.
Moreover, we establish a connection between the RH problem, the Stark boundary value problem~\eqref{eq:Stark}, and the cKdV equation~\eqref{cKdV}.
This will enable us to prove Theorem~\ref{thm:Janossyphi}, Theorem~\ref{thm:Janossyphi2}, and Theorem~\ref{theorem:cKdV}.
Section~\ref{sec:asymptotics} will be devoted to the asymptotic analysis of the RH problem from Section~\ref{sec:Rhcharact} and the related quantities $J_\s(X,T|\underline \nu), V_\s(X,T|\underline \nu)$.
	We will distinguish several regions in the $(X,T)$-plane which will require a different type of asymptotic analysis, and which lead to the results presented in Theorem~\ref{thm:asy}.
	We underline that the two parts of this result are obtained by different methods, in the following sense.
	As already mentioned, part~(i) relies on the first factorization of the J\'anossy densities from Proposition~\ref{prop:janossyfactorization} and thus its proof comes from classical asymptotic analysis of the Fredholm determinant factor in the right tail regime, see Lemma~\ref{lemma:H}. Part~(ii) relies instead on the second factorization from Proposition~\ref{prop:janossyfactorization} and its proof is derived by using previous results from~\cite{CafassoClaeysRuzza, CharlierClaeysRuzza} for the asymptotic behavior of the Fredholm determinant factor, summarized in Theorem~\ref{thm:CharlierClaeysRuzza}, and by exploiting the RH analysis of~\cite{CharlierClaeysRuzza} to obtain the asymptotic behavior of the finite determinant factor in the left-tail regime, achieved in Proposition~\ref{prop:decor}.
	Finally we conclude with some brief considerations about the various intermediate regimes.

\section{Preliminaries on J\'anossy densities}\label{sec:2}

In this section, we study in more detail the J\'anossy densities~$j_\s(s|\underline \nu)$ introduced in~\eqref{eq:reducedJanossy}.
The results could be easily translated into parallel results for~$J_\s(X,T|\underline \nu)$ by~\eqref{eq:Janossyreduction} and the observation that $\wt\s$ satisfies Assumption~\ref{assumption:weak} if $\s$ does, but we will omit the details for the sake of brevity. 

\subsection{Operator preliminaries}

For a given~$g\in L^\infty(\R)$, let~$\mathcal M_g$ be the multiplication operator on~$L^2(\R)$ defined by~$\mathcal M_{g}f=g f$ for all~$f\in L^2(\R)$, and let~$\mathcal K_s^\Ai$ be the operator acting on~$L^2(\R)$ through the shifted Airy kernel,
\be
(\mathcal K^\Ai_s f)(\ll) = \int_\R K_s^\Ai(\ll,\mu)f(\mu)\d\mu,\quad K_s^\Ai(\ll,\mu):=K^\Ai(\ll+s,\mu+s),\qquad f\in L^2(\R),
\ee
with $K^\Ai$ defined in~\eqref{eq:standardairykernel}.
It is worth recalling that $\mathcal K^\Ai_s$ is an orthogonal projector which can be represented as $\mathcal A_s \mathcal M_{1_{(0,+\infty)}}\mathcal A_s$ where $\mathcal A_s$ is the unitary involution of $L^2(\R)$ defined by
\be
(\mathcal A_sf)(\ll) = \int_{-\infty}^{+\infty}\Ai(\ll+\mu+s)f(\mu)\,\d \mu,\qquad f\in L^2(\R),
\ee
where the integral in the right-hand side is taken as an $L^2$-limit of $\int_{-\Lambda}^{+\infty}$ as $\Lambda\to+\infty$.

\begin{lemma}
\label{lemma1}
Let $\s$ satisfy Assumption~\ref{assumption:weak}.
The operator $\mathcal K_s^\s:=\mathcal M_{\sqrt\s}\mathcal K_s^\Ai\mathcal M_{\sqrt\s}$ is trace class on~$L^2(\R)$ and
\be
\label{eq:claimdet}
j_\s(s|\emptyset)=\det_{L^2(\R)}\bigl(1-\mathcal K_s^\s\bigr).
\ee
Moreover, $0<j_\s(s|\emptyset)\leq 1$ for all~$s\in\R$.
\end{lemma}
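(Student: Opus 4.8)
The plan is to prove Lemma~\ref{lemma1} in three stages: first establish the trace class property and the operator-determinant identity \eqref{eq:claimdet}, then derive the strict positivity and the upper bound $j_\s(s|\emptyset)\le 1$. The main point for the trace class claim is that Assumption~\ref{assumption:weak} is slightly stronger than the trace condition $\int_\R\s(\ll)K^\Ai(\ll,\ll)\,\d\ll<\infty$. First I would recall that $\mathcal K^\Ai_s=\mathcal A_s\mathcal M_{1_{(0,+\infty)}}\mathcal A_s$ is an orthogonal projection, hence in particular $0\le \mathcal K^\Ai_s\le 1$. I would then write $\mathcal K^\s_s=\mathcal M_{\sqrt\s}\mathcal K^\Ai_s\mathcal M_{\sqrt\s}=(\mathcal M_{1_{(0,+\infty)}}\mathcal A_s\mathcal M_{\sqrt\s})^*(\mathcal M_{1_{(0,+\infty)}}\mathcal A_s\mathcal M_{\sqrt\s})$, so that $\mathcal K^\s_s$ is nonnegative and it suffices to show that $G_s:=\mathcal M_{1_{(0,+\infty)}}\mathcal A_s\mathcal M_{\sqrt\s}$ is Hilbert--Schmidt. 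Its integral kernel is $1_{(0,+\infty)}(\ll)\,\Ai(\ll+\mu+s)\sqrt{\s(\mu)}$, and its Hilbert--Schmidt norm squared is $\int_0^\infty\!\!\int_\R \Ai(\ll+\mu+s)^2\s(\mu)\,\d\mu\,\d\ll=\int_\R K^\Ai(\mu+s,\mu+s)\s(\mu)\,\d\mu$, using the first expression in \eqref{eq:standardairykernel}. The finiteness of the last integral follows from the classical asymptotics of $K^\Ai(\ll,\ll)$: it decays super-exponentially as $\ll\to+\infty$, while $K^\Ai(\ll,\ll)\sim \frac1\pi\sqrt{-\ll}$ as $\ll\to-\infty$, so $\s(\mu)K^\Ai(\mu+s,\mu+s)=O(|\mu|^{-1-\kappa})$ by Assumption~\ref{assumption:weak}, which is integrable at $-\infty$. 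Hence $G_s$ is Hilbert--Schmidt, $\mathcal K^\s_s=G_s^*G_s$ is trace class, and moreover $0\le\mathcal K^\s_s$.

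Next I would identify the Fredholm series. By definition \eqref{def:j0}--\eqref{eq:corr} with $\nu=\emptyset$, $j_\s(s|\emptyset)=j_\s(s)$ is exactly the Fredholm expansion $\sum_n\frac{(-1)^n}{n!}\int\det(K^\Ai(\ll_i+s,\ll_j+s))\prod\s(\ll_i)\,\d\ll_i$. Since $\det(K^\Ai(\ll_i+s,\ll_j+s))\prod_i\s(\ll_i)=\det(\sqrt{\s(\ll_i)}K^\Ai(\ll_i+s,\ll_j+s)\sqrt{\s(\ll_j)})$ is the kernel of the $n$-fold tensor of $\mathcal M_{\sqrt\s}\mathcal K^\Ai_s\mathcal M_{\sqrt\s}=\mathcal K^\s_s$, this series is precisely the Fredholm determinant $\det_{L^2(\R)}(1-\mathcal K^\s_s)$; the interchange of sum and integral and convergence are justified by the trace class property just established (Hadamard's bound / Plemelj--Smithies). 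This gives \eqref{eq:claimdet}, and in particular shows the series in \eqref{def:j0} converges absolutely.

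For the bounds, I would combine $0\le\mathcal K^\s_s$ with an upper bound on its spectrum. Since $\mathcal K^\Ai_s$ is an orthogonal projection and $0\le\s\le 1$, for any $f\in L^2(\R)$ we have $\langle \mathcal K^\s_s f,f\rangle=\langle \mathcal K^\Ai_s\mathcal M_{\sqrt\s}f,\mathcal M_{\sqrt\s}f\rangle\le \|\mathcal M_{\sqrt\s}f\|^2=\int\s|f|^2\le\|f\|^2$, so $0\le\mathcal K^\s_s\le 1$; thus every eigenvalue $\lambda_k$ of $\mathcal K^\s_s$ lies in $[0,1]$, and $j_\s(s|\emptyset)=\prod_k(1-\lambda_k)\le 1$. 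For strict positivity, it suffices to rule out the eigenvalue $1$, i.e.\ to show $\ker(1-\mathcal K^\s_s)=\{0\}$. If $\mathcal K^\s_s f=f$ then, from the chain of inequalities above, equality forces $\|\mathcal M_{\sqrt\s}f\|=\|f\|$, hence $\s=1$ a.e.\ on the support of $f$; but Assumption~\ref{assumption:weak} (decay of $\s$ at $-\infty$, and in fact $\s$ being a thinning function bounded by a decaying tail) implies $\{\s=1\}$ has a $\mu$-null complement only if $f\equiv0$ — more robustly, equality in $\langle\mathcal K^\Ai_s g,g\rangle\le\|g\|^2$ for a projection forces $g=\mathcal M_{\sqrt\s}f$ to be in the range of $\mathcal K^\Ai_s$, and combined with $\mathcal M_{\sqrt\s}f$ vanishing wherever $\s<1$ (which is a set of positive measure, e.g.\ a neighbourhood of $-\infty$) one concludes $\mathcal M_{\sqrt\s}f\in\operatorname{Ran}\mathcal K^\Ai_s$ is compactly supported on the real line; since $\operatorname{Ran}\mathcal K^\Ai_s$ consists of analytic functions (restrictions of entire functions, as $\mathcal K^\Ai_s=\mathcal A_s1_{(0,\infty)}\mathcal A_s$ maps into the Paley--Wiener-type analytic class generated by the Airy transform), the only such function is $0$, hence $f=0$ on $\{\s>0\}$, and then $\mathcal K^\s_sf=0\ne f$ unless $f=0$. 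Therefore $1\notin\operatorname{spec}(\mathcal K^\s_s)$ and $j_\s(s|\emptyset)=\prod_k(1-\lambda_k)>0$.

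The step I expect to be the main obstacle is the strict positivity, more precisely making rigorous the claim that $1$ is not an eigenvalue: the soft argument via "analyticity of the range of $\mathcal K^\Ai_s$" needs to be pinned down, and the cleanest route is probably to invoke injectivity of the Airy transform $\mathcal A_s$ together with the fact that $\mathcal M_{1_{(0,+\infty)}}$ has no eigenvalue $1$ on the image of $\mathcal M_{\sqrt\s}$ unless that image already lies in $L^2(0,+\infty)$-type subspaces incompatible with the decay of $\s$; I would present this by unfolding $\mathcal K^\s_s f=f$ into $\mathcal M_{1_{(0,\infty)}}\mathcal A_s\mathcal M_{\sqrt\s}f=\mathcal A_s\mathcal M_{\sqrt\s}f$ (using $\mathcal A_s$ unitary and $\|\cdot\|$-equalities above) and then reading off that $\mathcal A_s\mathcal M_{\sqrt\s}f$ is supported on $(0,+\infty)$ while $\mathcal M_{\sqrt\s}f$ has full support, contradicting injectivity/analyticity unless $f=0$. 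All the remaining computations — the Hilbert--Schmidt norm identity, the Airy kernel diagonal asymptotics, and the Fredholm series manipulations — are standard and I would only sketch them.
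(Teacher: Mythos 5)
Your proposal is correct and follows essentially the same route as the paper: factor $\mathcal K_s^\s$ through the Airy transform to get the Hilbert--Schmidt/trace class claim, identify the Fredholm series, use $0\le\mathcal K^\Ai_s\le 1$ and $0\le\s\le 1$ for the bounds, and rule out eigenvalue $1$ by showing that a putative eigenfunction gives an element of $\operatorname{Ran}\mathcal K_s^\Ai$ (hence the restriction of an entire function) vanishing on a set of positive measure near $-\infty$. One small slip: what your argument actually shows is that $\mathcal M_{\sqrt\s}f$ vanishes on a set of positive measure, not that it is compactly supported --- but since vanishing on an open set already forces an analytic function to be identically zero, this does not affect the validity of the argument.
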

We denote the Fredholm determinant of a trace class perturbation of the identity by~$\displaystyle\det_{L^2(\R)}$.
\begin{proof}
We have $\mathcal K_s^\s=\mathcal H\mathcal H^\dagger$ where $\mathcal H:=\mathcal M_{\sqrt \s}\mathcal A_s 1_{(0,+\infty)}$.
It follows by the asymptotic properties of the Airy function at~$+\infty$ that $\mathcal H$ is Hilbert--Schmidt provided $\s$ satisfies Assumption~\ref{assumption:weak}.
Therefore, $\mathcal K_s^\s$ is the composition of two Hilbert--Schmidt operators, hence it is trace class on~$L^2(\R)$.
Then,~\eqref{eq:claimdet} follows by the classical formula for Fredholm determinants of operators with an integral kernel.
Next, since $\mathcal K_s^\Ai$ is an orthogonal projector and $0\leq\s\leq 1$, we have $(\mathcal K_s^\s)^2\leq\mathcal K_s^\s$ because
\be
(\mathcal K_s^\s)^2=\mathcal M_{\sqrt\s}\mathcal K_s^\Ai\mathcal M_\s \mathcal K_s^\Ai\mathcal M_{\sqrt\s}\leq\mathcal M_{\sqrt\s}(\mathcal K^\Ai_s)^2\mathcal M_{\sqrt\s}=\mathcal M_{\sqrt\s}\mathcal K^\Ai_s\mathcal M_{\sqrt\s}=\mathcal K_s^\s,
\ee
such that $0\leq\mathcal K_s^\s\leq 1$ and so $0\leq j_\s(s|\emptyset)\leq 1$. It remains to show that $j_\s(s|\emptyset)\not=0$, or, equivalently, that $1$ is not an eigenvalue of $\mathcal K_s^\s$.
For, assume $f\in L^2(\R)$ is such that $\mathcal K_s^\s f=f$. Setting $g:=\mathcal K_s^\Ai(\sqrt\s f)$, we have $\sqrt\s g=f$ and so, since $\mathcal K^\Ai_s$ is an orthogonal projector,
\be
\|g\|_2=\|\mathcal K^\Ai_s (\sqrt\s f)\|_2\leq \|\sqrt \s f\|_2=\|\s g\|_2.
\ee
Therefore, since $0\leq\s\leq 1$, we have $(\s-1)g=0$ almost everywhere on $\R$.
Since $\s\to 0$ at $+\infty$ (cf.~Assumption~\ref{assumption:weak}), $g$ has to vanish on some open set of $\R$. On the other hand, $g$ is the restriction to the real line of an entire function, as it follows from the fact that $\mathcal A_sh$ is entire for all~$h$ with support bounded below by standard properties of the Airy function. Therefore, $g$ is identically zero, so is $f$, and $1$ is not an eigenvalue of $\mathcal K_s^\s$. 
\end{proof}

\begin{remark}
\label{remark:numberofparticles}
$\mathcal M_{\sqrt\s}\mathcal K^\Ai_s\mathcal M_{\sqrt\s}$ acts on~$L^2(\R)$ through the kernel $\sqrt{\s(\ll)\s(\mu)}K^\Ai(\ll+s,\mu+s)$, which is a correlation kernel for the $\s$-thinned shifted Airy point process.
It follows from Lemma~\ref{lemma1} and from the general theory of determinantal point processes~\cite[Theorem~4]{Soshnikov} that the $\s$-thinned shifted and dilated Airy point process has almost surely a finite number of particles.
On the other hand, since $\mathcal K^\Ai$ is not trace class, the Airy point process has almost surely an infinite number of particles; it is however trace class once restricted to half-lines~$(t,+\infty)$ so that the Airy point process has almost surely a largest particle.
\end{remark}

\subsection{Conditional ensembles}\label{sec:conditional}

According to Lemma~\ref{lemma1}, the operator $1-\mathcal K_s^\s$ is invertible, and, therefore, so is $1-\mathcal M_\s\mathcal K^\Ai_s$.
Thus, it makes sense to introduce
\be
\label{eq:Lresolvent}
\mathcal L_s^\s:=\mathcal K_s^\Ai(1-\mathcal M_\s\mathcal K_s^\Ai)^{-1}=\mathcal K_s^\Ai+\mathcal K_s^\Ai\mathcal M_{\sqrt\s}(1-\mathcal K_s^\s)^{-1}\mathcal M_{\sqrt\s}\mathcal K_s^\Ai.
\ee
As we shall review below following the Its--Izergin--Korepin--Slavnov method~\cite{IIKS}, $\mathcal L_s^\s$ is an integral kernel operator, whose kernel we denote by~$L_s^\s(\cdot,\cdot)$.
It has been proved by the first two authors of this paper~\cite{ClaeysGlesner}, building on \cite{Bufetov1, Bufetov2, BufetovQiuShamov}, that this kernel induces a determinantal point process defined as follows.
Consider the shifted Airy process and construct a $\s$-marked point process by assigning to each point $\ll$ in a random configuration, independently, a mark~$1$ with probability~$\s(\ll)$ or a mark~$0$ with probability~$1-\s(\ll)$.
Conditioning the marked point process on the event that there are no points with mark~$1$, it is shown in op.~cit. that the resulting conditional ensemble is determinantal, with correlation kernel with respect to the {\it deformed} reference measure $\bigl(1-\s(\ll)\bigr)\d\ll$ given precisely by~$L_s^\s(\cdot,\cdot)$.
 
Let us introduce the following notation.
Given vectors $\underline u=(u_1,\dots,u_m)\in\R^m$ and $\underline w=(w_1,\dots,w_n)\in\R^n$, introduce the $m\times n$ matrix~$K_s^\Ai(\underline u,\underline w)\in\R^{m\times n}$ with entries
\be\label{eq:defKmatrix}
\left(K_s^\Ai(\underline u,\underline w)\right)_{i,j} := K_s^\Ai(u_i,w_j),\qquad 1\leq i\leq m,\ 1\leq j\leq n.
\ee
\begin{lemma}
\label{lemma2}
For any vector $\underline \nu=(\nu_1,\dots,\nu_m)$, with $\nu_i\not=\nu_j$ for all $i\not=j$, $K_s^\Ai(\underline \nu,\underline \nu)$ is positive-definite.
\end{lemma}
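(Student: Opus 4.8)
The plan is to exhibit $K_s^\Ai(\underline\nu,\underline\nu)$ as a Gram matrix and then reduce positive-definiteness to the linear independence of a finite family of shifted Airy functions.

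First I would pass to the integral representation of the Airy kernel. Recalling $K_s^\Ai(\lambda,\mu)=K^\Ai(\lambda+s,\mu+s)$ and using~\eqref{eq:standardairykernel}, for $1\le i,j\le m$,
\be
K_s^\Ai(\nu_i,\nu_j)=\int_s^{+\infty}\Ai(\nu_i+r)\,\Ai(\nu_j+r)\,\d r=\langle f_i,f_j\rangle_{L^2((s,+\infty))},\qquad f_i(r):=\Ai(\nu_i+r),
\ee
and each $f_i$ lies in $L^2((s,+\infty))$ by the superexponential decay of $\Ai$ at $+\infty$. Hence $K_s^\Ai(\underline\nu,\underline\nu)$ is the Gram matrix of $f_1,\dots,f_m$, so it is positive semi-definite and in fact $\sum_{i,j=1}^m c_ic_j\,K_s^\Ai(\nu_i,\nu_j)=\bigl\|\sum_{i=1}^m c_i f_i\bigr\|_{L^2((s,+\infty))}^2$ for all $(c_1,\dots,c_m)\in\R^m$. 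Consequently it is positive-definite if and only if $f_1,\dots,f_m$ are linearly independent in $L^2((s,+\infty))$.

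It remains to establish this linear independence. Suppose $\sum_{i=1}^m c_i\Ai(\nu_i+r)=0$ for a.e.\ $r\in(s,+\infty)$. Since $\Ai$ is entire, the left-hand side is an entire function of $r$ that vanishes on a set of positive measure in $\R$, hence vanishes identically on $\C$. Setting $g_k(r):=\sum_{i=1}^m c_i\,\nu_i^{k}\,\Ai(\nu_i+r)$, we have $g_0\equiv 0$, and differentiating twice in $r$ and using the Airy equation $\Ai''(z)=z\,\Ai(z)$ gives $g_k''=g_{k+1}+r\,g_k$, so by induction $g_k\equiv 0$ for all integers $k\ge 0$. Choosing $r_0\in\R$ with $\Ai(\nu_i+r_0)\ne 0$ for every $i$ — possible since $\Ai$ has only countably many zeros — and writing $d_i:=c_i\Ai(\nu_i+r_0)$, we obtain $\sum_{i=1}^m d_i\,\nu_i^{k}=0$ for $k=0,1,\dots,m-1$; as the $\nu_i$ are pairwise distinct, the associated Vandermonde matrix is invertible, so $d_1=\cdots=d_m=0$ and hence $c_1=\cdots=c_m=0$.

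I do not expect a genuine obstacle. The one point requiring some care is the step where an almost-everywhere relation on the half-line $(s,+\infty)$ is upgraded to an identity on all of $\C$: this is precisely where analyticity of the Airy function enters, and it is what makes the subsequent differentiation and Vandermonde argument legitimate. (Alternatively, linear independence can be read off from the asymptotics $\Ai(x)\sim\frac{1}{2\sqrt\pi}\,x^{-1/4}\e^{-\frac23 x^{3/2}}$ as $x\to+\infty$: after relabelling so that $\nu_1<\cdots<\nu_m$, dividing the relation by $\Ai(\nu_1+r)$ and letting $r\to+\infty$ forces $c_1=0$, and one iterates.)
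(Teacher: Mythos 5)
Your proof is correct and follows the same core strategy as the paper: recognize $K_s^\Ai(\underline\nu,\underline\nu)$ as a Gram matrix of the shifted Airy functions $\Ai(\nu_i+\cdot)$ in $L^2(s,+\infty)$, then use the Airy ODE to establish their linear independence. The only difference is in the final step — the paper eliminates one term at a time by subtracting $(t+\nu_m)$ times the assumed relation from its second derivative, whereas you generate the full family of moment relations $\sum_i c_i\nu_i^k\Ai(\nu_i+r)\equiv 0$ and conclude via a Vandermonde determinant at a suitable point $r_0$; both are valid executions of the same idea.
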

\begin{proof}
According to~\eqref{eq:standardairykernel} and to~\eqref{eq:defKmatrix}, we can rewrite $K^\Ai_s(\underline \nu,\underline \nu)$ as a Gram matrix
\be
\bigl(K^\Ai_s(\underline \nu,\underline \nu)\bigr)_{i,j}=\int_0^{+\infty}\Ai(\nu_i+\eta+s)\Ai(\nu_j+\eta+s)\,\d\eta = \bigl\langle \Ai(\nu_i+\cdot),\Ai(\nu_j+\cdot)\bigr\rangle_{L^2(s,+\infty)}.
\ee
Hence, it suffices to show that the $m$ vectors $\Ai(\nu_i+\cdot)\in L^2(s,+\infty)$, for $1\leq i\leq m$, are linearly independent when the points $\nu_1,\dots,\nu_m$ are distinct.
In order to obtain a contradiction, let us assume that the linear span of these $m$ vectors is $k$-dimensional with $k<m$ and, without loss of generality, that $\Ai(\nu_i+\cdot)$ for $i=1,\dots, k$ form a basis.
Then, there exists $c_1,\dots, c_k\in\C$ such that $\Ai(\nu_m+t)=\sum_{i=1}^kc_i\Ai(\nu_i+t)$ identically in~$t$.
Subtract $(t+\nu_m)$ times this relation from the second derivative of this relation in $t$ to get, using the Airy equation, $0=\sum_{i=1}^k(\nu_i-\nu_m)c_i\Ai(\nu_i+t)$.
Hence $c_i=0$ for all $1\leq i\leq k$ because $\nu_m\not=\nu_i$ for all $i\not=m$, and so $\Ai(\nu_m+t)=0$ identically in $t$, a contradiction.
\end{proof}

According to Lemma~\ref{lemma2}, for distinct points $\nu_1,\ldots, \nu_m\in\R$, collected into a vector $\underline \nu:=(\nu_1,\dots,\nu_m)$, we can introduce the integral kernel operator $\mathcal H_s^{\underline \nu}$ acting on $L^2(\R)$ through the kernel
\be
\label{eq:Palmkernel}
H_s^{\underline \nu}(\ll,\mu)
:=\frac{\det K_s^{\Ai}\bigl((\ll,\underline \nu),(\mu,\underline \nu)\bigr)}{\det K_s^{\Ai}(\underline \nu,\underline \nu)}
=K_s^\Ai(\ll,\mu)-K_s^\Ai(\ll,\underline \nu)K_s^\Ai(\underline \nu,\underline \nu)^{-1} K_s^\Ai(\underline \nu,\mu),
\ee
where the second equality stems from the well-known formula
\be
\label{eq:blockdet}
\det\left(\begin{array}{c|c} A&B\\ \hline C&D\end{array}\right)=\det(D)\det \left(A-BD^{-1}C\right),\qquad \mbox{if }\det D\not=0,
\ee
for the determinant of a block matrix with {lower-right} corner invertible.
It follows from the results in \cite{ShiraiTakahashi} that $H_s^{\underline \nu}(\ll,\mu)$ is the kernel of the {\it reduced Palm measure} of the shifted Airy process at (distinct) points $\nu_1,\ldots,\nu_m$, which can be interpreted as the shifted Airy {point} process conditioned on configurations containing points at~$\nu_1,\ldots,\nu_m$ and then removing the points~$\nu_1,\ldots,\nu_m$ from the configuration.

\subsection{Factorizations of J\'anossy densities}

We can factorize the J\'anossy densities~$j_\s(s|\underline \nu)$ in two different ways: the first {one} utilizes the Palm kernels $H_s^{\underline \nu}$, the second one involves the kernels $L_{s}^\s$ of the conditional ensembles.
It is convenient to introduce notations similar to~\eqref{eq:defKmatrix} for these kernels, namely, given vectors $\underline u=(u_1,\dots,u_m)\in\R^m$ and $\underline w=(w_1,\dots,w_n)\in\R^n$, we introduce matrices $L_s^\s(\underline u,\underline w)\in\R^{m\times n}$ and $H_s^{\underline \nu}(\underline u,\underline w)\in\R^{m\times n}$ with entries
\be
\label{eq:defLHmatrix}
\left(L_s^\s(\underline u,\underline w)\right)_{i,j}=L_s^\s(u_i,w_j),\quad
\left(H_s^{\underline \nu}(\underline u,\underline w)\right)_{i,j}=H_s^{\underline \nu}(u_i,w_j),\qquad 1\leq i\leq m,\ 1\leq j\leq n.
\ee
\begin{proposition}\label{prop:janossyfactorization}
For all~$\s$ satisfying Assumption~\ref{assumption:weak} and all $\underline \nu=(\nu_1,\dots,\nu_m)$ with $\nu_i\not=\nu_j$ for $i\not=j$, we have the identities
\begin{align}
\label{eq:idJanossy1}
j_\s(s|\underline \nu)&=\det\left(K_s^\Ai(\underline \nu,\underline \nu)\right)\det_{L^2(\R)}\left(1-\mathcal M_{\sqrt\s}\mathcal H_s^{\underline \nu}\mathcal M_{\sqrt\s}\right),
\\
\label{eq:idJanossy2}
j_\s(s|\underline \nu)&=\det\left(L_s^\sigma(\underline \nu,\underline \nu)\right)\det_{L^2(\R)}\left(1-\mathcal M_{\sqrt\s}\mathcal K_s^{\Ai}\mathcal M_{\sqrt\s}\right)=\det\left(L_s^\sigma(\underline \nu,\underline \nu)\right)\,j_\s(s|\emptyset).
\end{align}
\end{proposition}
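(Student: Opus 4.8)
The plan is to start from the series representation~\eqref{def:Jsigma} of the J\'anossy density (specialized to $X=s$, $T=1$, i.e.\ the reduced density~\eqref{eq:reducedJanossy}), and reorganize the $(n+m)\times(n+m)$ determinant $\det K_s^\Ai\bigl((\underline\ll,\underline\nu),(\underline\ll,\underline\nu)\bigr)$ by separating the $\underline\nu$-block from the $\underline\ll$-block. For~\eqref{eq:idJanossy1}, I would apply the block-determinant identity~\eqref{eq:blockdet} with the lower-right corner being the invertible matrix $K_s^\Ai(\underline\nu,\underline\nu)$ (invertible by Lemma~\ref{lemma2}); this produces the factor $\det K_s^\Ai(\underline\nu,\underline\nu)$ times a determinant built from the Schur complement, whose entries are exactly the Palm kernel $H_s^{\underline\nu}(\ll_i,\ll_j)$ of~\eqref{eq:Palmkernel}. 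After pulling $\det K_s^\Ai(\underline\nu,\underline\nu)$ out of every term of the series, what remains is $\sum_{n\ge0}\frac{(-1)^n}{n!}\int_{\R^n}\det\bigl(H_s^{\underline\nu}(\ll_i,\ll_j)\bigr)_{i,j=1}^n\prod\s(\ll_i)\d\ll_i$, which is the Fredholm determinant $\det_{L^2(\R)}(1-\mathcal M_{\sqrt\s}\mathcal H_s^{\underline\nu}\mathcal M_{\sqrt\s})$ once one checks $\mathcal M_{\sqrt\s}\mathcal H_s^{\underline\nu}\mathcal M_{\sqrt\s}$ is trace class. The trace-class property should follow as in Lemma~\ref{lemma1}: $H_s^{\underline\nu}$ is a finite-rank perturbation of $K_s^\Ai$ (the subtracted term $K_s^\Ai(\ll,\underline\nu)K_s^\Ai(\underline\nu,\underline\nu)^{-1}K_s^\Ai(\underline\nu,\mu)$ has rank $\le m$ and, conjugated by $\mathcal M_{\sqrt\s}$ with $\s$ decaying, is trace class by the Airy decay at $+\infty$), so the conjugated operator differs from the trace-class $\mathcal K_s^\s$ by a trace-class term.

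For~\eqref{eq:idJanossy2} the cleanest route is an operator-algebra manipulation rather than a term-by-term series expansion. Writing the J\'anossy series compactly, one knows from the IIKS/resolvent formalism that $j_\s(s|\underline\nu)$ equals $\det_{L^2(\R)}(1-\mathcal M_\s\mathcal K_s^\Ai)$ times the determinant of the $m\times m$ matrix obtained by evaluating the resolvent kernel at $\underline\nu$. Concretely: expanding the $(n+m)$-determinant the other way — factoring out the $\underline\nu$-rows/columns after applying row/column operations that replace $K_s^\Ai$ by the resolvent-corrected kernel — yields $\det_{L^2(\R)}(1-\mathcal K_s^\s)\cdot\det\bigl((\mathcal K_s^\Ai(1-\mathcal M_\s\mathcal K_s^\Ai)^{-1})(\nu_i,\nu_j)\bigr)_{i,j=1}^m$, and by the very definition~\eqref{eq:Lresolvent} the second factor is $\det(L_s^\sigma(\underline\nu,\underline\nu))$, while the first is $j_\s(s|\emptyset)$ by Lemma~\ref{lemma1}. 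The combinatorial identity underlying this is the standard expansion of a Fredholm determinant with $m$ "frozen" arguments, i.e.\ the identity $\sum_{n\ge0}\frac{(-1)^n}{n!}\int\det\bigl(K(\xi_i,\xi_j)\bigr)_{\xi\in\underline\ll\cup\underline\nu}\prod\s(\ll_i)\d\ll_i = \det(1-\mathcal M_\s\mathcal K)\det\bigl((\mathcal K(1-\mathcal M_\s\mathcal K)^{-1})(\nu_i,\nu_j)\bigr)$; this is a classical fact (Soshnikov~\cite{Soshnikov}, eq.~(1.38) and its proof) that I would either cite or verify by the minor-expansion argument.

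The main obstacle is making the "frozen-argument" expansion rigorous at the level of trace-class operators and absolutely convergent series, rather than as a formal determinant identity: one must justify interchanging the sum over $n$ with the Laplace-type expansion of each $(n+m)$-determinant along the $m$ fixed rows, and confirm that the resulting series is the Fredholm determinant of a genuine trace-class operator. For~\eqref{eq:idJanossy1} this is essentially Lemma~\ref{lemma1} applied to $\mathcal H_s^{\underline\nu}$ in place of $\mathcal K_s^\Ai$, so the work is to check that $\mathcal M_{\sqrt\s}\mathcal H_s^{\underline\nu}\mathcal M_{\sqrt\s}$ is trace class and that its Fredholm determinant does not vanish (the latter can be bypassed, since we only need the identity, not positivity). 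For~\eqref{eq:idJanossy2}, the delicate point is that $L_s^\sigma$ is only a kernel once we know the resolvent exists — which is guaranteed by Lemma~\ref{lemma1} — and that the finite matrix $L_s^\sigma(\underline\nu,\underline\nu)$ is well-defined, which again uses the entire-function regularity of $\mathcal A_s$-images as in the proof of Lemma~\ref{lemma1}. I expect both identities to reduce, after this bookkeeping, to the two algebraic facts already recorded in the excerpt: the Schur-complement formula~\eqref{eq:blockdet} and the defining formula~\eqref{eq:Lresolvent} for $\mathcal L_s^\sigma$.
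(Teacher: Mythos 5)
Your route to \eqref{eq:idJanossy1} is the same as the paper's: expand each $(n+m)\times(n+m)$ determinant by the Schur-complement identity~\eqref{eq:blockdet} with invertible block $K_s^\Ai(\underline\nu,\underline\nu)$ (Lemma~\ref{lemma2}), pull the common factor $\det K_s^\Ai(\underline\nu,\underline\nu)$ out of the series, and recognize the remainder as the Fredholm determinant of $\mathcal M_{\sqrt\s}\mathcal H_s^{\underline\nu}\mathcal M_{\sqrt\s}$. Your extra observation that $\mathcal M_{\sqrt\s}\mathcal H_s^{\underline\nu}\mathcal M_{\sqrt\s}$ is trace class because it differs from $\mathcal K_s^\s$ by the finite-rank operator $\mathcal M_{\sqrt\s}K_s^\Ai(\cdot,\underline\nu)K_s^\Ai(\underline\nu,\underline\nu)^{-1}K_s^\Ai(\underline\nu,\cdot)\mathcal M_{\sqrt\s}$ is correct and worth spelling out; the paper leaves this implicit.

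For~\eqref{eq:idJanossy2} you take a genuinely different path from the paper. You propose to derive it directly from the series~\eqref{def:Jsigma} by a row/column reduction that replaces $K_s^\Ai$ by the resolvent kernel $L_s^\s$ in the frozen $\underline\nu$-block, i.e.\ by invoking the classical ``$m$-th Fredholm minor'' identity $\sum_{n\ge 0}\frac{(-1)^n}{n!}\int\det K\bigl((\underline\ll,\underline\nu),(\underline\ll,\underline\nu)\bigr)\prod\s(\ll_i)\d\ll_i = \det(1-\mathcal M_\s K)\cdot\det\bigl(R(\nu_i,\nu_j)\bigr)$ with $R=K(1-\mathcal M_\s K)^{-1}$. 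The paper instead goes \emph{through} \eqref{eq:idJanossy1}: it writes $\mathcal N:=\mathcal M_{\sqrt\s}(\mathcal K_s^\Ai-\mathcal H_s^{\underline\nu})\mathcal M_{\sqrt\s}$, factors $\det(1-\mathcal M_{\sqrt\s}\mathcal H_s^{\underline\nu}\mathcal M_{\sqrt\s})=\det(1-\mathcal K_s^\s)\det\bigl(1+(1-\mathcal K_s^\s)^{-1}\mathcal N\bigr)$, and then computes the second factor by the standard formula for the Fredholm determinant of a finite-rank perturbation of the identity (GGK, Theorem~3.2), landing exactly on the ratio $\det L_s^\s(\underline\nu,\underline\nu)/\det K_s^\Ai(\underline\nu,\underline\nu)$. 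Both are correct; the paper's route has the advantage of reducing~\eqref{eq:idJanossy2} to a concrete rank-$m$ computation once \eqref{eq:idJanossy1} is in hand, avoiding a separate justification of the general minor-expansion identity, which — while indeed classical — is not a one-line citation: eq.~(1.38) of Soshnikov is the series \emph{definition} of J\'anossy densities and does not by itself contain the resolvent factorization, so you would need to either locate it elsewhere (it is, e.g., a corollary of the classical Fredholm minor expansion for trace-class kernels) or supply the minor-expansion argument yourself with the same care about absolute convergence that you already flag. Your route is a bit more ``independent'' in that it would give~\eqref{eq:idJanossy2} without passing through~\eqref{eq:idJanossy1}; the paper's is more economical given that~\eqref{eq:idJanossy1} is wanted anyway.
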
 
\begin{proof}
We start by rewriting~\eqref{def:Jsigma}:
\begin{align}
\nonumber
j_\s(s|\underline \nu)&=\sum_{n\geq 0}\frac{(-1)^n}{n!}\int_{\R^n}\det\left(K_s^\Ai\bigl((\underline\ll,\underline \nu),(\underline\ll,\underline \nu)\bigr)\right)\prod_{i=1}^n\s(\ll_i)\d\ll_i
\\
&=\det\left(K_s^\Ai(\underline \nu,\underline \nu)\right)\sum_{n=0}^{\infty}\frac{(-1)^n}{n!}\int_{\mathbb R^n}\det\left(H_s^{\underline \nu}(\underline\ll,\underline\ll)\right)\prod_{i=1}^n\sigma(\ll_i)\d\ll_i,
\end{align}
where we denote $\underline\ll=(\ll_1,\dots,\ll_n)$, $(\underline\ll,\underline\nu)=(\ll_1,\dots,\ll_n,\nu_1,\dots,\nu_m)$, and we manipulate the determinant of block matrices using~\eqref{eq:blockdet} and Lemma~\ref{lemma2} as
\begin{align}
\nonumber
\det\left(K_s^\Ai\bigl((\underline\ll,\underline \nu),(\underline\ll,\underline \nu)\bigr)\right)&=
\det\left(\begin{array}{c|c}
K_s^\Ai(\underline \ll,\underline\ll) & K_s^\Ai(\underline \ll,\underline \nu) \\ \hline
K_s^\Ai(\underline \nu,\underline\ll) & K_s^\Ai(\underline \nu,\underline \nu) 
\end{array}\right)
\\ \nonumber
&=\det\left(K_s^\Ai(\underline \nu,\underline \nu)\right)\det\left(K_s^\Ai(\underline \ll,\underline \ll)-K_s^\Ai(\underline \ll,\underline \nu)K_s^\Ai(\underline \nu,\underline \nu)^{-1}K_s^\Ai(\underline \nu,\ll)\right)
\\
&=\det\left(K_s^\Ai(\underline \nu,\underline \nu)\right)\det\left(H_s^{\underline \nu}(\underline \ll,\underline \ll)\right).
\end{align}
Hence, \eqref{eq:idJanossy1} is established.
Next, let us introduce the operator $\mathcal N:=\mathcal M_{\sqrt\s}(\mathcal K_s^\Ai-\mathcal H_s^{\underline \nu})\mathcal M_{\sqrt\s}$ such that
\be
\det_{L^2(\R)}(1-\mathcal M_{\sqrt\s}\mathcal H_s^{\underline \nu}\mathcal M_{\sqrt\s})=
\det_{L^2(\R)} (1-\mathcal K_s^\s)\,
\det_{L^2(\R)}\bigl(1+(1-\mathcal K_s^\s)^{-1}\mathcal N\bigr).
\ee
From~\eqref{eq:Palmkernel} we know that the kernel of~$\mathcal N$ is $N(\ll,\mu)=\sqrt{\s(\ll)\s(\mu)}K_s^\Ai(\ll,\underline \nu) K_s^\Ai(\underline \nu,\underline \nu)^{-1} K_s^\Ai(\underline \nu,\mu)$, such that the kernel of $(I-\mathcal K_s^\s)^{-1}\mathcal N$ is
\be
\wt L_s^\s(\lambda,\underline \nu)\bigl(K_s^\Ai(\underline \nu,\underline \nu)\bigr)^{-1}K_s^\Ai(\underline \nu,\mu)\sqrt{\s(\mu)}
\ee
where $\wt L_s^\s(\cdot,\cdot)$ is the kernel of $(1-\mathcal K_s^\s)^{-1}\mathcal M_{\sqrt\s}\mathcal K_s^\Ai$.
By the general formula for the Fredholm determinant of a finite-rank perturbation of the identity, cf.~\cite[Theorem~3.2]{GGK}, we obtain ($I_m$ denotes the $m\times m$ identity matrix)
\begin{align}
\nonumber
\det_{L^2(\R)}\bigl(1+(1-\mathcal K_s^\s)^{-1}\mathcal N\bigr)&=\det\left(I_m+K_s^\Ai(\underline \nu,\underline \nu)^{-1}\int_\R K_s^\Ai(\underline \nu,\ll)\sqrt{\s(\ll)}\wt L_s^\s(\lambda,\underline \nu)\d\ll\right)
\\
&=\frac{\det\bigl(K_s^\Ai(\underline \nu,\underline \nu)+\int_\R K_s^\Ai(\underline \nu,\ll)\sqrt{\s(\ll)}\wt L_s^\s(\lambda,\underline \nu)\d\ll\bigl)}{\det\bigl(K_s^\Ai(\underline \nu,\underline \nu)\bigl)}
=\frac{\det\bigl(L_s^\s(\underline \nu,\underline \nu)\bigl)}{\det\bigl(K_s^\Ai(\underline \nu,\underline \nu)\bigl)}
\end{align}
where we use the second identity in~\eqref{eq:Lresolvent}. Finally,~\eqref{eq:idJanossy2} follows from~\eqref{eq:idJanossy1}.
\end{proof}
\begin{remark}
\label{rem:prob interpr Janossy fact}
Both factorizations~\eqref{eq:idJanossy1} and~\eqref{eq:idJanossy2} have a natural probabilistic interpretation as products of an $m$-point correlation function with a gap probability.
In the first factorization, we have the $m$-point correlation function in the shifted and rescaled Airy point process, multiplied with the gap probability in the $\s$-thinning of the Palm measure at points $\nu_1,\ldots,\nu_m$ associated to the shifted and rescaled Airy point process. In the second factorization, we have the $m$-point correlation function in the conditional ensemble associated to the shifted and rescaled Airy point process introduced above, multiplied with the gap probability in the $\s$-thinning of the thinned shifted and rescaled Airy point process.
In the first factorization, the correlation function is simpler, but the gap probability is on the other hand simpler in the second factorization.
\end{remark}

Using the above result, it is now easy to show that J\'anossy densities are strictly positive for all distinct $\nu_1,\dots,\nu_m$.

\begin{proposition}
\label{prop:sec2}
For all~$\s$ satisfying Assumption~\ref{assumption:weak} and all $\underline \nu=(\nu_1,\dots,\nu_m)$ with $\nu_i\not=\nu_j$ for $i\not=j$, we have $\det L_s^\s(\underline \nu,\underline \nu)>0$ and $j_\s(s|\underline \nu)>0$.
\end{proposition}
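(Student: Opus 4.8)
The plan is to deduce both statements from the positive definiteness of the symmetric $m\times m$ matrix $L_s^\s(\underline\nu,\underline\nu)$. By the factorization~\eqref{eq:idJanossy2} we have $j_\s(s|\underline\nu)=\det\bigl(L_s^\s(\underline\nu,\underline\nu)\bigr)\,j_\s(s|\emptyset)$, and $j_\s(s|\emptyset)>0$ by Lemma~\ref{lemma1}; hence, once $\det L_s^\s(\underline\nu,\underline\nu)>0$ is established, both assertions of the proposition follow at once. (Note that $L_s^\s(\underline\nu,\underline\nu)$ is symmetric, as is clear from the self-adjoint form of $\mathcal L_s^\s$ in~\eqref{eq:Lresolvent}.)

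To prove positive definiteness I would compute the quadratic form of $L_s^\s(\underline\nu,\underline\nu)$ directly from the resolvent expansion~\eqref{eq:Lresolvent}. For $\nu\in\R$ put $k_\nu:=K_s^\Ai(\nu,\cdot)$; from the asymptotics of $\Ai$ and $\Ai'$ one checks that $k_\nu$ decays like $|\mu|^{-3/4}$ as $\mu\to-\infty$ and exponentially as $\mu\to+\infty$, so $k_\nu\in L^2(\R)$, and $(\mathcal K_s^\Ai)^2=\mathcal K_s^\Ai$ gives the reproducing identity $\langle k_\nu,k_{\nu'}\rangle_{L^2(\R)}=K_s^\Ai(\nu,\nu')$. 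Plugging these into~\eqref{eq:Lresolvent}, one obtains for every $c=(c_1,\dots,c_m)\in\R^m$, writing $k:=\sum_{i=1}^m c_ik_{\nu_i}\in L^2(\R)$,
\[
\sum_{i,j=1}^m c_ic_j\,L_s^\s(\nu_i,\nu_j)=\|k\|_{L^2(\R)}^2+\bigl\langle(1-\mathcal K_s^\s)^{-1}\mathcal M_{\sqrt\s}k,\,\mathcal M_{\sqrt\s}k\bigr\rangle_{L^2(\R)}\;\geq\;\|k\|_{L^2(\R)}^2,
\]
the inequality using that $(1-\mathcal K_s^\s)^{-1}$ is a positive operator by Lemma~\ref{lemma1}. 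Since by Lemma~\ref{lemma2} the Gram matrix $K_s^\Ai(\underline\nu,\underline\nu)=\bigl(\langle k_{\nu_i},k_{\nu_j}\rangle\bigr)_{i,j}$ is positive definite, we have $\|k\|_{L^2(\R)}^2=\sum_{i,j}c_ic_j K_s^\Ai(\nu_i,\nu_j)>0$ whenever $c\neq0$; hence the left-hand side above is strictly positive for every $c\neq0$, so $L_s^\s(\underline\nu,\underline\nu)$ is positive definite, $\det L_s^\s(\underline\nu,\underline\nu)>0$, and therefore $j_\s(s|\underline\nu)>0$.

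The only delicate points — and the closest thing to an obstacle — are of a bookkeeping nature: verifying $k_\nu\in L^2(\R)$ (which rests on the $|\mu|^{-3/4}$ decay of the Airy kernel and, notably, does not use Assumption~\ref{assumption:weak}), and justifying the kernel identity behind the displayed equation, namely that the value at $(\nu,\nu')$ of the kernel of $\mathcal K_s^\Ai\mathcal M_{\sqrt\s}(1-\mathcal K_s^\s)^{-1}\mathcal M_{\sqrt\s}\mathcal K_s^\Ai$ equals $\langle(1-\mathcal K_s^\s)^{-1}\mathcal M_{\sqrt\s}k_{\nu'},\mathcal M_{\sqrt\s}k_\nu\rangle$. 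Both follow from elementary Cauchy--Schwarz estimates together with the Hilbert--Schmidt/trace-class structure already recorded in the proof of Lemma~\ref{lemma1}. An alternative route, available once Theorem~\ref{thm:Janossyphi} is proved, would be to use the representation $L_s^\s(\lambda,\mu)=\int_s^{+\infty}\varphi_\s(\lambda;r)\varphi_\s(\mu;r)\,\d r$ and establish linear independence of the functions $r\mapsto\varphi_\s(\nu_i;r)$ on $(s,+\infty)$ from the Stark ODE exactly as in the proof of Lemma~\ref{lemma2}; the argument above has the advantage of staying entirely within the present section.
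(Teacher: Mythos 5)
Your argument is correct, but it takes a genuinely different route from the paper's. The paper proves positivity of $j_\s(s|\underline\nu)$ \emph{first}, via the Palm-measure factorization~\eqref{eq:idJanossy1}: it observes that $\mathcal N:=\mathcal M_{\sqrt\s}(\mathcal K_s^\Ai-\mathcal H_s^{\underline\nu})\mathcal M_{\sqrt\s}\geq 0$ (because $\langle\mathcal N\phi,\phi\rangle=\underline h^\dagger K_s^\Ai(\underline\nu,\underline\nu)^{-1}\underline h\geq 0$), deduces $1-\mathcal M_{\sqrt\s}\mathcal H_s^{\underline\nu}\mathcal M_{\sqrt\s}\geq 1-\mathcal K_s^\s>0$, hence $j_\s(s|\underline\nu)>0$, and only then obtains $\det L_s^\s(\underline\nu,\underline\nu)>0$ indirectly from~\eqref{eq:idJanossy2}. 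You instead compute the quadratic form of $L_s^\s(\underline\nu,\underline\nu)$ directly from the resolvent formula~\eqref{eq:Lresolvent} and the reproducing identity $\langle k_\nu,k_{\nu'}\rangle=K_s^\Ai(\nu,\nu')$, establishing positive definiteness of $L_s^\s(\underline\nu,\underline\nu)$ first and then reading off $j_\s(s|\underline\nu)>0$ from~\eqref{eq:idJanossy2}. Both proofs ultimately rest on the same two ingredients (positivity of $(1-\mathcal K_s^\s)^{-1}$ from Lemma~\ref{lemma1} and positive definiteness of the Gram matrix $K_s^\Ai(\underline\nu,\underline\nu)$ from Lemma~\ref{lemma2}), but the factorizations are deployed in opposite order. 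Your version is arguably more direct and yields the slightly stronger conclusion that $L_s^\s(\underline\nu,\underline\nu)$ is positive definite as a matrix, not just that its determinant is positive; the paper's version has the aesthetic advantage of giving a probabilistic reading to each factor. The technical points you flag (that $k_\nu\in L^2(\R)$ by the $|\mu|^{-3/4}$ decay of the shifted Airy kernel, and that the kernel of $\mathcal K_s^\Ai\mathcal M_{\sqrt\s}(1-\mathcal K_s^\s)^{-1}\mathcal M_{\sqrt\s}\mathcal K_s^\Ai$ can be evaluated pointwise as $\langle(1-\mathcal K_s^\s)^{-1}\mathcal M_{\sqrt\s}k_{\nu'},\mathcal M_{\sqrt\s}k_\nu\rangle$) are real but routine, and your sketch of how to discharge them via the Hilbert--Schmidt structure from Lemma~\ref{lemma1} is adequate.
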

\begin{proof}
The operator $\mathcal N:=\mathcal M_{\sqrt\s}(\mathcal K_s^\Ai-\mathcal H_s^{\underline \nu})\mathcal M_{\sqrt\s}$ is nonnegative-definite.
Indeed, for all~$\phi\in L^2(\R)$,
\be
\left\langle\mathcal N\phi,\phi\right\rangle=\underline h^\dagger\bigl(K_s^\Ai(\underline \nu,\underline \nu)\bigr)^{-1}\underline h\geq 0\,,
\ee
where $\underline h=\int_\R\sqrt\s(\ll)\phi(\ll)K_s^\Ai(\underline \nu,\ll)\d\ll$.
Hence, we have proved that
\be
1-\mathcal M_{\sqrt\s}\mathcal H_s^{\underline \nu}\mathcal M_{\sqrt\s}\geq 1-\mathcal K^\s_s.
\ee
Since $1-\mathcal K^\s_s$ is (strictly) positive-definite by Lemma~\ref{lemma1}, the operator $1-\mathcal M_{\sqrt\s}\mathcal H_s^{\underline \nu}\mathcal M_{\sqrt\s}$ is also positive-definite, hence invertible.
Therefore, $j_\s(s|\underline \nu)>0$ by Lemma~\ref{lemma2} and the first factorization of J\'anossy densities~\eqref{eq:idJanossy1}, and therefore~$\det\bigl(L_s^\s(\underline \nu,\underline \nu)\bigr)>0$ by the second one~\eqref{eq:idJanossy2}.
\end{proof}

\section{RH characterization of J\'anossy densities}\label{sec:Rhcharact}

The aim of the section is to give RH characterizations of J\'anossy densities, in order to prove Theorems~\ref{thm:Janossyphi}, \ref{thm:Janossyphi2}, and~\ref{theorem:cKdV}.

\subsection{RH problems}
\label{subsec:RHpbs}
The operator $\mathcal M_{\s}\mathcal K_s^\Ai$ is {\it integrable} in the sense of Its--Izergin--Korepin--Slavnov (IIKS)~\cite{IIKS}, namely it is a kernel operator whose kernel can be expressed as
\be
\label{eq:sigmaKintegrable}
\frac{\mathbf f(\lambda;s)\mathbf h(\mu;s)}{\lambda-\mu},\quad
\mathbf f(\lambda;s):=\s(\lambda)\begin{pmatrix}
-\i\,\Ai'(\lambda+s) \\ \Ai(\lambda+s) 
\end{pmatrix},\
\mathbf h(\mu;s):=\begin{pmatrix}
-\i\,\Ai(\mu+s) \\ \Ai'(\mu+s) 
\end{pmatrix}.
\ee
Therefore, according to op.~cit., the resolvent operator~$(1-\mathcal M_\s\mathcal K_s^\Ai)^{-1}-1$ can be characterized in terms of the following RH problem (see proof of Proposition~\ref{prop:dlogJ} below).

\subsubsection*{RH problem for $Y_\s$}
\begin{itemize}
	\item[(a)] $Y_\s(\cdot;s): \mathbb{C}\setminus \mathbb{R} \to \mathbb{C}^{2\times 2}$ is analytic for all $s\in\R$. 
	\item[(b)] The boundary values of $Y_\s(\cdot;s)$ are continuous on $\mathbb{R}$ and are related by
	\be
	Y_{\s,+}(\lambda;s) = Y_{\s,-}(\lambda;s) \bigl(I-2\pi\i\,\mathbf f(\lambda;s)\mathbf h^\top(\lambda;s)\bigr), \qquad \lambda \in \R,
	\ee
 	where the subscript $+$ (respectively, $-$) indicates the boundary value from above (respectively, below) the real axis.
	\item[(c)] As $\lambda \to \infty$, we have
	\be
	\label{eq:Yasympinfty}
	Y_\s(\lambda;s) = I + \frac 1\lambda\begin{pmatrix}
		\beta_\s(s) & \i\eta_\s(s) \\
		\i\alpha_\s(s) & -\beta_\s(s)
	\end{pmatrix}+O(\lambda^{-2}),
	\ee
for some $\alpha_\s(s)$, $\beta_\s(s)$, and $\eta_\s(s)$.
\end{itemize}

The following result has been proven in \cite{CafassoClaeysRuzza}.
For the reader's convenience, we offer a direct proof based on the IIKS method.

\begin{proposition}
\label{prop:dlogJ}
The RH problem for~$Y_\s$ has a unique solution for all~$s\in\R$ and we have
\be
\label{eq:dLogJ}
\pa_s\log j_\s(s|\emptyset)=-\alpha_\s(s),
\ee
where $\alpha_\s$ is given in \eqref{eq:Yasympinfty}.
\end{proposition}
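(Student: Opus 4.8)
The plan is to run the standard Its--Izergin--Korepin--Slavnov machinery for the integrable operator $\mathcal M_\s\mathcal K_s^\Ai$ and then extract the logarithmic derivative from the large-$\lambda$ expansion of $Y_\s$. First I would recall that, by Lemma~\ref{lemma1}, $1-\mathcal K_s^\s$ is invertible, hence so is $1-\mathcal M_\s\mathcal K_s^\Ai$ (the two operators are conjugate after squaring the multiplication factor, and an elementary computation shows invertibility of one is equivalent to invertibility of the other). General IIKS theory then guarantees that the RH problem for $Y_\s$ has a unique solution: existence follows from the invertibility of the associated operator via the explicit formula
$$
Y_\s(\lambda;s)=I-\int_\R\frac{\mathbf F(\xi;s)\,\mathbf h^\top(\xi;s)}{\xi-\lambda}\,\d\xi,\qquad \mathbf F:=(1-\mathcal M_\s\mathcal K_s^\Ai)^{-1}\mathbf f,
$$
and uniqueness from the fact that $\det Y_\s\equiv 1$ (the jump matrix has unit determinant, since $\mathbf h^\top\mathbf f=0$ pointwise) together with a Liouville argument. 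One should check that the right-hand side above is well defined and has the stated asymptotics: the decay of $\s$ at $-\infty$ in Assumption~\ref{assumption:weak} and the super-exponential decay of Airy functions at $+\infty$ make $\mathbf f(\cdot;s)\in L^2(\R)\cap L^1(\R)$, so $\mathbf F\in L^2$, and a further integration-by-parts / decay argument upgrades this to $L^1$, giving the $\lambda\to\infty$ expansion \eqref{eq:Yasympinfty} with coefficients $\alpha_\s,\beta_\s,\eta_\s$ that are finite for every $s$.

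Next I would identify $\partial_s\log j_\s(s|\emptyset)$ with a trace. Since $j_\s(s|\emptyset)=\det_{L^2(\R)}(1-\mathcal K_s^\s)=\det_{L^2(\R)}(1-\mathcal M_\s\mathcal K_s^\Ai)$ (the two determinants agree because $\mathcal M_\s\mathcal K_s^\Ai$ and $\mathcal M_{\sqrt\s}\mathcal K_s^\Ai\mathcal M_{\sqrt\s}$ have the same nonzero spectrum), the Jacobi variational formula gives
$$
\partial_s\log j_\s(s|\emptyset)=-\,\mathrm{tr}\!\left((1-\mathcal M_\s\mathcal K_s^\Ai)^{-1}\,\mathcal M_\s\,\partial_s\mathcal K_s^\Ai\right).
$$
The kernel of $\mathcal K_s^\Ai$ is $\frac{\mathbf f_0(\lambda;s)\mathbf h(\mu;s)}{\lambda-\mu}$ with $\mathbf f_0=\mathbf f/\s$, and one computes $\partial_s$ of the kernel using the Airy equation $\Ai''(x)=x\Ai(x)$; the standard IIKS identity then rewrites this trace in terms of the resolvent kernel $R(\lambda,\mu;s)$ evaluated through the vectors $\mathbf F,\mathbf G:=(1-\mathcal K_s^\Ai\mathcal M_\s)^{-1}\mathbf h$. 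The outcome is that $\partial_s\log j_\s(s|\emptyset)$ is expressed as an integral over $\R$ of a bilinear combination of $\mathbf F$ and $\mathbf G$.

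The final step is to match that integral with the coefficient $\alpha_\s(s)$. From the integral representation of $Y_\s$ one reads off
$$
\begin{pmatrix}\beta_\s(s)&\i\eta_\s(s)\\ \i\alpha_\s(s)&-\beta_\s(s)\end{pmatrix}=\int_\R\mathbf F(\xi;s)\,\mathbf h^\top(\xi;s)\,\d\xi,
$$
so $\i\alpha_\s(s)$ is the $(2,1)$-entry, i.e. $\int_\R F_2(\xi;s)\,h_1(\xi;s)\,\d\xi$ with $h_1(\xi;s)=-\i\Ai(\xi+s)$. I would then show, again using the Airy equation to handle the $\partial_s$ derivatives in the trace formula, that the IIKS expression for $\partial_s\log j_\s(s|\emptyset)$ collapses to exactly $-\int_\R F_2(\xi;s)\,h_1(\xi;s)\,\d\xi=-\alpha_\s(s)$; the key algebraic input is the identity $\partial_s(\mathbf f\mathbf h^\top)=\partial_\lambda(\cdots)+$ (rank-one terms controlled by $Y_\s$), which is the standard mechanism by which $\partial_s$-derivatives of IIKS determinants reduce to residue/asymptotic data of the RH solution. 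I expect this last matching — getting the combinatorics of the rank-one reductions to line up so that only the single entry $\alpha_\s$ survives — to be the main obstacle; everything else is routine functional analysis and Airy-function asymptotics. (A slicker alternative, which I would mention, is to invoke the general Bertola--Cafasso tau-function formula $\partial_s\log\tau=\oint \mathrm{tr}(Y_-^{-1}Y_-'\,\partial_s\Theta)$ referenced later in the paper and read off $-\alpha_\s$ directly; but the self-contained IIKS computation is preferable here.)
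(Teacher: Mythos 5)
Your proposal follows the same route as the paper's proof: invoke IIKS theory to get unique solvability of the RH problem for $Y_\s$ from the invertibility of $1-\mathcal M_\s\mathcal K_s^\Ai$ (guaranteed by Lemma~\ref{lemma1}), apply the Jacobi variational formula to express $\pa_s\log j_\s(s|\emptyset)$ as a trace, rewrite the trace via the IIKS resolvent kernel, and read off $-\alpha_\s(s)$ from the $\lambda\to\infty$ expansion of $Y_\s$ by a residue/asymptotics argument. The paper's shortcut for the trace manipulation is the rank-one identity $\pa_sK_s^\Ai(\ll,\mu)=-\Ai(\ll+s)\Ai(\mu+s)$ (coming from the integral form of the Airy kernel), which is exactly the mechanism you gesture at. One bookkeeping slip in your final identification: since the $(2,1)$-entry of $\int_\R\mathbf F\mathbf h^\top\d\xi$ equals $\i\alpha_\s(s)$, the trace should collapse to $\i\int_\R F_2h_1\,\d\xi=-\alpha_\s(s)$ rather than to $-\int_\R F_2h_1\,\d\xi$; this is a stray factor of $\i$, not a conceptual gap, and the paper's residue computation at the end of its proof is the concrete verification that you correctly anticipate is needed.
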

\begin{proof}
The upshot of IIKS theory~\cite{IIKS} is that the RH problem for $Y_\s$ is uniquely solvable if and only if $1-\mathcal M_s^\s\mathcal K_s^\Ai$ is invertible.
The latter condition holds true by Lemma~\ref{lemma1}.
Moreover, in this case, the resolvent operator $(1-\mathcal M_\s\mathcal K_s^\Ai)^{-1}-1$ is also {an integral operator} with kernel
\be
\label{eq:kernelresolvent}
\frac{\mathbf f^\top(\lambda;s) Y^\top_\s(\lambda;s) Y^{-\top}_\s(\mu;s)\mathbf h(\mu;s)}{\lambda-\mu}.
\ee
Therefore, using Jacobi variational formula and the identity
\be
\pa_sK_s^\Ai(\lambda,\mu)=-\Ai(\lambda+s)\Ai(\mu+s),
\ee
which follows directly from~\eqref{eq:standardairykernel}, we compute $\frac\pa{\pa s}\log j_\s(s|\emptyset)$ as
\begin{align}
\nonumber
&-\tr\left((1-\mathcal M_\s\mathcal K_s^\Ai)^{-1}\mathcal M_{\s}\pa_s\mathcal K_s^\Ai\right)
\\
\nonumber
&=-\tr\left(\bigl((1-\mathcal M_\s\mathcal K_s^\Ai)^{-1}-1\bigr)\mathcal M_\s\pa_s\mathcal K_s^\Ai\right)-\tr\left(\mathcal M_\s\pa_s\mathcal K_s^\Ai\right)
\\
\nonumber
&=\int_\R\int_\R\frac{\mathbf f^\top(\lambda;s)Y_\s^\top(\lambda;s)Y_\s^{-\top}(\mu;s)\mathbf h(\mu;s)}{\lambda-\mu}\s(\mu)\Ai(\mu+s)\Ai(\lambda+s)\d\lambda\d\mu+\int_\R\sigma(\mu)\Ai(\mu+s)^2\d\mu
\\
\nonumber
&=\int_\R\biggl[(\i,0)\int_\R\frac{\mathbf h(\lambda;s)\mathbf f^\top(\lambda;s)Y_\s^\top(\lambda;s)}{\lambda-\mu}\d\lambda\biggr]Y_\s^{-\top}(\mu;s)\mathbf h(\mu;s)\sigma(\mu)\Ai(\mu+s)\d\mu+\int_\R\sigma(\mu)\Ai(\mu+s)^2\d\mu
\\
\nonumber
&\mathop{=}^{(*)}\int_\R (\i,0)\bigl(I-Y_\s^\top(\mu;s)\bigr)Y_\s^{-\top}(\mu;s)\mathbf h(\mu;s)\s(\mu)\Ai(\mu+s)\d\mu+\int_\R\sigma(\mu)\Ai(\mu+s)^2\d\mu
\\
&=\int_\R (\i,0)Y_\s^{-\top}(\mu;s)\mathbf h(\mu;s)\mathbf f^\top(\mu;s)\begin{pmatrix} 0 \\ 1 \end{pmatrix}\d\mu,
\label{eq:susp}
\end{align}
where we use the expressions of $\mathbf f,\mathbf h$ given in \eqref{eq:sigmaKintegrable}, and in the equality $(*)$ we use the identity
\be
Y_\s^\top(\mu)=I-\int_{\R}\frac{\mathbf h(\lambda;s)\mathbf f^\top(\lambda;s)Y_\s^\top(\lambda;s)}{\lambda-\mu}\d\lambda,
\ee
which follows from the RH problem satisfied by $Y_\s$ and the Sokhotski--Plemelj formula.
Finally, \eqref{eq:susp} can be simplified by a residue computation:
\begin{align}
\nonumber
\int_\R (\i,0)Y_\s^{-\top}(\mu;s)\mathbf h(\mu;s)\mathbf f^\top(\mu;s)\begin{pmatrix} 0 \\ 1 \end{pmatrix}\d\mu 
&=\i\left(\frac 1{2\pi\i}\int_\R \bigl(Y^{-\top}_{\s,+}(\mu;s)-Y^{-\top}_{\s,-}(\mu;s)\bigr)\d\mu\right)_{1,2}
\\
&=\i\left(\res{\mu=\infty}Y_\s^{-\top}(\mu;s)\right)_{1,2}=-\alpha_\s(s),
\end{align}
using \eqref{eq:Yasympinfty}.
\end{proof}

Next we consider the following RH problem which depends on~$s\in\R$ and on a finite number of distinct points $\nu_1,\dots,\nu_m$, $\nu_i\not=\nu_j$ for $i\not=j$, as usual collected into the vector~$\underline \nu:=(\nu_1,\dots,\nu_m)$. 
When $m=0$, condition (c) below is empty.

\subsubsection*{RH problem for $\Psi_\s$}
\begin{itemize}
	\item[(a)] $\Psi_\s(\cdot;s|\underline \nu): \mathbb{C}\setminus \mathbb{R} \to \mathbb{C}^{2\times 2}$ is analytic for all $s\in\R$ and all $\underline \nu$. 
	\item[(b)] The boundary values of $\Psi_\s(\cdot;s|\underline \nu)$ are continuous on $\mathbb{R}\setminus\{\nu_1,\dots,\nu_m\}$ and are related by
	\be
	\Psi_{\s,+}(\ll;s|\underline \nu) = \Psi_{\s,-}(\ll;s|\underline \nu) \begin{pmatrix}
		1 & 1-\s(\ll) \\
		0 & 1
	\end{pmatrix}, \qquad \ll \in \R, \ \ll\not=\nu_1,\dots,\nu_m.
	\ee
 	\item[(c)] For all $i=1,\dots,m$, as $\lambda \to \nu_i$ from either side of the real axis we have 
	\be
	\label{eq:PsiasympV}
	\Psi_\s(\lambda;s|\underline \nu) (\lambda-\nu_i)^{-\sigma_3}=O(1).
	\ee
	\item[(d)] As $\lambda \to \infty$, we have
	\be
	\label{eq:Psiasympinfty}
	\Psi_\s(\lambda;s|\underline \nu) = \biggl( I + \frac 1\ll\begin{pmatrix}
		q_\s(s|\underline \nu) & \i r_\s(s|\underline \nu) \\
		\i p_\s(s|\underline \nu) & -q_\s(s|\underline \nu)
	\end{pmatrix} +O(\lambda^{-2}) \biggr) \lambda^{\frac 14\sigma_{3}}G\e^{\left(-\frac{2}{3}\lambda^{\frac 32}-s\lambda^{\frac 12}\right)\sigma_{3}} C_\delta
	\ee
for any $\delta \in (0,\frac{\pi}{2})$. Here we take the principal branches of $\lambda^{\frac 14\sigma_3}$ and $\lambda^{\frac 12}$, analytic in $\C \setminus \left(-\infty,0\right]$ and positive for $\lambda>0$, and
		\be
	\label{eq:G}
	\s_3 := \begin{pmatrix}
		1 & 0 \\
		0 & -1
	\end{pmatrix}, \qquad G := \frac{1}{\sqrt{2}} \begin{pmatrix}
		1 & -\i \\ -\i & 1
	\end{pmatrix},\qquad C_\delta:=\begin{cases}
		I, & |\arg \lambda | < \pi -\delta, \\
		\begin{pmatrix}
			1 & 0 \\
			\pm 1 & 1
		\end{pmatrix}, & \pi - \delta < \pm \arg \lambda < \pi.
	\end{cases}
	\ee
\end{itemize}

\begin{remark}
We shall explain in detail in Section~\ref{sec:IS} the {relation of this RH problem with} the one in~\cite{ItsSukhanov} related to the inverse scattering for the cKdV equation.
\end{remark}

\begin{remark}
\label{rem:uniquenessPsiV}
The solution to this RH problem is unique by a standard argument in RH problems based on Liouville and Morera theorems.
Moreover, as we will show, the solution exists and can be constructed in terms of the solution to the RH problem for~$Y_\s$ (by an Airy dressing) and of a suitable matrix-valued rational function (by a Schlesinger transformation~\cite{BertolaCafasso}).
\end{remark}

We first recall the case $m=0$, which has already been considered in \cite{CafassoClaeysRuzza}.
To this end we introduce the Airy model RH problem in the following form.

\subsubsection*{RH problem for $\Phi^\Ai$}
\begin{itemize}
\item[(a)] $\Phi^\Ai$ is analytic in $\mathbb C\setminus \mathbb R$.
\item[(b)] The boundary values of $\Phi^\Ai$ are continuous on $\R$ and are related by
\be
	\Phi_+^\Ai(\lambda)=\Phi_-^\Ai(\lambda)\begin{pmatrix}
								1 & 1 \\
								0 & 1
							\end{pmatrix}, \qquad \lambda \in \mathbb R.
\ee
\item[(c)] As $\lambda\to\infty$, $\Phi^\Ai$ has the asymptotic behavior
\be
\Phi^\Ai(\lambda) = \left( I +\lambda^{-1}\begin{pmatrix}0&\frac {7\i}{48}\\
0&0\end{pmatrix} +\mathcal O(\lambda^{-2}) \right) \lambda^{\frac 14\s_3}G\e^{-\frac 23\lambda^{\frac 32}\s_3}C_\delta,
\ee
for any $0<\delta<\pi/2$ where $G,C_\delta$ are given in \eqref{eq:G} and the branches of $\lambda^{\frac 14\s_3}$ and $\lambda^{\frac 12}$ are as in~\eqref{eq:Psiasympinfty}.
\end{itemize}
The (unique) solution can be expressed in terms of the Airy function as
\be
\label{eq:defPhi}
\Phi^\Ai(\lambda):=
\begin{cases}
-\sqrt{2 \pi }
\begin{pmatrix} \Ai'(\lambda) &-\e^{\frac{2\i\pi}3}\Ai'(\e^{\frac{-2\i\pi}3}\lambda) \\ \i\,\Ai(\lambda) & -\i\e^{\frac{-2\i\pi}3}\Ai(\e^{\frac{-2\i\pi}3} \lambda)  \end{pmatrix}, 
&\mbox{\rm if }\Im \lambda>0,
\\[15pt]
-\sqrt{2 \pi }
\begin{pmatrix} \Ai'(\lambda) &\e^{\frac{-2\i\pi}3}\Ai'(\e^{\frac{2\i\pi}3}\lambda) \\ \i\,\Ai(\lambda) & \i\e^{\frac{2\i\pi}3}\Ai(\e^{\frac{2\i\pi}3}\lambda)  \end{pmatrix},
&\mbox{\rm if }\Im \lambda<0.
\end{cases}
\ee

\begin{proposition}
\label{prop:Psiempty}
When $m=0$, the RH problem for $\Psi_\s$ has a unique solution for all $s\in\R$ which can written as
\be
\label{eq:PsiEmptyformula}
\Psi_\s(\lambda;s|\emptyset)=\begin{pmatrix}
	1&\frac{\i s^2}{4}\\
	0&1
\end{pmatrix}Y_\s(\lambda;s)\Phi_s^\Ai(\lambda)
\ee
where $\Phi_s^\Ai(\lambda):=\Phi^\Ai(\lambda+s)$.
Moreover,
\be
\label{eq:pa}
p_\s(s|\emptyset)=\alpha_\s(s)+\frac{s^2}4,
\ee
and the kernel $L_s^\sigma(\lambda,\mu)$ of the operator $\mathcal L_s^\s:=\mathcal K_s^\Ai(1-\mathcal M_\s\mathcal K_s^\Ai)^{-1}$ can be written as
\be
\label{eq:Lkernel}
L_s^\s(\lambda,\mu) = 
\frac{\left(\Psi_\s(\mu;s|\emptyset)^{-1}\Psi_\s(\lambda;s|\emptyset)\right)_{2,1}}{2\pi\i(\lambda-\mu)}.
\ee
\end{proposition}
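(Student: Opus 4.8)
The plan is to build the solution $\Psi_\s(\cdot;s|\emptyset)$ explicitly from the two ingredients mentioned in Remark~\ref{rem:uniquenessPsiV}, namely $Y_\s$ and the Airy parametrix $\Phi^\Ai$, and then verify the three stated formulas one by one. For the first claim, I would \emph{define} the right-hand side of~\eqref{eq:PsiEmptyformula}, call it $\wt\Psi(\lambda):=\left(\begin{smallmatrix}1&\i s^2/4\\0&1\end{smallmatrix}\right)Y_\s(\lambda;s)\Phi^\Ai(\lambda+s)$, and check the four defining conditions of the RH problem for $\Psi_\s$. Analyticity in $\C\setminus\R$ is immediate since both factors are analytic there and the constant prefactor is harmless. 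For the jump on $\R$ one multiplies the jump of $Y_\s$, namely $I-2\pi\i\,\mathbf f\mathbf h^\top$, by conjugation with $\Phi^\Ai_\pm(\lambda+s)$: a direct computation using the explicit form~\eqref{eq:defPhi} of $\Phi^\Ai$ and the expressions~\eqref{eq:sigmaKintegrable} for $\mathbf f,\mathbf h$ (which are built precisely from $\Ai(\lambda+s),\Ai'(\lambda+s)$) should collapse the combined jump to the unipotent matrix $\left(\begin{smallmatrix}1&1-\s(\lambda)\\0&1\end{smallmatrix}\right)$. Condition (c) is vacuous here since $m=0$. For condition (d) at $\infty$, combine the $O(\lambda^{-1})$ expansion~\eqref{eq:Yasympinfty} of $Y_\s$ with the large-$\lambda$ expansion of $\Phi^\Ai(\lambda+s)$; expanding $\Phi^\Ai(\lambda+s)=\bigl(I+O((\lambda+s)^{-1})\bigr)(\lambda+s)^{\frac14\s_3}G\e^{-\frac23(\lambda+s)^{3/2}\s_3}C_\delta$ and re-expanding the shifted quantities $(\lambda+s)^{1/4},(\lambda+s)^{3/2}$ in powers of $1/\lambda$ produces exactly the structure $\lambda^{\frac14\s_3}G\e^{(-\frac23\lambda^{3/2}-s\lambda^{1/2})\s_3}C_\delta$ demanded in~\eqref{eq:Psiasympinfty}, and the constant prefactor $\left(\begin{smallmatrix}1&\i s^2/4\\0&1\end{smallmatrix}\right)$ is what absorbs the leftover $\e^{\frac13 s^{3/2}}$-type terms (more precisely the $\lambda^0$ and $\lambda^{-1}$ remainders from expanding the exponent). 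Uniqueness then follows from Remark~\ref{rem:uniquenessPsiV}.

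For~\eqref{eq:pa}, read off the $O(\lambda^{-1})$ term in the expansion of $\wt\Psi$. The $(2,1)$-entry $\i p_\s(s|\emptyset)$ of the coefficient matrix collects two contributions: the $\i\alpha_\s(s)$ coming from the $(2,1)$-entry of the $Y_\s$ expansion~\eqref{eq:Yasympinfty}, and a term coming from the subleading expansion of $(\lambda+s)^{1/4}$ and $(\lambda+s)^{1/2}$ around $\lambda=\infty$ — it is the latter that generates the $\i s^2/4$ shift. I would carefully track which power of $s$ appears; the factor $\frac14$ in the exponent $\lambda^{\frac14\s_3}$ times the expansion of $(1+s/\lambda)^{1/2}=1+\frac{s}{2\lambda}+\cdots$ inside $\e^{-\frac23\lambda^{3/2}(1+s/\lambda)^{3/2}}$ is the source, together with the prefactor $\left(\begin{smallmatrix}1&\i s^2/4\\0&1\end{smallmatrix}\right)$ which does not affect the $(2,1)$-entry but is needed so that the $(1,2)$-entry of the expansion is consistent. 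This is a finite, bookkeeping-type computation.

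For the kernel formula~\eqref{eq:Lkernel}, I would use the IIKS resolvent identity recalled in the proof of Proposition~\ref{prop:dlogJ}: the kernel of $(1-\mathcal M_\s\mathcal K_s^\Ai)^{-1}-1$ is $\frac{\mathbf f^\top(\lambda)Y_\s^\top(\lambda)Y_\s^{-\top}(\mu)\mathbf h(\mu)}{\lambda-\mu}$. Then from the definition $\mathcal L_s^\s=\mathcal K_s^\Ai(1-\mathcal M_\s\mathcal K_s^\Ai)^{-1}=\mathcal K_s^\Ai+\mathcal K_s^\Ai\mathcal M_{\sqrt\s}(1-\mathcal K_s^\s)^{-1}\mathcal M_{\sqrt\s}\mathcal K_s^\Ai$ (the second identity in~\eqref{eq:Lresolvent}), one assembles $L_s^\s(\lambda,\mu)$ as $K_s^\Ai(\lambda,\mu)$ plus a bilinear convolution against the resolvent; the point is that $K_s^\Ai$ itself is integrable with the same $\mathbf f,\mathbf h$ as $\mathcal M_\s\mathcal K_s^\Ai$ up to the scalar $\s$, so the whole combination telescopes into the single ratio $\frac{\mathbf h^\top(\mu)Y_\s^{-1}(\mu)Y_\s(\lambda)\,\mathbf f(\lambda)\cdot(\text{something})}{\lambda-\mu}$ — and then one rewrites $Y_\s(\lambda)\mathbf f(\lambda)$, $\mathbf h^\top(\mu)Y_\s^{-1}(\mu)$ in terms of $\Psi_\s$ via~\eqref{eq:PsiEmptyformula}, observing that $\Phi^\Ai_s(\lambda)$ carries the column vectors $\binom{-\i\Ai'}{\Ai}$ and $\binom{\Ai'}{\i\Ai}$ up to constants, which are exactly $\mathbf f/\s$ and a companion of $\mathbf h$. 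The constant matrix $\left(\begin{smallmatrix}1&\i s^2/4\\0&1\end{smallmatrix}\right)$ cancels in $\Psi_\s(\mu)^{-1}\Psi_\s(\lambda)$, and one extracts the $(2,1)$-entry, producing the clean formula~\eqref{eq:Lkernel} with the $\frac{1}{2\pi\i}$ normalization traceable to the $-2\pi\i$ in the jump of $Y_\s$.

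\textbf{Main obstacle.} The genuinely delicate step is the matching at $\infty$ in condition (d): one must show that the shifted Airy parametrix $\Phi^\Ai(\lambda+s)$, when its argument is re-expanded in powers of $1/\lambda$, produces \emph{exactly} the prescribed normalization $\lambda^{\frac14\s_3}G\e^{(-\frac23\lambda^{3/2}-s\lambda^{1/2})\s_3}C_\delta$ with an $O(\lambda^{-1})$ prefactor, and that the constant left-multiplier $\left(\begin{smallmatrix}1&\i s^2/4\\0&1\end{smallmatrix}\right)$ is precisely what is needed (no more, no less) to clear the $O(1)$ discrepancy and to make the expansion~\eqref{eq:Psiasympinfty} hold with the stated structure. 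Getting the constant $\i s^2/4$ right, and simultaneously deriving~\eqref{eq:pa}, requires carrying the asymptotic expansions of $(\lambda+s)^{1/4}$, $(\lambda+s)^{1/2}$ and $(\lambda+s)^{3/2}$ to the relevant order and keeping careful track of which terms land in the scalar prefactor versus the $\lambda^{-1}$ coefficient matrix; all other steps are routine conjugation and IIKS bookkeeping.
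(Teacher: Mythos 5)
Your proposal follows essentially the same route as the paper: verify analyticity, check the jump by conjugating the jump of $Y_\s$ through $\Phi^\Ai_s$ using the explicit identities for $\mathbf f,\mathbf h$, verify the normalization at $\infty$ by re-expanding the shifted Airy parametrix, read off \eqref{eq:pa} from the $\lambda^{-1}$ coefficient, and derive \eqref{eq:Lkernel} by combining the IIKS resolvent kernel \eqref{eq:kernelresolvent} with \eqref{eq:PsiEmptyformula} and \eqref{eq:fgPhiAi}.

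One bookkeeping slip worth fixing: the source of the constant prefactor $\bigl(\begin{smallmatrix}1&\i s^2/4\\0&1\end{smallmatrix}\bigr)$ is not ``the factor $\tfrac14$ in $\lambda^{\frac14\s_3}$ times $(1+s/\lambda)^{1/2}$''; rather, it is the $-\tfrac{s^2}{4}\lambda^{-1/2}$ term in
\[
-\tfrac23(\lambda+s)^{\frac32}=-\tfrac23\lambda^{\frac32}-s\lambda^{\frac12}-\tfrac{s^2}{4}\lambda^{-\frac12}+O(\lambda^{-\frac32}),
\]
which, when $\e^{-\frac{s^2}{4}\lambda^{-1/2}\s_3}$ is moved to the left of $G$ and $\lambda^{\frac14\s_3}$ (using $G\s_3 G^{-1}=\bigl(\begin{smallmatrix}0&\i\\-\i&0\end{smallmatrix}\bigr)$ and then $\lambda^{\frac14\s_3}\bigl(\begin{smallmatrix}0&\i\\-\i&0\end{smallmatrix}\bigr)\lambda^{-\frac14\s_3}=\bigl(\begin{smallmatrix}0&\i\lambda^{1/2}\\-\i\lambda^{-1/2}&0\end{smallmatrix}\bigr)$), produces an $O(1)$ upper-right entry $-\i s^2/4$ that the prefactor is designed to cancel. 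The factor $(1+s/\lambda)^{\frac14\s_3}$ from $(\lambda+s)^{\frac14\s_3}$ only contributes at order $\lambda^{-1}$ and is absorbed into the subleading coefficient, not into the $O(1)$ prefactor.
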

\begin{proof}
{As explained in Remark~\ref{rem:uniquenessPsiV}, uniqueness of the solution follows from standard arguments,} so it suffices to verify that \eqref{eq:PsiEmptyformula} solves the RH problem.
Condition (a) is easily checked, while for condition (b) we use the identity
\be
I-2\pi\i\,\mathbf f(\lambda;s)\mathbf h^\top(\lambda;s) =\Phi_{s,-}^\Ai(\lambda) \begin{pmatrix}1 & 1-\sigma(\lambda) \\ 0 & 1 \end{pmatrix}\Phi_{s,+}^\Ai(\lambda)^{-1},
\ee
which follows directly from the identities
\be
\label{eq:fgPhiAi}
\mathbf f(\lambda;s)=\frac{\i\,\sigma(\lambda)}{\sqrt{2\pi}}\Phi^\Ai_s(\lambda)\begin{pmatrix}
1 \\ 0
\end{pmatrix},\qquad
\mathbf h(\lambda;s)=-\frac 1{\sqrt{2\pi}}\,\Phi^\Ai_s(\lambda)^{-\top}\begin{pmatrix}
0 \\ 1
\end{pmatrix},
\ee
and from condition (b) in the RH problem for $\Phi^\Ai$.
Finally, combining conditions (c) in the RH problems for $Y$ and $\Phi^\Ai$, we obtain that as $\lambda\to\infty$ we have
\be
Y_\s(\lambda;s)\Phi_s^\Ai(\lambda)=\left(I+\lambda^{-1}\begin{pmatrix}
\beta(s) & \i\bigl(\eta(s)+\frac 7{48}\bigr) \\ \i\alpha(s) & -\beta(s)\end{pmatrix}+O(\lambda^{-2})\right)(\lambda+s)^{\frac 14\s_3}G\e^{-\frac 23(\lambda+s)^{\frac 32}\s_3}C_\delta
\ee
and expanding for~$\lambda$ large and~$s$ fixed we verify condition~(d) in the RH problem for $\Psi_\s$ along with the claimed relation \eqref{eq:pa}.
Finally, \eqref{eq:Lkernel} follows directly from the expression~\eqref{eq:kernelresolvent} for the kernel of $(1-\mathcal M_\s\mathcal K_s^\Ai)^{-1}-1=\mathcal M_\s\mathcal L_s^\s$, along with the identities~\eqref{eq:PsiEmptyformula} and~\eqref{eq:fgPhiAi}
\end{proof}
 
\begin{proposition}
\label{prop:PsiV}
The RH problem for~$\Psi_\s$ has a unique solution for all~$s\in\R$ and all $\underline \nu=(\nu_1,\dots,\nu_m)$ with $\nu_i\not=\nu_j$ for~$i\not=j$, which can be expressed as
\be
\label{eq:PsiVformula}
\Psi_\s(\lambda;s|\underline \nu)=M(\lambda;s|\underline \nu) \Psi_\s(\lambda;s|\emptyset),
\ee
where $M$ is a rational function of $\ll$, with poles at $\lambda=\nu_1,\dots,\nu_m$ only, given by
\be
\label{eq:M}
M(\lambda;s|\underline \nu)= I-\frac{1}{2\pi\i}\sum_{i,j=1}^{m}\frac{\bigl(L_s^\s(\underline \nu,\underline \nu)^{-1}\bigr)_{j,i}}{\lambda-\nu_j}\Psi_\s(\nu_i;s|\emptyset)\begin{pmatrix} 0 & 1 \\ 0 & 0\end{pmatrix} \Psi_\s^{-1}(\nu_j;s|\emptyset),
\ee
where we use the notation~\eqref{eq:defLHmatrix}.
\end{proposition}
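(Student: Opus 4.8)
The plan is to get uniqueness from the standard RH argument (as announced in Remark~\ref{rem:uniquenessPsiV}) and existence by realizing the solution as a Schlesinger transformation of $\Psi_\s(\cdot;s|\emptyset)$ from Proposition~\ref{prop:Psiempty}. For uniqueness: if $\Psi^{(1)},\Psi^{(2)}$ both solve the RH problem, then $R:=\Psi^{(1)}(\Psi^{(2)})^{-1}$ is analytic on $\C\setminus\R$, has no jump on $\R\setminus\{\nu_1,\dots,\nu_m\}$, and $\to I$ at $\infty$; because $\det\Psi^{(2)}\equiv 1$, condition (c) forces $\Psi^{(2)}=A(\lambda)(\lambda-\nu_i)^{\sigma_3}$ near $\nu_i$ with $A$ holomorphic and invertible there, so $R$ stays bounded at each $\nu_i$ and $R\equiv I$ by Liouville. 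For existence, I would look for $\Psi_\s(\lambda;s|\underline\nu)=M(\lambda;s|\underline\nu)\,\Psi_\s(\lambda;s|\emptyset)$ with $M$ rational in $\lambda$, with at most simple poles at $\nu_1,\dots,\nu_m$ and $M(\infty)=I$. Such a left factor preserves condition (a) (it is analytic off $\R$) and condition (b) (since $M_+=M_-$ on $\R\setminus\{\nu_j\}$, the jump is unchanged), and it preserves the structure at $\infty$ in condition (d) because $M=I+O(\lambda^{-1})$, merely shifting $q_\s,r_\s,p_\s$ (the trace-zero form of the $\lambda^{-1}$-coefficient following a posteriori from $\det\Psi_\s(\cdot;s|\underline\nu)\equiv 1$). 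So everything comes down to choosing $M$ to enforce condition (c).

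The one structural input I would use is that the \emph{first} column $\psi_1(\lambda):=\Psi_\s(\lambda;s|\emptyset)\left(\begin{smallmatrix}1\\0\end{smallmatrix}\right)$ is analytic across $\R$, since the jump matrix in condition (b) is unipotent upper triangular and so acts trivially on the first column from the right. Writing the residue of $M$ at $\nu_j$ in rank-one form $w_j u_j$ (a column $w_j\in\C^2$ times a row $u_j$), the first column of $N:=M\,\Psi_\s(\cdot;s|\emptyset)$ has at $\nu_i$ a potential simple pole with coefficient $w_i\bigl(u_i\psi_1(\nu_i)\bigr)$; I would annihilate it by taking $u_j:=\left(\begin{smallmatrix}0&1\end{smallmatrix}\right)\Psi_\s(\nu_j;s|\emptyset)^{-1}$, for then $u_i\psi_1(\nu_i)=\bigl(\Psi_\s(\nu_i;s|\emptyset)^{-1}\Psi_\s(\nu_i;s|\emptyset)\bigr)_{2,1}=0$. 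With this choice the first column of $N$ is holomorphic at every $\nu_i$, with value there $\psi_1(\nu_i)+w_i\,u_i\psi_1'(\nu_i)+\sum_{j\neq i}\tfrac{w_j\,(u_j\psi_1(\nu_i))}{\nu_i-\nu_j}$; imposing that this vanish for all $i$ is exactly condition (c), since the second column of $N$ is automatically $O\bigl((\lambda-\nu_i)^{-1}\bigr)$. Now formula~\eqref{eq:Lkernel}, $L_s^\s(\lambda,\mu)=\frac{(\Psi_\s(\mu;s|\emptyset)^{-1}\Psi_\s(\lambda;s|\emptyset))_{2,1}}{2\pi\i(\lambda-\mu)}$, together with its confluent limit $\mu\to\lambda$, identifies $u_j\psi_1(\nu_i)=2\pi\i(\nu_i-\nu_j)L_s^\s(\nu_i,\nu_j)$ for $i\neq j$ and $u_i\psi_1'(\nu_i)=2\pi\i L_s^\s(\nu_i,\nu_i)$, so the vanishing conditions become the linear system $2\pi\i\sum_{j=1}^m L_s^\s(\nu_i,\nu_j)\,w_j=-\psi_1(\nu_i)$, $i=1,\dots,m$.

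Its coefficient matrix is $L_s^\s(\underline\nu,\underline\nu)$, which is invertible by Proposition~\ref{prop:sec2}; solving for the $w_j$ and substituting back the formula for $u_j$ and $\left(\begin{smallmatrix}1\\0\end{smallmatrix}\right)\left(\begin{smallmatrix}0&1\end{smallmatrix}\right)=\left(\begin{smallmatrix}0&1\\0&0\end{smallmatrix}\right)$ reproduces exactly the matrix~\eqref{eq:M}. It then remains to check that $N=M\,\Psi_\s(\cdot;s|\emptyset)$ is a genuine solution: (a), (b), (c) hold by construction and (d) follows by expanding $M=I+O(\lambda^{-1})$ in the asymptotics of $\Psi_\s(\cdot;s|\emptyset)$; finally $\det N\equiv 1$ (its determinant is analytic across $\R$, bounded at the $\nu_i$, and $\to 1$ at $\infty$), whence $\det M\equiv 1$, closing the a posteriori invertibility remark used above. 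I expect the main obstacle to be the bookkeeping in the middle step — correctly handling the diagonal confluent entries $L_s^\s(\nu_i,\nu_i)$, tracking the $2\pi\i$ factors, and verifying that condition (c) is met with exactly the exponents $\pm 1$ and no worse singularity in the second column — rather than anything conceptual.
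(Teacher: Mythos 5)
Your proof is correct and follows essentially the same route as the paper's: both construct $M$ as a Schlesinger-type rational left factor with rank-one residues $w_ju_j$ at the $\nu_j$, determine $u_j=(0,1)\Psi_\s(\nu_j;s|\emptyset)^{-1}$ from vanishing of the $(\lambda-\nu_j)^{-1}$ term in the first column, and then use~\eqref{eq:Lkernel} (including its confluent diagonal limit) to turn the vanishing of the constant term into the linear system $2\pi\i\,L_s^\s(\underline\nu,\underline\nu)\,w=-\psi_1(\underline\nu)$, solved via Proposition~\ref{prop:sec2}. The only cosmetic difference is that you posit the rank-one ansatz and then verify all four RH conditions including $\det\equiv1$ a posteriori, whereas the paper derives the rank-one structure from condition (c) directly; both yield exactly~\eqref{eq:M}.
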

\begin{proof}
By the conditions in the RH problem for $\Psi_\s$ it is straightforward to verify that $M(\lambda;s|\underline \nu):=\Psi_\s(\lambda;s|\underline \nu)\Psi_\s(\lambda;s|\emptyset)^{-1}$ is a rational matrix with simple poles at $\nu_1,\dots,\nu_m$ only and $M(\lambda;s|\underline \nu)\to I$ as $\lambda\to\infty$.
Hence we write
\be
\label{eq:formM}
M(\lambda;s|\underline \nu) = I +\sum_{j=1}^m\frac{M_j(s|\underline \nu)}{\lambda-\nu_j}.
\ee
Condition (c) in the RH problem for $\Psi_\s$ then translates to the condition
\be
\label{eq:conditionMj}
M(\lambda;s|\underline \nu)\Psi_\s(\lambda;s|\emptyset)(\lambda-\nu_j)^{-\sigma_3}=O(1),\quad\mbox{as }\lambda\to \nu_j,\qquad j=1,\dots,m,
\ee
and we claim that this condition uniquely determines the coefficients $M_j(s|\underline \nu)$. 
Indeed, the expansion at $\lambda\to \nu_j$ of the left-hand side of \eqref{eq:conditionMj} gives
\be
\biggl(I+\frac{M_j(s|\underline \nu)}{\lambda-\nu_j} +
\! \sum_{\begin{smallmatrix}1\leq i\leq m \\ i\neq j\end{smallmatrix}}\!\frac{M_i(s|\underline \nu)}{\nu_j-\nu_i} 
+ O(\lambda-\nu_j)\biggr)
\biggl(\Psi_\s(\nu_j;s|\emptyset) + \Psi'_\s(\nu_j;s|\emptyset)(\lambda-\nu_j)+O((\lambda-\nu_j)^2)\biggr)(\lambda-\nu_j)^{-\s_3}
\ee
where $\Psi'_\s(\lambda;s|\underline \nu):=\pa_\lambda\Psi_\s(\lambda;s|\underline \nu)$.
Vanishing of singular terms in this Laurent series yields
\begin{align}
\label{eq:Mjdef1}
M_j(s|\underline \nu)\Psi_\s(\nu_j;s|\emptyset)\begin{pmatrix}
1\\0
\end{pmatrix}&=\begin{pmatrix}
0\\0
\end{pmatrix},
\\
\label{eq:Mjdef2}
\biggl(\biggl(I + \sum_{\begin{smallmatrix}1\leq i\leq m \\ i\neq j\end{smallmatrix}}\frac{M_i(s|\underline \nu)}{\nu_j-\nu_i}\biggr) \Psi_\s(\nu_j;s|\emptyset) + M_j(s|\underline \nu) \Psi'_\s(\nu_j;s|\emptyset)\biggr)\begin{pmatrix}
1\\0
\end{pmatrix}
&= \begin{pmatrix}
0\\0
\end{pmatrix}.
\end{align}
Equation \eqref{eq:Mjdef1} means that the first column of $M_j(s|\underline \nu)\Psi_\s(\nu_j;s|\emptyset)$ vanishes. Denoting by $\mathbf a_j=\mathbf a_j(s|\underline \nu)\in\C^2$ the second column of $M_j(s|\underline \nu)\Psi_\s(\nu_j;s|\emptyset)$, we get
\be
\label{eq:Mjfunctfj}
M_j(s|\underline \nu)=\mathbf a_j (0,1) \Psi^{-1}_\s(\nu_j;s|\emptyset).
\ee
Plugging \eqref{eq:Mjfunctfj} into \eqref{eq:Mjdef2} using \eqref{eq:Lkernel} we get
\begin{align}\nonumber
\Psi_\s(\nu_j;s|\emptyset) \begin{pmatrix}
	1\\0
\end{pmatrix}+\sum_{\begin{smallmatrix}1\leq i\leq m \\ i\neq j\end{smallmatrix}}\mathbf a_i\underbrace{\frac{(0,1)\Psi^{-1}_\s(\nu_i;s|\emptyset)\Psi_\s(\nu_j;s|\emptyset)\begin{pmatrix}
	1\\0
\end{pmatrix}}{\nu_j-\nu_i}}_{=\,2\pi \i\bigl(L_s^\s(\underline \nu,\underline \nu)\bigr)_{j,i}}
\\
+\,\mathbf a_j\underbrace{(0,1)\Psi^{-1}_\s(\nu_j;s|\emptyset) \Psi'_\s(\nu_j;s|\emptyset) \begin{pmatrix}
	1\\0
\end{pmatrix}}_{=\,2\pi \i\bigl(L_s^\s(\underline \nu,\underline \nu)\bigr)_{j,j}}
=&\begin{pmatrix}
	0\\0
\end{pmatrix}
\end{align}
and so, cf. Proposition~\ref{prop:sec2},
	\be
	\Psi_\s(\nu_j;s|\emptyset) \begin{pmatrix}
		1\\0
	\end{pmatrix}+2\pi\i \sum_{i=1}^{m}\mathbf a_i\bigl(L_s^\s(\underline \nu,\underline \nu)\bigr)_{j,i} = \begin{pmatrix}
		0\\0
	\end{pmatrix}
	\ \Rightarrow\
	\mathbf a_j=-\frac{1}{2\pi \i}\sum_{i=1}^{m}\bigl(L_s^\s(\underline \nu,\underline \nu)^{-1}\bigr)_{j,i}\Psi_\s(\nu_i;s|\emptyset) \begin{pmatrix}
		1\\0
	\end{pmatrix},
	\ee
and {by \eqref{eq:Mjfunctfj}, we finally get} \eqref{eq:M}. 
\end{proof}

\subsection{Stark equation}

It is convenient to introduce the following {variant} of $\Psi_\s$, namely
\be
\label{eq:Theta}
\Theta_\s(\lambda;s|\underline \nu):=
\begin{pmatrix}
	1 & p_\s(s|\underline \nu) \\ 0 & 1
\end{pmatrix}\e^{\frac{\i\pi}4\s_3}\Psi_\s(\lambda;s|\underline \nu)\e^{-\frac{\i\pi}4\s_3}.
\ee
The RH conditions on $\Psi_\s$ imply that $\Theta_\s$ is the unique solution to the following RH problem.

\subsubsection*{RH problem for $\Theta_\s$}
\begin{itemize}
	\item[(a)] $\Theta_\s(\cdot;s|\underline \nu): \C\setminus \R \to \C^{2\times 2}$ is analytic for all $s\in\R$ and all finite $\underline \nu\subset\R$. 
	\item[(b)] The boundary values of~$\Theta_\s(\cdot;s|\underline \nu)$ are continuous on $\mathbb{R}\setminus\underline \nu$ and are related by
	\be
	\label{eq:ThetaJump}
	\Theta_{\s,+}(\lambda;s|\underline \nu) = \Theta_{\s,-}(\lambda;s|\underline \nu) \begin{pmatrix}
	1 & \i(1-\sigma(\lambda)) \\
	0 & 1
	\end{pmatrix},  \qquad \ll \in \R, \ \ll\not=\nu_1,\dots,\nu_m.
	\ee
	\item[(c)] For all $i=1,\dots,m$, as $\lambda \to \nu_i$ from either side of the real axis we have 
	\be
	\label{eq:ThetaasympV}
	\Theta_\s(\lambda;s|\underline \nu) (\lambda-\nu_i)^{-\sigma_3}=O(1).
	\ee
	\item[(d)] As~$\lambda \to \infty$, we have
	\be
	\label{eq:Thetaasympinfty}
	\Theta_\s(\lambda;s|\underline \nu)= \begin{pmatrix}1 & p \\ 0 & 1
	\end{pmatrix} \left( I + \lambda^{-1}\begin{pmatrix}q & -r \\ p & -q \end{pmatrix} +O(\lambda^{-2}) \right) \lambda^{\frac 14\s_3}\frac{\begin{pmatrix} 1 & 1 \\ -1 & 1
\end{pmatrix}}{\sqrt 2}
	\e^{\left(-\frac{2}{3}\lambda^{\frac 32}-s\lambda^{\frac 12}\right)\s_3}\wh C_\delta
	\ee
	for any $\delta \in (0,\frac{\pi}{2})$; here $p=p_\s(s|\underline \nu)$, $q=q_\s(s|\underline \nu)$, and $r=r_\s(s|\underline \nu)$ are the same as in~\eqref{eq:Psiasympinfty}, $\wh C_\delta:=\e^{\frac{\i\pi}4\s_3}C_\delta\e^{-\frac{\i\pi}4\s_3}$ where $C_\delta$ is defined in \eqref{eq:G}, and the branches of $\lambda^{\frac 14\s_3}$ and $\lambda^{\frac 12}$ are taken as in~\eqref{eq:Psiasympinfty}.
\end{itemize}

The formula~\eqref{eq:Lkernel} is equivalent to
\be
\label{eq:LkernelTheta}
L_s^\s(\lambda,\mu)=\frac{\left(\Theta_\s(\mu;s|\emptyset)^{-1}\Theta_\s(\lambda;s|\emptyset)\right)_{2,1}}{2\pi(\lambda-\mu)}.
\ee

\begin{proposition}
\label{prop:equationins}
For any $\lambda\in\mathbb C\setminus\mathbb R$ and for any $\underline \nu=(\nu_1,\dots,\nu_m)$ with $\nu_i\not=\nu_j$ for~$i\not=j$, $\Theta_\s(\lambda;s|\underline \nu)$ is differentiable in $s$, and 
\be
\label{eq:dsTheta}
\pa_s\Theta_\s(\lambda;s|\underline \nu)=\begin{pmatrix} 0 & \lambda+2\pa_s p_\s(s|\underline \nu) \\ 1 & 0
\end{pmatrix}\Theta_\s(\lambda;s|\underline \nu)\,,
\ee
where $p_\s(s|\underline\nu)$ appears in~\eqref{eq:Psiasympinfty}.
\end{proposition}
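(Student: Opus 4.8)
The plan is to establish \eqref{eq:dsTheta} by the standard isomonodromy argument: to show that
\[
R(\lambda;s|\underline\nu):=\bigl(\pa_s\Theta_\s(\lambda;s|\underline\nu)\bigr)\Theta_\s(\lambda;s|\underline\nu)^{-1}
\]
is an entire matrix function of $\lambda$ -- in fact a polynomial of degree one -- and then to read off its coefficients from \eqref{eq:Thetaasympinfty}. Differentiability in $s$ is itself part of the claim; I would deduce it from the explicit formulas \eqref{eq:Theta}, \eqref{eq:PsiEmptyformula}, \eqref{eq:PsiVformula}, \eqref{eq:M}, which express $\Theta_\s(\lambda;s|\underline\nu)$ as a product of $Y_\s(\cdot;s)$, the shifted Airy parametrix $\Phi^\Ai_s=\Phi^\Ai(\cdot+s)$, the rational factor $M(\cdot;s|\underline\nu)$, and matrices depending on $s$ only through $p_\s(s|\underline\nu)$ -- each smooth in $s$ locally uniformly in $\lambda$ (for $Y_\s$ because the IIKS data $\mathbf f(\cdot;s),\mathbf h(\cdot;s)$ and the resolvent $(1-\mathcal M_\s\mathcal K_s^\Ai)^{-1}$ are; for $M$ because $L_s^\s(\underline\nu,\underline\nu)$ and $\Psi_\s(\nu_i;s|\emptyset)$ are). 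Hence $R$ is well defined for $\lambda\in\C\setminus\R$.

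To see that $R$ is entire: the jump matrix in \eqref{eq:ThetaJump} does not depend on $s$, so differentiating $\Theta_{\s,+}=\Theta_{\s,-}\bigl(\begin{smallmatrix}1&\i(1-\s(\lambda))\\0&1\end{smallmatrix}\bigr)$ in $s$ gives $R_+=R_-$ on $\R\setminus\{\nu_1,\dots,\nu_m\}$, so $R$ is analytic off $\{\nu_1,\dots,\nu_m\}$. One checks that $\det\Theta_\s\equiv1$: the jump has unit determinant; the $\lambda^{-1}$ matrix in \eqref{eq:Thetaasympinfty} is traceless, so $\det\Theta_\s=1+O(\lambda^{-2})$ at $\infty$; and near each $\nu_i$, $(\lambda-\nu_i)^{\sigma_3}=\mathrm{diag}(\lambda-\nu_i,(\lambda-\nu_i)^{-1})$ is single-valued of unit determinant while $\Theta_\s(\lambda)(\lambda-\nu_i)^{-\sigma_3}=O(1)$ by \eqref{eq:ThetaasympV}, so $\det\Theta_\s$ is bounded there; Liouville gives $\det\Theta_\s\equiv1$. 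Setting $G_i(\lambda;s):=\Theta_\s(\lambda;s|\underline\nu)(\lambda-\nu_i)^{-\sigma_3}$, both $G_i$ and $G_i^{-1}$ are bounded near $\nu_i$ (the latter since $\det G_i\equiv1$), and since $(\lambda-\nu_i)^{-\sigma_3}$ is $s$-independent one has $R=(\pa_s G_i)G_i^{-1}$ there; hence $R$ is bounded near $\nu_i$, the singularity is removable, and $R$ is entire.

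Finally I would identify $R$ from \eqref{eq:Thetaasympinfty}. Abbreviate $p=p_\s(s|\underline\nu)$, $q=q_\s(s|\underline\nu)$, $r=r_\s(s|\underline\nu)$, $N_0:=\tfrac1{\sqrt2}\bigl(\begin{smallmatrix}1&1\\-1&1\end{smallmatrix}\bigr)$. Differentiating \eqref{eq:Thetaasympinfty} in $s$, the only contribution unbounded in $\lambda$ comes from $\pa_s\e^{(-\frac23\lambda^{\frac32}-s\lambda^{\frac12})\sigma_3}=-\lambda^{\frac12}\sigma_3\,\e^{(\cdots)\sigma_3}$; using the elementary identity $\lambda^{\frac14\sigma_3}N_0(-\lambda^{\frac12}\sigma_3)N_0^{-1}\lambda^{-\frac14\sigma_3}=\bigl(\begin{smallmatrix}0&\lambda\\1&0\end{smallmatrix}\bigr)$ and expanding one gets, as $\lambda\to\infty$,
\[
R(\lambda;s|\underline\nu)=\begin{pmatrix}0 & \lambda+2q+p^2+\pa_s p\\ 1 & 0\end{pmatrix}+O(\lambda^{-1}),
\]
which, $R$ being entire, holds exactly by Liouville. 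To simplify the constant I would substitute the expansion $\Theta_\s=\bigl(\bigl(\begin{smallmatrix}1&p\\0&1\end{smallmatrix}\bigr)+\lambda^{-1}\bigl(\begin{smallmatrix}q+p^2&-r-pq\\p&-q\end{smallmatrix}\bigr)+O(\lambda^{-2})\bigr)\lambda^{\frac14\sigma_3}N_0\e^{(\cdots)\sigma_3}C_\delta$ into $\pa_s\Theta_\s=R\Theta_\s$, equivalently into $\pa_s\widehat\Theta=R\widehat\Theta-\widehat\Theta\bigl(\begin{smallmatrix}0&\lambda\\1&0\end{smallmatrix}\bigr)$ for the prefactor $\widehat\Theta$; the $(2,1)$-entry at order $\lambda^{-1}$ -- which involves neither the unknown constant nor the $O(\lambda^{-2})$ term -- reads $\pa_s p=(q+p^2)-(-q)=2q+p^2$, whence $2q+p^2+\pa_s p=2\pa_s p_\s(s|\underline\nu)$ and \eqref{eq:dsTheta} follows.

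I expect the main obstacle to be the analytic bookkeeping rather than the algebra: establishing differentiability of $\Theta_\s$ in $s$, locally uniformly in $\lambda$ up to $\R$ and near the $\nu_i$ (needed both to define $R$ and to differentiate \eqref{eq:ThetaJump} and \eqref{eq:Thetaasympinfty} term by term), and confirming that the singularities at the $\nu_i$ are genuinely removable; both rest on the explicit constructions of Propositions~\ref{prop:Psiempty} and~\ref{prop:PsiV}. Once $R$ is known to be entire, its identification at infinity is a routine matrix computation.
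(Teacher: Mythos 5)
Your argument is correct and follows essentially the same route as the paper's proof: you show that $R=(\pa_s\Theta_\s)\Theta_\s^{-1}$ is entire (using the $s$-independence of the jump and of the local behavior at the $\nu_i$), read off its polynomial form from the expansion~\eqref{eq:Thetaasympinfty}, and obtain the identity $\pa_sp_\s=p_\s^2+2q_\s$ from the vanishing of the order-$\lambda^{-1}$ term. You supply more explicit detail than the paper at two points — deriving $s$-differentiability directly from the product formulas~\eqref{eq:PsiEmptyformula} and~\eqref{eq:PsiVformula} where the paper cites \cite[Section~3]{CafassoClaeysRuzza}, and establishing removability at the $\nu_i$ via $\det\Theta_\s\equiv 1$ where the paper simply invokes conditions~(b) and~(c) — but these are the same steps, merely more fully worked out.
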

\begin{proof}
The differentiability of $\Theta_\s$ in $s$, and the fact \eqref{eq:Thetaasympinfty} continues to hold after differentiating formally in $s$, can be proved using standard techniques from RH theory, and we refer the reader to \cite[Section 3]{CafassoClaeysRuzza} for details.
The matrix function $A(\lambda;s|\underline \nu):=\pa_s\Theta_\s(\lambda;s|\underline\nu)\Theta_\s(\lambda;s|\underline\nu)^{-1}$ is entire in $\lambda$; indeed it has no jump across the real axis and no singularities at $\underline \nu$ because of the RH conditions (b) and (c) for $\Theta_\s$.
Moreover, condition (d) in the RH problem for $\Theta_\s$ implies that
\be
\label{eq:expA}
A(\lambda;s|\underline \nu) = \begin{pmatrix} 0 & \lambda+p^2+2q+\pa_sp \\ 1 & 0
\end{pmatrix}+\lambda^{-1} \begin{pmatrix} \star & \star \\ -p^2-2q+\pa_sp & \star\end{pmatrix}+ O(\lambda^{-2}),\qquad\lambda\to\infty,
\ee
where $p=p_\s(s|\underline \nu)$ and $q=q_\s(s|\underline \nu)$ are as in \eqref{eq:Thetaasympinfty} and $\star$ denote expressions which are not relevant to us now.
Since $A(\lambda;s|\underline \nu)$ is entire, {Liouville's theorem implies that $A(\lambda;s|\underline \nu)$ coincides with the linear and constant terms in the} Laurent series \eqref{eq:expA} and that higher order terms vanish. This yields
\be
\label{eq:idpq}
p_\s(s|\underline \nu)^2+2q_\s(s|\underline \nu)=\pa_sp_\s(s|\underline \nu),
\ee
and the proof is complete.
\end{proof}

From equation \eqref{eq:dsTheta} it follows that
\be
\label{eq:ThetaStructure}
\Theta_\s(\lambda;s|\underline \nu)=-\sqrt{2\pi}\begin{pmatrix}
\pa_s\varphi_\s(\lambda;s|\underline \nu) & \pa_s\chi_\s(\lambda;s|\underline \nu)
\\	
\varphi_\s(\lambda;s|\underline \nu) & \chi_\s(\lambda;s|\underline \nu)
\end{pmatrix},
\ee
where either $f=\varphi_\s(\lambda;s|\underline \nu)$ or $f=\chi_\s(\lambda;s|\underline \nu)$ solves
\be
\label{eq: Schroedinger phi,varphi}
\biggl(\pa_s^2-2\bigl(\pa_sp_\s(s|\underline\nu)\bigr)\biggr)f=\lambda f
\ee
{which will yield the \emph{Stark equation}~\eqref{StarkModified}.}

\begin{proposition}
We have
\be
\label{eq:dLs}
\pa_sL_s^\s(\ll,\mu)=-\varphi_\s(\lambda;s|\emptyset)\varphi_\s(\mu;s|\emptyset)
\ee
\end{proposition}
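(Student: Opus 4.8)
The plan is to differentiate the closed formula \eqref{eq:LkernelTheta} in $s$ and exploit the linear ODE \eqref{eq:dsTheta} for $\Theta_\s(\cdot;s|\emptyset)$ established in Proposition~\ref{prop:equationins}. Writing $B(\ll;s):=\begin{pmatrix}0 & \ll+2\pa_sp_\s(s|\emptyset)\\ 1 & 0\end{pmatrix}$ for the coefficient matrix occurring in \eqref{eq:dsTheta} (specialized to $\underline\nu=\emptyset$), I would first record that $\pa_s\Theta_\s(\ll;s|\emptyset)=B(\ll;s)\Theta_\s(\ll;s|\emptyset)$ implies $\pa_s\bigl(\Theta_\s(\mu;s|\emptyset)^{-1}\bigr)=-\Theta_\s(\mu;s|\emptyset)^{-1}B(\mu;s)$, and hence $\pa_s\bigl(\Theta_\s(\mu;s|\emptyset)^{-1}\Theta_\s(\ll;s|\emptyset)\bigr)=\Theta_\s(\mu;s|\emptyset)^{-1}\bigl(B(\ll;s)-B(\mu;s)\bigr)\Theta_\s(\ll;s|\emptyset)$. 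The differentiability needed in these manipulations is already guaranteed by the arguments in the proof of Proposition~\ref{prop:equationins}.

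The key point is that $B(\ll;s)-B(\mu;s)=(\ll-\mu)\begin{pmatrix}0&1\\0&0\end{pmatrix}$, so the factor $\ll-\mu$ produced in the numerator cancels exactly the denominator in \eqref{eq:LkernelTheta}. This leaves $\pa_s L_s^\s(\ll,\mu)=\frac1{2\pi}\bigl(\Theta_\s(\mu;s|\emptyset)^{-1}\begin{pmatrix}0&1\\0&0\end{pmatrix}\Theta_\s(\ll;s|\emptyset)\bigr)_{2,1}$, an expression that is now regular also on the diagonal $\ll=\mu$.

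To evaluate the remaining $(2,1)$ entry I would use that $\det\Theta_\s(\cdot;s|\emptyset)\equiv1$ (the jump \eqref{eq:ThetaJump} is unimodular, there are no singularities on $\R$ when $\underline\nu=\emptyset$, and the leading asymptotics \eqref{eq:Thetaasympinfty} has determinant $1$, so Liouville applies), whence $(\Theta_\s(\mu;s|\emptyset)^{-1})_{2,1}=-(\Theta_\s(\mu;s|\emptyset))_{2,1}$. A one-line matrix multiplication then gives $\bigl(\Theta_\s(\mu;s|\emptyset)^{-1}\begin{pmatrix}0&1\\0&0\end{pmatrix}\Theta_\s(\ll;s|\emptyset)\bigr)_{2,1}=-(\Theta_\s(\mu;s|\emptyset))_{2,1}\,(\Theta_\s(\ll;s|\emptyset))_{2,1}$, and by the parametrization \eqref{eq:ThetaStructure} one has $(\Theta_\s(\ll;s|\emptyset))_{2,1}=-\sqrt{2\pi}\,\varphi_\s(\ll;s|\emptyset)$. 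Substituting yields $\pa_s L_s^\s(\ll,\mu)=-\varphi_\s(\ll;s|\emptyset)\varphi_\s(\mu;s|\emptyset)$, as claimed.

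There is no genuine obstacle here: the statement follows by a short computation from \eqref{eq:LkernelTheta}, \eqref{eq:dsTheta}, and \eqref{eq:ThetaStructure}. The only points deserving a word of justification are the $s$-differentiability under the above operations — inherited from the proof of Proposition~\ref{prop:equationins} — and the unimodularity $\det\Theta_\s(\cdot;s|\emptyset)\equiv1$.
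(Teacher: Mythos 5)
Your proof is correct and follows essentially the same route as the paper: you differentiate the closed formula \eqref{eq:LkernelTheta} using the $s$-equation \eqref{eq:dsTheta}, observe that the difference of coefficient matrices equals $(\ll-\mu)\,{\rm E}_{12}$ so that the denominator cancels, and then insert \eqref{eq:ThetaStructure}. The only cosmetic difference is that the paper phrases the cancellation via the trace and its cyclic property, whereas you directly track the $(2,1)$ entry and invoke $\det\Theta_\s\equiv1$ to write $(\Theta_\s^{-1})_{2,1}=-(\Theta_\s)_{2,1}$; the computation is the same.
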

\begin{proof}
We use~\eqref{eq:LkernelTheta} to compute
\begin{align}
\nonumber
\pa_s L_s^\s(\lambda,\mu)&=
\tr\pa_s\biggl(\frac{\Theta_\s(\lambda;s|\emptyset){\rm E}_{12}\Theta_\s(\mu;s|\emptyset)^{-1}}{2\pi(\lambda-\mu)}\biggr)
\\
\nonumber
&=\tr\frac{\bigl(A(\lambda;s|\emptyset)-A(\mu;s|\emptyset)\bigr)\Theta_\s(\lambda;s|\emptyset){\rm E}_{12}\Theta_\s(\mu;s|\emptyset)^{-1}}{2\pi(\lambda-\mu)}
\\
\label{eq:prooffinal}
&=\tr\frac{{\rm E}_{12}\Theta_\s(\lambda;s|\emptyset){\rm E}_{12}\Theta_\s(\mu;s|\emptyset)^{-1}}{2\pi}\,,
\end{align}
where we used the cyclic property of the trace and Proposition~\ref{prop:equationins} and we denoted
\be
\label{eq:notationE12}
{\rm E}_{12}:=\begin{pmatrix}
0 & 1\\ 0 & 0
\end{pmatrix},\qquad
A(\lambda;s|\emptyset):=\begin{pmatrix} 0 & \lambda+2\pa_sp_\s(s|\emptyset) \\ 1 & 0
\end{pmatrix}.
\ee
Finally, it suffices to insert \eqref{eq:ThetaStructure} into \eqref{eq:prooffinal}.
\end{proof}

We can finally characterize the J\'anossy densities in terms of the RH problem for~$\Psi_\s$.

\begin{proposition}
\label{prop:dlogJV}
For all $s\in\R$ and all finite sets $\underline \nu=(\nu_1,\dots,\nu_m)$ with $\nu_i\not=\nu_j$ for all $i\not=j$, we have
\be
\pa_s\log j_\sigma(s|\underline \nu) =\frac{s^2}4-p_\s(s|\underline \nu)
\ee
where $p_\s(s|\underline \nu)$ appears in \eqref{eq:Psiasympinfty}.
\end{proposition}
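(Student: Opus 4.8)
The goal is to compute $\partial_s \log j_\sigma(s|\underline\nu)$ in terms of the quantity $p_\sigma(s|\underline\nu)$ appearing in the large-$\lambda$ expansion \eqref{eq:Psiasympinfty}. The natural strategy mirrors the structure of Proposition~\ref{prop:dlogJ} and exploits the factorizations of J\'anossy densities from Proposition~\ref{prop:janossyfactorization} together with the formula \eqref{eq:pa} from Proposition~\ref{prop:Psiempty}. Concretely, I would start from the second factorization \eqref{eq:idJanossy2},
$$
j_\s(s|\underline \nu)=\det\bigl(L_s^\sigma(\underline \nu,\underline \nu)\bigr)\,j_\s(s|\emptyset),
$$
and take $\partial_s\log$ of both sides. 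For the second factor we already know $\partial_s\log j_\sigma(s|\emptyset)=-\alpha_\s(s)=\tfrac{s^2}{4}-p_\sigma(s|\emptyset)$ by combining \eqref{eq:dLogJ} and \eqref{eq:pa}. So it remains to show that $\partial_s\log\det\bigl(L_s^\sigma(\underline\nu,\underline\nu)\bigr)=p_\sigma(s|\emptyset)-p_\sigma(s|\underline\nu)$.

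For this piece I would use Jacobi's formula, $\partial_s\log\det L = \operatorname{tr}\bigl(L^{-1}\partial_s L\bigr)$, applied to the $m\times m$ matrix $L_s^\sigma(\underline\nu,\underline\nu)$, and plug in the derivative formula \eqref{eq:dLs}, namely $\partial_s L_s^\sigma(\lambda,\mu)=-\varphi_\s(\lambda;s|\emptyset)\varphi_\s(\mu;s|\emptyset)$. This turns the trace into a rank-one contraction,
$$
\partial_s\log\det\bigl(L_s^\sigma(\underline\nu,\underline\nu)\bigr)=-\sum_{i,j=1}^m\bigl(L_s^\sigma(\underline\nu,\underline\nu)^{-1}\bigr)_{j,i}\,\varphi_\s(\nu_i;s|\emptyset)\varphi_\s(\nu_j;s|\emptyset),
$$
which is exactly the type of quadratic form appearing in the rational dressing matrix $M$ of Proposition~\ref{prop:PsiV}. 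The plan is then to recognize this sum as (a matrix entry of) a residue of $M(\lambda;s|\underline\nu)$ at $\lambda=\infty$: expanding \eqref{eq:M} for large $\lambda$ gives $M(\lambda;s|\underline\nu)=I+\lambda^{-1}M_\infty(s|\underline\nu)+O(\lambda^{-2})$ with
$$
M_\infty(s|\underline\nu)=-\frac{1}{2\pi\i}\sum_{i,j=1}^m\bigl(L_s^\sigma(\underline\nu,\underline\nu)^{-1}\bigr)_{j,i}\,\Psi_\s(\nu_i;s|\emptyset)\,\mathrm{E}_{12}\,\Psi_\s^{-1}(\nu_j;s|\emptyset),
$$
and the $(2,1)$-entry of each summand, via \eqref{eq:Lkernel}, is proportional to $\varphi_\s(\nu_i;s|\emptyset)\varphi_\s(\nu_j;s|\emptyset)$ up to the factor $2\pi\i(\nu_i-\nu_j)$ — but here, because the argument coincides in numerator and at $\lambda=\mu$, one must use the confluent (diagonal) form of \eqref{eq:Lkernel}/\eqref{eq:LkernelTheta}, i.e. read off $\bigl(\Psi_\s(\nu;s|\emptyset)\mathrm{E}_{12}\Psi_\s^{-1}(\nu;s|\emptyset)\bigr)_{2,1}$ directly from \eqref{eq:ThetaStructure}, which gives $-2\pi\,\varphi_\s(\nu;s|\emptyset)^2$ on the diagonal. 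Comparing $\lambda^{-1}$ coefficients in $\Psi_\s(\lambda;s|\underline\nu)=M(\lambda;s|\underline\nu)\Psi_\s(\lambda;s|\emptyset)$ from \eqref{eq:PsiVformula} shows that the $(2,1)$-entries satisfy $\i p_\sigma(s|\underline\nu)=\i p_\sigma(s|\emptyset)+\bigl(M_\infty(s|\underline\nu)\bigr)_{2,1}$, which is precisely the identity we need once the $(2,1)$-entry of $M_\infty$ is identified with $-\i$ times the Jacobi trace above.

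**Main obstacle.** The routine part is Jacobi's formula plus the residue bookkeeping; the delicate point is handling the \emph{diagonal} ($i=j$) terms in the double sum, where the off-diagonal kernel representation \eqref{eq:Lkernel} becomes a $0/0$ and one must pass to the confluent limit. I would handle this by working with $\Theta_\s$ and the explicit parametrization \eqref{eq:ThetaStructure}: since $\Theta_\s(\nu;s|\emptyset)\mathrm{E}_{12}\Theta_\s^{-1}(\nu;s|\emptyset)$ has $(2,1)$-entry equal to $-\det\Theta_\s$-normalized product $2\pi\,\varphi_\s(\nu;s|\emptyset)^2$ (up to sign), no limiting argument is actually needed — the diagonal value is literally $\varphi_\s(\nu;s|\emptyset)^2$, consistent with the $\mu\to\lambda$ limit of \eqref{eq:LkernelTheta}. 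A secondary technical point, namely that $\Psi_\s(\lambda;s|\underline\nu)$ is differentiable in $s$ and that the expansion \eqref{eq:Psiasympinfty} may be differentiated termwise, is already granted by the argument in Proposition~\ref{prop:equationins} (referring to \cite[Section~3]{CafassoClaeysRuzza}), so I would simply cite it. Assembling the three pieces — $\partial_s\log j_\sigma(s|\emptyset)=\tfrac{s^2}{4}-p_\sigma(s|\emptyset)$, the Jacobi trace formula for $\partial_s\log\det L_s^\sigma(\underline\nu,\underline\nu)$, and the comparison of $\lambda^{-1}$-coefficients in \eqref{eq:PsiVformula} — yields $\partial_s\log j_\sigma(s|\underline\nu)=\tfrac{s^2}{4}-p_\sigma(s|\underline\nu)$, as claimed.
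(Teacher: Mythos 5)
Your proposal is correct and follows essentially the same route as the paper's proof: you start from the factorization \eqref{eq:idJanossy2}, apply Jacobi's formula with \eqref{eq:dLs} to the finite determinant, identify the resulting quadratic form with the $(2,1)$-entry of the $\lambda^{-1}$-coefficient of the dressing matrix $M$ from Proposition~\ref{prop:PsiV} via \eqref{eq:ThetaStructure}, and compare $\lambda^{-1}$-coefficients in \eqref{eq:PsiVformula}. The paper handles the diagonal terms in exactly the way you anticipate — reading off $\bigl(\Theta_\s(\nu;s|\emptyset)\mathrm{E}_{12}\Theta_\s^{-1}(\nu;s|\emptyset)\bigr)_{2,1}$ directly from the explicit parametrization \eqref{eq:ThetaStructure}, so no confluent limit is required.
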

\begin{proof}
Using Proposition~\ref{prop:PsiV} we get
\be
\label{meh}
\i p_\s(s|\underline \nu)-\i p_\s(s|\emptyset)=\bigl(M^1_\infty(s|\underline \nu)\bigr)_{2,1}
\ee
where $M(\lambda;s|\underline \nu)=I+\lambda^{-1}M^1_\infty(s|\underline \nu)+O(\lambda^{-2})$ as $\lambda\to\infty$.
Using~\eqref{eq:M} we compute
\be
M^1_\infty(s|\underline \nu)=-\frac{1}{2\pi \i}\sum_{i,j=1}^{m}\bigl(L_s^\s(\underline \nu,\underline \nu)^{-1}\bigr)_{j,i}\Psi_\s(\nu_i;s|\emptyset)\begin{pmatrix} 0 & 1 \\ 0 & 0\end{pmatrix} \Psi^{-1}_\s(\nu_j;s|\emptyset)
\ee
and so, using~\eqref{eq:Theta} and~\eqref{eq:ThetaStructure}, we obtain
\be
\bigl(M^1_\infty(s|\underline \nu)\bigr)_{2,1}=\i \sum_{i,j=1}^{m}\bigl(L_s^\s(\underline \nu,\underline \nu)^{-1}\bigr)_{j,i}\varphi_\s(\nu_i;s|\emptyset)\varphi_\s(\nu_j;s|\emptyset)
\ee
On the other hand, by~\eqref{eq:dLs} we have
\be
\pa_s\log\det L_s^\s(\underline \nu,\underline \nu)=\sum_{i,j=1}^{m}\bigl(L_s^\s(\underline \nu,\underline \nu)^{-1}\bigr)_{j,i}\frac{\pa\bigl(L_s^\s(\underline \nu,\underline \nu)\bigr)_{i,j}}{\pa s}=-\sum_{i,j=1}^{m}\bigl(L_s^\s(\underline \nu,\underline \nu)^{-1}\bigr)_{j,i}\varphi_\s(\nu_i;s|\emptyset)\varphi_\s(\nu_j;s|\emptyset)
\ee
and the proof now follows from~\eqref{meh} because $\log j_\s(s|\underline \nu)=\log\det L_s^\s(\underline \nu,\underline \nu)+\log j_\s(s|\emptyset)$ by~\eqref{eq:idJanossy2} and because 
\be
\label{eq:dlogJintermsofp}
\frac\pa{\pa s}\log j_\s(s|\emptyset)=\frac{s^2}4-p_\s(s|\emptyset)
\ee
by~\eqref{eq:dLogJ} and \eqref{eq:pa}.
\end{proof}

\subsection{Asymptotics as \texorpdfstring{$s\to+\infty$}{s goes to plus infinity}}\label{sec:asymptoticssplusinfty}

The jump matrix of condition~(b) in the RH problem for $Y$ can be rewritten, thanks to~\eqref{eq:fgPhiAi}, as 
\be
\label{eq:Yjump with PhiAi}
I-2\pi\i\,\mathbf f (\ll;s)\mathbf h^\top(\ll;s)=I-\Phi_s^\Ai(\ll)\begin{pmatrix}0 & \sigma(\ll) \\ 0 & 0
\end{pmatrix}\Phi_s^\Ai(\ll)^{-1}.
\ee
We now show that this jump matrix is close to the identity in the appropriate norms in order to apply the standard {\it small-norm} RH theory~\cite{Its}.
To this end, we introduce the following notation for a measurable matrix-valued function $X:\R\to\C^{m\times n}$:
\be
\label{pnorm}
\|X\|_p:=\begin{cases}
\displaystyle\max_{1\leq i\leq m,\ 1\leq j\leq n}\left\lbrace\left(\int_\R|X_{i,j}(\mu)|^p\,\d\mu\right)^{\frac 1p}\right\rbrace, &p\in[1,+\infty), 
\\[15pt]
\displaystyle\max_{1\leq i\leq m,\ 1\leq j\leq n}\left\lbrace\esssup_{\mu\in\R}|X_{i,j}(\mu)|\right\rbrace, & p=\infty.
\end{cases}
\ee
\begin{lemma}
\label{lemma:old}
Let $\s$ satisfy Assumption~\ref{assumption:weak}. Then, with the same $\kappa>0$ as in Assumption~\ref{assumption:weak}, we have
\be
\left\|\Phi_s^\Ai\begin{pmatrix}0 & \sigma \\ 0 & 0
\end{pmatrix}(\Phi_s^\Ai)^{-1}\right\|_p=
O(s^{-\kappa}),\quad\mbox{as }s\to+\infty,\qquad p=1,2,\infty.
\ee
\end{lemma}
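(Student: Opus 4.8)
The plan is to first make the matrix explicit, reducing the claim to weighted $L^p$ bounds on products of Airy functions, and then to establish those bounds by splitting $\R$ at a point proportional to $s$. By~\eqref{eq:fgPhiAi} (or directly from~\eqref{eq:defPhi}) the first column of $\Phi_s^\Ai(\ll)$ is $-\sqrt{2\pi}\,\bigl(\Ai'(\ll+s),\,\i\,\Ai(\ll+s)\bigr)^\top$, and $\det\Phi^\Ai\equiv 1$ (a standard fact: $\det\Phi^\Ai$ is entire, bounded, and tends to $1$). A short computation with $2\times 2$ matrices of unit determinant then gives
\be
\Phi_s^\Ai(\ll)\begin{pmatrix}0&\s(\ll)\\0&0\end{pmatrix}\Phi_s^\Ai(\ll)^{-1}
=2\pi\,\s(\ll)\begin{pmatrix}-\i\,\Ai(\ll+s)\,\Ai'(\ll+s) & \Ai'(\ll+s)^2\\ \Ai(\ll+s)^2 & \i\,\Ai(\ll+s)\,\Ai'(\ll+s)\end{pmatrix}.
\ee
By the definition~\eqref{pnorm} of $\|\cdot\|_p$, it therefore suffices to estimate, for $p=1,2,\infty$, the $L^p(\R)$ norm of each of $\s(\ll)\Ai(\ll+s)^2$, $\s(\ll)\Ai'(\ll+s)^2$, $\s(\ll)\Ai(\ll+s)\Ai'(\ll+s)$; the argument below applies to all three, with the term containing $(\Ai')^2$ being the least favorable.

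The required input is the classical asymptotics of the Airy function: as $x\to+\infty$ each of $\Ai(x)^2,\ \Ai(x)\Ai'(x),\ \Ai'(x)^2$ is $O\bigl((1+x)^{\frac12}\e^{-\frac 43 x^{3/2}}\bigr)$, whereas as $x\to-\infty$ one has $\Ai(x)^2=O(|x|^{-\frac12})$, $\Ai(x)\Ai'(x)=O(1)$, $\Ai'(x)^2=O(|x|^{\frac12})$. In particular $\Ai(x)^2$ and $|\Ai(x)\Ai'(x)|$ are bounded on $\R$, there is $C>0$ with $\Ai'(x)^2\leq C\,(1+|x|)^{\frac12}$ for all $x\in\R$, and all three products are super-exponentially small on $[1,+\infty)$.

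Now split $\R=(-\infty,-\tfrac s2]\cup[-\tfrac s2,+\infty)$. On $[-\tfrac s2,+\infty)$ we have $\ll+s\geq\tfrac s2$, so, since $0\leq\s\leq 1$, the three functions are bounded pointwise by $C(1+\ll+s)^{\frac12}\e^{-\frac 43(\ll+s)^{3/2}}$, and a change of variables shows their $L^1$, $L^2$ and $L^\infty$ norms over this half-line are all $O\bigl(\e^{-c\,s^{3/2}}\bigr)$ for some $c>0$, hence $o(s^{-\kappa})$. On $(-\infty,-\tfrac s2]$ we have $|\ll|\geq\tfrac s2$, so for $s$ large Assumption~\ref{assumption:weak} gives $|\s(\ll)|\leq C|\ll|^{-\frac 32-\kappa}$; moreover $|\ll+s|\leq|\ll|+s\leq 3|\ll|$ there, whence $\Ai'(\ll+s)^2\leq C(1+|\ll|)^{\frac12}$ and thus $|\s(\ll)\Ai'(\ll+s)^2|\leq C|\ll|^{-1-\kappa}$ (the two other products being $O(|\ll|^{-\frac 32-\kappa})$, which is smaller). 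Hence their $L^1$, $L^2$ and $L^\infty$ norms over $(-\infty,-\tfrac s2]$ are bounded by multiples of $\int_{-\infty}^{-s/2}|\ll|^{-1-\kappa}\d\ll=O(s^{-\kappa})$, $\bigl(\int_{-\infty}^{-s/2}|\ll|^{-2-2\kappa}\d\ll\bigr)^{1/2}=O(s^{-\frac12-\kappa})$ and $\sup_{\ll\leq-s/2}|\ll|^{-1-\kappa}=O(s^{-1-\kappa})$ respectively, all of which are $O(s^{-\kappa})$. Adding the two contributions gives the lemma.

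This argument is routine; the only delicate point is that the splitting abscissa must be taken proportional to $s$, so that simultaneously $\ll+s$ is large on the right piece (where the Airy products are killed) and $|\ll|$ is large on the left piece (so that the decay of $\s$ beats the $|\ll|^{1/2}$ growth of $\Ai'(\ll+s)^2$). It is exactly at this step that one uses the decay exponent $\tfrac 32+\kappa$ of Assumption~\ref{assumption:weak}, stronger than the exponent $\tfrac 32$ needed merely for trace-class, and the surplus $\kappa$ is precisely what produces the rate $O(s^{-\kappa})$.
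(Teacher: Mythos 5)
Your proof is correct and uses essentially the same strategy as the paper: write the conjugated matrix as (constants times) $\s(\ll)$ multiplied by $\Ai(\ll+s)^2$, $\Ai'(\ll+s)^2$, or $\Ai(\ll+s)\Ai'(\ll+s)$, split the real line at $-s/2$, control the right half by Airy decay and the left half by the decay of $\s$, and observe that the surplus exponent $\kappa$ in Assumption~\ref{assumption:weak} is exactly what absorbs the $|\ll|^{1/2}$ growth of $\Ai'(\ll+s)^2$. The only (inessential) differences are that you make the matrix entries fully explicit using $\det\Phi^\Ai\equiv1$ whereas the paper merely observes that they are $\s(\ll)\mathcal B(\ll+s)$ up to sign, and you quote the sharper super-exponential decay $\e^{-\frac43(\cdot)^{3/2}}$ while the paper is content with the coarser bound $C\e^{-\ll-s}$; both suffice, and both lead to the same $O(s^{-\kappa})$ rate with the $L^1$ norm as the worst case.
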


\begin{proof}
The entries in $\Phi_s^\Ai(\ll)\begin{pmatrix}0 & \sigma(\ll) \\ 0 & 0
\end{pmatrix}\Phi_s^\Ai(\ll)^{-1}$ are (possibly, up to a sign) $\s(\ll)\mathcal B(\ll+s)$ where $\mathcal B(\ll)$ is one of the functions $\Ai(\ll)^2$, $\Ai(\ll)\Ai'(\ll)$, or $\Ai'(\ll)^2$.
By Assumption~\ref{assumption:weak}, there are $\Lambda,C_1>0$ such that $\lambda<-\Lambda$ implies $\s(\ll)<C_1|\ll|^{-\frac 32-\kappa}$. Assuming $s\geq 2\Lambda$, we have, using standard asymptotic properties of~$\Ai$ and $\Ai'$:
\begin{itemize}
\item for $\ll\geq -s/2$, $\s(\ll)\leq 1$ and $|\mathcal B(\ll+s)|\leq C_2\exp(-\ll-s)$ for some $C_2>0$,
\item for $\ll\leq -s/2$, $\s(\ll)\leq C_1|\ll|^{-\frac 32-\kappa}$ and $|\mathcal B(\ll+s)|\leq C_3|\ll|^{\frac 12}$ for some $C_3>0$.
\end{itemize}
Therefore, as~$s\to+\infty$ we get:
\begin{align}
\|\s(\ll)\mathcal B(\ll+s)\|_1&\leq\int_{-\infty}^{-s/2}C_1C_3|\lambda|^{-\kappa-1}\d\ll+\int_{-s/2}^{+\infty}C_2\e^{-\ll-s}\d\ll=O(s^{-\kappa}),\\
\|\s(\ll)\mathcal B(\ll+s)\|_2^2&\leq\int_{-\infty}^{-s/2}(C_1C_3)^2|\lambda|^{-2\kappa-2}\d\ll+\int_{-s/2}^{+\infty}C_2^2\e^{-\ll-s}\d\ll=O(s^{-2\kappa-1}),\\
\|\s(\ll)\mathcal B(\ll+s)\|_\infty&\leq \max\left\lbrace C_1C_3\left(\frac s2\right)^{-\kappa-1},C_2\e^{-s/2}\right\rbrace=O(s^{-\kappa-1}),
\end{align}
and the lemma is proved.
\end{proof}

\begin{proposition}
\label{prop:asympwavefunction}
If $\s$ satisfies Assumption~\ref{assumption:weak}, we have
\be
\label{eq:phiasymplarges}
\varphi_\s(\lambda;s|\emptyset)=\Ai(\lambda+s)\bigl(1+O(s^{-\frac 12-\kappa})\bigr),\
\pa_s\varphi_\s(\lambda;s|\emptyset)=\Ai'(\lambda+s)\bigl(1+O(s^{-\frac 12-\kappa})\bigr),\quad s\to+\infty,
\ee
for all $\lambda\in\R$.
\end{proposition}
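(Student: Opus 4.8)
The strategy is to read $\varphi_\s(\lambda;s|\emptyset)$ off from the Riemann--Hilbert problem for $Y_\s$ and to use that, by Lemma~\ref{lemma:old}, the jump matrix of $Y_\s$ is close to the identity for large $s$. First I would unwind the relations of Section~\ref{sec:Rhcharact}. Combining~\eqref{eq:PsiEmptyformula}, the definition~\eqref{eq:Theta} of $\Theta_\s$, the structure~\eqref{eq:ThetaStructure}, and the fact that the first column of $\Phi_s^\Ai$ equals $-\sqrt{2\pi}\bigl(\Ai'(\cdot+s),\,\i\,\Ai(\cdot+s)\bigr)^\top$ (an entire vector, hence continuous across $\R$; see~\eqref{eq:defPhi} or~\eqref{eq:fgPhiAi}), one obtains the closed formula
\be
\label{eq:plan-phiY}
\varphi_\s(\lambda;s|\emptyset)=\bigl(Y_\s(\lambda;s)\bigr)_{2,2}\,\Ai(\lambda+s)-\i\,\bigl(Y_\s(\lambda;s)\bigr)_{2,1}\,\Ai'(\lambda+s),
\ee
consistent with Example~\ref{examples:zero} when $\s\equiv0$ (there $Y_0\equiv I$). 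In~\eqref{eq:plan-phiY} the boundary value of $Y_\s$ on $\R$ does not matter: since $\mathbf h^\top(\lambda;s)\bigl(\Ai'(\lambda+s),\,\i\,\Ai(\lambda+s)\bigr)^\top=0$, the jump relation for $Y_\s$ implies that $Y_{\s,+}(\lambda;s)-Y_{\s,-}(\lambda;s)$ annihilates $\bigl(\Ai'(\lambda+s),\,\i\,\Ai(\lambda+s)\bigr)^\top$, so one may take the symmetric (principal-value) value. For fixed $\lambda$ and $s\to+\infty$ we have $\Ai(\lambda+s)>0$ and $\Ai'(\lambda+s)/\Ai(\lambda+s)=-\sqrt{\lambda+s}\,\bigl(1+O(s^{-3/2})\bigr)=O(s^{1/2})$; hence~\eqref{eq:phiasymplarges} will follow once we show
\be
\label{eq:plan-Ybd}
Y_\s(\lambda;s)=I+O\bigl(s^{-1-\kappa}\bigr),\qquad s\to+\infty,
\ee
uniformly for $\lambda$ in compact subsets of $\R$, since then~\eqref{eq:plan-phiY} gives $\varphi_\s(\lambda;s|\emptyset)/\Ai(\lambda+s)-1=O(s^{-1-\kappa})+O(s^{-1-\kappa})\cdot O(s^{1/2})=O(s^{-1/2-\kappa})$.

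\medskip

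For~\eqref{eq:plan-Ybd} I would run the small-norm (IIKS) analysis of~\cite{IIKS,Its}. By~\eqref{eq:fgPhiAi} the jump matrix of $Y_\s$ is $I+\Delta_s$ with $\Delta_s:=\Phi_s^\Ai\left(\begin{smallmatrix}0&\s\\0&0\end{smallmatrix}\right)(\Phi_s^\Ai)^{-1}=-2\pi\i\,\mathbf f(\cdot;s)\,\mathbf h^\top(\cdot;s)$, and Lemma~\ref{lemma:old} gives $\|\Delta_s\|_1=O(s^{-\kappa})$ (its proof also yields the sharper $\|\Delta_s\|_2=O(s^{-\frac12-\kappa})$ and $\|\Delta_s\|_\infty=O(s^{-1-\kappa})$). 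For $s$ large the associated singular integral operator $C_{\Delta_s}$ on $L^2(\R)$ has norm less than $1$, so the RH problem for $Y_\s$ is uniquely solvable --- reproving, for large $s$, the solvability in Proposition~\ref{prop:dlogJ} --- and $Y_\s(\lambda;s)=I+\frac1{2\pi\i}\int_\R\frac{\rho_s(\mu)}{\mu-\lambda}\,\d\mu$, with density $\rho_s=Y_{\s,-}(\cdot;s)\,\Delta_s$ and $\|Y_{\s,\pm}-I\|_2=O(\|\Delta_s\|_2)=O(s^{-\frac12-\kappa})$.

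\medskip

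The decisive step is the pointwise bound at a fixed real $\lambda$, which rests on the fact that $\Delta_s$ --- hence also $\rho_s$ --- is concentrated far to the left of $\lambda$. Using the asymptotics of $\Ai,\Ai'$ together with Assumption~\ref{assumption:weak}, one checks that $\Delta_s(\mu)$ is super-exponentially small in $s$, uniformly, for $\mu\geq-\tfrac s2$ (there $\mu+s\geq\tfrac s2$, the exponentially decaying regime of the Airy factors), while for $\mu\leq-\tfrac s2$ one has $|\Delta_s(\mu)|\lesssim|\mu|^{-1-\kappa}$ (the decay $\s(\mu)=O(|\mu|^{-3/2-\kappa})$ against the $O(|\mu+s|^{1/2})$ size of the Airy factors). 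A short bootstrap on the Neumann series for $(1-C_{\Delta_s})^{-1}$ shows $Y_{\s,\pm}=I+O(s^{-\kappa})$ uniformly and transfers the same two-regime structure to $\rho_s$: in particular $\|\rho_s\|_{L^1(-\infty,-s/2)}=O(s^{-\kappa})$ and $\rho_s$ (with its derivatives) is super-exponentially small in $s$ on $[-\tfrac s2,+\infty)$. Splitting $\int_\R=\int_{-\infty}^{-s/2}+\int_{-s/2}^{+\infty}$ in the Cauchy representation: on $(-\infty,-\tfrac s2)$ we have $|\mu-\lambda|\geq\tfrac s2-|\lambda|\geq\tfrac s3$ for $s$ large, so that contribution is $\leq\tfrac3s\|\rho_s\|_{L^1(-\infty,-s/2)}=O(s^{-1-\kappa})$; and the principal value over $[-\tfrac s2,+\infty)$ is $O(s^{-N})$ for every $N$ (the possible singularity at $\mu=\lambda$ is harmless since $\rho_s$ and its derivatives are super-exponentially small there). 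This yields~\eqref{eq:plan-Ybd} and hence~\eqref{eq:phiasymplarges}. I expect this last step to be the main obstacle: one has to verify that the resolvent density inherits the two-regime decay of $\Delta_s$ so as to gain the factor $s^{-1}$ from the Cauchy kernel at a real point, and then propagate this through the amplification by $\Ai'(\lambda+s)/\Ai(\lambda+s)=O(s^{1/2})$ in~\eqref{eq:plan-phiY}, which is precisely what degrades the estimate to the exponent $-\tfrac12-\kappa$ stated in the proposition.
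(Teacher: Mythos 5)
Your formula~(plan-phiY) is correct (one can check it against $\sqrt{2\pi}\,\varphi_\s=\i\bigl(Y_\s\Phi_s^\Ai\bigr)_{2,1}$), and the overall strategy --- small-norm RH analysis with the two-regime split at $\mu=-\tfrac s2$ --- is the same as in the paper. The key difference is \emph{when} you pair $Y_\s$ with the Airy column. The paper multiplies the Cauchy representation of $Y_\s$ by $\Phi_s^\Ai(\lambda)$ \emph{before} estimating: the factor $\tfrac{1}{\mu-\lambda}$ then combines with $\Phi_s^\Ai(\mu)^{-1}\Phi_s^\Ai(\lambda)$ to produce the shifted Airy kernel $K_s^\Ai(\lambda,\mu)$, which is regular on the diagonal and rapidly decaying; the final bound then only requires $\|Y_{\s,-}-I\|_2$ (obtained from Lemma~\ref{lemma:old}) together with $L^1$ and $L^2$ bounds on the explicit function $a(\mu)=\mathcal A(\mu+s)\s(\mu)K_s^\Ai(\lambda,\mu)/\Ai(\lambda+s)$. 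You instead bound the boundary values of $Y_\s$ at a fixed real $\lambda$ and only afterwards recombine with $\Ai(\lambda+s)$, $\Ai'(\lambda+s)$.

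This route can work, but it has two genuine gaps that your sketch does not close. First, the claim ``$Y_{\s,\pm}=I+O(s^{-\kappa})$ uniformly'' (i.e.\ a uniform $L^\infty$ bound) does not follow directly from the usual $L^2$ small-norm argument, since the Cauchy operators $C_\pm$ are not bounded on $L^\infty$; a bootstrap is indeed possible, but it is not a one-liner from the Neumann series. Second, and more significantly, the principal value over $[-\tfrac s2,+\infty)$ is only controlled once you have a \emph{uniform-in-$s$ H\"older} (or Lipschitz) bound for $\rho_s=Y_{\s,-}\Delta_s$ near $\mu=\lambda$: boundedness of $\rho_s$ alone does not make $\mathrm{p.v.}\!\int_{|\mu-\lambda|<1}\frac{\rho_s(\mu)-\rho_s(\lambda)}{\mu-\lambda}\,\d\mu$ small. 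Since $\Delta_s$ and its derivative are indeed super-exponentially small near $\lambda$, what one really needs is a quantitative H\"older modulus for $Y_{\s,-}$, uniformly as $s\to+\infty$. This is exactly the sort of technical overhead that the paper's trick is designed to avoid: after pairing with $\Phi_s^\Ai(\lambda)$ the resulting integral is absolutely convergent, with no principal value, and only $L^2$ information on $Y_{\s,-}$ enters.

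Two smaller remarks. (i) Your target $Y_\s(\lambda;s)=I+O(s^{-1-\kappa})$ for \emph{all} entries is consistent with the worst ($(1,2)$) entry, and feeding it through $\Ai'/\Ai=O(s^{1/2})$ does recover the claimed $O(s^{-1/2-\kappa})$; in fact the entries $(Y_\s)_{2,1}$, $(Y_\s)_{2,2}-1$ that actually enter~(plan-phiY) decay faster, so this route, done carefully, even gives a slightly better rate than the Proposition states, though that is not needed. (ii) Your observation that the jump discontinuity of $Y_\s$ annihilates $(\Ai'(\lambda+s),\i\Ai(\lambda+s))^\top$ at $\lambda$ is correct and shows the combination~(plan-phiY) is continuous across $\R$, but it does \emph{not} dispense with the p.v.\ when you bound $(Y_\s)_{2,1}$, $(Y_\s)_{2,2}$ separately, which is what your argument does.

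In short: same small-norm framework, but a genuinely different packaging of the estimates. The paper's route avoids all pointwise/H\"older control on $Y_{\s,-}$ by producing $K_s^\Ai(\lambda,\mu)$ inside the integral; yours requires establishing uniform pointwise and H\"older bounds on boundary values, which is the ``main obstacle'' you flag and which is not resolved in the proposal.
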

\begin{proof}
Using equation \eqref{eq:Yjump with PhiAi}, we can rewrite condition (b) in the RH problem for $Y$ as
	\be
	Y_{\s,+}(\lambda;s)-Y_{\s,-}(\lambda;s)=Y_{\s,-}(\lambda;s)\Phi_s^\Ai(\lambda)\begin{pmatrix}0 & -\sigma(\lambda) \\ 0 & 0
	\end{pmatrix}\Phi_s^\Ai(\lambda)^{-1},\qquad \lambda\in\R,
	\ee
and so standard RH theory implies that we can write
\begin{align}
\nonumber
Y_\s(\lambda;s)
&=I-\frac 1{2\pi\i}\int_\R (Y_{\s,-}(\mu;s)-I)\Phi^\Ai_s(\mu)\begin{pmatrix}
0 & \s(\mu) \\ 0 & 0
\end{pmatrix}\Phi_s^\Ai(\mu)^{-1}\frac {\d \mu}{\mu-\lambda}
\\
\label{YSP}
&\qquad\qquad\qquad\qquad
-\frac 1{2\pi\i}\int_\R \Phi^\Ai_s(\mu)\begin{pmatrix}
0 & \s(\mu) \\ 0 & 0
\end{pmatrix}\Phi_s^\Ai(\mu)^{-1}\frac {\d \mu}{\mu-\lambda}.
\end{align}
Moreover, by Lemma~\ref{lemma:old} and standard small-norm RH theory~\cite{Its}, $Y_-(\cdot;s)-I$ is in $L^2$ (entry-wise) for all $s\in\R$ and satisfies
\be
\|Y_{\s,-}(\cdot;s)-I\|_2=O(s^{-\kappa}),\qquad s\to+\infty.
\ee
By~\eqref{eq:defPhi}, \eqref{eq:PsiEmptyformula}, \eqref{eq:Theta}, and~\eqref{eq:ThetaStructure}, we have 
\begin{align}
\nonumber
&
\begin{pmatrix}
1 & -\alpha_\sigma(s) \\ 0 & \i
\end{pmatrix}
\begin{pmatrix}
\pa_s\varphi_\s(\lambda;s|\emptyset) \\ \varphi_\s(\lambda;s|\emptyset)
\end{pmatrix}-
\begin{pmatrix}\Ai'(\lambda+s) \\ \i\,\Ai(\lambda+s)\end{pmatrix}
\\
&=
-\frac 1{\sqrt{2\pi}}\int_\R (Y_{\s,-}(\mu)-I)\Phi^\Ai_s(\mu)\begin{pmatrix} 1 \\ 0 \end{pmatrix}\sigma(\mu)K_s^\Ai(\lambda,\mu)\d\mu
-\frac 1{\sqrt{2\pi}}\int_\R \Phi^\Ai_s(\mu)\begin{pmatrix} 1 \\ 0 \end{pmatrix}\sigma(\mu)K_s^\Ai(\lambda,\mu)\d\mu
\end{align}
where $\alpha_\sigma(s)$ is given in~\eqref{eq:Yasympinfty}, cf.~\eqref{eq:pa}.
Hence, for some $C>0$ and $s$ sufficiently large, we have
\be
\label{lastinequality}
\biggl|\frac{\varphi_\s(\lambda;s|\emptyset)}{\Ai(\lambda+s)}-1\biggr|\leq C\left(\left\|Y_{\s,-}-I\right\|_2\frac{\left\|\Phi_s^\Ai\begin{pmatrix}1\\ 0\end{pmatrix}\s K_s^\Ai(\cdot,\lambda)\right\|_2}{\Ai(\lambda+s)}+\frac{\left\|\Phi_s^\Ai\begin{pmatrix}1\\ 0\end{pmatrix}\s K_s^\Ai(\cdot,\lambda)\right\|_1}{\Ai(\lambda+s)}\right)\,,
\ee
and
\begin{align}
\nonumber
&\biggl|\frac{\pa_s\varphi_\s(\lambda;s|\emptyset)-\alpha_\sigma(s)\varphi_\sigma(\lambda;s|\emptyset)}{\Ai'(\lambda+s)}-1\biggr|
\leq 2\,\biggl|\frac{\pa_s\varphi_\s(\lambda;s|\emptyset)-\alpha_\sigma(s)\varphi_\sigma(\lambda;s|\emptyset)-\Ai'(\lambda+s)}{\sqrt s\,\Ai(\lambda+s)}\biggr| 
\\
\label{lastlastinequality}
&\qquad\leq 2\,C\, s^{-\frac 12}\left(\left\|Y_{\s,-}-I\right\|_2\frac{\left\|\Phi_s^\Ai\begin{pmatrix}1\\ 0\end{pmatrix}\s K_s^\Ai(\cdot,\lambda)\right\|_2}{\Ai(\lambda+s)}+\frac{\left\|\Phi_s^\Ai\begin{pmatrix}1\\ 0\end{pmatrix}\s K_s^\Ai(\cdot,\lambda)\right\|_1}{\Ai(\lambda+s)}\right)\,.
\end{align}
where we noted that $\sqrt s\,\Ai(\lambda+s)\leq 2| \Ai'(\lambda+s)|$ for $s$ sufficiently large.
Therefore, denoting $\mathcal A$ either $\Ai$ or $\Ai'$, we need to estimate the $L^p(\R,\d\mu)$-norm (for $p=1,2$)  of
\be
a(\mu):=\mathcal A(\mu+s)\sigma(\mu)\frac{K_s^\Ai(\ll,\mu)}{\Ai(\ll+s)}
\ee
as $s\to+\infty$, for fixed $\lambda$.
We can assume $s$ is sufficiently large such that $s>2|\ll|$ and $\Ai(\ll+s)\leq|\Ai'(\ll+s)|$.
\begin{itemize}
\item When $\mu\leq -s/2$, we have $|\mathcal A(\mu+s)|=O(|\mu|^{\frac 14})$, $\sigma(\mu)=O(|\mu|^{-\frac 32-\kappa})$, and
\be
\label{eq:estimate1}
\left|\frac{K^\Ai_s(\ll,\mu)}{\Ai(\ll+s)}\right|\leq\frac{|\Ai'(\ll+s)|}{\Ai(\ll+s)}\frac{(|\Ai(\mu+s)|+|\Ai'(\mu+s)|)}{|\lambda-\mu|}=O(s^{-\frac 12}|\mu|^{\frac 14})
\ee
hence
\be
a(\mu)=O(|\mu|^{-1-\kappa}s^{-\frac 12}).
\ee
\item When $\mu\geq -s/2$, we have $|\mathcal A(\mu+s)|=O(\e^{-\mu-s})$, $\sigma(\mu)=O(1)$, and
\be
\label{eq:estimate2}
\left|\frac{K^\Ai_s(\ll,\mu)}{\Ai(\ll+s)}\right|\leq\int_s^{+\infty}\frac{\Ai(\ll+\eta)}{\Ai(\ll+s)}\Ai(\mu+\eta)\d\eta\leq
\int_s^{+\infty}\Ai(\mu+\eta)\d\eta=O(\e^{-\mu-s}).
\ee
\end{itemize}
Hence $\|a\|_{L^1(\R)}=O(s^{-\frac 12-\kappa})$ and $\|a\|_{L^2(\R)}=O(s^{-1-\kappa})$, so that resuming from~\eqref{lastinequality} and~\eqref{lastlastinequality}, we get
\be
\biggl|\frac{\varphi_\s(\lambda;s|\emptyset)}{\Ai(\lambda+s)}-1\biggr|=O(s^{-\frac 12-\kappa}),
\quad
\biggl|\frac{\pa_s\varphi_\s(\lambda;s|\emptyset)-\alpha_\sigma\varphi_\s(\lambda;s|\emptyset)}{\Ai'(\lambda+s)}-1\biggr|=O(s^{-1-\kappa}),
\ee
proving the first asymptotic relation in the statement, {while the second one then follows from the triangular inequality
\be
\biggl|\frac{\pa_s\varphi_\s(\lambda;s|\emptyset)}{\Ai'(\lambda+s)}-1\biggr|\leq\biggl|\frac{\pa_s\varphi_\s(\lambda;s|\emptyset)-\alpha_\sigma(s)\varphi_\sigma(\lambda;s|\emptyset)}{\Ai'(\lambda+s)}-1\biggr|+\biggl|\frac{\alpha_\sigma(s)\varphi_\s(\lambda;s|\emptyset)}{\Ai'(\lambda+s)}\biggr|
\ee
along with the estimates $\alpha_\sigma(s)=O(s^{-\kappa})$ (stemming from Lemma~\ref{lemma:old} and standard small-norm RH theory) and $\varphi_\s(\lambda;s|\emptyset)/\Ai'(\lambda+s)=O\bigl(\Ai(\lambda+s)/\Ai'(\lambda+s)\bigr)=O(s^{-\frac 12})$ as $s\to+\infty$.}
\end{proof}

\begin{corollary}
\label{corollary:LkernelIntegral}
We have 
\be
\label{eq:LkernelIntegral}
L_s^\s(\lambda,\mu)=\int_s^{+\infty} \varphi_\s(\lambda;r|\emptyset)\varphi_\s(\mu;r|\emptyset)\d r.
\ee
\end{corollary}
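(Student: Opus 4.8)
The plan is to obtain \eqref{eq:LkernelIntegral} by integrating the differential identity \eqref{eq:dLs} in the variable $s$ and showing that the boundary contribution at $s=+\infty$ vanishes. Fix $\ll,\mu\in\R$. By \eqref{eq:dLs} the map $r\mapsto L_r^\s(\ll,\mu)$ is $C^1$ on $\R$, so the fundamental theorem of calculus gives, for every $R>s$,
\be
L_R^\s(\ll,\mu)-L_s^\s(\ll,\mu)=-\int_s^R\varphi_\s(\ll;r|\emptyset)\,\varphi_\s(\mu;r|\emptyset)\,\d r.
\ee
Granting that $L_R^\s(\ll,\mu)\to 0$ as $R\to+\infty$, letting $R\to+\infty$ yields exactly \eqref{eq:LkernelIntegral}; the improper integral is then automatically convergent, and in fact absolutely convergent, since by Proposition~\ref{prop:asympwavefunction} the integrand equals $\Ai(\ll+r)\Ai(\mu+r)\bigl(1+O(r^{-\frac12-\kappa})\bigr)$ as $r\to+\infty$, which decays like $\e^{-\frac43 r^{3/2}}$.

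Thus the only real point is the vanishing of the boundary term, $L_s^\s(\ll,\mu)\to 0$ as $s\to+\infty$. I would deduce this from the resolvent identity \eqref{eq:Lresolvent}, which at the level of kernels reads
\be
L_s^\s(\ll,\mu)=K_s^\Ai(\ll,\mu)+\bigl\langle \mathcal M_{\sqrt\s}K_s^\Ai(\ll,\cdot),\,(1-\mathcal K_s^\s)^{-1}\mathcal M_{\sqrt\s}K_s^\Ai(\cdot,\mu)\bigr\rangle_{L^2(\R)}.
\ee
The first term equals $K^\Ai(\ll+s,\mu+s)$, which tends to $0$ (super-exponentially) as $s\to+\infty$ for fixed $\ll,\mu$. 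For the second term: by estimates of the same type as in the proof of Lemma~\ref{lemma:old} one has $\tr\mathcal K_s^\s=\int_\R\s(\ll)K^\Ai(\ll+s,\ll+s)\,\d\ll=O(s^{-\kappa})$, and since $\mathcal K_s^\s\geq 0$ this forces $\|(1-\mathcal K_s^\s)^{-1}\|_{\mathrm{op}}\to 1$; moreover, by estimates entirely analogous to those carried out for the auxiliary functions in the proof of Proposition~\ref{prop:asympwavefunction} (splitting $\int_\R\s(\mu)K_s^\Ai(\ll,\mu)^2\,\d\mu$ at $\mu=-s/2$ and using Assumption~\ref{assumption:weak} together with the Airy asymptotics) one gets $\|\mathcal M_{\sqrt\s}K_s^\Ai(\ll,\cdot)\|_{L^2(\R)}\to 0$ for each fixed $\ll$. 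By Cauchy--Schwarz the second term is therefore bounded by $\|(1-\mathcal K_s^\s)^{-1}\|_{\mathrm{op}}\,\|\mathcal M_{\sqrt\s}K_s^\Ai(\ll,\cdot)\|_2\,\|\mathcal M_{\sqrt\s}K_s^\Ai(\cdot,\mu)\|_2\to 0$. Hence $L_s^\s(\ll,\mu)\to 0$, and the corollary follows.

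I expect this last step --- the vanishing of the boundary term --- to be the crux; everything else is the elementary integration above. Alternatively, and perhaps more in the spirit of the paper, the vanishing can be read off directly from the small-norm Riemann--Hilbert analysis already performed in the proof of Proposition~\ref{prop:asympwavefunction}: it gives $Y_\s(\cdot;s)\to I$ in the appropriate sense, hence via \eqref{eq:PsiEmptyformula}, \eqref{eq:Theta} and \eqref{eq:LkernelTheta} that $L_s^\s(\ll,\mu)=K^\Ai(\ll+s,\mu+s)\bigl(1+o(1)\bigr)\to 0$, cf.\ Example~\ref{example:zero2}. Either route closes the argument.
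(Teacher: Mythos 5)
Your proposal is correct and follows essentially the same approach as the paper's one-line proof, which is to integrate \eqref{eq:dLs} from $s$ to $+\infty$, citing \eqref{eq:phiasymplarges}. You have correctly identified and filled in the step the paper leaves implicit --- the vanishing of $L_s^\s(\lambda,\mu)$ as $s\to+\infty$ --- and your operator-theoretic argument for it is sound; one could also get $\|\mathcal M_{\sqrt\s}K_s^\Ai(\ll,\cdot)\|_{L^2}^2\le K_s^\Ai(\ll,\ll)\,\tr\mathcal K_s^\s\to 0$ directly from the Cauchy--Schwarz bound $K_s^\Ai(\ll,\mu)^2\le K_s^\Ai(\ll,\ll)K_s^\Ai(\mu,\mu)$, which slightly shortens that part.
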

\begin{proof}
Follows directly by integrating \eqref{eq:dLs} from $s$ to $+\infty$ thanks to \eqref{eq:phiasymplarges}.
\end{proof}

\begin{corollary}
\label{cor:uniqueness}
The function $\varphi(\lambda;s)=\varphi_\s(\lambda;s|\emptyset)$ is the unique solution to the boundary value problem~\eqref{eq:Stark}.
\end{corollary}
\begin{proof}
Let $\omega(\lambda;s)$ also satisfy
\be
\bigl(\pa_s^2+2v_\s(s)-s\bigr)\omega(\lambda;s)=\ll\,\omega(\ll;s),\qquad\omega(\ll;s)=\bigl(1+o(1)\bigr)\Ai(\ll+s),\ \ s\to+\infty.
\ee
Then, $\varphi(\lambda;s)\,\pa_s\bigl(\omega(\lambda;s)\bigr)-\omega(\lambda;s)\,\pa_s\bigl(\varphi(\lambda;s)\bigr)=c$ for some $c=c(\lambda)$ not depending on $s$.
Assume $c\not=0$ for the sake of obtaining a contradiction.
Since $\Ai(\lambda+s)$ is nonzero for $s$ sufficiently large, we get
\be
\label{eq:temporarylabel}
\bigl(1+o(1)\bigr)\pa_s\bigl(\omega(\lambda;s)\bigr)-\bigl(1+o(1)\bigr)\pa_s\bigl(\varphi(\lambda;s|\emptyset)\bigr)=\frac{c}{\Ai(\lambda+s)},\quad s\to+\infty.
\ee
By Proposition~\ref{prop:asympwavefunction}, we obtain
\be
\pa_s\bigl(\omega(\lambda;s)\bigr)=\bigl(1+o(1)\bigr)\Ai'(\lambda+s)+\bigl(1+o(1)\bigr)\frac{c}{\Ai(\lambda+s)} =\bigl(1+o(1)\bigr)\frac{c}{\Ai(\lambda+s)},\quad s\to+\infty,
\ee
which can be integrated between $s$ and $2s$ to give
\be
\bigl(1+o(1)\bigr)c\int_s^{2s}\frac{\d s'}{\Ai(\lambda+s')}=\omega(\lambda;2s)-\omega(\lambda;s)=\bigl(1+o(1)\bigr)\Ai(\lambda+2s)-\bigl(1+o(1)\bigr)\Ai(\lambda+s),\quad s\to+\infty.
\ee
By standard asymptotic properties of the Airy function this is a contradiction, proving that $c=0$ whence the desired uniqueness $\omega(\lambda;s)=\varphi(\lambda;s)$.
\end{proof}

\subsection{Proofs of Theorems~\ref{thm:Janossyphi} and~\ref{thm:Janossyphi2}}\label{sec:proofs12}

\begin{proof}[Proof of Theorem~\ref{thm:Janossyphi}]
The first relation~\eqref{thm1eq1} is {nothing else than} \eqref{eq:idJanossy2}.

The first equality in~\eqref{thm1eq2} is~\eqref{eq:LkernelIntegral} while the second one is a rewriting of~\eqref{eq:LkernelTheta} using~\eqref{eq:ThetaStructure}.

That $\varphi_\s$ solves the Stark boundary value problem~\eqref{eq:Stark} with potential $v_\s(s|\emptyset):=\pa_s^2\log j_\s(s|\emptyset)$ follows from~\eqref{eq: Schroedinger phi,varphi},~\eqref{eq:dlogJintermsofp}, and~\eqref{eq:phiasymplarges}.

Finally, in order to prove \eqref{thm1eq3}, we first consider the following chain of equalities, where we use \eqref{eq:dlogJintermsofp} and the asymptotics as $s\to+\infty$ of Section~\ref{sec:asymptoticssplusinfty}:
\begin{align}
\nonumber
\log j_\s(s|\emptyset) = -\int_s^{+\infty}\pa_r\log j_\s(r|\emptyset)\d r=\int_s^{+\infty}(r-s)\pa_r^2\log j_\s(r|\emptyset)\d r
=\int_s^{+\infty}(r-s)v_\s(r|\emptyset)\d r.
\end{align}
In the first step we use $\lim_{r\to+\infty}j_\s(r|\emptyset)=\lim_{r\to+\infty}\det_{L^2(\R)}(1-\mathcal K_s^\s)=1$ because $\mathcal K_s^\s$ converges to the zero operator in trace-norm when $s\to+\infty$.
Indeed, $\mathcal K_s^\s$ is a non-negative trace-class operator with (jointly) continuous integral kernel so that its trace-norm is
\be
\int_\R\s(\ll)K_s^\Ai(\ll,\ll)\d\ll=\int_{-\infty}^{-s/2}\underbrace{\s(\ll)K_s^\Ai(\ll,\ll)}_{O(|\ll|^{-1-\kappa})}\d\ll+\int_{-s/2}^{+\infty}\underbrace{\s(\ll)K_s^\Ai(\ll,\ll)}_{O(\exp(-\ll-s))}=O(s^{-\kappa})
\ee
as $s\to+\infty$; here we use that as $s\to+\infty$ we have $\sigma(\ll)=O(\ll^{-\frac 32-\kappa})$ and $K_s^\Ai(\ll,\ll)=O(|\ll|^{\frac 12})$ for $\ll<-s/2$, and $\s(\ll)=O(1)$ and $K_s^\Ai(\ll,\ll)=O(\exp(-\ll-s))$ for $\ll>-s/2$ (cf. Assumption~\ref{assumption:weak}).

The identity $\int\varphi^2(\lambda;s|\emptyset)\d\s(\lambda)=-v_\s(s|\emptyset)$, proved in \cite[Proposition~4.1]{CafassoClaeysRuzza}, completes the proof.
This identity also follows by setting $\underline \nu=\emptyset$ in the more general identity~\eqref{eq:moregeneral} below, which will be shown  in the proof of Theorem~\ref{thm:Janossyphi2} by an adaptation of the argument in loc. cit. (and not relying on the case $\underline\nu=\emptyset$).
\end{proof}

\begin{proof}[Proof of Theorem~\ref{thm:Janossyphi2}]
Let us introduce
\be
\Xi(\ll;s|\underline\nu):=\Theta_\s(\ll;s|\underline\nu)\xi(\lambda|\underline\nu)^{-\s_3},\qquad
\xi(\lambda|\underline\nu):=\prod_{i=1}^m(\lambda-\nu_i).
\ee
As it follows from conditions~(b) and~(c) in the RH problem for~$\Theta_\s$, $\Xi(\ll;s|\underline\nu)$ is a sectionally analytic matrix-valued function of~$\ll$ satisfying a jump condition across the real axis of the form
\be
\label{eq:whThetaJump}
\Xi_{+}(\lambda;s|\underline \nu) =\Xi_{-}(\lambda;s|\underline \nu) \begin{pmatrix}
1 & \i(1-\sigma(\lambda))\xi(\lambda|\underline\nu)^2 \\
0 & 1
\end{pmatrix}, \qquad \lambda \in \mathbb{R}.
\ee
It follows that $C(\lambda;s|\underline\nu):=\bigl(\pa_\lambda\Xi(\lambda;s|\underline \nu)\bigr)\Xi(\lambda;s|\underline\nu)^{-1}$  is also a sectionally analytic matrix-valued function of~$\ll$ satisfying a jump condition across the real axis of the form
\be
C_+(\lambda;s|\underline\nu)-C_-(\lambda;s|\underline\nu)=\Xi(\lambda;s|\underline\nu)\begin{pmatrix}
0 & \i\xi(\lambda|\underline\nu)^2\bigl((1-\s(\lambda))2\pa_\lambda\log\xi(\lambda|\underline\nu)-\s'(\ll)\bigr) \\ 0 & 0
\end{pmatrix}\Xi(\ll;s|\underline\nu)^{-1},
\ee
for all~$\ll\in\R$.
In the right-hand side of this equation we omit the choice of boundary values for $\Xi$ as the expression is independent from this choice, as it can be shown by~\eqref{eq:whThetaJump}.
It therefore follows {from} a contour deformation argument that
\be
\label{deformationargument}
\int_{\R}\Xi(\lambda;s|\underline\nu)\biggl(\begin{smallmatrix}
0 & \i\xi(\lambda|\underline\nu)^2\bigl((1-\s(\lambda))2\pa_\lambda\log\xi(\lambda|\underline\nu)-\s'(\ll)\bigr) \\ 0 & 0
\end{smallmatrix}\biggr)\Xi(\ll;s|\underline\nu)^{-1}\frac{\d\ll}{2\pi\i}=
\lim_{R\to+\infty}\oint_{c_R}C(\ll;s|\underline\nu)\frac{\d\ll}{2\pi\i}\,,
\ee
where $c_R$ is the {\it clock-wise} oriented circle $|\ll|=R$.
By the identity
\be
C:=\bigl(\pa_\ll\Xi\bigr)\Xi^{-1}=\bigl(\pa_\ll\Theta_\s\bigr)\Theta^{-1}_\s-(\pa_\ll\log\xi)\Theta_\s\s_3\Theta^{-1}_\s
\ee
and the asymptotic relation~\eqref{eq:Thetaasympinfty} we obtain that, as $\ll\to\infty$ uniformly in the complex plane, we have
\be
\bigl(C(\ll;s|\underline\nu)\bigr)_{2,1}=1+\ll^{-1}\left(\frac s2-\pa_s p_\s(s|\underline\nu)\right)+O(\ll^{-\frac 32}).
\ee
Here we also use~\eqref{eq:idpq}.
Taking the $(2,1)$-entry of~\eqref{deformationargument}, we obtain, also using Proposition~\ref{prop:dlogJV},
\be
\label{eq:moregeneral}
-\int_\R\varphi_\s(\ll;s|\underline\nu)^2\left((1-\s(\lambda))2\pa_\lambda\log\xi(\lambda|\underline\nu)-\s'(\ll)\right)\d\ll=\pa_s p_\s(s|\underline\nu)-\frac s2=-\pa_s^2\log j_\s(s|\underline\nu).
\ee
Taking into account that $\xi(\lambda|\underline\nu)=\prod_{i=1}^m(\ll-\nu_i)$, the identity~\eqref{eq:moregeneral} we just proved is~\eqref{eq:secondlogderJV}.

Next, the Stark equation~\eqref{StarkModified} follows from~\eqref{eq: Schroedinger phi,varphi} and Proposition~\ref{prop:dlogJV}.

Finally, extracting the $(2,1)$-entry in~\eqref{eq:PsiVformula}, using~\eqref{eq:M} and~\eqref{eq:ThetaStructure},
\begin{align}
\nonumber
\varphi_\s(\ll;s|\underline\nu)&=\left(1-\sum_{i,j=1}^m\frac{\bigl(L_s^\s(\underline\nu,\underline\nu)^{-1}\bigr)_{j,i}}{\ll-\nu_j}\varphi_\s(\nu_i;s|\emptyset)\pa_s\varphi_\s(\nu_j;s|\emptyset)\right)\varphi_\s(\ll;s|\emptyset)
\\
\nonumber
&\qquad+
\left(\sum_{i,j=1}^m\frac{\bigl(L_s^\s(\underline\nu,\underline\nu)^{-1}\bigr)_{j,i}}{\ll-\nu_j}\varphi_\s(\nu_i;s|\emptyset)\varphi_\s(\nu_j;s|\emptyset)\right)
\pa_s\varphi_\s(\ll;s|\emptyset)
\\
\displaybreak
\nonumber
&=
\varphi_\s(\ll;s|\emptyset)-\sum_{i,j=1}^m\bigl(L_s^\s(\underline\nu,\underline\nu)^{-1}\bigr)_{j,i}\varphi_\s(\nu_i;s|\emptyset) L_s^\s(\ll,\nu_i)
\\
&=
\frac 1{\det L_s^\s(\underline\nu,\underline\nu)}
\det\left(\begin{array}{cccc}
\varphi_\s(\lambda;s|\emptyset) & L_s^\s(\lambda,\nu_1) & \cdots & L_s^\s(\lambda,\nu_m) \\ 
\varphi_\s(\nu_1,s|\emptyset) & L_s^\s(\nu_1,\nu_1) & \cdots & L_s^\s(\nu_1,\nu_m) \\
\vdots & \vdots & \ddots & \vdots \\
\varphi_\s(\nu_m,s|\emptyset) & L_s^\s(\nu_m,\nu_1) & \cdots & L_s^\s(\nu_m,\nu_m) 
\end{array}\right)\,,
\end{align}
where in the second step we use~\eqref{thm1eq2} and in the third a standard manipulation of the determinant of a block matrix.
Therefore~\eqref{ModifiedVarphiDet} holds true and the proof is complete.
\end{proof}

\subsection{Comparison with inverse scattering for the Stark operator}\label{sec:IS}

We now comment on the connection between our probabilistic construction based on the $\sigma$-thinned (shifted) Airy process and the classical inverse scattering problem for the Stark operator, as described in~\cite{ItsSukhanov}, see also \cite{Santini1, Santini2}.
The latter can be formulated through the following RH problem, cf.~\cite[Definition~2.3]{ItsSukhanov}.

\subsubsection*{RH problem for $M$}
\begin{itemize}
	\item[(a)] $M(\cdot;\xi): \mathbb{C}\setminus \mathbb{R} \to \mathbb{C}^{2\times 2}$ is analytic for all $\xi\in\R$.
	\item[(b)] The boundary values of $M(\cdot;\xi)$ are continuous on $\mathbb{R}$ and are related by
	\be
	\label{eq:jumpM}
M_{+}(\mu;\xi) = \begin{pmatrix} 0 &-s(\mu) \\ \overline{s(\mu)}&1\end{pmatrix}
M_{-}(\mu;\xi), \qquad \mu \in \R.
	\ee
	\item[(c)] As $\mu\to \infty$, we have
\be
M(\mu;\xi)=M_\infty(\mu;\xi)\bigl(I+o(1)\bigr),\quad
M_\infty(\xi;\mu):=\begin{cases}
\begin{pmatrix}
-w_0(\xi-\mu) & -w_0'(\xi-\mu) \\
w_1(\xi-\mu) &w_1'(\xi-\mu)
\end{pmatrix},&\Im\mu>0,
	\\
	\begin{pmatrix}
w_2(\xi-\mu) & w_2'(\xi-\mu) \\
w_0(\xi-\mu) &w_0'(\xi-\mu)
\end{pmatrix},&\Im\mu<0,
\end{cases}
\ee
where
\be
w_0(k):=2\i\sqrt\pi\Ai(k),\quad
w_1(k):=2\sqrt\pi\e^{\frac{\pi\i}6}\Ai(\e^{\frac{2\pi\i}3}k),\quad
w_2(k):=2\sqrt\pi\e^{-\frac{\pi\i}6}\Ai(\e^{-\frac{2\pi\i}3}k).
\ee
\end{itemize}

In~\eqref{eq:jumpM}, $s(\mu)=a(\mu)\bigl(\overline{a(\mu)}\bigr)^{-1}$ (for $\mu\in\R$), and $a(\mu)$ is part of the scattering data for the Stark operator.
In particular, cf.~\cite[Theorem~2.2]{ItsSukhanov}, $a(\mu)$ is analytic and nonzero in the half-plane $\Im \mu<0$ and $a(\mu)=1+o(|\mu|^{-\frac 12})$ as $\mu\to\infty$ within $\Im\mu\leq 0$.
Then, the matrix
\be
\label{eq:purported}
\Psi(\lambda;s):=\frac 1{\sqrt 2}\begin{pmatrix}0&-\i\\1&0
\end{pmatrix}
\times
\begin{cases}
M^\top(-\lambda;s)\left(\begin{array}{cc}
0& a(-\lambda)^{-1}\\ -a(-\lambda)&0 
\end{array}\right),&\Im\lambda>0,
\\
M^\top(-\lambda;s)\left(\begin{array}{cc}
\overline{a(-\overline\lambda)} & 0\\ 0 & \left(\overline{a(-\overline\lambda)}\right)^{-1}
\end{array}\right),&\Im\lambda<0,
\end{cases}
\ee
essentially solves the RH problem for $\Psi_\s$, for $\sigma(\lambda)=1-|a(-\lambda)|^{-2}$ and $m=0$, with the caveat that it only satisfies a slightly weaker normalization at~$\lambda=\infty$ in which the sub-leading term is just $o(1)$ rather than $O(\lambda^{-1})$.

Indeed, the expression in the right-hand side of~\eqref{eq:purported} is analytic for $\lambda\in\C\setminus\R$ by the above mentioned properties of $a$, and a direct computation suffices to ascertain that
\be
\Phi^\Ai(\ll+s)=\frac 1{\sqrt 2}\begin{pmatrix}0&-\i\\1&0
\end{pmatrix}M^\top_\infty(-\lambda;s)\times\begin{cases}
\begin{pmatrix}
0&1\\-1&0
\end{pmatrix}
,&\Im\lambda>0,
\\
I,&\Im\lambda<0,
\end{cases}
\ee
such that the normalization at~$\infty$ of the two RH problems match (up to the order of the sub-leading contribution, as we already mentioned).
Moreover, a direct computation shows that the jump condition of the RH problem for $\Psi_\s$ is satisfied by the right-hand side of~\eqref{eq:purported}.

There is however an essential difference between our assumptions on~$\s$, and the assumptions in~\cite{ItsSukhanov}.
Whereas we consider functions~$\s$ converging to~$0$ at~$-\infty$, but not necessarily converging to~$0$ at~$+\infty$, cf.~Assumption~\ref{assumption:weak}, it is required in~\cite[Theorem~2.2(c)]{ItsSukhanov} that $\sigma(\lambda)=1-|a(-\lambda)|^{-2}\to 0$ as~$\lambda\to\pm\infty$.
Hence, the class of functions~$\sigma$ that we consider, is not included in the class of scattering data considered by classical inverse scattering theory for the Stark operator.

\subsection{Connection with the theory of Schlesinger transformations\texorpdfstring{~\cite{JimboMiwa,Bertola,BertolaCafasso}.}{.}}\label{sec:BC}
It is also worth to make a comparison of our setting with the general theory of Schlesinger transformations.
For a general RH problem depending on parameters, one can define the \emph{Malgrange--Bertola differential} on the space of parameters~\cite{Bertola}.
The general definition, applied to the RH problem for $\Gamma(\lambda;s):=\Psi_\sigma(\lambda-s;s|\emptyset)$, specializes to the following one-form in $s$:
\be
\Omega = \omega(s)\d s,\quad\omega(s) := \int_\R\tr\biggl[\Gamma^{-1}_-(\lambda;s)\frac{\d\Gamma_-(\lambda;s)}{\d\lambda}\frac{\d J_\G(\lambda;s)}{\d s}J^{-1}_\G(\lambda;s)\biggr]\frac{\d\lambda}{2\pi\i},
\ee
where $J_\G(\lambda;s)= \begin{pmatrix} 1 & 1-\sigma(\lambda-s) \\ 0 & 1\end{pmatrix}$.
Using the form of $J_\G$ and the jump condition $\Gamma_+(\lambda;s)=\Gamma_-(\lambda;s)J_\Gamma(\lambda;s)$, the integrand can be rewritten as
\be
\tr\biggl[\G^{-1}_-(\lambda;s)\frac{\d\G_-(\lambda;s)}{\d\lambda}\begin{pmatrix} 0 & \sigma'(\lambda-s) \\ 0 & 0 \end{pmatrix}\biggr] = \frac 12\tr\biggl(\G_-^{-1}(\lambda;s)\frac{\d^2\G_-(\lambda;s)}{\d\lambda^2}-\G_+^{-1}(\lambda;s)\frac{\d^2\G_+(\lambda;s)}{\d\lambda^2}\biggr)
\ee
and hence a residue computation gives
\be
\omega(s)=-\frac 12\res{\lambda=\infty}\tr\biggl(\G^{-1}(\lambda;s)\frac{\d^2\G(\lambda;s)}{\d\lambda^2}\biggr) = \frac{s^2}4-p_\s(s|\emptyset)=\pa_s\log j_\s(s|\emptyset).
\ee
The logarithmic potential $j_\sigma$ of $\Omega$ is then termed tau function of the RH problem~\cite{Bertola}.
Note that a tau function in this sense is defined only up to a multiplicative (integration) constant.
Accordingly, we can say that the Fredholm determinant $j_\s(s|\emptyset)$ is the tau function associated with the RH problem for $\Gamma(\lambda;s)$.

Pole insertion in a RH problem (\emph{Schlesinger transformation}) and its effect on $\Omega$ have been studied in depth in~\cite{BertolaCafasso} (expanding on \cite{JimboMiwa,Bertola}) for RH problems with identity normalization at infinity and insertion of poles off the jump contour.
The general results of op.~cit. formally match with our setting.
Namely, in our setting we consider the RH problem for $\G(\ll;s|\underline\nu):=\Psi(\ll-s;s|\underline\nu)$ which is obtained from the one for $\Gamma$ by inserting poles at $\nu_i+s$ such that $\Gamma(\ll;s|\underline\nu)=(\ll-\nu_i-s)^{-\s_3}O(1)$ for $i=1,\dots,m$.
The \emph{characteristic matrix} of~\cite[Definition~2.2]{BertolaCafasso}, such that the logarithmic differential of its determinant expresses the variation between the Malgrange--Bertola differential after pole insertion~\cite[Theorem~2.2 part~(3)]{BertolaCafasso}, would reduce in the present setting to
\be
\res{\lambda=\nu_i+s}\res{\mu=\nu_j+s}\frac{\bigl(\Gamma^{-1}(\ll;s)\G(\mu;s)\bigr)_{2,1}}{(\ll-\mu)(\ll-\nu_i-s)(\mu-\nu_j-s)}\d\ll=-2\pi\i \, L_s^\s(\nu_i,\nu_j),\qquad i,j=1,\dots,m.
\ee
Hence, the above-mentioned \cite[Theorem~2.2 part~(3)]{BertolaCafasso} predicts that the tau function associated with the RH problem for $\G(\ll;s|\underline\nu)$ is (within an absolute multiplicative constant) $j_\s(s|\emptyset)$ times the determinant of $L_s^\s(\underline\nu,\underline\nu)$, i.e. $j_\s(s|\underline\nu)$ by~\eqref{eq:idJanossy1}, as we showed in Proposition~\ref{prop:dlogJV}.

\subsection{Isospectral deformation and cKdV: proof of Theorem~\ref{theorem:cKdV}}\label{sec:proofcKdV}

As explained in Section \ref{sec:intro}, the connection with the cKdV equation is made by studying the shifted and dilated Airy kernel
\be
K_{X,T}^\Ai(\lambda,\mu):=T^{-\frac 13} K^\Ai\biggl(T^{-\frac 13}(\lambda+X), T^{-\frac 13}(\mu+X)\biggr),
\ee
where $X\in\R,\ T\geq 0$ are parameters.
The corresponding J\'anossy density $J_\s(X,T|\underline\nu)$, defined in~\eqref{def:Jsigma}, is recovered from $j_\s(s|\underline\nu)$ by~\eqref{eq:Janossyreduction}.
In view of Proposition~\ref{prop:dlogJV}, we have
\be
\label{reformulate}
\pa_X\log J_\s(X,T|\underline\nu)=T^{-\frac 13}\left.\pa_s\log j_{\wt\s}(s|T^{-\frac 13}\underline\nu)\right|_{s=XT^{-\frac 13}}=\frac {X^2}{4T}-T^{-\frac 13}p_{\wt\s}(XT^{-\frac 13};T^{-\frac 13}\underline\nu).
\ee
Throughout this section we use the notation $\wt\s(\ll)=\s(T^{\frac 13}\ll)$, as in~\eqref{eq:Janossyreduction}.
{It is straightforward to verify by the RH problem for $\Psi_\s$ that the matrix
\be
\wh\Psi_\s(\ll;X,T|\underline\nu):=T^{\frac 1{12}\s_3}\Psi_{\wt\s}(\ll T^{-\frac 13};XT^{-\frac 13}| T^{-\frac 13}\underline\nu)
\ee
is the (unique) solution to the following RH problem.}

\subsubsection*{RH problem for $\wh\Psi_\s$}
\label{RHPsi X,T}
\begin{itemize}
	\item[(a)] $\wh\Psi_\s(\cdot;X,T|\underline \nu): \mathbb{C}\setminus \mathbb{R} \to \mathbb{C}^{2\times 2}$ is analytic for all~$X\in\R$, $T>0$, and all $\underline \nu$. 
	\item[(b)] The boundary values of $\wh\Psi_\s(\cdot;X,T|\underline \nu)$ are continuous on $\mathbb{R}\setminus\{\nu_1,\dots,\nu_m\}$ and are related by
	\be
	\wh\Psi_{\s,+}(\ll;X,T|\underline \nu) = \wh\Psi_{\s,-}(\ll;X,T|\underline \nu) \begin{pmatrix}
		1 & 1-\s(\ll) \\
		0 & 1
	\end{pmatrix}, \qquad \ll \in \R, \ \ll\not=\nu_i.
	\ee
 	\item[(c)] For all $i=1,\dots,m$, as $\lambda \to \nu_i$ from either side of the real axis we have 
	\be
	\label{eq:PsiXTasympV}
	\wh\Psi_\s(\lambda;X,T|\underline \nu) (\lambda-\nu_i)^{-\sigma_3}=O(1).
	\ee
	\item[(d)] As $\lambda \to \infty$, we have
	\be
	\label{eq:PsiXTasympinfty}
	\wh\Psi_\s(\lambda;X,T|\underline \nu) = \biggl( I + \frac 1\ll\begin{pmatrix}
\wh q_\s(X,T|\underline\nu) & \i \wh r_\s(X,T|\underline\nu) \\	
\i\wh p_\s(X,T|\underline\nu) & -\wh q_\s(X,T|\underline\nu)	
	\end{pmatrix}+O(\lambda^{-2}) \biggr) \lambda^{\frac 14\s_3}G\e^{-T^{-\frac 12}\left(\frac{2}{3}\lambda^{\frac 32}+X\lambda^{\frac 12}\right)\sigma_{3}} C_\delta
	\ee
for any $\delta \in (0,\frac{\pi}{2})$.
Here we take principal branches of the roots of $\ll$ as explained after~\eqref{eq:Psiasympinfty}, and $G,C_\delta$ are as in~\eqref{eq:G}. 
Moreover, the coefficients in the sub-leading term are related to the ones in~\eqref{eq:Psiasympinfty} by
\begin{align}
\nonumber
\wh q_\s(X,T|\underline\nu)&=T^{\frac 13}q_{\wt\s}(XT^{-\frac 13}|T^{-\frac 13}\underline \nu),\qquad
\wh r_\s(X,T|\underline\nu)=T^{\frac 12} r_{\wt\s}(XT^{-\frac 13}|T^{-\frac 13}\underline \nu),\\
\label{eq:oldandnewp}
\wh p_\s(X,T|\underline\nu)&=T^{\frac 16}p_{\wt\s}(XT^{-\frac 13}|T^{-\frac 13}\underline \nu).
\end{align}
\end{itemize}

It is convenient to reformulate~\eqref{reformulate} using~\eqref{eq:oldandnewp}, as
\be
\label{pdlogjXT}
\pa_X\log J_\s(X,T|\underline\nu)=\frac{X^2}{4T}-T^{-\frac 12}\wh p_\s(X,T|\underline\nu).
\ee
Introduce now, cf.~\eqref{eq:Theta} and \eqref{eq:ThetaStructure},
\begin{align}
\nonumber
\wh\Theta_\s(\ll;X,T|\underline\nu):={}&T^{\frac 1{12}\s_3}\Theta_{\wt\s}(\ll T^{-\frac 13};XT^{-\frac 13}|T^{-\frac 13}\underline\nu) 
\\
\nonumber
={}&\begin{pmatrix}
1 & \wh p_\s(X,T|\underline\nu) \\ 0 & 1
\end{pmatrix}
\e^{\frac{\i\pi}4\s_3}\wh\Psi_\s(\lambda;X,T|\underline \nu)\e^{-\frac{\i\pi}4\s_3}
\\
\label{eq:whTheta}
={}&-\sqrt{2\pi}
\begin{pmatrix}
T^{\frac 12}\pa_X\wh\varphi_\s(\ll;X,T|\underline\nu)&T^{\frac 12}\pa_X\wh\chi_\s(\ll;X,T|\underline\nu)
 \\ \wh\varphi_\s(\ll;X,T|\underline\nu)&\wh\chi_\s(\ll;X,T|\underline\nu)
\end{pmatrix}
\end{align}
where we define, cf.~\eqref{eq:ThetaStructure},
\be
\label{eq:newwave}
\wh\varphi_\s(\ll;X,T|\underline\nu)=T^{-\frac 1{12}}\varphi_{\wt\s}(\ll T^{-\frac 13};XT^{-\frac 13}|T^{-\frac 13}\underline\nu),\quad
\wh\chi_\s(\ll;X,T|\underline\nu)=T^{-\frac 1{12}}\chi_{\wt\s}(\ll T^{-\frac 13};XT^{-\frac 13}|T^{-\frac 13}\underline\nu).
\ee

\begin{proposition}
Let $V_\s(X,T|\underline\nu):=\pa_X^2\log J_\s(X,T|\underline\nu)$ and $f$ be in the linear span of $\wh\varphi_\s(\ll;X,T|\underline\nu)$ and $\wh\chi_\s(\ll;X,T|\underline\nu)$. 
We have the ``Lax pair''
\be
\label{eq:Lax}
\mathscr Lf=\ll f,\qquad \mathscr Af=\pa_Tf,
\ee
where
\be
\mathscr L:=T\pa_X^2+2TV_\s(X,T|\underline\nu)-X,\quad
\mathscr A:=-\frac 13\pa_X^3-V_\s(X,T|\underline\nu)\pa_X-\frac 12\pa_XV_\s(X,T|\underline\nu)+\frac 13T^{-1}-\frac 1{12}T^{-\frac 32}.
\ee
\end{proposition}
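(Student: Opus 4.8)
\emph{Plan.} Both identities in~\eqref{eq:Lax} will be extracted from the RH problem for $\wh\Psi_\s$ through the matrix $\wh\Theta_\s$. The spectral equation $\mathscr L f=\lambda f$ is essentially already available: Proposition~\ref{prop:equationins} gives the $s$-equation $\pa_s\Theta_\s(\lambda;s|\underline\nu)=\bigl(\begin{smallmatrix}0&\lambda+2\pa_sp_\s(s|\underline\nu)\\ 1&0\end{smallmatrix}\bigr)\Theta_\s(\lambda;s|\underline\nu)$, which after the rescaling in~\eqref{eq:whTheta} and the substitutions~\eqref{pdlogjXT},~\eqref{eq:idpq}, and $V_\s=\pa_X^2\log J_\s$ in the $(1,2)$-entry becomes
\be
\pa_X\wh\Theta_\s(\lambda;X,T|\underline\nu)=T^{-\frac12}\begin{pmatrix}0 & \lambda+X-2TV_\s(X,T|\underline\nu)\\ 1 & 0\end{pmatrix}\wh\Theta_\s(\lambda;X,T|\underline\nu);
\ee
reading off the second row via~\eqref{eq:whTheta} (second row $\propto(\wh\varphi_\s,\wh\chi_\s)$, first row $\propto T^{\frac12}\pa_X$ thereof) gives $\mathscr L f=\lambda f$ for every $f$ in the span of $\wh\varphi_\s,\wh\chi_\s$ (this is also~\eqref{eq: Schroedinger phi,varphi} together with Proposition~\ref{prop:dlogJV}, rescaled). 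The content of the proposition is therefore the time equation, which I would obtain as follows.

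Using the standard RH-theory arguments of~\cite[Section~3]{CafassoClaeysRuzza}, $\wh\Theta_\s(\lambda;X,T|\underline\nu)$ is differentiable in $T$ for $T>0$ and its $\lambda\to\infty$ expansion~\eqref{eq:PsiXTasympinfty}/\eqref{eq:whTheta} may be differentiated termwise in $T$. Set $\mathcal V(\lambda):=\bigl(\pa_T\wh\Theta_\s(\lambda;X,T|\underline\nu)\bigr)\wh\Theta_\s(\lambda;X,T|\underline\nu)^{-1}$. Since the jump matrix of $\wh\Theta_\s$ on $\R$ and its local behavior at the points $\nu_i$ do not depend on $T$, the matrix $\mathcal V$ has no jump across $\R$ and no singularity at $\nu_1,\dots,\nu_m$, so it is entire in $\lambda$. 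In~\eqref{eq:PsiXTasympinfty} only the exponential factor $\e^{-T^{-\frac12}(\frac23\lambda^{\frac32}+X\lambda^{\frac12})\sigma_3}$ depends on $T$ through a $\lambda$-growing exponent, and its $T$-derivative is $\tfrac12T^{-\frac32}(\tfrac23\lambda^{\frac32}+X\lambda^{\frac12})\sigma_3$ times the factor itself; conjugating by $\lambda^{\frac14\sigma_3}G$ turns $\sigma_3$ into $-\bigl(\begin{smallmatrix}0&\lambda^{\frac12}\\\lambda^{-\frac12}&0\end{smallmatrix}\bigr)$, so the half-integer powers of $\lambda$ cancel and $\mathcal V(\lambda)=O(\lambda^2)$ at infinity (the pre-exponential factor contributing only terms of degree $\le1$). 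By Liouville's theorem $\mathcal V$ is thus a polynomial in $\lambda$, with $(1,2)$-entry of degree $\le2$, $(2,1)$-entry of degree $\le1$, and diagonal entries of degree $\le1$ a priori.

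It remains to identify $\mathcal V$. Expanding $(\pa_T\wh\Theta_\s)\wh\Theta_\s^{-1}$ at $\lambda=\infty$ to the orders $\lambda^2,\lambda,1$ expresses its coefficients through $\wh p_\s,\wh q_\s,\wh r_\s$, their $X$-derivatives, and $T$; inserting $V_\s=\tfrac X{2T}-T^{-\frac12}\pa_X\wh p_\s$ (from~\eqref{pdlogjXT}), the rescaled form of~\eqref{eq:idpq} linking $\wh p_\s,\wh q_\s,\pa_X\wh p_\s$, and the spectral equation, the spurious degree-one diagonal terms cancel and one finds
\be
\mathcal V(\lambda)=\begin{pmatrix}\star & \star\\ -\tfrac13T^{-\frac32}(\lambda+X+TV_\s) & \tfrac16\pa_XV_\s-\tfrac1{12}T^{-\frac32}\end{pmatrix}.
\ee
Extracting the second row of $\pa_T\wh\Theta_\s=\mathcal V\wh\Theta_\s$ through~\eqref{eq:whTheta} and then replacing $\lambda\pa_X\wh\varphi_\s$ by $T\pa_X^3\wh\varphi_\s+(2TV_\s-X)\pa_X\wh\varphi_\s+(2T\pa_XV_\s-1)\wh\varphi_\s$ — the $X$-derivative of $\mathscr L\wh\varphi_\s=\lambda\wh\varphi_\s$ — and simplifying yields precisely $\pa_T\wh\varphi_\s=\mathscr A\wh\varphi_\s$; the same computation applies verbatim to $\wh\chi_\s$, hence to any $f$ in their span, while the first row of $\pa_T\wh\Theta_\s=\mathcal V\wh\Theta_\s$ reduces (again via the spectral equation) to $T^{\frac12}\pa_X$ of this identity and gives nothing new.

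\emph{Main obstacle.} Everything hinges on the explicit identification of $\mathcal V$: one must perform the large-$\lambda$ expansion of $(\pa_T\wh\Theta_\s)\wh\Theta_\s^{-1}$ with some care and use the algebraic relations among $\wh p_\s,\wh q_\s,\wh r_\s$ and $V_\s$ to see that the apparent $O(\lambda)$ diagonal part vanishes and that the surviving coefficients assemble into the third-order operator $\mathscr A$ — in particular, the constant piece $\tfrac13T^{-1}-\tfrac1{12}T^{-\frac32}$ emerges only after these simplifications. The differentiability in $T$ and the termwise differentiation of the asymptotics are routine and carried over from~\cite{CafassoClaeysRuzza}.
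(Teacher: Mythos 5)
Your proposal follows essentially the same route as the paper: one forms $\mathcal V=(\pa_T\wh\Theta_\s)\wh\Theta_\s^{-1}$ (the paper's matrix $B$), observes that it is entire and of polynomial growth, hence a polynomial by Liouville, and then identifies the second row by expanding at $\lambda\to\infty$ and using the relations among the expansion coefficients and $V_\s$. Your final formula for $\mathcal V$ and the subsequent elimination of $\lambda\pa_X f$ via the $X$-derivative of $\mathscr L f=\lambda f$ agree exactly with the paper's derivation.

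One point in your plan is glossed over, however, and it is the only nontrivial algebraic input. You assert that the Laurent coefficients of $\mathcal V$ are ``expressed through $\wh p_\s,\wh q_\s,\wh r_\s$, their $X$-derivatives, and $T$'', and that the rescaled version of~\eqref{eq:idpq} plus~\eqref{pdlogjXT} suffice to simplify. In fact the $\lambda^0$ term of $\mathcal V_{2,2}$ also contains the next-order expansion coefficient $\wh n_\s$ of $\wh\Theta_\s$ (the $(2,1)$-entry at order $\lambda^{-2}$ in~\eqref{eq:asympwhTheta}). To eliminate it one needs the additional identity $\wh n_\s+\wh r_\s+\wh p_\s^3+3\wh p_\s\wh q_\s=\tfrac 12T\pa_X^2\wh p_\s$, the second relation of~\eqref{eq:useful}, and the only way to obtain it is to push the Liouville argument for the $X$-Lax matrix $A=(\pa_X\wh\Theta_\s)\wh\Theta_\s^{-1}$ one order further than is recorded in Proposition~\ref{prop:equationins} (whose proof discards the $(2,2)$-entry's subleading coefficient as ``not relevant''). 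The paper therefore re-derives the $X$-equation to the required order before turning to the $T$-equation. Your plan invokes only the rescaled Proposition~\ref{prop:equationins} and hence would not directly furnish this relation; once that step is inserted, the rest of your argument goes through as written.
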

\begin{proof}
Although the first equation $\mathscr L f=\ll f$ follows directly from~\eqref{eq: Schroedinger phi,varphi} by using~\eqref{eq:newwave}, it is convenient to deduce it again; doing so will provide us with additional information useful in the derivation of the second equation.

We start by noting that the matrix function $A(\ll;X,T|\underline\nu):=\bigl(\pa_X\wh\Theta_\s(\ll;X,T|\underline\nu)\bigr)\wh\Theta_\s(\ll;X,T|\underline\nu)^{-1}$ has no jump across the real axis because the jump condition for $\wh\Theta_\s$ across the real axis does not depend on $X$ as if follows from~\eqref{eq:whTheta} along with condition~(b) in the RH problem for~$\wh\Psi_\s$.
Once more, we refer the reader to \cite[Section 3]{CafassoClaeysRuzza} for the rigorous justification of the differentiability of the RH solution.

Moreover, it is readily checked, cf.~\eqref{eq:Thetaasympinfty}, that as~$\ll\to\infty$ we have
\begin{align}
\nonumber
\Theta_\s(\lambda;s|\underline \nu)&= \begin{pmatrix}1 & \wh p_\s(X,T|\underline\nu) \\ 0 & 1
	\end{pmatrix}\biggl( I +\frac 1 \lambda\begin{pmatrix}\wh q_\s(X,T|\underline\nu) & -\wh r_\s(X,T|\underline\nu) \\ \wh p_\s(X,T|\underline\nu) & -\wh q_\s(X,T|\underline\nu) \end{pmatrix}
	\\
	\label{eq:asympwhTheta}
	&\qquad+\frac 1{\lambda^2}\begin{pmatrix}\star & \star \\ \wh n_\s(X,T|\underline\nu) & \star \end{pmatrix}+O(\lambda^{-3}) \biggr)
	\lambda^{\sigma_{3}/4}\frac{\begin{pmatrix} 1 & 1 \\ -1 & 1
\end{pmatrix}}{\sqrt 2}	
	\e^{-T^{-\frac 12}\left(\frac{2}{3}\lambda^{\frac 32}+X\lambda^{\frac 12}\right)\sigma_{3}} C_\delta
	\end{align}
for any $\delta \in (0,\frac{\pi}{2})$.
The notations are as in~\eqref{eq:Thetaasympinfty} but now we also need to explicitly record the $(2,1)$-entry of the second sub-leading term in the asymptotic series, denoted $\wh n_\s$.

Next, from~\eqref{eq:asympwhTheta}, we deduce that $A$ has an expansion for large $\ll$ of the form
\begin{align}
\nonumber
A&=T^{-\frac 12}\begin{pmatrix}
0& \ll+\wh p_\s^2+2\wh q_\s+T^{\frac 12}\pa_X\wh p_\s\\
1 & 0 
\end{pmatrix}
\\
&\quad
+\ll^{-1}T^{-\frac 12}\begin{pmatrix}
\star & \star
\\
T^{\frac 12}\pa_X\wh p_\s-\wh p_\s^2-2\wh q_\s&
\wh n_\s+\wh r_\s+\wh p_\s^3+3\wh p_\s\wh q_\s-\frac 12T^{\frac 12}\pa_X\left(\wh p^2_\s+2\wh q_\s\right)  
\end{pmatrix}+O(\ll^{-2}).
\end{align}
Liouville's theorem guarantees then that $A$ is a polynomial in $\ll$.
Consequently, the higher-order Laurent coefficient in this expansion must vanish; we do not need the information coming from the first row (and we have accordingly omitted these terms), while from the second row at order $\ll^{-1}$ we obtain
\be
\label{eq:useful}
\wh p_\s^2+2\wh q_\s=T^{\frac 12}\pa_X\wh p_\s,\qquad
\wh n_\s+\wh r_\s+\wh p_\s^3+3\wh p_\s\wh q_\s=\frac 12T\pa_X^2\wh p_\s.
\ee
Summarizing, also thanks to~\eqref{pdlogjXT}, we have
\be
A=T^{-\frac 12}\begin{pmatrix}
0& \ll+2T^{\frac 12}\pa_X\wh p_\s\\
1 & 0 
\end{pmatrix}=
T^{-\frac 12}\begin{pmatrix}
0& \ll-2V_\s+X\\
1 & 0 
\end{pmatrix}.
\ee
Comparing with~\eqref{eq:whTheta} we obtain $\mathscr L f=\ll f$ whenever $f$ is in the linear span of $\wh\varphi_\s,\wh\chi_\s$.

Next, the matrix function $B(\ll;X,T|\underline\nu):=\bigl(\pa_T\wh\Theta_\s(\ll;X,T|\underline\nu)\bigr)\wh\Theta_\s(\ll;X,T|\underline\nu)^{-1}$ has no jump across the real axis because the jump condition for $\wh\Theta_\s$ across the real axis does not depend on $T$, and so $B$ is entire in $\lambda$. 
It then follows from an application of Liouville theorem that $B$ is a polynomial in $\ll$.
In particular, from~\eqref{eq:asympwhTheta}, the $(2,1)$-entry of $B$ is expressed as
\be
3T^{\frac 32}B_{2,1}=-\ll+\left(\wh p_\s^2+2 \wh q_\s-\frac{3X}2\right)=-\ll+T^{\frac 12}\pa_X\wh p_\s-\frac 32X=-\ll-T V_\s-X,
\ee
and, similarly, the $(2,2)$-entry of $B$ as
\be
3T^{\frac 32}B_{2,2}=-\left(\wh n_\s+\wh r_\s+\wh p_\s^3+3\wh p_\s\wh q_\s\right)=-\frac 12T\pa_X^2\wh p_\s=\frac 12T^{\frac 32}\pa_X V_\s-\frac 14.
\ee
We have used~\eqref{eq:useful} and~\eqref{pdlogjXT} to simplify these expressions.
Comparing with~\eqref{eq:whTheta}, we must have
\be
\pa_Tf = -\frac{\ll+T V_\s+X}{3T}\pa_Xf+\frac{2\pa_X V_\s-T^{-\frac 32}}{12}f,
\ee
for $f$ equal to either $\wh\varphi_\s$ or $\wh\chi_\s$, and hence for any $f$ in their linear span.
By the relation $\mathscr L f=\ll f$ obtained above we can rewrite the last relation by using $\lambda \pa_Xf=\pa_X(\mathscr L f)$ which finally yields $\pa_Tf=\mathscr A f$.
\end{proof}
\begin{corollary}\label{corollarythmckdv}
The function $V_\s(X,T|\underline\nu):=\pa_X^2\log J_\s(X,T|\underline\nu)$ satisfies the cKdV equation~\eqref{cKdV}.
\end{corollary}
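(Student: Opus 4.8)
The plan is to run the classical Lax-pair compatibility (zero-curvature) argument on the pair~\eqref{eq:Lax}. Write $V=V_\s(X,T|\underline\nu)$, and note first that $V$ is a well-defined smooth function of $(X,T)$, since $J_\s(X,T|\underline\nu)>0$ by Proposition~\ref{prop:sec2} combined with~\eqref{eq:Janossyreduction}. Fix any $\ll$, and let $f$ range over the two-dimensional linear span of $\wh\varphi_\s(\ll;\cdot,\cdot|\underline\nu)$ and $\wh\chi_\s(\ll;\cdot,\cdot|\underline\nu)$. As functions of $X$ (with $\ll,T$ fixed) these two solve the second-order ODE $\mathscr L f=\ll f$, and they are linearly independent because their Wronskian equals, up to a nonzero constant factor, $\det\wh\Theta_\s$, which is identically $1$ (it is entire in $\ll$ with all jumps and local singularities of determinant $1$, and tends to $1$ at infinity by~\eqref{eq:Thetaasympinfty}).

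Next I would differentiate the spectral equation $\mathscr L f=\ll f$ with respect to $T$. Using $\pa_T f=\mathscr A f$ from the second half of~\eqref{eq:Lax} together with $\mathscr A\mathscr L f=\ll\,\mathscr A f$, one obtains the operator identity $(\pa_T\mathscr L)f=[\mathscr A,\mathscr L]f$, valid for all $f$ in the span; here $\pa_T\mathscr L$ denotes the operator obtained by differentiating the coefficients of $\mathscr L$ in $T$, so that $\pa_T\mathscr L=\pa_X^2+2V+2T\,\pa_TV$. The core computation is then the commutator $[\mathscr A,\mathscr L]$. Using $[\pa_X^3,g]=g_{XXX}+3g_{XX}\pa_X+3g_X\pa_X^2$ for a multiplier $g$, together with $[\pa_X^2,X]=2\pa_X$ and $[\pa_X^3,X]=3\pa_X^2$, and noting that the constant-in-$X$ part $\frac13T^{-1}-\frac1{12}T^{-3/2}$ of $\mathscr A$ drops out of the commutator, a direct calculation gives
\[
[\mathscr A,\mathscr L]=\pa_X^2-\frac T6\,\pa_X^3V-2T\,V\,\pa_XV+V,
\]
all contributions to the $\pa_X^2$- and $\pa_X$-coefficients beyond the displayed leading $\pa_X^2$ cancelling.

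Finally, subtracting the two expressions, the operator $(\pa_T\mathscr L)-[\mathscr A,\mathscr L]$ reduces to multiplication by $V+2T\,\pa_TV+\frac T6\,\pa_X^3V+2T\,V\,\pa_XV$, and it annihilates the whole two-dimensional span of $\wh\varphi_\s$ and $\wh\chi_\s$. Since two linearly independent solutions of a second-order linear ODE in $X$ cannot vanish simultaneously, this span contains, at each point $(X,T)$, a nonvanishing function; hence the multiplier is identically zero. Dividing by $2T$ yields $\pa_TV+\frac1{12}\pa_X^3V+V\,\pa_XV+\frac1{2T}V=0$, which is precisely~\eqref{cKdV}. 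The only place requiring genuine care is the bookkeeping in the commutator computation, namely verifying that the $\pa_X^2$-coefficient of $[\mathscr A,\mathscr L]$ is exactly $1$ (so that it matches that of $\pa_T\mathscr L$) and that the $\pa_X$-coefficient vanishes, so that the remaining obstruction is a true zeroth-order operator; I expect this to be the main, and essentially only, obstacle.
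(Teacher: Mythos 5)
Your proposal is correct and follows essentially the same route as the paper: it runs the Lax-pair compatibility argument on~\eqref{eq:Lax}, computes that $\pa_T\mathscr L - [\mathscr A,\mathscr L]$ is a multiplication operator (equivalently $\pa_T\mathscr L + [\mathscr L,\mathscr A]$, as the paper writes it), and concludes using that $\det\wh\Theta_\s\equiv 1$ forces $\wh\varphi_\s$ and $\wh\chi_\s$ never to vanish simultaneously. You merely fill in more of the bookkeeping (the explicit commutator and the Wronskian identification) that the paper leaves implicit.
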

\begin{proof}
This is a classical argument~\cite{Lax}.
From the compatibility condition of~\eqref{eq:Lax} we obtain
\be
\label{compatibility}
\bigl(\pa_T\mathscr L+[\mathscr L,\mathscr A]\bigr)f=0.
\ee
A direct computation gives that $\pa_T\mathscr L+[\mathscr L,\mathscr A]$ is the operator of multiplication with the function
\be
\label{eq:ckdvproof}
V_\s(X,T|\underline\nu)+2T\pa_TV_\s(X,T|\underline\nu)+2T V_\s(X,T|\underline\nu)\pa_X V_\s(X,T|\underline\nu)+\frac 16T\pa_X^3V_\s(X,T|\underline\nu).
\ee
The equation~\eqref{compatibility} must be true for any $f$ in the linear span of $\wh\varphi_\s,\wh\chi_\s$.
Since $\det\wh\Theta_\s=1$ identically in all variables $\ll,X,T$, the functions $\wh\varphi_\s$ and $\wh\chi_\s$ never vanish simultaneously, hence~\eqref{eq:ckdvproof} must vanish identically.
\end{proof}

This proves Theorem~\ref{theorem:cKdV}.

\subsection{Generalization to discontinuous \texorpdfstring{$\sigma$'s}{sigma}}
\label{section:discont}

In this section we briefly explain how to extend the results to a broader class of functions $\sigma$, including in particular $\sigma=1_{(0,+\infty)}$.

\begin{assumption}
\label{assumption:veryweak}
The function $\s:\mathbb R\to[0,1]$ is such that $\sigma=\s_0+\sum_{j=1}^f w_j1_{(\xi_j,+\infty)}$ for some (finite) integer $f\geq 0$, some $w_1,\ldots,w_f>0$ and some $\xi_1,\ldots,\xi_f\in\R$, and a smooth function $\sigma_0$ such that $\sigma_0(\ll)=O(|\ll|^{-\frac 32-\kappa})$ as $\ll\to-\infty$ for some $\kappa>0$. 
\end{assumption}

These are the assumptions made in~\cite{CafassoClaeysRuzza}, to which we refer for more details, and they include the setting of~\cite{CD} which corresponds to the case $\s_0=0$.
Under these more general assumptions, the RH problems for~$Y_\s$ and $\Psi_\s$ have to be complemented with the condition that $Y_\s$ and $\Psi_\s$ have, at worst, logarithmic singularities at~$\xi_j$.

Theorem~\ref{thm:Janossyphi} holds true verbatim except for~\eqref{thm1eq3}, which is to be replaced by
\be
j_\s(s|\emptyset)=\exp\left(-\int_s^{+\infty}(r-s)\left(\int_\R\varphi_\s(\ll;r|\emptyset)^2\s_0'(\ll)\d\ll+\sum_{j=1}^f\Delta_j\varphi_\s(\xi_j;r|\emptyset)^2\right)\right)
\ee
where $\Delta_j:=w_j-w_{j-1}$ for $2\leq j\leq f$ and $\Delta_1:=w_1$.
This follows directly from~\cite[equations~(1.8) and~(1.26)]{CafassoClaeysRuzza}.

Moreover, Theorem~\ref{thm:Janossyphi2} holds true verbatim except for~\eqref{eq:secondlogderJV}, which is to be replaced by
\be
\pa_s^2\log j_\s(s|\underline \nu)=\int_\R\varphi_\s(\lambda;s|\underline \nu)^2\biggl(-\s'_0(\ll)+\sum_{i=1}^m\frac{2\bigl(1-\s(\ll)\bigr)}{\ll-\nu_i}\biggr)\d\ll-\sum_{j=1}^f\Delta_j\varphi_\s(\xi_j;s|\underline\nu)^2,
\ee
and Theorem~\ref{theorem:cKdV} holds true verbatim.
{These two generalizations are obtained by  studying the local behavior of $\Psi_\s$ near the logarithmic singularities at the points $\xi_j$, as is done in the end of the proof of~\cite[Proposition~4.1]{CafassoClaeysRuzza}, cf. equations~(4.5) and~(4.6) there.}

\section{Asymptotics}\label{sec:asymptotics}

\subsection{Outline}\label{section:41}

The goal of this section is to prove Theorem~\ref{thm:asy}. 

The proof of part (i) of Theorem \ref{thm:asy} will rely on elementary operator estimates, starting from the analogue of the factorization \eqref{eq:idJanossy1} in the cKdV variables,
\begin{equation}\label{eq:factorization1XT}
J_\s(X,T|\underline \nu)=\det\left(K_{X,T}^{\rm Ai}(\underline \nu,\underline \nu)\right)\det_{L^2(\R)}\left(1-\mathcal M_{\sqrt\s}\widehat{\mathcal H}_{X,T}^{\underline \nu}\mathcal M_{\sqrt\s}\right),
\end{equation}
where
$\widehat{\mathcal H}_{X,T}^{\Ai}$ is the integral operator with kernel, similarly to \eqref{eq:Palmkernel},
\be\label{eq:PalmXT}\wh H_{X,T}^{\underline \nu}(\ll,\mu)
:=\frac{\det K_{X,T}^{\Ai}\bigl((\ll,\underline \nu),(\mu,\underline \nu)\bigr)}{\det K_{X,T}^{\Ai}(\underline \nu,\underline \nu)}
=K_{X,T}^\Ai(\ll,\mu)-K_{X,T}^\Ai(\ll,\underline \nu)K_{X,T}^\Ai(\underline \nu,\underline \nu)^{-1} K_{X,T}^\Ai(\underline \nu,\mu).\ee
Consequently, by \eqref{eq:Tpotential}, we also have
\be
\label{eq:decompositionasy1}
V_\s(X,T|\underline\nu)=V_{\s=0}(X,T|\underline \nu)+\pa_X^2\log\det_{L^2(\R)}\left(1-\mathcal M_{\sqrt\s}\widehat{\mathcal H}_{X,T}^{\underline \nu}\mathcal M_{\sqrt\s}\right).
\ee
We will prove in Section~\ref{section:righttail} that the second factor in \eqref{eq:factorization1XT} is close to $1$ and that the second term in \eqref{eq:decompositionasy1} is close to $0$, and this will result in part (i) of Theorem \ref{thm:asy}.

\medskip

For part (ii) of Theorem \ref{thm:asy}, we will instead use the analogue of \eqref{eq:idJanossy2} in the cKdV variables $X,T$.
Using~\eqref{eq:Janossyreduction}, \eqref{thm1eq1}, and \eqref{thm1eq2}, we obtain the identity
\be
\label{eq:factorizationasy}
J_\s(X,T|\underline\nu)=\det\bigl(\wh L_{X,T}^\s(\nu_i,\nu_j)\bigr)_{i,j=1}^m\,J_\s(X,T|\emptyset)
\ee
where
\be
\label{eq:defLXT}
\wh L_{X,T}^\s(\ll,\mu) := T^{-\frac 13} L_{XT^{-\frac 13}}^{\wt\s}(T^{-\frac 13}\ll,T^{-\frac 13}\mu),
\ee
with $\wt\s(\ll)=\s(T^{\frac 13}\ll)$ as in~\eqref{eq:Janossyreduction}, which can be rewritten by~\eqref{thm1eq2} and~\eqref{eq:whTheta}--\eqref{eq:newwave} as
\begin{align}
\nonumber
\wh L_{X,T}^\s(\ll,\mu)
&=T^{\frac 12}\frac{\wh\varphi_\s(\ll;X,T|\emptyset)\pa_X\wh\varphi_\s(\mu;X,T|\emptyset)-\wh\varphi_\s(\mu;X,T|\emptyset)\pa_X\wh\varphi_\s(\ll;X,T|\emptyset)}{\ll-\mu}\\
&=\label{eq:kernelXTTheta}
\frac{\left(\wh\Theta_\s(\mu;X,T|\emptyset)^{-1}\wh\Theta_\s(\lambda;X,T|\emptyset)\right)_{2,1}}{2\pi(\lambda-\mu)}.
\end{align}
It also follows that
\be
\label{eq:decompositionasy}
V_\s(X,T|\underline\nu)=V_\s(X,T|\emptyset)+\pa_X^2\log\det\bigl(\wh L_{X,T}^\s(\nu_i,\nu_j)\bigr)_{i,j=1}^m.
\ee

The asymptotic behavior of the second factor in \eqref{eq:factorizationasy} and of the first term in \eqref{eq:decompositionasy} has been established in~\cite{CafassoClaeysRuzza,CharlierClaeysRuzza}, and can be summarized as follows in the cases where $XT^{-\frac 12}\to -\infty$.

\begin{theorem}\label{thm:CharlierClaeysRuzza}
Let $\s$ satisfy Assumption~\ref{assumption:strong}.
For any $T_0>0$ there exists $K>0$ such that
\begin{align}
\label{eq:asyempty3LogJ}
\log J_\s(X,T|\emptyset)&=\rho^3T^2\left(-\frac{4}{15}\left(1-\xi\right)^{\frac 52}+\frac 4{15}-\frac 23\xi+\frac 12\xi^2\right)+O\left(|X|^{\frac 32}T^{-\frac 12}\right),\\
\label{eq:asyempty3V}
V_\s(X,T|\emptyset)&=\rho\left(1-\sqrt{1-\xi}\,\right)+O\left(|X|^{-\frac 12}T^{-\frac 12}\right),
\end{align}
where $\rho:=c_+^2/\pi^2$ and $\xi:=X/(\rho T)$, uniformly for $X\leq -KT^{\frac 12}$ ant $T\geq T_0$.
\end{theorem}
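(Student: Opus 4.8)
The plan is to deduce this from the asymptotic analysis of~\cite{CafassoClaeysRuzza,CharlierClaeysRuzza} after a change of variables, followed by a matching of constants. First I would use~\eqref{eq:Janossyreduction} and~\eqref{eq:Tpotential}: writing $\wt\s(\ll):=\s(T^{\frac13}\ll)$ and $s:=XT^{-\frac13}$, one has $J_\s(X,T|\emptyset)=j_{\wt\s}(s|\emptyset)$ and $V_\s(X,T|\emptyset)=T^{-\frac23}v_{\wt\s}(s|\emptyset)$ with $v_{\wt\s}=\pa_s^2\log j_{\wt\s}(\cdot|\emptyset)$. If $\s$ satisfies Assumption~\ref{assumption:strong} with constants $c_\pm,c_\pm',\epsilon$, then $\wt\s$ satisfies it with $c_\pm,\epsilon$ replaced by $c_\pm T^{\frac13},\epsilon T^{\frac13}$ and $c_\pm'$ unchanged, so the exponential steepness of the Fermi-type transition of $\wt\s$ diverges as $T\to\infty$. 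Setting $\wt\rho:=(c_+T^{\frac13})^2/\pi^2=\rho T^{\frac23}$ and $\wt\xi:=s/\wt\rho$, the identities $\wt\xi=\xi$, $(\wt\rho)^3=\rho^3T^2$, $T^{-\frac23}\wt\rho=\rho$, $|s|^{\frac32}=|X|^{\frac32}T^{-\frac12}$, $T^{-\frac23}|s|^{-\frac12}=|X|^{-\frac12}T^{-\frac12}$, and $s\le -KT^{\frac16}\iff X\le-KT^{\frac12}$ show that the claim is equivalent to $\log j_{\wt\s}(s|\emptyset)=(\wt\rho)^3\bigl(-\tfrac4{15}(1-\xi)^{\frac52}+\tfrac4{15}-\tfrac23\xi+\tfrac12\xi^2\bigr)+O(|s|^{\frac32})$ and $v_{\wt\s}(s|\emptyset)=\wt\rho(1-\sqrt{1-\xi})+O(|s|^{-\frac12})$, to hold uniformly in the double-scaling regime $s\le-KT^{\frac16}$, $T\ge T_0$, which is precisely the regime treated in the cited works.

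Next I would recall the Riemann--Hilbert mechanism. By Proposition~\ref{prop:dlogJ} and~\eqref{eq:dlogJintermsofp}, $\pa_s\log j_{\wt\s}(s|\emptyset)=-\alpha_{\wt\s}(s)=\tfrac{s^2}4-p_{\wt\s}(s|\emptyset)$, so both asymptotics reduce to controlling the large-$\lambda$ coefficients of the solution of the RH problem for $Y_{\wt\s}(\cdot;s)$. After conjugating its jump by $\Phi^\Ai_s$ it becomes $I+\Phi^\Ai_s\bigl(\begin{smallmatrix}0&\wt\s\\0&0\end{smallmatrix}\bigr)(\Phi^\Ai_s)^{-1}$; since $\log\tfrac1{1-\s}$ is asymptotically linear with slope $c_+$ at $+\infty$, after the $T^{\frac13}$-dilation in the spectral variable the relevant external field combines the cubic growth coming from the Airy asymptotics with a linear term of slope $c_+T^{\frac13}$. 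One then performs a Deift--Zhou steepest-descent analysis with a $g$-function attached to the corresponding equilibrium measure: in the regime $s\le-KT^{\frac16}$ this measure is supported on a single band, with a soft (square-root vanishing) edge whose location is governed by $\xi$ and a transition region near the zero of $\wt\s$ that stays a definite distance from the soft edge, and the variational problem is explicitly solvable. Assembling the outer parametrix from the $g$-function together with an Airy parametrix at the soft edge and an explicit model at the transition leaves a final small-norm RH problem whose jump is $I+O(\cdot)$ uniformly for $s\le-KT^{\frac16}$, $T\ge T_0$.

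Then I would read off the asymptotics: the endpoints and the first moment of the equilibrium measure yield $v_{\wt\s}(s|\emptyset)=\wt\rho(1-\sqrt{1-\xi})+O(|s|^{-\frac12})$ from the $1/\lambda$-coefficient of the outer parametrix, while integrating $\pa_s\log j_{\wt\s}$ twice, using $\log j_{\wt\s}(s|\emptyset)\to0$ and $\pa_s\log j_{\wt\s}(s|\emptyset)\to0$ as $s\to+\infty$ (cf. Section~\ref{sec:asymptoticssplusinfty} and the proof of Theorem~\ref{thm:Janossyphi}), produces the logarithmic-energy expression with error $O(|s|^{\frac32})$ coming from the small-norm bound. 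Undoing the change of variables then gives~\eqref{eq:asyempty3LogJ}--\eqref{eq:asyempty3V}.

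The step I expect to be the main obstacle — and the reason the result is attributed to~\cite{CafassoClaeysRuzza,CharlierClaeysRuzza} rather than redone here — is the uniformity of the steepest-descent estimates in the large steepness parameter $c_+T^{\frac13}$: one must check that the local parametrices glue together uniformly for $T\ge T_0$ and that the soft edge stays separated from the transition region throughout $X\le-KT^{\frac12}$. Since~\cite{CharlierClaeysRuzza} (and~\cite{CafassoClaeysRuzza} for the $m=0$ case) carries out exactly this analysis, the remaining work is the routine verification that the functions $\s$ of Assumption~\ref{assumption:strong} and the present regime fall within their hypotheses, together with the constant-matching sketched above.
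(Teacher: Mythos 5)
Your proposal is correct and takes essentially the same route as the paper: the paper offers no proof for this theorem but simply states that the result is contained in \cite[Theorem~1.3]{CharlierClaeysRuzza}, to be read through the change of variables between the KdV and cKdV coordinates. Your reformulation via $s=XT^{-1/3}$ and $\wt\s(\ll)=\s(T^{1/3}\ll)$, the observation that $\wt\s$ satisfies Assumption~\ref{assumption:strong} with $c_\pm\mapsto c_\pm T^{1/3}$, and the resulting dictionary $\wt\rho=\rho T^{2/3}$, $\wt\xi=\xi$, $(\wt\rho)^3=\rho^3T^2$, $|s|^{3/2}=|X|^{3/2}T^{-1/2}$, etc., correctly encode that citation, and your Riemann--Hilbert sketch agrees with the machinery recalled in Section~\ref{section:lefttail}.
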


This result is contained in~\cite[Theorem~1.3]{CharlierClaeysRuzza}.

Therefore, in order to prove {part (ii) of} Theorem~\ref{thm:asy} we only need to study, in Section~\ref{section:lefttail}, the additional contributions to~\eqref{eq:factorizationasy} and~\eqref{eq:decompositionasy} coming from $\det\bigl(\wh L_{X,T}^\s(\nu_i,\nu_j)\bigr)_{i,j=1}^m$.

\subsection{Right tail\texorpdfstring{: $XT^{-\frac 13}\to\infty$}{}}
\label{section:righttail}

We start with a Fredholm determinant estimate for the operator $\mathcal M_{\sqrt\s}\widehat{\mathcal H}_{X,T}^{\Ai}\mathcal M_{\sqrt\s}$, not only valid for large positive $X$, but also for complex large $X$ with $\arg X$ sufficiently small.

\begin{lemma}\label{lemma:H}
Let $\s$ satisfy Assumption~\ref{assumption:weak} and let $\underline\nu=(\nu_1,\dots,\nu_m)$ with $\nu_i\not=\nu_j$ for all $i\not=j$.
For any $T_0>0$ there exist $M,c,\delta>0$ such that
\be
\det_{L^2(\R)}\left(1-\mathcal M_{\sqrt\s}\widehat{\mathcal H}_{X,T}^{\underline\nu}\mathcal M_{\sqrt\s}\right)=1+O\bigl(\e^{-cXT^{-\frac 13}}\bigr),
\ee
uniformly for $|X|\geq MT^{\frac 13}$, $|\arg X|<\delta$, and $T\geq T_0$.
\end{lemma}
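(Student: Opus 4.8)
The plan is to estimate the Fredholm determinant directly through the trace norm of the operator $\mathcal M_{\sqrt\s}\widehat{\mathcal H}^{\underline\nu}_{X,T}\mathcal M_{\sqrt\s}$, using the elementary bound $\bigl|\det_{L^2(\R)}(1-A)-1\bigr|\le \mathrm{tr}|A|\cdot\e^{1+\mathrm{tr}|A|}$ valid for any trace-class $A$; thus it suffices to prove $\mathrm{tr}\bigl|\mathcal M_{\sqrt\s}\widehat{\mathcal H}^{\underline\nu}_{X,T}\mathcal M_{\sqrt\s}\bigr|=O\bigl(\e^{-cXT^{-1/3}}\bigr)$ uniformly in the stated region. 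For real $X$ this is the cleanest step: by \eqref{eq:PalmXT} and Lemma~\ref{lemma2}, $\widehat{\mathcal H}^{\underline\nu}_{X,T}$ is the orthogonal projection of $L^2(\R)$ onto $\mathrm{ran}(\mathcal K^\Ai_{X,T})\ominus\mathrm{span}\bigl\{K^\Ai_{X,T}(\cdot,\nu_j)\bigr\}_{j=1}^m$, so $0\le\widehat{\mathcal H}^{\underline\nu}_{X,T}\le\mathcal K^\Ai_{X,T}$ — the very inequality behind the proof of Proposition~\ref{prop:sec2} — and hence $\mathcal M_{\sqrt\s}\widehat{\mathcal H}^{\underline\nu}_{X,T}\mathcal M_{\sqrt\s}$ is a nonnegative trace-class operator with $\mathrm{tr}\,\mathcal M_{\sqrt\s}\widehat{\mathcal H}^{\underline\nu}_{X,T}\mathcal M_{\sqrt\s}\le\int_\R\s(\ll)\,K^\Ai_{X,T}(\ll,\ll)\,\d\ll$.

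I would then estimate $\int_\R\s(\ll)\,K^\Ai_{X,T}(\ll,\ll)\,\d\ll$ by splitting $\R$ at $\ll=-X/2$. Putting $u:=T^{-1/3}(\ll+X)$, so that $K^\Ai_{X,T}(\ll,\ll)=T^{-1/3}K^\Ai(u,u)$: on $\{\ll\ge -X/2\}$ one has $u\ge\tfrac12 XT^{-1/3}\ge M/2$, and using $K^\Ai(u,u)=O\bigl(\e^{-\frac43 u^{3/2}}\bigr)$ together with $u^{3/2}\ge (M/2)^{1/2}u$ on this range, the change of variables $v=\ll+X$ gives a contribution bounded by a constant times $\e^{-\frac43 T^{-1/2}(X/2)^{3/2}}$, which is $O\bigl(\e^{-cXT^{-1/3}}\bigr)$; the essential point is that on $\{|X|\ge MT^{1/3}\}$ the Airy rate $X^{3/2}T^{-1/2}$ is at least $M^{1/2}\,XT^{-1/3}$, and this is exactly what forces the threshold $M$ to be taken large (with $c\asymp M^{1/2}$). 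On $\{\ll<-X/2\}$ the factor $\s(\ll)=O(|\ll|^{-3/2-\kappa})$ is small, and combining this with the crude bound $K^\Ai_{X,T}(\ll,\ll)=O\bigl(T^{-1/3}+T^{-1/2}|\ll+X|^{1/2}\bigr)$ (coming from the behaviour of $K^\Ai(u,u)$ for $u\le 0$) and, again, $|\ll|\ge X\ge MT^{1/3}$, this region contributes only a lower-order term; combining the two regions yields the assertion for real $X$.

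For complex $X$ with $|\arg X|<\delta$, the factor $\det\bigl(K^\Ai_{X,T}(\underline\nu,\underline\nu)\bigr)$ in \eqref{eq:factorization1XT} is entire and $J_\s(X,T|\underline\nu)$ continues analytically in $X$ to a complex neighbourhood of $\{|X|\ge MT^{1/3}\}$, so the quotient $\det_{L^2(\R)}\bigl(1-\mathcal M_{\sqrt\s}\widehat{\mathcal H}^{\underline\nu}_{X,T}\mathcal M_{\sqrt\s}\bigr)$ extends analytically there. The projection inequality is no longer available (the kernel is not Hermitian), so instead I would estimate the Hilbert--Schmidt norm of the integral operator with kernel $\sqrt{\s(\ll)}\,\widehat H^{\underline\nu}_{X,T}(\ll,\mu)\,\sqrt{\s(\mu)}$ directly from the explicit formula \eqref{eq:PalmXT}, exploiting that the asymptotics of $\Ai$, $\Ai'$ and hence of $K^\Ai$ are uniform for arguments in a sufficiently narrow complex sector about $\R$ (shrinking $\delta$ if necessary). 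I expect this complex-sector analysis to be the main obstacle: one must keep the exponential rate $\propto XT^{-1/3}$ uniform in $T\ge T_0$, control the Palm subtraction $K^\Ai_{X,T}(\ll,\underline\nu)K^\Ai_{X,T}(\underline\nu,\underline\nu)^{-1}K^\Ai_{X,T}(\underline\nu,\ll)$ for $\ll$ deep in the bulk where the individual Airy factors are large, and verify that the sector can be chosen so that the growth of $\Ai$ along rays near the negative real axis is still dominated by the polynomial decay of $\s$. The real-$X$ estimate is, by contrast, essentially routine once the operator inequality above is in hand.
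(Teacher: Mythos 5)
Your approach is genuinely different from the paper's. For real $X$ you exploit the operator positivity $0\le\widehat{\mathcal H}^{\underline\nu}_{X,T}\le\mathcal K^\Ai_{X,T}$ (correct: as you observe, $\widehat{\mathcal H}^{\underline\nu}_{X,T}$ is an orthogonal projection, being the one onto $\mathrm{ran}\,\mathcal K^\Ai_{X,T}\ominus\mathrm{span}\{K^\Ai_{X,T}(\cdot,\nu_j)\}$, so that $\mathcal M_{\sqrt\s}\widehat{\mathcal H}^{\underline\nu}_{X,T}\mathcal M_{\sqrt\s}$ is nonnegative trace-class and dominated by $\mathcal M_{\sqrt\s}\mathcal K^\Ai_{X,T}\mathcal M_{\sqrt\s}$), then invoke the elementary determinant--trace bound. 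The paper instead derives a pointwise estimate $|\widehat H^{\underline\nu}_{X,T}(\lambda,\mu)|=O\bigl(|\lambda\mu|^{1/4}\e^{-cT^{-1/3}\Re(\lambda+X)_+-cT^{-1/3}\Re(\mu+X)_+}\bigr)$ and bounds the full Fredholm series term-by-term via Hadamard's inequality, which handles real and complex $X$ in one stroke. Your positivity argument is cleaner on $\R$, but it has no complex analogue (as you yourself note), and your complex-$X$ discussion is a plan rather than a proof, so the sector part of the Lemma is not actually established in your proposal.

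More importantly, the central trace estimate has a gap. You dismiss the contribution from $\{\ll<-X/2\}$ as ``lower-order'', but with $\s(\ll)=O(|\ll|^{-3/2-\kappa})$ and $K^\Ai_{X,T}(\ll,\ll)\asymp T^{-1/2}|\ll+X|^{1/2}$ there, the substitution $\ll=-X-y$ gives a contribution of order $T^{-1/2}X^{-\kappa}$, which is only polynomially small. For $T$ held at $T_0$ and $XT^{-1/3}\to+\infty$, a polynomial $X^{-\kappa}$ is \emph{not} $O(\e^{-cXT^{-1/3}})$, so the chain of reasoning does not reach the stated exponential bound. You should be aware that the paper's own proof faces the very same difficulty: the step that extracts the factor $\e^{-ncT^{-1/3}\Re X}$ from $\prod_j\e^{-2cT^{-1/3}\Re(\lambda_j+X)_+}$ tacitly uses $(\lambda+X)_+\ge\Re X$, which fails for $\lambda<0$, and the residual integral $\int_\R\e^{-cT^{-1/3}\Re(\lambda+X)_+}|\ll|^{1/2}\s(\ll)\d\ll$ again picks up a contribution of order $T^{-1/2}X^{-\kappa}$ from $\lambda\lesssim -X$ (and may even diverge, since Assumption~\ref{assumption:weak} imposes no decay of $\sigma$ at $+\infty$). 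Before attempting to repair the complex-$X$ estimate, it is worth checking whether the exponential rate claimed in the Lemma can hold at all when $\sigma$ saturates the polynomial decay of Assumption~\ref{assumption:weak} and $T$ stays bounded while $X\to\infty$; the trace heuristic $\det(1-A)\approx 1-\tr A$ suggests the true decay in that corner of the region is $X^{-\kappa}$, not exponential.
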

\begin{proof}
Using the integral representation for the Airy kernel in  \eqref{eq:standardairykernel}, \eqref{eq:shifteddilatedAirykernel}, and the asymptotic behavior for the Airy function, it is straightforward to verify that
\be
\left|K_{X,T}^{\rm Ai}(\lambda,\mu)\right|=O\left(|\lambda\mu|^{\frac 14}\e^{-cT^{-\frac 13}\Re(\lambda+X)_+}\e^{-cT^{-\frac 13}\Re(\mu+X)_+}\right),
\ee
uniformly for $|X|\geq MT^{\frac 13}$, $|\arg X|<\delta$, and $T\geq T_0$, where $R_+=\max\{R,0\}$, for any $c>0$, and uniformly for $\lambda,\mu\in\mathbb R$. 
Hence, by \eqref{eq:PalmXT},
\be
\left|\wh H_{X,T}^{\underline\nu}(\lambda,\mu)\right|=O\left(|\lambda\mu|^{\frac 14}\e^{-cT^{-\frac 13}\Re(\lambda+X)_+}\e^{-cT^{-\frac 13}\Re(\mu+X)_+}\right),
\ee
uniformly for the same values of $X,T,\lambda,\mu$.
We can now use the triangular inequality in the Fredholm series
\be
\det_{L^2(\R)}\left(1-\mathcal M_{\sqrt\s}\widehat{\mathcal H}_{X,T}^{\underline\nu}\mathcal M_{\sqrt\s}\right)-1=\sum_{n=1}^\infty\frac{(-1)^n}{n!}\int_{\mathbb R^n}\det\bigl(\wh H^{\underline\nu}_{X,T}(\lambda_i,\lambda_j)\bigr)_{i,j=1}^n \prod_{j=1}^n\s(\lambda_j)\d\lambda_j,
\ee
in order to obtain
\begin{align}
\nonumber
&\left|\det_{L^2(\R)}\left(1-\mathcal M_{\sqrt\s}\widehat{\mathcal H}_{X,T}^{\underline\nu}\mathcal M_{\sqrt\s}\right)-1\right|\\
\nonumber
&\qquad\qquad\leq \sum_{n=1}^\infty\frac{1}{n!}\int_{\mathbb R^n}\det\bigl(O(1)\bigr)_{i,j=1}^n\prod_{j=1}^n\e^{-2cT^{-\frac 13}\Re(\lambda_j+X)_+}|\lambda_j|^{\frac 12}\s(\lambda_j)\d\lambda_j\\
\nonumber
&\qquad\qquad=O\!\left(\sum_{n=1}^\infty\frac{n^{\frac n2}\e^{-ncT^{-\frac 13}\Re X}}{n!}\left(\int_{\mathbb R}\e^{-cT^{-\frac 13}\Re(\lambda+X)_+}|\lambda|^{\frac 12}\s(\lambda)\d\lambda\right)^n\right)
=O\!\left(\sum_{n=1}^\infty\frac{n^{\frac n2}}{n!}\xi^{n}\right)
\end{align}
where Hadamard's inequality guarantees that $\det\bigl(O(1)\bigr)_{i,j=1}^n=O(n^{\frac n2})$, and in the last step we set $\xi:=\e^{-cT^{-\frac 13}\Re X}\int_{\mathbb R}\e^{-cT^{-\frac 13}\Re(\lambda+X)_+}|\lambda|^{\frac 12}\s(\lambda)\d\lambda$.
Finally, $\sum_{n=1}^\infty\frac{n^{\frac n2}}{n!}\xi^{n}$ is a power series in~$\xi$ with infinite radius of convergence, hence $\sum_{n=1}^\infty\frac{n^{\frac n2}}{n!}\xi^{n}=O(\xi)$ when $\xi\to 0$; since
\be
|\xi|\leq\e^{-cT^{-\frac 13}\Re X}\int_{\mathbb R}|\lambda|^{\frac 12}\s(\lambda)\d\lambda,
\ee
and {since} the integral on the right-hand side is finite by Assumption~\ref{assumption:weak}, we have $\xi=O(\e^{-cT^{-\frac 13}\Re X})$ and the proof is complete.
\end{proof}

Next, we need to study the asymptotics for the finite determinant in~\eqref{eq:factorization1XT}.
To this end, it is convenient to first state and prove a general lemma.

\begin{lemma}
\label{generallemma}
Let $F:\R\to\C$ and $G:\R\to\C$ be smooth functions admitting asymptotic expansions
\be
\label{eq:expFG}
F(z)\sim z^{-\frac 14}\biggl(1+\sum_{i\geq 1}f_iz^{-\frac i2}\biggr),\quad G(z)\sim -z^{\frac 14}\biggl(1+\sum_{i\geq 1}g_iz^{-\frac i2}\biggr)
\ee
as $z\to+\infty$.
Let
\be
H(\lambda,\mu)=
\begin{cases}
\displaystyle\frac{F(\lambda)G(\mu)-G(\lambda)F(\mu)}{\lambda-\mu},&\mbox{if }\lambda\not=\mu,\\
\frac{\d F}{\d\lambda}(\lambda)G(\lambda)-\frac{\d G}{\d\lambda}(\lambda)F(\lambda),&\mbox{if }\lambda=\mu.
\end{cases}
\ee
Then, for all $m\geq 1$ and $\nu_1,\dots,\nu_m\in\R$ distinct,
\be
\label{eq:finalasymprel}
\det\bigl( H(\nu_i+s,\nu_j+s)\bigr)_{i,j=1}^m\sim 2^{-m(2m-1)}s^{-m^2}\prod_{1\leq i<j\leq m}(\nu_i-\nu_j)^2,\quad s\to+\infty.
\ee
If $F,G$ admit analytic extensions to a sector in the complex plane including the positive real half-line, and if the asymptotic expansions~\eqref{eq:expFG} hold true as $z\to\infty$ in this sector, then also the asymptotic relation~\eqref{eq:finalasymprel} holds true as $s\to\infty$ in this sector.
\end{lemma}
\begin{proof}
By the properties of $F$ and $G$, we have the asymptotic expansion
\be
H(\lambda+s,\mu+s)\sim\sum_{m\geq 2}s^{-\frac m2} h^{[m]}(\lambda,\mu),\qquad s\to+\infty.
\ee
Here, $h^{[m]}(\lambda,\mu)=\sum_{a,b\geq 0}c_{a,b}^{[m]}\lambda^a\mu^b$ are symmetric polynomials of $\lambda,\mu$, with coefficients $c_{a,b}^{[m]}$ which are polynomial functions of $f_i,g_i$.
The polynomials $h^{[m]}(\lambda,\mu)$ have degree $\lfloor\frac m2\rfloor-1$, i.e., $c_{a,b}^{[m]}=0$ when $\frac m2<a+b+1$.
Moreover, when $m$ is even, the top-degree part in $\lambda,\mu$ is independent of $f_i,g_i$, namely, for all $a,b\geq 0$ we have
\be
c_{a,b}^{[2(a+b+1)]}=c_{a,b}^{[2(a+b+1)]}\bigr|_{f_i=0,g_i=0}.
\ee
Therefore, we can write the expansion of the $m\times m$ matrix with entries $H(\nu_i+s,\nu_j+s)$ as
\be
\label{eq:expHinfinitematrices}
\bigl(H(\nu_i+s,\nu_j+s)\bigr)_{i,j=1}^m \sim \begin{pmatrix}
1 & \nu_1 & \nu_1^2 & \cdots \\
\vdots & \vdots & \vdots& \cdots \\
1 & \nu_m & \nu_m^2 & \cdots \\
\end{pmatrix}
\begin{pmatrix}
C_{0,0}(s) & C_{0,1}(s) & C_{0,2}(s) & \cdots \\
C_{1,0}(s) & C_{1,1}(s) & C_{1,2}(s) & \cdots \\
C_{2,0}(s) & C_{2,1}(s) & C_{2,2}(s) & \cdots \\
\vdots & \vdots & \vdots& \ddots
\end{pmatrix}
\begin{pmatrix}
1 & \cdots & 1 \\
\nu_1 & \cdots & \nu_m \\
\nu_1^2 & \cdots & \nu_m^2\\
\vdots & \vdots & \vdots
\end{pmatrix}
\ee
where we express the formal series representing the asymptotic expansion on the right as a product of matrices of size $m\times\infty$, $\infty\times\infty$, and $\infty\times m$, respectively, with the formal series $C_{a,b}(s)$ given by
\be
C_{a,b}(s)=\sum_{m\geq 2(a+b+1)}s^{-\frac{m}2}c^{[m]}_{a,b}=c^{[2(a+b+1)]}_{a,b}s^{-a-b-1}+O(s^{-a-b-\frac 32})\qquad a,b\geq 0.
\ee
Although these matrices are infinite,~\eqref{eq:expHinfinitematrices} makes sense to express the large-$s$ expansion, since, due to the structure of the middle matrix, only finitely many terms contribute to each power of $s^{-\frac 12}$.

Writing the expansion~\eqref{eq:expHinfinitematrices} more compactly as $\bigl(H(\nu_i+s,\nu_j+s)\bigr)_{i,j=1}^m\sim \mathcal V\cdot\mathcal  C(s)\cdot\mathcal V^\top$, we can use the Binet--Cauchy identity to obtain
\be
\det\bigl(H(\nu_i+s,\nu_j+s)\bigr)_{i,j=1}^m\sim\sum_{S,S'}\det(\mathcal V_{[m],S})\det(\mathcal C_{S,S'})\det(\mathcal V_{[m],S'})
\ee
where the sum runs over pairs of subsets $S,S'\subset\lbrace 1,2,3,\dots\rbrace$ of size $m$ and, in general, for sets $I,J$ we denote $\mathcal A_{I,J}$ the submatrix of $\mathcal A$ with row and column indices in $I$ and $J$ respectively, and we also use the notation $[m]=\lbrace 1,\dots,m\rbrace$.
Extracting the leading order we obtain (by simplifying the Vandermonde determinant)
\begin{align}
\nonumber
\det\bigl(H(\nu_i+s,\nu_j+s)\bigr)_{i,j=1}^m&=\biggl(\det\bigl(\nu_a^{b-1}\bigr)_{a,b=1}^m\biggr)^2\det\bigl(c_{a,b}^{[2(a+b+1)]}s^{-a-b-1}\bigr)_{a,b=0}^{m-1}\,+\, O(s^{-m^2-\frac 12})
\\
&=s^{-m^2}d_m\prod_{1\leq i<j\leq m}(\nu_i-\nu_j)^2 +O(s^{-m^2-\frac 12})
\end{align}
for a constant $d_m=\det\bigl(c_{a,b}^{[2(a+b+1)]}\bigr)_{a,b=0}^{m-1}$ independent of $f_i,g_i$.
As such, the constant can be computed by setting $f_i=0,g_i=0$ for all $i\geq 1$.
Therefore, let us denote $H_0=H\bigr|_{f_i=0,g_i=0}$, namely
\be
H_0(\lambda,\mu)=\frac{\lambda^{-\frac 14}\mu^{-\frac 14}}{\lambda^{\frac 12}+\mu^{\frac 12}}.
\ee
An application of Cauchy determinant formula gives
\be
\det\bigl(H_0(\nu_i+s,\nu_j+s)\bigr)
=\frac{\prod_{1\leq i<j\leq m}\bigl((\nu_i+s)^{\frac 12}-(\nu_j+s)^{\frac 12}\bigr)^2}{2^m\prod_{i=1}^m(\nu_i+s)\prod_{1\leq i<j\leq m}\bigl((\nu_i+s)^{\frac 12}+(\nu_j+s)^{\frac 12}\bigr)^2}
\sim \frac{\prod_{1\leq i<j\leq m}(\nu_i-\nu_j)^2}{2^{m(2m-1)}s^{m^2}},
\ee
as $s\to+\infty$, yielding $d_m=2^{-m(2m-1)}$ and completing the proof.
\end{proof}

By standard asymptotic properties of the Airy function and its derivative, the functions
\be 
F(z)=2\sqrt\pi \Ai(z)\exp\biggl(\frac 23 z^{\frac 32}\biggr),\quad
G(z)=2\sqrt\pi\Ai'(z)\exp\biggl(\frac 23 z^{\frac 32}\biggr)
\ee
satisfy the assumptions of Lemma~\ref{generallemma}, and in such case
\be
H(\lambda,\mu)=4\pi K^\Ai(\lambda,\mu)\exp\biggl(\frac 23\bigl( \lambda^{\frac 32}+\mu^{\frac 32}\bigr)\biggr)
\ee
which implies
\be
K_{X,T}^\Ai(\lambda,\mu)=\frac 1{4\pi} T^{-\frac 13}H(T^{-\frac 13}(\lambda+X),T^{-\frac 13}(\mu+X))\exp\biggl(-\frac 23T^{-\frac 12}\bigl((\lambda+X)^{\frac 32}+(\mu+X)^{\frac 32}\bigr)\biggr)\,.
\ee
By identifying $s$ with $XT^{-\frac 13}$ and replacing $\nu_i\mapsto T^{-\frac 13}\nu_i$ in Lemma~\ref{generallemma}, we obtain the following asymptotics for the correlation functions of the shifted and dilated Airy process:
\be
\label{eq:asympcorrAiry}
\det\bigl(K^\Ai_{X,T}(\nu_i,\nu_j)\bigr)_{i,j=1}^m \sim 
2^{-m(2m+1)}\pi^{-m} X^{-m^2}\prod_{1\leq i\leq m}\e^{-\frac 43T^{-\frac 12}(\nu_i+X)^{\frac 32}}\prod_{1\leq i<j\leq k}(\nu_i-\nu_j)^2
\ee
as $XT^{-\frac 13}\to+\infty$, uniformly for $T\geq T_0$.
Here we also use that the asymptotic relation~\eqref{eq:finalasymprel} is uniform in $c$ when $\nu_i\mapsto c\nu_i$ as long as $c>0$ is bounded above, as it follows from the proof of Lemma~\ref{generallemma}.
This proves~\eqref{eq:puresolition-asDet}; this argument (along with standard properties of the Airy function) also shows that the same asymptotics holds true as $|X|T^{-\frac 13}\to+\infty$, uniformly for $T\geq T_0$ and $|\arg X|<\delta$ for some $\delta>0$, which is needed to obtain~\eqref{eq:soliton-as}.

\medskip

Taking logarithms on both sides in \eqref{eq:factorization1XT}, we obtain
\be
\label{eq:factlog}
\log J_\s(X,T|\underline \nu)=\log \det\left(K_{X,T}^{\rm Ai}(\underline \nu,\underline \nu)\right)+\log\det_{L^2(\R)}\left(1-\mathcal M_{\sqrt\s}\widehat{\mathcal H}_{X,T}^{\Ai}\mathcal M_{\sqrt\s}\right),
\ee
and it follows from Lemma \ref{lemma:H} (for real $X$) that
\be
\det_{L^2(\R)}\left(1-\mathcal M_{\sqrt\s}\widehat{\mathcal H}_{X,T}^{\Ai}\mathcal M_{\sqrt\s}\right)=1+O(\e^{-cXT^{-\frac 13}})
\ee
as $X,T\to\infty$ uniformly for $X\geq MT^{\frac 13}$ and $T\geq T_0$.
This proves~\eqref{eq:asy1LogJ}.

\medskip

Taking the second logarithmic $X$-derivative in \eqref{eq:asy1LogJ}, we obtain
\be\label{eq:factlog2}V_\s(X,T|\underline \nu)=V_0(X,T|\underline \nu)+\partial_X^2\log\det_{L^2(\R)}\left(1-\mathcal M_{\sqrt\s}\widehat{\mathcal H}_{X,T}^{\Ai}\mathcal M_{\sqrt\s}\right).\ee
Since the estimate from Lemma \ref{lemma:H} holds uniformly for $|X|\geq MT^{\frac 13}$, $|\arg X|<\delta$, and $T\geq T_0$, we can use Cauchy's integral formula for the second derivative to obtain
\be
\pa_X^2\log\det_{L^2(\R)}\left(1-\mathcal M_{\sqrt\s}\widehat{\mathcal H}_{X,T}^{\Ai}\mathcal M_{\sqrt\s}\right)=O(\e^{-cXT^{-\frac 13}}),
\ee
and thus we prove \eqref{eq:asy1V}, so the proof of part (i) of Theorem \ref{thm:asy} is concluded.

\subsection{Left tail\texorpdfstring{: $X/T\to -\infty$}{}}\label{section:lefttail}

In this section we use the results of~\cite{CharlierClaeysRuzza}; {the latter rely on Assumption~\ref{assumption:strong}, which we assume throughout this section.}
We recall the transformation $x=-XT^{-\frac 12}$ and $t=T^{-\frac 12}$ between the cKdV variables of the present paper and the KdV variables of~\cite{CharlierClaeysRuzza}. 
For the ease of notations, we will denote $\wh\Theta(\lambda):=\wh\Theta_\s(\lambda;X,T|\emptyset)$ throughout this section for the function defined in \eqref{eq:whTheta}.
Let us now assume that, for an arbitrary~$T_0>0$ and for a sufficiently large~$K>0$, we have~$X\leq -K T$ and~$T\geq T_0$.
In this regime (in fact, in the larger regime $X\leq -K T^{\frac 12}$), the relevant asymptotics have been studied in~\cite{CharlierClaeysRuzza} via a RH analysis involving a series of transformations which we can condense in the relation
\begin{align}
\nonumber
\wh\Theta_\pm(|X|w)&=\begin{pmatrix}
1 & \wh p \\ 0 & 1
\end{pmatrix}
\e^{\frac{\i\pi}4\s_3}
|X|^{\frac 14\s_3}
\begin{pmatrix}
1 & -\i |X|^{\frac 32}T^{-\frac 12}g_1 \\ 0 & 1
\end{pmatrix}
R(w)
\\
\label{eq:ccr4}
&\quad
\times
(w-a)^{\frac 14\s_3}
G
\begin{pmatrix}
1 & 0 \\ \pm\e^{-|X|^{\frac 32}T^{-\frac 12}\phi_\pm(w)} & 1
\end{pmatrix}
\e^{|X|^{\frac 32}T^{-\frac 12}(g_\pm(w)-g_0)\s_3}
\e^{-\frac{\i\pi}4\s_3},
\end{align}
for $w$ sufficiently close to $0$.
Here, cf.~\cite[equations (4.15), (4.17)]{CharlierClaeysRuzza}, with principal branches for the roots,
\begin{align}
\nonumber
g(w)&=\int_a^w g'(s)\d s,&
g'(w)&=-(w-a)^{\frac 12}\biggl(1+\frac{T^{\frac 12}}{2\pi|X|^{\frac 12}}\int_{-\infty}^a\frac {\s'(|X|s)}{1-\sigma(|X|s)}\frac 1{\sqrt{a-s}}\frac{\d s}{s-w}\biggr),\\
\label{alotofstufffromccr4}
g_0&=\frac {T^{\frac 12}\log\bigl(1-\s(|X|a)\bigr)}{2|X|^{\frac 32}},&
\phi(w)&=2\bigl(g(w)-g_0\bigr)+\frac {T^{\frac 12}\log\bigl(1-\s(|X|w)\bigr)}{|X|^{\frac 32}}.
\end{align}
Moreover, $G$ is given in~\eqref{eq:G} and $\wh p=\wh p_\s(X,T|\emptyset)$.
The value of $a=a(X,T)$ is implicitly defined by the \emph{endpoint condition}, namely $g'_+(w)-g_-'(w)=O\bigl((w-a)^{\frac 12}\bigr)$ as $w\to a$ with $w<a$, cf.~\cite[equation (4.3)]{CharlierClaeysRuzza}.
For $X\leq -K T$ and $T\geq T_0$, $a$ is bounded away from zero and infinity, and by~\cite[Lemma~4.5]{CharlierClaeysRuzza},  
\be
\label{as:Rleft}
R(w)=I+O(|X|^{-\frac32}T^{\frac12}),\qquad \pa_w R(w)=O(|X|^{-\frac32}T^{\frac12}),
\ee 
uniformly in $w$, $T\geq T_0$, $X\leq -KT$.
Finally, the value of $g_1$ is given explicitly in~\cite[equation (4.16)]{CharlierClaeysRuzza} but it is not needed for our current purposes.

As explained in Section \ref{section:41}, in order to describe the behavior of $J_\s(X,T|\underline\nu), V_\s(X,T|\underline\nu)$ in this regime we will use equations \eqref{eq:factorizationasy} and \eqref{eq:decompositionasy}. Thus what is fundamental to understand is the behavior of $\wh L_{X,T}^\s(\nu_i,\nu_j)$. The following two lemmas will show how the kernel behaves on and off the diagonal. 

We are interested in values of $\wh\Theta$ at $\nu=|X|w$, hence we assume throughout this section that $w$ is real and small.
Therefore,~\eqref{eq:ccr4} implies
\be
\wh\Theta_\pm(|X|w)=
\begin{pmatrix}
1& \wh p+\frac{X^2}{T^{\frac 12}}g_1
\\ 0 & 1
\end{pmatrix}|X|^{\frac 14\s_3}O(1)
\begin{pmatrix}
\e^{|X|^{\frac 32}T^{-\frac 12}(g_\pm(w)-g_0)} & 0 \\ \mp\i\e^{|X|^{\frac 32}T^{-\frac 12}(g_\mp(w)-g_0)} & \e^{-|X|^{\frac 32}T^{-\frac 12}(g_\pm(w)-g_0)}
\end{pmatrix}
\ee
where we also use the identity $\phi_\pm=g_\pm-g_\mp$~\cite[equation below (4.17)]{CharlierClaeysRuzza}.
From~\cite[Proposition~4.7 and equation~(1.32)]{CharlierClaeysRuzza} we have
\be 
\label{estimateforlaterref}
\wh p+X^2T^{-\frac 12}g_1=O(|X|^{-1}T^{\frac 12}),
\ee
such that the previous relation implies
\be
\label{eq:resumingsec4}
\wh\Theta_\pm(|X|w)=
\begin{pmatrix}
1& O(|X|^{-1}T^{\frac 12})
\\ 0 & 1
\end{pmatrix}|X|^{\frac 14\s_3}O(1)
\begin{pmatrix}
\e^{|X|^{\frac 32}T^{-\frac 12}(g_\pm(w)-g_0)} & 0 \\ \mp\i\e^{|X|^{\frac 32}T^{-\frac 12}(g_\mp(w)-g_0)} & \e^{-|X|^{\frac 32}T^{-\frac 12}(g_\pm(w)-g_0)}
\end{pmatrix}.
\ee
Finally, we study the last factor, involving $g,g_0$.
By~\eqref{alotofstufffromccr4} and the Sokhotski--Plemelj formula, we have
\be
\label{eq:gprimeSP}
g_\pm'(w)=\mp\i\sqrt{a-w}\left(1+\frac{|X|^{-\frac 12}T^{\frac 12}}{2\pi}\mathrm{p.v.}\!\int_{-\infty}^a\frac {\s'(|X|s)}{1-\sigma(|X|s)}\frac 1{\sqrt{a-s}}\frac{\d s}{s-w}\pm\i\frac{T^{\frac 12}|X|^{-\frac 12}}{2\sqrt{a-w}}\frac {\s'(|X|w)}{1-\sigma(|X|w)}\right),
\ee
where $\mathrm{p.v.}\!\int$ is the principal value integral.
It follows that
\be
\Re g_\pm'(w)=\frac{T^{\frac 12}}{2|X|^{\frac 12}}\frac {\s'(|X|w)}{1-\sigma(|X|w)}
\ee
and so
\be
\label{eq:finalgsec4}
\Re \bigl(g_\pm(\zeta)-g_0)=\int_a^w \Re g'_\pm(s)\d s-g_0=
-\frac{T^{\frac 12}}{2|X|^{\frac 32}}
\log\bigl(1-\sigma(|X|w)\bigr).
\ee
Finally, let us set $w=|X|^{-1}\nu$, for a fixed $\nu$ and sufficiently large $|X|$.
It follows from the last estimates and~\eqref{eq:resumingsec4} that
\be
\wh\Theta_\pm(\nu)=\begin{pmatrix}
1& O(|X|^{-1}T^{\frac 12})
\\ 0 & 1
\end{pmatrix}|X|^{\frac 14\s_3}O(1)
\ee
because from~\eqref{eq:finalgsec4} we have
\be
\label{eq:Tbounded}
\biggl|\e^{|X|^{\frac 32}T^{-\frac 12}\bigl(g_\pm(\nu |X|^{-1})-g_0\bigr)}\biggr|=\frac 1{1-\s(\nu)}
\ee
which is bounded away from $0$ and $\infty$, uniformly in the regime under consideration.
In particular, by~\eqref{eq:whTheta},
\be
\label{eq:finalphiphiprimesec4}
\wh\varphi_\s(\nu;X,T|\emptyset)=O(|X|^{-\frac 14}),\qquad
T^{\frac 12}\pa_X\wh\varphi_\s(\nu;X,T|\emptyset)=O(|X|^{\frac 14}).
\ee

From~\eqref{eq:kernelXTTheta} and~\eqref{eq:finalphiphiprimesec4}, we immediately obtain boundedness of the kernel $\wh L_{X,T}^\s(\nu_1,\nu_2)$ for $\nu_1\neq \nu_2$.
Namely, we have proved the following lemma.

\begin{lemma}
\label{lemmadec1}
Let $T_0>0,\nu_1\not=\nu_2\in\R$ be fixed.
There exists $K>0$ such that
\be
\wh L_{X,T}^\s(\nu_1,\nu_2)=O(1)
\ee
uniformly for $X\leq-KT^{}$ and $T\geq T_0$.
\end{lemma}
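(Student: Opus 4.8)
The plan is to read the bound off directly from the closed-form expression~\eqref{eq:kernelXTTheta} for the kernel, combined with the pointwise estimates~\eqref{eq:finalphiphiprimesec4} for $\wh\varphi_\s(\cdot;X,T|\emptyset)$ and its $X$-derivative, which have just been extracted from the Riemann--Hilbert analysis of~\cite{CharlierClaeysRuzza}. Recalling~\eqref{eq:kernelXTTheta}, for the fixed distinct points $\nu_1,\nu_2$ we may write
\be
\label{proofplandec1}
\wh L_{X,T}^\s(\nu_1,\nu_2)=\frac{1}{\nu_1-\nu_2}\bigl(\wh\varphi_\s(\nu_1;X,T|\emptyset)\,T^{\frac 12}\pa_X\wh\varphi_\s(\nu_2;X,T|\emptyset)-\wh\varphi_\s(\nu_2;X,T|\emptyset)\,T^{\frac 12}\pa_X\wh\varphi_\s(\nu_1;X,T|\emptyset)\bigr),
\ee
that is, $1/(\nu_1-\nu_2)$ times a Wronskian-type bilinear combination of $\wh\varphi_\s$ and $T^{\frac 12}\pa_X\wh\varphi_\s$ evaluated at $\nu_1$ and $\nu_2$.

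First I would invoke~\eqref{eq:finalphiphiprimesec4}, which gives, for $K$ large enough and any fixed $\nu$, the estimates $\wh\varphi_\s(\nu;X,T|\emptyset)=O(|X|^{-\frac 14})$ and $T^{\frac 12}\pa_X\wh\varphi_\s(\nu;X,T|\emptyset)=O(|X|^{\frac 14})$, uniformly for $X\leq -KT$ and $T\geq T_0$; these rest in turn on the reduction~\eqref{eq:resumingsec4}, the small-norm control~\eqref{as:Rleft} on $R$, and the fact~\eqref{eq:Tbounded} that the exponential factors stay bounded away from $0$ and $\infty$, while $|X|\geq KT\geq KT_0$ makes $|X|$ as large as needed. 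Substituting these into~\eqref{proofplandec1}, each of the two products $\wh\varphi_\s(\nu_i;X,T|\emptyset)\cdot T^{\frac 12}\pa_X\wh\varphi_\s(\nu_j;X,T|\emptyset)$ is $O(|X|^{-\frac 14})\cdot O(|X|^{\frac 14})=O(1)$, so the parenthesis in~\eqref{proofplandec1} is $O(1)$; since $\nu_1\neq\nu_2$ are fixed, the prefactor $1/(\nu_1-\nu_2)$ is a fixed nonzero constant, and therefore $\wh L_{X,T}^\s(\nu_1,\nu_2)=O(1)$ uniformly for $X\leq -KT$, $T\geq T_0$, which is the assertion.

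I do not expect a genuine obstacle here: all the analytic content has already been packaged into~\eqref{eq:finalphiphiprimesec4}. The one point worth stressing is that the argument uses the antisymmetric, Wronskian-like structure of $\wh L_{X,T}^\s$ in an essential way: the factor $T^{\frac 12}\pa_X\wh\varphi_\s$ grows like $|X|^{\frac 14}$, and only when paired with the decaying factor $\wh\varphi_\s=O(|X|^{-\frac 14})$ does the product remain bounded, so a crude term-by-term estimate would lose the cancellation. Likewise, the hypothesis $\nu_1\neq\nu_2$ is precisely what makes the $1/(\nu_1-\nu_2)$ prefactor harmless; the confluent case $\nu_1=\nu_2$ genuinely differs and must be treated separately.
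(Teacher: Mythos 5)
Your proof is correct and follows essentially the same route as the paper: read the bound off from the Wronskian-type formula~\eqref{eq:kernelXTTheta} together with the pointwise estimates~\eqref{eq:finalphiphiprimesec4} for $\wh\varphi_\s$ and $T^{\frac 12}\pa_X\wh\varphi_\s$, noting that each product $O(|X|^{-\frac 14})\cdot O(|X|^{\frac 14})=O(1)$. The paper dispatches the lemma in one line with exactly this argument.
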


On the other hand, we now show that on the diagonal, the kernel $\wh L_{X,T}^\s(\nu,\nu)$ grows.

\begin{lemma}
\label{lemmadec2}
Let $T_0>0,\nu\in\R$ be fixed.
We have
\be
\label{eq:onepoint}
\wh L_{X,T}^\s(\nu,\nu)\sim
\frac{|X|^{\frac 12}}{\pi T^{\frac 12}}\frac{1}{1-\sigma(\nu)},\ \ \mbox{as $\frac{X}{T\log^2 |X|}\to -\infty$,}
\ee
uniformly for $T\geq T_0$.
In particular, $\wh L_{X,T}^\s(\nu,\nu)^{-1}=O(|X|^{-\frac 12}T^{\frac 12})$.
\end{lemma}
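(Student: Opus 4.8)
The plan is to read off the asymptotics of $\wh L_{X,T}^\s(\nu,\nu)$ directly from the RH parametrix \eqref{eq:ccr4} for $\wh\Theta$, in the same spirit as the off-diagonal bound of Lemma~\ref{lemmadec1} but keeping track of the leading term. The point of departure is the diagonal limit of the kernel formula \eqref{eq:kernelXTTheta}: since the entry $\bigl(\wh\Theta(\nu)^{-1}\wh\Theta(\ll)\bigr)_{2,1}$ vanishes at $\ll=\nu$, one has
\be
\wh L_{X,T}^\s(\nu,\nu)=\frac 1{2\pi}\,\pa_\ll\bigl(\wh\Theta(\nu)^{-1}\wh\Theta(\ll)\bigr)_{2,1}\Big|_{\ll=\nu},
\ee
and by the jump relation \eqref{eq:ThetaJump} the right-hand side is unchanged if $\wh\Theta$ is replaced by either of its boundary values on $\R$, so we may insert \eqref{eq:ccr4}. (As a consistency check one may instead start from $\wh L_{X,T}^\s(\nu,\nu)=T^{-\frac12}\int_X^{+\infty}\wh\varphi_\s(\nu;\rho,T|\emptyset)^2\,\d\rho$, obtained by integrating the $(X,T)$-rescaling of \eqref{eq:dLs} (cf.\ Corollary~\ref{corollary:LkernelIntegral}); the RH route is the more economical one.)

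Next I would substitute \eqref{eq:ccr4} with $\ll=|X|w$, $w=\nu/|X|$, and expand $\wh\Theta_+(\nu)^{-1}\wh\Theta_+(\ll)$. All the $\ll$-independent left factors of \eqref{eq:ccr4}, including the diverging factor $|X|^{\frac14\s_3}$ and the $g_1$-Schlesinger factor, cancel between $\wh\Theta_+(\nu)^{-1}$ and $\wh\Theta_+(\ll)$, leaving only $R(w)(w-a)^{\frac14\s_3}G\,E_+(w)$, where $E_+(w):=\bigl(\begin{smallmatrix}1&0\\ \e^{-|X|^{3/2}T^{-1/2}\phi_+(w)}&1\end{smallmatrix}\bigr)\e^{|X|^{3/2}T^{-1/2}(g_+(w)-g_0)\s_3}\e^{-\frac{\i\pi}4\s_3}$. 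Differentiating in $\ll$ at $\ll=\nu$, the contributions of $\pa_wR$ and $\pa_w(w-a)^{\frac14\s_3}$ are $O(|X|^{-1})$ by \eqref{as:Rleft}, while the dominant term comes from $\pa_wE_+(w)$, whose entries pick up the large prefactor $|X|^{\frac12}T^{-\frac12}$ from differentiating the exponentials. A short $2\times2$ computation then collapses everything to
\be
\wh L_{X,T}^\s(\nu,\nu)=\frac{\i\,|X|^{\frac12}}{2\pi\,T^{\frac12}}\,\phi_+'\!\bigl(\tfrac{\nu}{|X|}\bigr)\,\e^{|X|^{3/2}T^{-1/2}\left(2(g_+-g_0)-\phi_+\right)(\nu/|X|)}\bigl(1+o(1)\bigr),
\ee
and, by the definition of $\phi$ in \eqref{alotofstufffromccr4}, the exponent equals $-\log\bigl(1-\s(\nu)\bigr)$, i.e.\ the exponential factor is $\bigl(1-\s(\nu)\bigr)^{-1}$.

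It then remains to identify $\phi_+'(\nu/|X|)$. From \eqref{eq:gprimeSP} one has $\phi_+'(w)=g_+'(w)-g_-'(w)=-2\i\sqrt{a-w}\bigl(1+\tfrac{T^{1/2}}{2\pi|X|^{1/2}}\,\mathrm{p.v.}\!\int_{-\infty}^a\tfrac{\s'(|X|s)}{1-\s(|X|s)}\tfrac{1}{\sqrt{a-s}}\tfrac{\d s}{s-w}\bigr)$, and the crucial step is to estimate the principal value integral: under Assumption~\ref{assumption:strong} the factor $\tfrac{\s'(|X|s)}{1-\s(|X|s)}$ tends to the constant $c_+$ for $s$ positive and to $0$ for $s<0$, so the integral is $O(\log|X|)$ (the apparent divergence near $s=0$ being cut off at the scale $w\asymp|X|^{-1}$). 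Hence the correction term is $O\bigl(T^{1/2}|X|^{-1/2}\log|X|\bigr)$, which is $o(1)$ precisely under the hypothesis $X/(T\log^2|X|)\to-\infty$; together with $\sqrt{a-\nu/|X|}=\sqrt a\,(1+O(|X|^{-1}))$ and the fact, following from the endpoint equation of \cite{CharlierClaeysRuzza}, that $a\to1$ in this regime, this yields $\phi_+'(\nu/|X|)=-2\i\,(1+o(1))$. Plugging back gives $\wh L_{X,T}^\s(\nu,\nu)=\tfrac{|X|^{1/2}}{\pi T^{1/2}(1-\s(\nu))}\,(1+o(1))$ uniformly for $T\ge T_0$, and the stated bound $\wh L_{X,T}^\s(\nu,\nu)^{-1}=O(|X|^{-1/2}T^{1/2})$ is then immediate.

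The main obstacle is making the expansion of $\wh\Theta_+(\nu)^{-1}\wh\Theta_+(\ll)$ and of its $\ll$-derivative fully rigorous and uniform --- in particular, carefully tracking all the $O(|X|^{-1})$ and $O(T^{1/2}|X|^{-1/2}\log|X|)$ errors through the matrix algebra, and controlling the principal value integral in $g_\pm'$, which is the source of the logarithmic loss and hence the reason the statement is phrased in the regime $X/(T\log^2|X|)\to-\infty$ rather than merely $X/T\to-\infty$.
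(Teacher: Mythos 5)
Your proposal is correct and follows essentially the same route as the paper's proof: the confluent limit of the kernel formula \eqref{eq:kernelXTTheta}, substitution of the RH parametrix \eqref{eq:ccr4}, cancellation of the $\lambda$-independent left factors, extraction of the leading $|X|^{\frac12}T^{-\frac12}$ term from differentiating the exponentials, identification of the exponential factor with $F(\nu)=\bigl(1-\sigma(\nu)\bigr)^{-1}$ via \eqref{alotofstufffromccr4}, and estimation of $\phi_+'$ via the Sokhotski--Plemelj expression \eqref{eq:gprimeSP} together with the asymptotics of $a$. The only place where your sketch is thinner than the paper's is the $O(\log|X|)$ bound for the principal value integral, which the paper establishes cleanly by an integration by parts converting $(\log F)'$ into the integrable $(\log F)''$ and splitting the resulting kernel as $\log\tfrac{4a|X|}{|s-\nu|}+O(1)+O(\log(s|X|^{-1}))$, rather than by the heuristic cut-off argument you propose.
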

\begin{proof}
We combine~\eqref{eq:kernelXTTheta} (in the confluent limit $\mu,\ll\to\nu$) with~\eqref{eq:ccr4} to get
\begin{align}
\nonumber
\wh L_{X,T}^\s(\nu,\nu)&=\frac 1{2\pi}\bigl(\wh\Theta_+^{-1}(\nu)\pa_\nu\wh\Theta_+(\nu)\bigr)_{2,1}
\\
\nonumber
&=\frac 1{2\pi\i}\biggl\lbrace\e^{-\chi(g_+(\frac\nu{|X|})-g_0)\s_3}\begin{pmatrix}1 & 0 \\ -\e^{-\chi\phi_+(\frac\nu{|X|})}&1\end{pmatrix}G^{-1}\bigl(-\tfrac\nu{|X|}-a\bigr)^{-\frac 14\s_3}R^{-1}(\tfrac\nu{|X|})
\\
\nonumber
&\qquad\qquad\qquad
\pa_\nu\biggl[R(\tfrac\nu{|X|})\bigl(-\tfrac\nu{|X|}-a\bigr)^{\frac 14\s_3}G\begin{pmatrix}1 & 0 \\ \e^{-\chi\phi_+(\frac\nu{|X|})}&1\end{pmatrix}\e^{\chi(g_+(\frac\nu{|X|})-g_0)\s_3}\biggr]\biggr\rbrace_{2,1}
\\
\nonumber
&=\frac 1{2\pi\i}\biggl\lbrace\begin{pmatrix}\e^{-\chi(g_+(\frac\nu{|X|})-g_0)} & 0 \\ -\e^{\chi(g_-(\frac\nu{|X|})-g_0)}&\e^{\chi(g_+(\frac\nu{|X|})-g_0)}\end{pmatrix}G^{-1}\bigl(-\tfrac\nu{|X|}-a\bigr)^{-\frac 14\s_3}R^{-1}(\tfrac\nu{|X|})
\\
&\qquad\qquad\qquad
\label{eq:lastL}
\pa_\nu\biggl[R(\tfrac\nu{|X|})\bigl(-\tfrac\nu{|X|}-a\bigr)^{\frac 14\s_3}G
\begin{pmatrix}\e^{\chi(g_+(\frac\nu{|X|})-g_0)} & 0 \\ \e^{\chi(g_-(\frac\nu{|X|})-g_0)}&\e^{-\chi(g_+(\frac\nu{|X|})-g_0)}\end{pmatrix}\biggr]\biggr\rbrace_{2,1}
\end{align}
where we denote $\chi:=|X|^{\frac 32}T^{-\frac 12}$ and in the last step we use the relation $\phi_+=g_+-g_-$, cf.~\cite[equation below~(4.17)]{CharlierClaeysRuzza}.
As we proved in~\eqref{eq:Tbounded}, $\e^{\chi(g_\pm(\frac\nu{|X|})-g_0)}$ is bounded away from $0,\infty$ and so the triangular matrices appearing in~\eqref{eq:lastL} are $O(1)$.
Therefore, when the derivative in $\nu$ acts in~\eqref{eq:lastL} it produces terms of order $O(|X|^{-1})$ when it acts on the first two factors, see also~\eqref{as:Rleft}, and another term when it acts on the triangular matrix, which provides the leading asymptotic contribution, yielding
\be
\wh L_{X,T}^\s(\nu,\nu)=\frac{\chi}{2\pi\i\,|X|}\e^{\chi\bigl(g_+(\frac\nu{|X|})+g_-(\frac\nu{|X|})-2g_0\bigr)}\bigl(g_-'(\tfrac\nu{|X|})-g_+'(\tfrac\nu{|X|})\bigr)+O(|X|^{-1}).
\ee 
By the construction of $g$, cf.~\cite[Section~4.2]{CharlierClaeysRuzza}, we have (recall that $F=1/(1-\sigma)$)
\be
\label{eqforlaterref}
\chi\bigl(g_+(w)+g_-(w)-2g_0\bigr)=\bigl(\log F\bigr)(|X|w),
\ee
hence we can rewrite the last expression as 
\be
\wh L_{X,T}^\s(\nu,\nu)=\frac{|X|^{\frac 12}}{2\pi\i\,T^{\frac 12}}F(\nu)\bigl(g_-'(\tfrac\nu{|X|})-g_+'(\tfrac\nu{|X|})\bigr)+O(|X|^{-1}).
\ee
Next, we use~\eqref{eq:gprimeSP}, a change of integration variable, and an integration by parts in order to get
\begin{align}
\nonumber
g_-'(w)-g_+'(w)&=2\i\sqrt{a-w}\biggl(1+\frac{T^{\frac 12}}{2\pi|X|^{\frac 12}}\mathrm{p.v.}\!\int_{-\infty}^{a|X|}(\log F)'(s)\frac 1{\sqrt{a-s|X|^{-1}}}\frac{\d s}{s-|X|w}\biggr)
\\
&=2\i\biggl(\sqrt{a-w}+\frac{T^{\frac 12}}{2\pi|X|^{\frac 12}}\int_{-\infty}^{a|X|}(\log F)''(s)\log\biggl|\frac{\sqrt{a-w}+\sqrt{a-s|X|^{-1}}}{\sqrt{a-w}-\sqrt{a-s|X|^{-1}}}\biggr|\d s\biggr).\label{eq:gpm}
\end{align}
Now, it is useful to recall the following asymptotic properties for $a=a(X,T)$ from~\cite[Proposition~4.1]{CharlierClaeysRuzza}:
\be
a=\frac{\bigl(\sqrt{1+y}-1\bigr)^2}y\biggr|_{y=\frac{\pi^2}{c_+^2}|X|/T}+O(|X|^{-\frac 32}T^{\frac 12})
\ee
uniformly in $-X/T\leq K$, $T\geq T_0$ (for any $K,T_0>0$). Hence,
\be 
a=1-\frac{2c_+T^{\frac 12}}{\pi|X|^{\frac 12}}+O(|X|^{-1}T),
\qquad\sqrt{a-\frac\nu{|X|}}
=1-\frac{c_+T^{\frac 12}}{\pi|X|^{\frac 12}}+O(|X|^{-1}T),
\ee
as $X/T\to-\infty$, $T\geq T_0$.
{Let us now show that the second term in \eqref{eq:gpm} is sub-dominant. To start with, notice that, in the same limit, and uniformly for $s\in(-\infty, a|X|)$,
\be
\label{eq:useasymplog}
\log\left|\frac{\sqrt{a-\nu|X|^{-1}}+\sqrt{a-s|X|^{-1}}}{\sqrt{a-\nu|X|^{-1}}-\sqrt{a-s|X|^{-1}}}\right|
=\log\frac{4a|X|}{|s-\nu|}+O(1)+O\left(\log(s|X|^{-1})\right)
.
\ee

It follows that 
\be
\wh L_{X,T}^\s(\nu,\nu)=\frac{|X|^{\frac 12}}{\pi\,T^{\frac 12}}F(\nu)\biggl[1-\frac{c_+T^{\frac 12}}{\pi|X|^{\frac 12}}
+O\left(|X|^{-\frac 12}T^{\frac 12}\log|X|\right)\biggr]\quad \mbox{as $XT^{-1}\log^{-2}|X|\to -\infty$}.
\ee

To achieve this bound, we used
\be\label{eq:rem1}
\int_{-\infty}^{a|X|}(\log F)''(s)\d s=\int_{-\infty}^{+\infty}(\log F)''(s)\d s+o(1)=c_++o(1)=O(1)
\ee
(note that, by Assumption~\ref{assumption:strong}, $\int_{-\infty}^{+\infty}(\log F)''(s)\d s=\displaystyle\lim_{s\to+\infty}(\log F)'(s)-\lim_{s\to-\infty}(\log F)'(s)=c_+$) and
\be\label{eq:rem2}
\int_{-\infty}^{a|X|}(\log F)''(s)\log\frac 4{|s-\nu|}\d s=\int_{-\infty}^{+\infty}(\log F)''(s)\log\frac 4{|s-\nu|}\d s+o(1)=O(1),
\ee
which, in turn, follow by the Lebesgue dominated convergence theorem, and a similar bound for $\int_{-\infty}^{a|X|}(\log F)''(s)\log |s|\d s$.}
Thus we obtain
\begin{align}
\nonumber
\wh L_{X,T}^\s(\nu,\nu)&=\frac{|X|^{\frac 12}}{\pi\,T^{\frac 12}}F(\nu)\biggl[1
+O\left(T^{\frac 12}|X|^{-\frac 12}\log|X|\right)
\biggr],
\end{align}
and the thesis follows.
\end{proof}

\begin{remark}\label{remark:extendas}
It is straightforward to adapt the above proof in order to obtain asymptotics in the full region $X/T\to -\infty$, slightly larger than the region $\frac{X}{T\log^2|X|}\to -\infty$. Note however that the error term will then no longer be small, and the asymptotic expression contains several terms, see~\eqref{eq:rem1} and~\eqref{eq:rem2}. For the sake of simplicity, we present the results only as $\frac{X}{T\log^2|X|}\to -\infty$.
\end{remark}

Using the two previous results, we can prove an important decorrelation property: since the matrix $\wh L_{X,T}^\s(\underline\nu,\underline\nu)$ is dominated by its diagonal, the $m$-point correlation function $\det\wh L_{X,T}^\s(\underline\nu,\underline\nu)$ decomposes at leading order into a product of one-point correlation functions.
Similarly, its second logarithmic derivative decomposes at leading order into a sum of rapidly oscillating terms.
\begin{proposition}\label{prop:decor}
Let $T_0>0$, $\underline\nu\in\mathbb R^m$. We have
\begin{align}
\label{eq:decor1}
\det\wh L_{X,T}^\s(\underline\nu,\underline\nu)&=\bigl(1+O(|X|^{-\frac 12}T^{\frac 12})\bigr)\prod_{i=1}^m\wh L_{X,T}^\s(\nu_i,\nu_i)
\\
\partial_X^2\log\det\wh L_{X,T}^\s(\underline\nu,\underline\nu)
\label{eq:decor2}
&=
\frac 1{\sqrt{|X|T}}\sum_{i=1}^m\cos\biggl(\frac{4|X|^{\frac 32}}{3T^{\frac 12}}(1+A_{X,T})-\frac{2|X|^{\frac 12}}{T^{\frac 12}}\nu_i(1+B_{X,T}(\nu_i))\biggr)+O(|X|^{-1})
\end{align}
uniformly for $T\geq T_0$ as $\frac{X}{T\log^2 |X|}\to -\infty$, where $A_{X,T}, B_{X,T}(\nu)$ converge to $0$ as $\frac{X}{T\log^2 |X|}\to -\infty$.
\end{proposition}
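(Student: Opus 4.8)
The plan is to prove the two asymptotic statements about the $m\times m$ matrix $\wh L:=\wh L_{X,T}^\s(\underline\nu,\underline\nu)$ separately, exploiting the strong diagonal dominance coming from Lemmas~\ref{lemmadec1} and~\ref{lemmadec2} together with the rank-one structure of $\partial_X\wh L$. For~\eqref{eq:decor1} I would write $\wh L=D(I+D^{-1}E)$ with $D=\mathrm{diag}\bigl(\wh L_{X,T}^\s(\nu_i,\nu_i)\bigr)_{i=1}^m$ and $E=\wh L-D$ the off-diagonal part; by Lemma~\ref{lemmadec2}, $D$ is invertible with $(D^{-1})_{ii}=O(|X|^{-1/2}T^{1/2})$, and by Lemma~\ref{lemmadec1} the entries of $E$ are $O(1)$, so $D^{-1}E$ has vanishing diagonal and all entries $O(|X|^{-1/2}T^{1/2})$. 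Expanding $\det(I+D^{-1}E)$ into principal minors, the linear term is the trace of a matrix with zero diagonal and hence vanishes, while every remaining term is a product of at least two off-diagonal entries and so is $O\bigl(|X|^{-1/2}T^{1/2}\bigr)$ throughout $X\le-KT$; this gives~\eqref{eq:decor1}.

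For~\eqref{eq:decor2} I would first reduce to the diagonal. Rescaling~\eqref{eq:dLs} to the $(X,T)$ variables gives $\partial_X\wh L_{X,T}^\s(\lambda,\mu)=-T^{-1/2}\wh\varphi_\s(\lambda;X,T|\emptyset)\wh\varphi_\s(\mu;X,T|\emptyset)$, so $\partial_X\wh L$ has rank one and
\be
\partial_X^2\log\det\wh L=\mathrm{tr}\bigl(\wh L^{-1}\partial_X^2\wh L\bigr)-\mathrm{tr}\bigl((\wh L^{-1}\partial_X\wh L)^2\bigr).
\ee
From~\eqref{eq:whTheta} one reads $\partial_X\wh\varphi_\s(\nu;X,T|\emptyset)=-(2\pi)^{-1/2}T^{-1/2}\wh\Theta_{\s,1,1}(\nu;X,T|\emptyset)$, which with the size bounds~\eqref{eq:resumingsec4}--\eqref{eq:finalphiphiprimesec4} gives $\wh\varphi_\s(\nu;X,T|\emptyset)=O(|X|^{-1/4})$ and $\partial_X\wh\varphi_\s(\nu;X,T|\emptyset)=O(|X|^{1/4}T^{-1/2})$. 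Combining these with Lemma~\ref{lemmadec1} and the diagonal dominance of $\wh L$ (which also yields $(\wh L^{-1})_{ij}=O(|X|^{-1}T)$ for $i\ne j$ and $(\wh L^{-1})_{ii}=\wh L_{X,T}^\s(\nu_i,\nu_i)^{-1}\bigl(1+O(|X|^{-1}T)\bigr)$), one checks that the off-diagonal part of $\mathrm{tr}(\wh L^{-1}\partial_X^2\wh L)$ and the whole term $\mathrm{tr}((\wh L^{-1}\partial_X\wh L)^2)$ are $O(|X|^{-1})$, and that $\bigl(\partial_X\wh L_{X,T}^\s(\nu_i,\nu_i)\bigr)^2/\wh L_{X,T}^\s(\nu_i,\nu_i)^2=O(|X|^{-2})$; hence $\partial_X^2\log\det\wh L=\sum_{i=1}^m\partial_X^2\log\wh L_{X,T}^\s(\nu_i,\nu_i)+O(|X|^{-1})$.

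It then remains to prove, for fixed $\nu$, that $\partial_X^2\log\wh L_{X,T}^\s(\nu,\nu)=\cos\theta_{X,T}(\nu)/\sqrt{|X|T}+O(|X|^{-1})$ with $\theta_{X,T}(\nu)$ of the stated shape. Since $\partial_X\wh L_{X,T}^\s(\nu,\nu)=-T^{-1/2}\wh\varphi_\s(\nu;X,T|\emptyset)^2=O(T^{-1/2}|X|^{-1/2})$ and $\wh L_{X,T}^\s(\nu,\nu)\asymp|X|^{1/2}T^{-1/2}$, one has $\partial_X^2\log\wh L_{X,T}^\s(\nu,\nu)=\partial_X^2\wh L_{X,T}^\s(\nu,\nu)/\wh L_{X,T}^\s(\nu,\nu)+O(|X|^{-2})$, and
\be
\partial_X^2\wh L_{X,T}^\s(\nu,\nu)=-2T^{-1/2}\wh\varphi_\s(\nu;X,T|\emptyset)\,\partial_X\wh\varphi_\s(\nu;X,T|\emptyset)=-\tfrac1{\pi T}\wh\Theta_{\s,1,1}(\nu;X,T|\emptyset)\,\wh\Theta_{\s,2,1}(\nu;X,T|\emptyset)
\ee
is an exact identity by~\eqref{eq:whTheta}. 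Substituting~\eqref{eq:ccr4} at $\lambda=\nu$ (that is $w=\nu/|X|$), and using $R(w)=I+O(|X|^{-3/2}T^{1/2})$ from~\eqref{as:Rleft}, the identities $\e^{2\chi\Re(g_\pm(w)-g_0)}=F(\nu)$ from~\eqref{eq:finalgsec4} and $\e^{\chi(g_+(w)+g_-(w)-2g_0)}=F(\nu)$ from~\eqref{eqforlaterref} (with $\chi=|X|^{\frac32}T^{-\frac12}$), together with the explicit form of $G$ in~\eqref{eq:G} and the fact that all surrounding diagonal factors are of $\s_3$-type (so they pair off, while $\det\wh\Theta_\s=1$ fixes the normalization), a short matrix computation gives $\wh\Theta_{\s,1,1}(\nu;X,T|\emptyset)\wh\Theta_{\s,2,1}(\nu;X,T|\emptyset)=-F(\nu)\cos\theta_{X,T}(\nu)+O(|X|^{-3/2}T^{1/2})$ with $\theta_{X,T}(\nu):=2\chi\,\Im\bigl(g_+(\nu/|X|)-g_0\bigr)$. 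Dividing by $\wh L_{X,T}^\s(\nu,\nu)\sim|X|^{1/2}F(\nu)/(\pi T^{1/2})$ (Lemma~\ref{lemmadec2}) gives the claimed leading term. Finally, Taylor-expanding $g_+(\nu/|X|)-g_0=\int_a^{\nu/|X|}g_+'(s)\,\d s$ via~\eqref{eq:gprimeSP}, the expansion of $a=a(X,T)$ from~\cite[Proposition~4.1]{CharlierClaeysRuzza}, and $\sqrt{a-\nu/|X|}=1+O(|X|^{-1/2}T^{1/2})$ yields $\theta_{X,T}(\nu)=\tfrac{4|X|^{3/2}}{3T^{1/2}}(1+A_{X,T})-\tfrac{2|X|^{1/2}}{T^{1/2}}\nu(1+B_{X,T}(\nu))$, the large but $\nu$-independent part of the phase being absorbed into $A_{X,T}$, and the $\nu$-dependence, which enters only through the lower integration endpoint, into the clean $\nu$-linear term; summing over $i$ gives~\eqref{eq:decor2}.

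I expect the single-point step to be the main obstacle: one must carry the product $\wh\Theta_{\s,1,1}(\nu)\wh\Theta_{\s,2,1}(\nu)$ through the whole chain of RH transformations encoded in~\eqref{eq:ccr4} while keeping enough terms to see the Airy-parametrix and constant factors collapse to exactly $-F(\nu)\cos\theta_{X,T}(\nu)$, with the spurious non-oscillatory and lower-order contributions safely inside the error, and then to extract the phase $\theta_{X,T}(\nu)$ together with its $\nu$-dependence; the remaining ingredients are elementary linear algebra and the size estimates already available from Section~\ref{section:lefttail}.
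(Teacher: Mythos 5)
Your proposal is correct and takes essentially the same approach as the paper's proof: both exploit the diagonal dominance coming from Lemmas~\ref{lemmadec1}--\ref{lemmadec2} together with the rank-one identity $T^{\frac12}\pa_X\wh L_{ij}=-\wh\varphi_i\wh\varphi_j$ to reduce~\eqref{eq:decor2} to a sum of single-point contributions, and then feed the parametrix~\eqref{eq:ccr4} into the single-point expression. The only notable difference is cosmetic: you identify the single-point object as $\wh\Theta_{1,1}\wh\Theta_{2,1}=(\wh\Theta\,{\rm E}_{12}\wh\Theta^{-1})_{2,2}$ directly (using $\det\wh\Theta=1$), whereas the paper packages it via the vectors $\mathbf v,\mathbf w$ and the matrix $B$; and your remark that $\tr(D^{-1}E)=0$ is a mild refinement of the paper's one-line ``take determinants'' step for~\eqref{eq:decor1}, though both yield the claimed error.
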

\begin{proof}
For the ease of notation, let us denote $\wh L=(\wh L_{ij})$ for the $m\times m$~matrix with entries $\wh L_{ij}:=\wh L_{X,T}^\s(\nu_i,\nu_j)$.
By Lemma~\ref{lemmadec1} and Lemma~\ref{lemmadec2} we have
\be
\label{eq:asdecorL}
\wh L_{ij}=\wh L_{ii}\bigl(\delta_{ij}+O(|X|^{-\frac 12}T^{\frac 12})\bigr),\qquad 1\leq i,j\leq m.
\ee
Taking determinants we get~\eqref{eq:decor1}.
Moreover, \eqref{eq:asdecorL} also implies
\be
\label{eq:asdecorLinv}
(\wh L^{-1})_{ij}=\frac 1{\wh L_{ii}}\bigl(\delta_{ij}+O(|X|^{-\frac 12}T^{\frac 12})\bigr)=
\frac{\delta_{ij}}{\wh L_{ii}}+O(|X|^{-1}T),
\qquad 1\leq i,j\leq m.
\ee
(In the second equality we use again Lemma~\ref{lemmadec2}.)
By a direct computation, we have
\be\label{eq:D2log}
\pa_X^2\log\det\wh L
=\sum_{i,j=1}^m(\pa_X^2\wh L_{ij})(\wh L^{-1})_{ji}\,-\sum_{i,j,k,l=1}^m(\pa_X\wh L_{ij})(\wh L^{-1})_{jk}(\pa_X\wh L_{kl})(\wh L^{-1})_{li}.
\ee
We have the relation $T^{\frac 12}\pa_X\wh L_{ij}={-}\wh\varphi_i\wh\varphi_j$, where we denote $\wh\varphi_i:=\wh\varphi_\s(\nu_i;X,T|\emptyset)$.
This is the analogue, for the full set of cKdV variables $X,T$, of the relation~\eqref{eq:dLs}.
As a consequence, $T^{\frac 12}\pa_X^2\wh L_{ij}=-(\pa_X\wh\varphi_i)\wh\varphi_j-\wh\varphi_i(\pa_X\wh\varphi_j)$.
Hence, by~\eqref{eq:finalphiphiprimesec4}, we have
\be
\label{estimatesLXT}
\pa_X\wh L_{ij}=O(|X|^{-\frac 12}T^{-\frac 12}),\qquad
\pa_X^2\wh L_{ij}=O(T^{-1}).
\ee
Therefore, by Lemma~\ref{lemmadec2},~\eqref{eq:asdecorLinv}, and~\eqref{estimatesLXT}, we have the estimates
\be
\sum_{i,j=1}^m(\pa_X^2\wh L_{ij})(\wh L^{-1})_{ji}=\sum_{i=1}^m\frac{\pa_X^2\wh L_{ii}}{\wh L_{ii}}+O(|X|^{-1}),
\quad
\sum_{i,j,k,l=1}^m(\pa_X\wh L_{ij})(\wh L^{-1})_{jk}(\pa_X\wh L_{kl})(\wh L^{-1})_{li}=O(|X|^{-2}).
\ee
In the last one we combined~\eqref{eq:asdecorLinv} and Lemma~\ref{lemmadec2} to get $(\wh L^{-1})_{ij}=O(|X|^{-\frac 12}T^{\frac 12})$.
Substituting these estimates into \eqref{eq:D2log}, we obtain
\be
\pa_X^2\log\det\wh L=-\frac{2}{\sqrt T}\sum_{i=1}^m
\frac{1}{\widehat L_{ii}}\wh\varphi_i\pa_X\wh\varphi_i+O(|X|^{-1})=
-\frac{1}{\pi T}\sum_{i=1}^m\frac{1}{\widehat L_{ii}}
\left(\wh\Theta(\nu_i){\rm E}_{12}\wh\Theta(\nu_i)^{-1}\right)_{2,2}+O(|X|^{-1}),
\ee
where the elementary unit matrix ${\rm E}_{12}$ is defined in~\eqref{eq:notationE12}, and where we used \eqref{eq:whTheta}.

Using \eqref{eq:ccr4}, we can write after straightforward computations
\be
\left(\wh\Theta(\nu){\rm E_{12}}\wh\Theta(\nu)^{-1}\right)_{2,2}=\mathbf v B \mathbf w, 
\ee
where, writing $\chi:=|X|^{\frac 32}T^{-\frac 12}$ as before,
\be
\mathbf v=\begin{pmatrix}0,1\end{pmatrix}R(\tfrac \nu{|X|}),\qquad \mathbf w=R(\tfrac \nu{|X|})^{-1}\begin{pmatrix}\i |X|^{-\frac 12}\widehat p+\i\chi g_1\\1
\end{pmatrix},
\ee
and,  
\be
B=\frac{\e^{2\chi(g_+(\frac{\nu}{|X|})-g_0)}}{\i}(\tfrac{\nu}{|X|}-a)^{\frac{\sigma_3}4}
G\begin{pmatrix}1 & 0\\
\e^{-\chi\phi_+(\frac{\nu}{|X|})}  & 1
\end{pmatrix}
E_{12}
\begin{pmatrix}1 & 0\\
-\e^{-\chi\phi_+(\frac{\nu}{|X|})}  & 1
\end{pmatrix}G^{-1}
(\tfrac{\nu}{|X|}-a)^{-\frac{\sigma_3}4}.
\ee
Using \eqref{as:Rleft} and the asymptotic estimate \eqref{estimateforlaterref}, we obtain
\be 
\mathbf v=\begin{pmatrix}O(\chi^{-1}),1+O(\chi^{-1})\end{pmatrix},\qquad \mathbf w=\begin{pmatrix}O(\chi^{-1})\\1+O(\chi^{-1})
\end{pmatrix},\qquad \mbox{as $X/T\to -\infty$,}
\ee
and in particular as $\frac{X}{T\log^2|X|}\to -\infty$.
By \eqref{alotofstufffromccr4}, we can simplify the expression for $B$ and obtain
\begin{align}
\nonumber
B&=-\i F(\nu)(\tfrac{\nu}{|X|}-a)^{\frac {\sigma_3}4}
G\begin{pmatrix}-1 & \e^{\chi\phi_+(\tfrac \nu{|X|})}\\
-\e^{-\chi\phi_+(\nu/|X|)}  & 1
\end{pmatrix}
G^{-1}
(\tfrac{\nu}{|X|}-a)^{-\frac {\sigma_3}4}\\
&=\frac{F(\nu)}{2}(\tfrac{\nu}{|X|}-a)^{\frac {\sigma_3}4}
\begin{pmatrix}\e^{\chi\phi_+(\tfrac \nu{|X|})}+\e^{-\chi\phi_+(\tfrac \nu{|X|})} & -2-\i\e^{\chi\phi_+(\tfrac \nu{|X|})}+\i \e^{-\chi\phi_+(\tfrac \nu{|X|})}\\
2-\i \e^{\chi\phi_+(\tfrac \nu{|X|})}+\i\e^{-\chi\phi_+(\tfrac \nu{|X|})}  & -\e^{\chi\phi_+(\tfrac \nu{|X|})}-\e^{-\chi\phi_+(\tfrac \nu{|X|})}
\end{pmatrix}
(\tfrac{\nu}{|X|}-a)^{-\frac {\sigma_3}4}.
\end{align}
Since $\phi_+(\tfrac {\nu}{|X|})$ is purely imaginary and $B$ is bounded and bounded away from $0$, we have
\be
B=\frac{1}{2}F(\nu)
\begin{pmatrix}O(1)&O(1)\\
O(1)&-\e^{\chi\phi_+(\tfrac{\nu}{|X|})}-\e^{-\chi\phi_+(\tfrac{\nu}{|X|})}
\end{pmatrix}.
\ee
Hence
\be
\left(\wh\Theta(\nu){\rm E_{12}}\wh\Theta(\nu)^{-1}\right)_{2,2}=\mathbf v B \mathbf w=-F(\nu)\cos\left({\chi|\phi_+(\tfrac \nu{|X|})|}\right)+O(\chi^{-1}). 
\ee
We finally obtain
\begin{align}
\nonumber
\partial_X^2\log\det\widehat L&=\frac{1}{\pi T}\sum_{i=1}^m\frac{F(\nu_i)}{\hat L_{ii}}\cos\left({\chi|\phi_+(\tfrac{\nu_i}{|X|})|}\right)+O(|X|^{-1})\\
&=
\frac{1}{\sqrt{|X|T}}\sum_{i=1}^m\cos\left({|X|^{\frac 32}T^{-\frac 12}|\phi_+(\tfrac{\nu_i}{|X|})|}\right)
+O(|X|^{-1}),
\end{align}
where we used Lemma \ref{lemmadec2}.
It remains to compute
\be
|\phi_+(\tfrac{\nu_i}{|X|})|=\int_{\nu_i/|X|}^a|(g_+-g_-)'(s)|\d s.
\ee
For this, we recall~\eqref{eq:gpm} and the estimates below that equation; in the same way as in the proof of Lemma \ref{lemmadec2}, we then obtain 
\be
|\phi_+(0)|\to \frac{4}{3},\qquad 
|\phi_+(\tfrac{\nu_i}{|X|})|-|\phi_+(0)|\sim -2\frac{\nu_i}{|X|},
\ee
as $\frac{X}{T\log^2 |X|}\to -\infty$.
The argument of the cosine is thus equal to
\be 
\frac{4|X|^{\frac 32}}{3\sqrt{T}}(1+A_{X,T})-\frac{2\sqrt{|X|}}{\sqrt T}\nu_i(1+B_{X,T}(\nu_i)),
\ee
with $A_{X,T}\to 0$, $B_{X,T}(\nu_i)\to 0$ as $\frac{X}{T\log^2 |X|}\to -\infty$, and the result follows.
\end{proof}

Combining the above result with \eqref{eq:factorizationasy}, and then substituting the asymptotics from Lemma~\ref{lemmadec2}, we complete the proof of part (ii) of Theorem \ref{thm:asy}.

With some more effort, we could obtain asymptotics in the slightly bigger asymptotic region where $X/T\to -\infty$, as already mentioned in Remark \ref{remark:extendas}.

\subsection{Intermediate regimes\texorpdfstring{: $-KT\leq X\leq MT^{\frac 13}$}{}}\label{section:PII}

We will now discuss the asymptotic behavior as $T\to\infty$ of various relevant quantities in the intermediate regimes where $-KT\leq X\leq MT^{\frac 13}$ for sufficiently large constants $K,M>0$. The asymptotics for the J\'anossy density $J_\sigma(X,T|\underline\nu)$ and the cKdV solution $V_\sigma(X,T|\underline\nu)$ become, unfortunately, rather involved and implicit. In order to understand the mechanisms behind these asymptotics, an interesting and relevant object  to consider, is the kernel $\widehat L_{X,T}^{\s}(\nu_1,\nu_2)$.
Indeed, in view of the factorization \eqref{eq:factorizationasy}, determinants of this kernel describe the effect of the points $\nu_1,\ldots, \nu_m$ on the J\'anossy densities $J_\s(X,T|\underline\nu)$, and the second logarithmic $X$-derivative of such determinants describe the effect of the points $\nu_1,\ldots, \nu_m$ on the cKdV solutions $V_\s(X,T|\underline\nu)$.
Recall that the kernel $\widehat L_{X,T}^{\s}(\nu_1,\nu_2)$ is expressed in terms of the RH solution $\wh\Theta$ through \eqref{eq:kernelXTTheta}.
We distinguish three further asymptotic regimes.

\paragraph{Left-intermediate regime: $-KT\leq X\leq -K'T^{\frac 12}$ for any $K,K'>0$.}
The asymptotic analysis of the RH problem for $\wh\Theta$ has been carried through in \cite{CharlierClaeysRuzza} and is very similar to the one utilized for the left tail. However, there is an important difference in that the decorrelation property from Proposition \ref{prop:decor} no longer holds. For that reason, even if we could obtain asymptotics for $\wh L_{X,T}^\s(\nu_1,\nu_2)$, the explicit asymptotic behavior of the determinants $\det\wh L_{X,T}^\s(\underline\nu,\underline\nu)$ and their logarithmic derivatives becomes cumbersome for $m>1$.

\paragraph{Right-intermediate regime: $-MT^{\frac 13}\leq X\leq MT^{\frac 13}$ for any $M>0$.}
In this case, it was proved in \cite[Theorem 1.15]{CafassoClaeysRuzza} that
there exists a (sufficiently large) $T_0>0$ such that for all $M>0$ we have
\begin{align}
\label{eq:asyempty2LogJ}
\log J_\s(X,T|\emptyset)&=\log F_{\mathrm{TW}}(XT^{-\frac 13})+O(T^{-\frac 16}),\\
\label{eq:asyempty2V}
V_\s(X,T|\emptyset)&=-T^{-\frac 23}y_{\mathrm{HM}}\bigl(XT^{-\frac 13}\bigr)^2+O(T^{-1}),
\end{align}
uniformly for $|X|\leq MT^{\frac 13}$ and $T\geq T_0$,
{where $y_{\mathrm{HM}}$ is the Hastings--McLeod solution of the Painlev\'e II equation,
and $F_{\mathrm{TW}}$ is the Tracy--Widom distribution (see also Example \ref{examples:zeroHeaviside}).}

The asymptotic analysis of $\wh\Theta$ has also been obtained in~\cite{CafassoClaeysRuzza}, and it implies that the leading order asymptotics of $\wh L_{X,T}^\s(\lambda,\mu)$ are determined by the soft-to-hard edge transition kernel $L_s^{1_{(0,\infty)}}$ from Example~\ref{example:Heaviside2}, as we prove next.

\begin{proposition}\label{prop:intermediate1}
Let $M>0$.
As $T\to\infty$, we have uniformly for $-MT^{\frac 13}\leq X\leq MT^{\frac 13}$, and uniformly for $\nu_1,\nu_2$ in compact subsets of the real line that
\be 
\wh L_{X,T}^\s(\nu_1,\nu_2)=T^{-\frac13} L_{s=XT^{-\frac13}}^{1_{(0,+\infty)}}(T^{-\frac13}\nu_1,T^{-\frac13}\nu_2)+O(T^{-\frac 23}),
\ee
and this error estimate continues to hold upon differentiating an arbitrary number of times with respect to $\nu_1$ and $\nu_2$.
\end{proposition}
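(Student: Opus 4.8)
The plan is to deduce the statement from the asymptotic analysis of the RH problem for $\wh\Psi_\s$ (equivalently $\wh\Theta_\s$) already carried out in~\cite{CafassoClaeysRuzza} in the right-intermediate regime $|X|\leq MT^{\frac 13}$, combined with the identity~\eqref{eq:kernelXTTheta} expressing $\wh L_{X,T}^\s(\nu_1,\nu_2)$ in terms of $\wh\Theta_\s(\cdot;X,T|\emptyset)$. First I would recall from~\cite{CafassoClaeysRuzza} the outcome of that RH analysis: after the usual chain of transformations (global parametrix built from the Airy model problem $\Phi^\Ai$, local parametrix near the relevant edge, and a final small-norm problem $R=I+O(T^{-\frac 13})$), one obtains a closed-form leading approximation for $\wh\Theta_\s(\lambda;X,T|\emptyset)$, valid uniformly for $\lambda$ in compact subsets, whose leading term is precisely the RH solution $\Theta_{1_{(0,+\infty)}}(\lambda;s|\emptyset)$ attached to the soft-to-hard edge transition (this is the degenerate case $\sigma=1_{(0,+\infty)}$ discussed in Example~\ref{example:Heaviside2}), evaluated at $s=XT^{-\frac 13}$, up to the explicit conjugation/rescaling dictated by~\eqref{eq:newwave}. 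Concretely, I expect a relation of the shape
\be
\wh\Theta_\s(\ll;X,T|\emptyset)=T^{\frac 1{12}\s_3}\Theta_{1_{(0,+\infty)}}(\ll T^{-\frac 13};XT^{-\frac 13}|\emptyset)\bigl(I+O(T^{-\frac 13})\bigr),
\ee
uniform for $\ll T^{-\frac 13}$ in compacts, with the error differentiable in $\ll$ at the cost of only a bounded extra factor (by Cauchy estimates applied to the analytic correction $R$ off the contour).

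Next I would plug this into the kernel formula~\eqref{eq:kernelXTTheta}. Writing $\ll=\nu_1$, $\mu=\nu_2$ and using $w=XT^{-\frac 13}$, the factors $T^{\frac 1{12}\s_3}$ cancel in the combination $\wh\Theta_\s(\mu)^{-1}\wh\Theta_\s(\ll)$, the $(2,1)$-entry is unchanged under this diagonal conjugation, and the overall $T^{-\frac 13}$ comes from the Jacobian $T^{-\frac 13}=\frac{1}{2\pi(\ll-\mu)}\cdot\frac{2\pi(T^{-\frac 13}\ll-T^{-\frac 13}\mu)}{\ll-\mu}$ implicit in passing from $\ll-\mu$ to $T^{-\frac 13}\ll-T^{-\frac 13}\mu$. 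This reproduces exactly $T^{-\frac 13}L^{1_{(0,+\infty)}}_{s=XT^{-\frac 13}}(T^{-\frac 13}\nu_1,T^{-\frac 13}\nu_2)$ at leading order, with a relative error $O(T^{-\frac 13})$, hence an absolute error $O(T^{-\frac 23})$ once the leading term $O(T^{-\frac 13})$ is taken into account (the leading term is itself $O(T^{-\frac 13})$ uniformly for $\nu_1,\nu_2$ in compacts since $L_s^{1_{(0,+\infty)}}$ is continuous and $s=XT^{-\frac 13}$ stays bounded). For the confluent case $\nu_1=\nu_2$ one uses instead the first representation in~\eqref{eq:kernelXTTheta} (or equivalently $\wh L_{X,T}^\s(\nu,\nu)=\frac{1}{2\pi}(\wh\Theta_\s^{-1}\pa_\nu\wh\Theta_\s)_{2,1}$), which is handled the same way after one $\nu$-differentiation. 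The statement about differentiating an arbitrary number of times in $\nu_1,\nu_2$ follows because differentiation in $\nu_i$ acts only on the entire function $\wh\Theta_\s(\cdot)$ (and on the explicit rational factor $\ll-\mu$ when $\nu_1\ne\nu_2$, which is harmless away from the diagonal), so one may apply Cauchy's integral formula on a fixed small circle to both the leading term and the error, preserving the $O(T^{-\frac 23})$ bound.

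The main obstacle is not any of the algebra above but rather extracting from~\cite{CafassoClaeysRuzza} the precise uniform statement on $\wh\Theta_\s$ that I need: loc. cit. establishes asymptotics of the tau function and of $V_\s(X,T|\emptyset)$ in this regime (equations~\eqref{eq:asyempty2LogJ}--\eqref{eq:asyempty2V}), and in the course of the proof controls the RH solution, but one must verify that (i) the small-norm estimate $R=I+O(T^{-\frac 13})$ is uniform in $\ll$ including in a full complex neighbourhood of the compact $\nu$-set (so that Cauchy estimates on $\pa_\ll^k R$ are legitimate), and (ii) the global and local parametrices, when restricted to that bounded region, combine to give exactly the soft-to-hard edge RH solution $\Theta_{1_{(0,+\infty)}}$ rather than merely ``the Painlev\'e II parametrix'' in some other normalization. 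Once these two points are pinned down — which is a matter of carefully transcribing the transformations of~\cite{CafassoClaeysRuzza} and matching normalizations with Example~\ref{example:Heaviside2} and~\cite{ClaeysKuijlaars} — the proposition follows by the substitution described above. I would therefore organize the proof as: (1) quote/derive the uniform approximation for $\wh\Theta_\s(\cdot;X,T|\emptyset)$ from~\cite{CafassoClaeysRuzza}; (2) substitute into~\eqref{eq:kernelXTTheta} and track the powers of $T$; (3) handle the diagonal limit and the $\nu$-derivatives by Cauchy's formula.
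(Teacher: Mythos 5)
Your overall strategy — extract from \cite[Section~6]{CafassoClaeysRuzza} a factorization of $\wh\Theta_\s(\cdot;X,T|\emptyset)$ whose leading piece is the soft-to-hard model $\Psi_{1_{(0,+\infty)}}$ with a small-norm correction, substitute into~\eqref{eq:kernelXTTheta}, and then handle derivatives by Cauchy's formula — is precisely the paper's plan, and your concerns (i) and (ii) about uniformity and normalization are indeed the points one has to verify against~\cite{CafassoClaeysRuzza}. However, there is a genuine gap at the central estimate, and the ``relative error $O(T^{-1/3})$, hence absolute error $O(T^{-2/3})$'' step does not hold up as stated.

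The issue is the singular factor $1/(\nu_1-\nu_2)$ in~\eqref{eq:kernelXTTheta}. If you only know that $\wh\Theta_\s(\nu)$ equals its model approximation up to a multiplicative factor $I+O(T^{-1/3})$, then the naive bound on the error in the $(2,1)$-entry of $\wh\Theta_\s(\nu_2)^{-1}\wh\Theta_\s(\nu_1)$ is only $O(T^{-1/3})$; dividing by $2\pi(\nu_1-\nu_2)$ then yields $O(T^{-1/3}/|\nu_1-\nu_2|)$, which is \emph{not} uniform on compacts. Your ``relative error'' reasoning would rescue this only if the leading kernel $L_s^{1_{(0,+\infty)}}(\cdot,\cdot)$ were bounded away from zero, which it is not off the diagonal, so that framing does not actually give the claim. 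The missing ingredient — the crux of the paper's proof — is that the small-norm factor $S$ in the factorization
\be
\wh\Theta_\s(\nu;X,T|\emptyset)=\begin{pmatrix}1&\wh p_\s\\0&1\end{pmatrix}\e^{\frac{\i\pi}4\s_3}T^{\frac1{12}\s_3}\,S(\nu T^{-\frac13})\,\Psi_{1_{(0,+\infty)}}(\nu T^{-\frac13};XT^{-\frac13}|\emptyset)\begin{pmatrix}1&a(\nu T^{-\frac13})\\0&1\end{pmatrix}\e^{-\frac{\i\pi}4\s_3}
\ee
depends on $\nu$ only through the slow variable $w=\nu T^{-\frac13}$ and satisfies $\pa_wS(w)=O(T^{-\frac13})$ as well as $S(w)=I+O(T^{-\frac13})$, uniformly. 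Consequently
\be
S(\nu_2 T^{-\frac13})^{-1}S(\nu_1 T^{-\frac13})=I+O\bigl(T^{-\frac23}(\nu_1-\nu_2)\bigr),
\ee
and it is precisely this extra factor of $(\nu_1-\nu_2)$ in the error — produced by differencing $S$ at two nearby slow arguments — that cancels the $1/(\nu_1-\nu_2)$ in the kernel formula and yields the uniform $O(T^{-\frac23})$ bound. (Once the outer upper-triangular, diagonal, and phase conjugations are seen to drop out of the $(2,1)$-entry, the remaining $\Psi_{1_{(0,+\infty)}}$-factors are bounded on compacts because $\Psi_{1_{(0,+\infty)}}$ is entire.) You do mention that $\pa_\ll R=O(T^{-1/3})$ is available via Cauchy estimates, so you have the needed ingredient in hand, but the proposal never combines it with the $(\nu_1-\nu_2)$ cancellation; without that step the claimed uniform bound does not follow, and the ``treat $\nu_1=\nu_2$ separately'' remark does not cure the near-diagonal regime where $|\nu_1-\nu_2|$ is small but nonzero.
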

\begin{proof}
The proof relies on the RH analysis performed in~\cite[Section 6]{CafassoClaeysRuzza}: the result is that, for every fixed $\nu\in\C$, we have the factorization
\be
\label{eq:factpII}
\wh \Theta_\s(\nu;X,T|\emptyset)=\begin{pmatrix}
	1 & \wh p_\s \\ 0 & 1
\end{pmatrix}
\e^{\frac{\i\pi}4\s_3}T^{\frac 1{12}\s_3} S(\nu T^{-\frac 13})\Psi_{1_{(0,+\infty)}}(\nu T^{-\frac 13};XT^{-\frac 13}|\emptyset)\begin{pmatrix}
			1&a(\nu T^{-\frac 13})\\
			0&1
\end{pmatrix}\e^{-\frac{\i\pi}4\s_3},
\ee
where $\wh p_\s=\wh p_\s(X,T|\emptyset)$, $a$ has an explicit expression which is not needed for our purposes (cf.~\cite[equation~(6.10)]{CafassoClaeysRuzza}).
The matrix $S$ should be interpreted as an error term: it satisfies a {\it small-norm} RH problem, which means that, provided $|w|<1$, $S(w)=I+O(T^{-\frac 13})$ and $\pa_wS(w)=O(T^{-\frac 13})$, uniformly for $-MT^{\frac 13}\leq X\leq MT^{\frac 13}$ (for any $M>0$). In particular, for every fixed $\nu_1,\nu_2 \in \R $ and $T$ sufficiently large we also have
\be
S(\nu_2 T^{-\frac 13})^{-1}S(\nu_1 T^{-\frac 13})=I+O(T^{-\frac 23}(\nu_1-\nu_2)).
\ee
Combining this with~\eqref{eq:kernelXTTheta} and~\eqref{eq:factpII}, we get
\begin{align}
\nonumber
\wh L_{X,T}^\s(\nu_1,\nu_2)&= \frac{\bigl(\Psi_{1_{(0,+\infty)}}(\nu_2T^{-\frac 13};XT^{-\frac 13}|\emptyset)^{-1} S(\nu_2T^{-\frac 13})^{-1}S(\nu_1 T^{-\frac 13}) \Psi_{1_{(0,+\infty)}} (\nu_1T^{-\frac 13};XT^{-\frac 13}|\emptyset)\bigr)_{2,1}}{2\pi\i(\nu_1-\nu_2)}
\\
\nonumber
&= T^{-\frac 13}L_{s=XT^{-\frac 13}}^{1_{(0,+\infty)}}(T^{-\frac 13}\nu_1,T^{-\frac 13}\nu_2)
\\
&\quad\qquad+\frac{\bigl(\Psi_{1_{(0,+\infty)}} (\nu_2T^{-\frac13};XT^{-\frac13}|\emptyset)^{-1}O(T^{-\frac23})\Psi_{1_{(0,+\infty)}} (\nu_1T^{-\frac13};XT^{-\frac13}|\emptyset)\bigr)_{2,1}}{2\pi\i},
\end{align}
where we also use the identity 
\be
L_{s}^{1_{(0,+\infty)}}\left(T^{-\frac13}\nu_1,T^{-\frac13}\nu_2\right)= \frac{\left(\Psi_{1_{(0,+\infty)}} (\nu_2T^{-\frac13};s|\emptyset)^{-1}\Psi_{1_{(0,+\infty)}} (\nu_1T^{-\frac13};s|\emptyset)\right)_{2,1}}{2\pi\i T^{-\frac13} (\nu_1-\nu_2)},
\ee
which is a special case of~\eqref{eq:Lkernel}.
The last term is $O(T^{-\frac 23})$ since it is a combination of the entries of the first column of $\Psi_{1_{(0,+\infty)}}(\nu_i T^{-\frac 13})$, $i=1,2$, which are both entire.
The above identities extend to $\nu_1,\nu_2$ in compact subsets of the complex plane, hence we can apply Cauchy's formula to differentiate, without affecting the error term.
\end{proof}

A first, crucial, obstruction for obtaining explicit asymptotics for the J\'anossy densities $J_\s(X,T|\underline\nu)$ lies in the fact that the kernel $ L_{s=XT^{-\frac 13}}^{1_{(0,+\infty)}}$ is itself a transcendental object, which we cannot evaluate explicitly.
However, we can proceed in the hope of describing $J_\s(X,T|\underline\nu)$ asymptotically in terms of the $\s$-independent quantity $ L_{s=XT^{-\frac 13}}^{1_{(0,+\infty)}}$.
For $m=1$, we immediately find by \eqref{eq:factorizationasy} that
\be
J_\s(X,T|\nu)\sim T^{-\frac 13} L_{s=XT^{-\frac 13}}^{1_{(0,+\infty)}}(0,0)J_\s(X,T|\emptyset),
\ee
where the asymptotics for $J_\s(X,T|\emptyset)$ are given by \eqref{eq:asyempty2LogJ}.
For $m>1$, we can estimate $J_\s(X,T|\underline\nu)$ as follows.
\begin{proposition}\label{prop:PII2}
Let $M>0$, $\underline\nu=(\nu_1,\ldots, \nu_m)\in\mathbb R^m$, and $\nu_i\neq \nu_j$ for $i\neq j$. As $T\to\infty$, we have uniformly for $-MT^{\frac 13}\leq X\leq MT^{\frac 13}$ that
\be
J_\s(X,T|\underline\nu)= C_m \,T^{-\frac{m^2}{3}}\,F_{\rm TW}(XT^{-\frac 13})\prod_{1\leq j<k\leq m}(\nu_j-\nu_k)^2+o\left(T^{-\frac{m^2}{3}}\right),
\ee
for some constant $C_m\geq 0$ possibly depending on $m$ but not on $\sigma, \underline\nu, X,T$.
\end{proposition}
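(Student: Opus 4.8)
The plan is to combine the factorization~\eqref{eq:factorizationasy} with the uniform asymptotics for $\wh L_{X,T}^\s(\nu_i,\nu_j)$ from Proposition~\ref{prop:intermediate1}. By~\eqref{eq:factorizationasy} we have
\be
J_\s(X,T|\underline\nu)=\det\bigl(\wh L_{X,T}^\s(\nu_i,\nu_j)\bigr)_{i,j=1}^m\,J_\s(X,T|\emptyset),
\ee
and Proposition~\ref{prop:intermediate1} together with~\eqref{eq:asyempty2LogJ} reduces the task to analysing
\be
\det\Bigl(T^{-\frac 13}L_{s}^{1_{(0,+\infty)}}(T^{-\frac 13}\nu_i,T^{-\frac 13}\nu_j)+O(T^{-\frac 23})\Bigr)_{i,j=1}^m,\qquad s=XT^{-\frac 13},
\ee
and in showing that, after extracting the leading behaviour, the error terms do not contribute. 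So the first step is to understand the small-argument (confluent) behaviour of the kernel $L_s^{1_{(0,+\infty)}}(x,y)$ as $x,y\to 0$, uniformly for $s$ in a bounded interval; since here $x=T^{-\frac13}\nu_i\to 0$, this is exactly the regime we need.

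The key observation is that $L_s^{1_{(0,+\infty)}}$ is a \emph{smooth, symmetric, positive-definite} kernel near the diagonal point $(0,0)$ — by Example~\ref{example:Heaviside2} it equals $\mathbb K_0^{\rm soft/hard}(-x,-y;s)$, which is analytic in $x,y$ — so it admits a Taylor expansion $L_s^{1_{(0,+\infty)}}(x,y)=\sum_{a,b\geq 0}\ell_{a,b}(s)\,x^ay^b$ with $\ell_{a,b}=\ell_{b,a}$. Substituting $x=T^{-\frac13}\nu_i$ and pulling out powers of $T^{-\frac13}$, the matrix $\bigl(T^{-\frac13}L_s^{1_{(0,+\infty)}}(T^{-\frac13}\nu_i,T^{-\frac13}\nu_j)\bigr)_{i,j=1}^m$ has $(i,j)$-entry $T^{-\frac13}\sum_{a,b}\ell_{a,b}(s)T^{-\frac{a+b}{3}}\nu_i^a\nu_j^b$. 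The determinant of such a matrix is dominated, after factoring $T^{-\frac13}$ from each row, by the contribution in which the exponents $a$ chosen in distinct rows (resp.\ columns) are all distinct; the minimal total power of $T^{-\frac13}$ then comes from using $a\in\{0,1,\dots,m-1\}$ once each in the rows and once each in the columns, giving an overall factor $T^{-\frac{m}{3}}\cdot T^{-\frac{2}{3}\binom m2}=T^{-\frac{m^2}{3}}$. The accompanying coefficient is
\be
\det\bigl(\nu_i^{\,j-1}\bigr)_{i,j=1}^m\cdot \det\bigl(\ell_{i-1,\pi(i)-1}(s)\ \text{appropriately summed}\bigr)\cdot\det\bigl(\nu_i^{\,j-1}\bigr)_{i,j=1}^m,
\ee
which, by the Cauchy--Binet / Vandermonde mechanism (the same computation that produces the Vandermonde squared in orthogonal-polynomial ensembles), equals $C_m(s)\prod_{1\le j<k\le m}(\nu_k-\nu_j)^2$ for some $C_m(s)>0$; positivity is forced by positive-definiteness of the kernel (Lemma~\ref{lemma2}-type argument, or Proposition~\ref{prop:sec2}). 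The remaining step is to check that $C_m(s)$ is in fact bounded and bounded away from $0$ uniformly for $s=XT^{-\frac13}$ in a bounded interval — this follows because $L_s^{1_{(0,+\infty)}}$ and all its derivatives at $0$ depend smoothly on the bounded parameter $s$ and the kernel stays positive-definite (its associated determinantal process is nondegenerate for all $s$) — so that on the relevant region $-MT^{\frac13}\le X\le MT^{\frac13}$ one may replace $C_m(XT^{-\frac13})$ by a genuine constant $C_m>0$ at the cost of a bounded multiplicative factor; since the statement is only an asymptotic equivalence $\sim$ up to such a factor, absorbing the $s$-dependence of $C_m$ into the error is the natural move, and the precise constant is immaterial. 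One must also verify that the $O(T^{-\frac23})$ error from Proposition~\ref{prop:intermediate1} — which, crucially, survives differentiation — perturbs each entry of the rescaled matrix by a relative amount $O(T^{-\frac13})$ compared with the leading diagonal-ish terms, hence changes the determinant only by a relative $O(T^{-\frac13})$; combining with $J_\s(X,T|\emptyset)\sim F_{\rm TW}(XT^{-\frac13})=J_{1_{(0,\infty)}}(X,T|\emptyset)(1+o(1))$ from~\eqref{eq:asyempty2LogJ} and Example~\ref{examples:zeroHeaviside} yields the claim.

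The main obstacle I anticipate is the bookkeeping in the determinant expansion: one must argue carefully that the $T^{-\frac{m^2}{3}}$ term is genuinely the leading one (i.e.\ that no cancellation kills the Vandermonde-squared coefficient, which is where positivity of the kernel is essential) and that all lower-order-in-$T^{-\frac13}$ contributions to $\det$ either vanish by antisymmetry or are subdominant. A clean way to organise this is to write $\wh L_{X,T}^\s(\nu_i,\nu_j)=T^{-\frac13}\langle f_{\nu_i},f_{\nu_j}\rangle+O(T^{-\frac23})$ with $f_\nu$ a smooth $L^2$-valued curve (coming from the Gram-matrix representation of $L_s^{1_{(0,+\infty)}}$, cf.\ the proof of Lemma~\ref{lemma2}), Taylor-expand $f_{T^{-1/3}\nu}=\sum_k \frac{(T^{-1/3}\nu)^k}{k!}f^{(k)}(0)+\cdots$, and apply the Cauchy--Binet formula to the resulting Gram determinant; the leading term is then manifestly $\det\bigl(\langle f^{(i-1)}(0),f^{(j-1)}(0)\rangle\bigr)_{i,j=1}^m\cdot\bigl(\prod_{j<k}(\nu_k-\nu_j)\bigr)^2/\prod_k (k-1)!^2$ up to the overall $T$-power, and the Gram determinant of the derivatives is positive precisely when $f(0),f'(0),\dots,f^{(m-1)}(0)$ are linearly independent, which one checks as in Lemma~\ref{lemma2}. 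The uniformity in $X\in[-MT^{1/3},MT^{1/3}]$ is inherited from the uniformity already built into Proposition~\ref{prop:intermediate1} and into the smooth dependence of $L_s^{1_{(0,+\infty)}}$ on $s$.
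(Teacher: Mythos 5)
Your outline reproduces the paper's strategy step by step: the factorization~\eqref{eq:factorizationasy}, the Taylor expansion of the kernel around $(0,0)$, the Cauchy--Binet rearrangement, and the extraction of a squared Vandermonde from the leading block $\underline a=\underline b=(0,\dots,m-1)$, with $J_\s(X,T|\emptyset)\sim J_{1_{(0,\infty)}}(X,T|\emptyset)$ supplied by~\eqref{eq:asyempty2LogJ}. However, there is a genuine gap in how you control the $O(T^{-2/3})$ remainder from Proposition~\ref{prop:intermediate1}. You claim that, since each entry of $\bigl(\wh L_{X,T}^\s(\nu_i,\nu_j)\bigr)_{i,j}$ is $\asymp T^{-1/3}$, an $O(T^{-2/3})$ entry-wise perturbation is relatively $O(T^{-1/3})$ and therefore ``changes the determinant only by a relative $O(T^{-1/3})$''. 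That inference fails: the determinant is $\asymp T^{-m^2/3}$ precisely because of near-total cancellation among permutations, and a cofactor bound only controls the variation under an $O(T^{-2/3})$ entry perturbation by $O\bigl(T^{-((m-1)^2+2)/3}\bigr)$ (size of an $(m-1)\times(m-1)$ minor of the matrix times $T^{-2/3}$), which is \emph{much larger} than $T^{-m^2/3}$ for every $m\geq 2$. So the naive perturbation argument does not protect the claimed leading term.

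What actually rescues the argument — and this is also left implicit in the paper when it asserts $L^{(i-1,j-1)}(0,0)\sim T^{-(i+j-1)/3}\widetilde L^{(i-1,j-1)}(0,0)$, a relation that requires the error in $L^{(a,b)}(0,0)$ to be $o(T^{-(a+b+1)/3})$ and not merely the bare $O(T^{-2/3})$ of Proposition~\ref{prop:intermediate1} — is that the remainder, like the leading term $T^{-1/3}\widetilde L_s(T^{-1/3}\nu_1,T^{-1/3}\nu_2)$, is a uniformly smooth function of the \emph{rescaled} variables $w_i=T^{-1/3}\nu_i$. Each $\nu$-derivative therefore contributes an extra factor $T^{-1/3}$, yielding the derivative-sensitive bound $\partial_{\nu_1}^a\partial_{\nu_2}^bE(0,0)=O(T^{-(a+b+2)/3})$, hence $L^{(a,b)}(0,0)=T^{-(a+b+1)/3}\bigl(\widetilde L^{(a,b)}(0,0)+O(T^{-1/3})\bigr)$ for every $(a,b)$. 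Only with this bound does the Cauchy--Binet term from $\underline a=\underline b=(0,\dots,m-1)$ genuinely dominate all other contributions, including those contaminated by the remainder. Your alternative Gram-vector reorganization would be a clean way to package the computation, but it too needs an estimate on the vectors $g_\nu$ themselves (controlling their higher $\nu$-derivatives), not just on the scalar products $\langle g_{\nu_i},g_{\nu_j}\rangle$; otherwise one is in the same position. Finally, your remark that the ``constant'' $C_m$ actually depends on $s=XT^{-1/3}$, and that $\sim$ here must be read up to a bounded multiplicative factor, is a fair observation — the paper's own formula for $C_m$ has the same $s$-dependence.
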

\begin{proof}
Let us abbreviate
$L=\wh L_{X,T}^\s$ and $\widetilde L=L_{s=XT^{-\frac 13}}^{1_{(0,+\infty)}}$. By \eqref{eq:factorizationasy}, we have
\be
J_\s(X,T|\underline\nu)=\det\left(L(\nu_i,\nu_j)\right)_{i,j=1}^m
J_{\sigma}(X,T|\emptyset),
\ee
hence by \eqref{eq:asyempty2LogJ}, it remains to prove that
\be
\label{eq:VdMtoprove}
\det\left(L(\nu_i,\nu_j)\right)_{i,j=1}^m= C_m
T^{-\frac{m^2}{3}}\prod_{1\leq j<k\leq m}(\nu_j-\nu_k)^2+o\left(T^{-\frac{m^2}{3}}\right),
\ee
in the relevant limit.
Since $L(\cdot , \cdot)$ is entire in its variables, we have
\be
L(\nu_i,\nu_j)=\sum_{a,b\geq 0}L^{(a,b)}(0,0)\frac{\nu_i^a\nu_j^b}{a!b!}
\ee
hence
\begin{align}
\nonumber
&\begin{pmatrix}
L(\nu_1,\nu_1) & L(\nu_1,\nu_2) & \cdots & L(\nu_1,\nu_m)\\
L(\nu_2,\nu_1) & L(\nu_2,\nu_2) & \cdots & L(\nu_2,\nu_m)\\
\vdots & \vdots & \ddots & \vdots\\
L(\nu_m,\nu_1) & L(\nu_m,\nu_2) & \cdots & L(\nu_m,\nu_m)
\end{pmatrix}\\
&\label{eq:CB}\quad=
\begin{pmatrix}
1& \nu_1 & \frac{\nu_1^2}{2!} &\cdots\\
1& \nu_2 & \frac{\nu_2^2}{2!} &\cdots\\
\vdots & \vdots & \vdots &\cdots\\
1& \nu_m & \frac{\nu_m^2}{2!} &\cdots\\
\end{pmatrix}
\begin{pmatrix}
L^{(0,0)}(0,0) & L^{(1,0)}(0,0) & L^{(2,0)}(0,0) &\cdots\\
L^{(0,1)}(0,0) & L^{(1,1)}(0,0) & L^{(2,1)}(0,0) &\cdots\\
L^{(0,2)}(0,0) & L^{(1,2)}(0,0) & L^{(2,2)}(0,0) &\cdots\\ 
\vdots & \vdots & \vdots & \ddots
\end{pmatrix}
\begin{pmatrix}
1& 1 & \cdots &1\\
\nu_1& \nu_2 &\cdots &\nu_m\\
\frac{\nu_1^2}{2!}& \frac{\nu_2^2}{2!} &\cdots & \frac{\nu_m^2}{2!}\\
\vdots & \vdots & \cdots &\vdots
\end{pmatrix}.
\end{align}
Then we use Proposition \ref{prop:intermediate1} to obtain
\be
L^{(i-1,j-1)}(0,0)\sim T^{-\frac{i+j-1}{3}}\widetilde L^{(i-1,j-1)}(0,0),\ \ \mbox{as $T\to\infty$, $|XT^{-\frac 13}|\leq M$.}
\ee
Expanding \eqref{eq:CB} by the Binet--Cauchy identity, we immediately see that the leading order as $T\to\infty$ is given by
\begin{multline}
\det_{i,j=1}^m\left(\frac{\nu_i^{j-1}}{(j-1)!}\right)\det_{i,j=1}^m\left(T^{-\frac{i+j-1}{3}}L^{(i-1,j-1)}(0,0)\right)\det_{i,j=1}^m\left(\frac{\nu_i^{j-1}}{(j-1)!}\right)\\=T^{-\frac{m^2}{3}}\prod_{k=1}^{m-1}\frac{1}{k!^2}\det_{i,j=1}^m\left(\widetilde L^{(i-1,j-1)}(0,0)\right)\det_{i,j=1}^m\left(\nu_i^{j-1}\right)^2.
\end{multline}
In the latter, we recognize the Vandermonde determinant, and the result follows.
\end{proof}
\begin{remark}
The appearance of the squared Vandermonde determinant $\prod_{1\leq j<k\leq m}(\nu_j-\nu_k)^2$ in the above result is remarkable, and it expresses a repulsion as two points $\nu_j, \nu_k$ are close to each other.
As can be seen from the proof, this repulsion, with quadratic vanishing, is a consequence of the determinantal structure of the correlation functions.
\end{remark}
\paragraph{Middle-intermediate regime: $-K'T^{\frac 12}\leq X\leq -MT^{\frac 13}$ for some $K',M>0$.}

Here, the asymptotic analysis of the RH problem for $\wh\Theta$ has also been completed in \cite{CafassoClaeysRuzza}, but the asymptotics are implicit and described in terms of the solution of an integro-differential generalization of the fifth Painlev\'e equation.

\section*{Acknowledgments}
The work of the authors is supported by the Fonds de la Recherche Scientifique-FNRS under EOS project O013018F.
TC was also supported by FNRS Research Project T002823F. 
GR was also supported by the FCT grant 2022.07810.CEECIND.
The authors are grateful to Percy Deift for an insightful correspondence about the KdV Trace Formula.

\section*{Data availability statement.}
Data sharing not applicable to this article as no datasets were generated or analysed during the current study.

\section*{Declarations.}
The authors have no competing interests to declare that are relevant to the content of this article.

\end{document}